\newtheorem{theorem}{Theorem}[section]
\newtheorem{definition}[theorem]{Definition}
\newtheorem{corollary}[theorem]{Corollary}
\newtheorem{lemma}[theorem]{Lemma}
\newtheorem{assumption}[theorem]{Assumption}
\numberwithin{equation}{subsection}
\numberwithin{theorem}{subsection}
\newcommand{\UHaddress}{\em University of Helsinki, Department of Mathematics 
and Statistics\\
\em P.O. Box 68, FI-00014 Helsingin yliopisto, Finland}
\newcommand{\email}[1]{E-mail: \tt #1}
\newcommand{\emailkalle}{\email{kalle.koskinen@helsinki.fi}}
\title{Infinite volume Gibbs states and metastates of the random field mean-field spherical model}
\author{Kalle Koskinen\thanks{\emailkalle} \\[1em] \UHaddress}
\date{\today}
\begin{document}
\begin{titlingpage}
\maketitle
\begin{abstract}
\noindent
For the discrete random field Curie-Weiss models, the infinite volume Gibbs states and metastates have been investigated and determined for specific instances of random external fields. In general, there are not many examples in the literature of non-trivial limiting metastates for discrete or continuous spin systems. We analyze the infinite volume Gibbs states of the mean-field spherical model, a model of continuous spins, in a general random external field with independent identically distributed components with finite moments of some order larger than four and non-vanishing variances of the second moments. Depending on the parameters of the model, we show that there exist three distinct phases: ordered ferromagnetic, ordered paramagnetic, and spin glass. In the ordered ferromagnetic and ordered paramagnetic phases, we show that there exists a unique infinite volume Gibbs state almost surely. In the spin glass phase, we show the existence of chaotic size dependence, provide a construction of the Aizenman-Wehr metastate, and consider both the convergence in distribution and almost sure convergence of the Newman-Stein metastates. The limiting metastates are non-trivial and their structure is universal due to the presence of Gaussian fluctuations and the spherical constraint.  
\end{abstract}
\end{titlingpage}
\tableofcontents
\addtocontents{toc}{\protect\thispagestyle{empty}}
\newpage
\section{Introduction}
The mean-field spherical model in a random external field, which can also be called the random field mean-field spherical model(RFMFS) is a variant of the mean-field spherical model(MFS), introduced in \cite{Kastner2006}, where the homogeneous external field is replaced by a random external field. Due to the introduction of external randomness, as opposed to the internal randomness of the associated Gibbs state, the finite volume Gibbs states(FVGS) of this model form a collection of random probability measures. In this paper, we will prove theorems concerning the infinite volume limits of the collection of FVGS for a number of different modes of convergence introduced in the disordered systems literature. Due to model specific features, we are able to prove these theorems in a significantly more general setting than other similar models which can be found in the literature.
\\
\\
The Curie-Weiss model in a random external field with independent Bernoulli distributed components was first introduced in \cite{Schneider1977}. In \cite{Schneider1977}\cite{Aharony1978}\cite{Salinas1985} the thermodynamics and the phase diagram of this model are determined. In \cite{Matos1991}, the fluctuations of magnetization are considered for the Curie-Weiss model in a random external field where the Bernoulli distributed components are replaced by independent identically distributed components with finite absolute first moments. We will refer to the general random external field case as the random field Curie-Weiss model(RFCW), and the Bernoulli distributed external field as the Bernoulli field Curie-Weiss model(BFCW). One of the key results in \cite{Matos1991} is the classification of the magnetization fluctuations in the RFCW model in terms of the "type" and "strength" of the global maximizing points of the free energy. This classification is similar to the classification given by Ellis and Newman in \cite{Ellis1978} for the magnetization fluctuations of the generalized Curie-Weiss model(GCW), but the external randomness modifies the scaling of the fluctuations.
\\
\\
The infinite volume limits of the finite marginal distributions of the GCW model are given in \cite{Ellis1978}, and the result is that the they are convex combinations of  product measures where the number of such product measures is related to the amount, type, and strength of the global maximizing points of the free energy of the model. Due to the similarities in results and proof techniques concerning the magnetization fluctuations between the GCW model and the RFCW model, one might suspect that the infinite volume limits would also be similar. This, however, is not necessarily the case.
\\
\\
For the RFCW model, in \cite{Matos1992}, the authors show that the infinite volume Gibbs states(IVGS), i.e. any element of the collection of limit points of the collection of FVGS, always belongs to the set of convex combinations of some product measures almost surely, where the number of product measures is determined by the amount of global maximizing points of the free energy. However, for the BFCW model where the parameters of the model are chosen so that there are exactly two global maximizing points, they show that there exists a set of probability $1$ such that a countable collection of convex combinations of the two associated product measures can be obtained as convergent subsequences almost surely. This phenomenon is known as chaotic size dependence due to Newman and Stein \cite{Newman1997}. The implication is that there do not exist a definitive unique almost sure limits of FVGS, instead, they are dependent on the subsequence of volumes chosen.
\\
\\
To rectify the problem of chaotic size dependence, other forms of convergence in the infinite volume limit must be considered. Let us briefly motivate two such forms of convergence and introduce the notion of metastates. Since the FVGS are random probability measures, if one considers their distributions as random variables, then the distribution is essentially a probability measure of probability measures, and we call it a metastate. One can then consider the convergence in distribution of the external field and the FVGS simultaneously. For the resulting limit in distribution, if it exists, one can then take the regular conditional distribution of the random probability measure given the external field, and this is then a metastate which is defined almost surely. This procedure is due to Aizenman and Wehr\cite{Aizenman1990}, and it is typically called the Aizenman-Wehr metastate(A-W) or the conditioned metastate. For the other form of convergence, one considers the empirical measure of the sequence of FVGS and its limit either almost surely or in distribution. The  empirical measures are called the Newman-Stein metastates(N-S) due to Newman and Stein \cite{Newman1997}. We refer to \cite[Chapter 6]{Bovier2006} for more details and definitions concerning the metastates.
\\
\\
In \cite{Matos1992}, for the BFCW in the case where there are two global maximizing points, it was shown that the limit in distribution of the random FVGS corresponds to a random probability measure which is split between the associated product measures with probability $\frac{1}{2}$ independent of the external field. A similar result, concerning the A-W metastate, is obtained in \cite{Kuelske1997}. In \cite{Kuelske1997}, the author also provides a proof that the N-S metastates corresponding to the FVGS do not converge almost surely, but they do converge in distribution. In doing so, it is also shown that the limit in distribution of the N-S metastates is more general in some sense since it also contains the result concerning the A-W metastate. 
\\
\\
Let us now comment on some of the technical details in the RFCW model and the BFCW model so that we have a reference point for some of the results in this paper. The characterization given in \cite{Matos1991} for the free energy of the RFCW suggests that one can find random external fields which would yield any amount, type, and strength of the global maximizing points of the free energy. In principle, one could perform a similar analysis for the IVGS of the RFCW model as in \cite{Matos1992}\cite{Kuelske1997}, but both of these works focused on the BFCW model, and there is an explicit mention of the difficulties associated with the general RFCW models in \cite{Matos1992}. There are two features which make the RFCW models technically simple to deal with. The first feature is that the FVGS are probability measures on a compact space, and, as such, the space of probability measures on this compact space is compact itself. As a result, one does not need to provide uniform tightness results for the metastates since they are automatic. The second feature is that the FVGS can be immediately written as a continuous mixture of product states via the Hubbard-Stratonivich transform. Using these ingredients, most of the analysis of these models comes down to the analysis of the free energy and the limiting procedure by which it was obtained.
\\
\\
In this paper, we will rigorously determine the IVGS and the limiting metastates of the RFMFS in the general setting where the external field has components which are independent identically distributed random variables with finite moments of some order larger than four and non-vanishing variances of the second moments. We show that the IVGS of this model is always either unique or a convex combination of two pure states. We will provide a characterization of the phases of this model, and we will show that in the spin glass phase the model exhibits CSD. For the spin glass phase, we will provide a construction of the A-W metastate and consider the limiting properties of the N-S metastates both in distribution and almost surely. The results obtained are universal in the sense that they hold for any random external fields with the assumptions given.
\\
\\
The main results of this paper that concern the case of the random external field are presented in \Cref{main_result3}, \Cref{aw_construction}, \Cref{main_result4}, and \Cref{main_result5}. In order, the results are the existence of CSD and the determination of unique almost sure limits of the  FVGS, the construction of the A-W metastate from the limit in distribution of the conditioned metastate probability measures, the almost sure divergence of the N-S metastates and the almost sure convergence of a random subsequence of N-S metastates to the A-W metastate, and, finally, the limit in distribution of the N-S metastates. More minor, but contextually important results, are the characterization of the spin glass phase provided in \Cref{spinglass_table} and the triviality of the metastates in \Cref{triviality}.
\\
\\
The methods of proof for this model differ substantially from those for the RFCW. We will first prove results for a deterministic inhomogeneous external field with convergent sample means and non-vanishing convergent sample standard deviations. Such a deterministic inhomogeneous external field will be called strongly varying. For strongly varying external fields, we are able to fully determine the infinite volume Gibbs states by specifying the rates of convergence of the sample means and sample standard deviations. The IVGS and metastate results are then applications of specific instances of strongly varying fields. The technical difficulties of this model are two-fold. The first difficulty is that the state space is not compact and thus we must provide a number of uniform tightness results for the different metastates. The second difficulty is that the inclusion of the external field breaks the permutation invariance of the standard mean-field spherical model, and, as a result, we must provide methods of proof for the resolution of finite marginals of certain singular probability distributions which do not require such symmetries.
\\
\\
To our knowledge, the results of this paper for the mean-field spherical model in a strongly varying field and the RFMFS are novel contributions to the literature. In particular, as noted in \cite{Bovier2006}, there are very few explicit constructions of metastates for non-trivial lattice models, and those that do exist, such as the results of the BFCW model, are for specific choices of random external fields. In this paper, we are able to provide an explicit construction of the metastates of the RFMFS model for a large collection of general random external fields, and show that they have a relatively simple and universal structure despite the generality of the random external field.
\subsection{Related works}
Recently, the free energy, overlap structures, and thermodynamic fluctuations were studied for the spherical Sherrington-Kirkpatrick model(SSK) and a variant of it in \cite{Baik2016}\cite{Baik2018}\cite{Baik2021}. The SSK model is similar to our model in the sense that it is a lattice model with a spherical constraint and some sort of external randomness added to the model. However, the critical difference in these models is that the external randomness typically appears in the interactions between spins. Furthermore, these works are not concerned with the construction of metastates. In general, there is a vast literature concerning these types of models too long to cite, instead we refer to the book \cite{Talagrand2011}. 
\\
\\
Let us also mention the large and moderate deviation results for the RFCW models given in \cite{Loewe2013} and \cite{Loewe2012} which initially spurred our interest in these types of models and provided valuable references in their introductions. In particular, there are examples given in \cite{Loewe2013} which consider random external fields which do not have independent identically distributed components.
\\
\\
To accompany the original introduction of the MFS model in \cite{Kastner2006}, the IVGS of the MFS model can be deduced from results presented in \cite{Koskinen2020}. They are not explicitly mentioned as IVGS, but the finite marginal distributions are deduced there. In general, the work of \cite{Koskinen2020} can be used to understand some of the rigorous methods of proof used in this paper as well, but the core methods of this paper differ significantly due to the lack of permutation invariance of the models. 
\subsection{Reading guide}
\noindent
This paper is split into two major sections which are \Cref{main_results_sec} and \Cref{proofs_of_results}. The main results are contained in \Cref{main_results_sec}, and they are presented along with a significant amount of exposition, intermediate results, and proof sketches. When necessary, details for non-trivial proofs are provided in \Cref{proofs_of_results}. In general, the main results of this paper and their proof sketches should be able to be understood by reading only \Cref{main_results_sec}.
\\
\\
Although this paper is primarily concerned with the random external field, \Cref{main_results_sec} is split into two subsections which are \Cref{deterministic} and \Cref{random}. Many key results are first developed for a deterministic inhomogeneous external field in \Cref{deterministic} and then applied to the case of the random external field in \Cref{random}. The main results concerning the random external field are provided in \Cref{random}, but most of their proofs rely directly on results developed in \Cref{deterministic}.
\\
\\
In \Cref{proofs_of_results}, we will provide the non-trivial proofs of main results and intermediate results from \Cref{main_results_sec}. There are also intermediate results and their proofs which may not be found in \Cref{main_results_sec}. To save space and provide clarity, we will typically omit extraneous sup- and subindices if it is clear what object is being referred to in the proof. If there is a relevant variable dependence in these indices, we will keep them in the proofs. In general, we have tried to remain consistent with the naming of certain key objects so that they are the same throughout the paper.
\\
\\
The appendix contains general definitions and proofs related to the uniform tightness, approximation, and weak convergence of probability measures related to the metastates. 
\section{Main results} \label{main_results_sec}
\subsection{Preliminaries} \label{preliminaries}
\noindent
The model of interest in this paper is an equilibrium statistical mechanical model of an unbounded continuous spin system with long-range interactions, a spherical constraint, and an external field which is initially deterministic. Later on, we will consider the case where the external field is random. We will refer to this model and its constituents, which are to be defined, as the RFMFS model.  
\\
\\
To define this model, we begin with the mean-field Hamiltonian in an external field $H_n^{J,h} : \mathbb{R}^n \to \mathbb{R}$ given by
\begin{align} \label{hamiltonian}
H_n^{J,h}[\phi] := - \frac{J}{2n} \sum_{i,j=1}^n \phi_i \phi_j - \sum_{i=1}^n h_i \phi_i,
\end{align}  
where $J > 0$ is a coupling constant, and $h \in \mathbb{R}^\mathbb{N}$ is an external field. The state space on which this Hamiltonian is defined is continuous and unbounded as opposed to discrete and bounded like the Ising model. Since each spin interacts with all other spins, the interaction defined by this Hamiltonian is a long-range interaction, and it is often also referred to as a mean-field interaction. The uniform probability measure $\omega_n$ on the $n-1$-dimensional sphere with radius $\sqrt{n}$ is formally given by its density $\omega_n(d \phi)$ on $\mathbb{R}^n$ which has the formula
\begin{align*}
\omega_n (d  \phi) := \frac{\delta(|| \phi ||^2 - n)}{\int_{\mathbb{R}^n} d \phi \ \delta (|| \phi ||^2 - n)} d \phi,
\end{align*}
where $\delta (\cdot)$ is the Dirac-delta, $\| \cdot \|$ is the Euclidean norm, and $d \phi$ is the Lebesgue measure. We refer to the utilization of this uniform measure as the spherical constraint in this model. The Gibbs state $\mu_n^{\beta,J,h}$ of this model is the probability measure on $\mathbb{R}^n$ formally given by its density $\mu_n^{\beta,J,h} (d \phi)$ on $\mathbb{R}^n$ which has the formula
\begin{align} \label{fvgs}
\mu_n^{\beta,J,h} (d \phi) := \frac{e^{- \beta H_n^{J,h} [\phi]}}{Z_n(\beta,J,h)} \omega_n (d \phi),
\end{align}
where $\beta > 0$ is the inverse temperature, and $Z_n(\beta,J,h)$ is the partition function formally given by
\begin{align} \label{partition}
Z_n (\beta,J,h) := \int_{\mathbb{R}^n} \omega_n (d \phi) \ e^{- \beta H_n^{J,h} [\phi]} .
\end{align}
It is common to scale or set certain parameters to fixed values in these kinds of models. We insist on leaving the parameters without rescaling so that their contributions can be seen more transparently in results to come such as \Cref{max_point2} or the results presented in \Cref{phase_table}, where the ratio of the coupling constant $J$ and the, to be introduced, limiting sample standard deviation $m^\perp$ is of interest. 
\\
\\ 
In this paper, a majority of the practical calculations are first done using the formal calculation properties of the $\delta$-functions, and these calculations are then given rigorous proofs later on. The Gibbs state is rigorously redefined by its action on $f \in C_b (\mathbb{R}^n)$, where $C_b (\mathbb{R}^n)$ is the space of continuous bounded functions on $\mathbb{R}^n$, given by
\begin{align} \label{fvgs2}
\mu_n^{\beta,J,h} [f] := \frac{1}{Z_n (\beta,J,h)} \frac{2}{n^{\frac{n}{2} - 1}} \frac{1}{|\mathbb{S}^{n - 3}|} \int_{\mathbb{S}^{n-1}} d \Omega \  e^{\frac{\beta J}{2} \left( \sum_{i=1}^n \Omega_i \right)^2 + \beta \sqrt{n} \sum_{i=1}^n h_i \Omega_i} f \left( \sqrt{n} \Omega \right) ,
\end{align}
where
\begin{align} \label{partition1}
Z_n(\beta, J, h) :=  \frac{2}{n^{\frac{n}{2} - 1}} \frac{1}{|\mathbb{S}^{n - 3}|} \int_{\mathbb{S}^{n-1}} d \Omega \  e^{\frac{\beta J}{2} \left( \sum_{i=1}^n \Omega_i \right)^2 + \beta \sqrt{n} \sum_{i=1}^n h_i \Omega_i} ,
\end{align}
and $d \Omega$ is the integral over the angular part of the hyperspherical coordinates on $\mathbb{R}^n$. Note that this is a redefinition, and the $n$-dependent factors in this definition have been chosen for convenience. 
\\
\\
Note that we have explicitly defined the Gibbs state by its action on continuous bounded functions and we reserve the term expectation for real-valued random variables that will appear later on in this paper. We will encounter probabilistic objects such as probability measure-valued random variables and $\mathbb{R}^\mathbb{N}$-valued random variables which are defined on a common underlying probability space. For such objects, it is convenient to consider many of their properties as expectations with respect to this underlying probability measure, and this is why we want to have distinguished definitions of the Gibbs state and the distribution of a random variable although both are probability measures.  
\\
\\
For technical reasons, we will extend the underlying space on which $\mu_n^{\beta, J,h}$ acts on as a probability measure to $\mathbb{R}^\mathbb{N}$ by "tensoring on $0$" to the remaining $\mathbb{R}^{\mathbb{N} \setminus \{ 1,2,...,n\}}$ portion of the space. For the final time, we redefine $\mu_n^{\beta,J,h} := \mu_n^{\beta,J,h} \otimes \delta_0$, where $\delta_0$ is the Dirac measure on the point with all components $0$ in $\mathbb{R}^{\mathbb{N} \setminus \{ 1,2,...,n\}}$. The probability measure on $\mathbb{R}^\mathbb{N}$ constructed from probability measures on $\mathbb{R}^n$  by this method will be referred to as $0$-tensored versions. A function $f : \mathbb{R}^\mathbb{N} \to \mathbb{R}$ is said to be local if there exists a finite index set $I \subset \mathbb{N}$ and a function $f' : \mathbb{R}^I \to \mathbb{R}$ such that $f(x) = f' (\pi_I (x))$, where $\pi_I$ is the canonical projection from $\mathbb{R}^\mathbb{N}$ to $\mathbb{R}^I$. When dealing with local functions, we will typically refer to the representation function $f'$ as $f$, and the index set on which the function is local will be called $I$. The reason why this extension procedure is purely technical is that for a fixed local function and large enough $n$, the expectation of this local function is the same for the normal Gibbs state and the redefined extended Gibbs state.  
\\
\\
For a fixed $n$, such a probability measure $\mu_n^{\beta,J,h}$ is referred to as a finite volume Gibbs state(FVGS), and the collection of FVGS  will be denoted by $\mathcal{G}(\beta,J,h) := \{ \mu_n^{\beta,J,h} \}_{n \in \mathbb{N}}$. The entire collection $\mathcal{G} (\beta,J,h)$ is then a collection of probability measures on $\mathbb{R}^\mathbb{N}$.
\\
\\
The two principle objects of interest in this paper are the limiting free energy and the infinite volume Gibbs states(IVGS). We define them here explicitly due to their importance.
\begin{definition}[Limiting free energy] \label{lim_free}
The limiting free energy $f(\beta,J,h)$, when it exists, is given by 
\begin{align*} 
f(\beta,J,h) := \lim_{n \to \infty} \frac{1}{n} \ln Z_n (\beta,J,h) .
\end{align*}
\end{definition}
\noindent
Note that the limiting free energy is defined here with a different sign and unit convention than the physicist's limiting free energy. 
\\
\\
To define the IVGS, recall first that the ambient space $\mathbb{R}^\mathbb{N}$ can be equipped with the product topology to make it a topological space which is separable, metrizable, and complete with respect to this metric. Such a topological space is called a Polish space. Recall also that the space of probability measures $\mathcal{M}_1(\mathbb{R}^\mathbb{N})$ on $\mathbb{R}^\mathbb{N}$, by virtue of $\mathbb{R}^\mathbb{N}$ being a Polish space, can be equipped with a topology which is separable, metrizable by the Prokhorov metric $d$, and complete with respect to this metric, see \cite[Chapter 3]{Ethier1986}.
\begin{definition}[Infinite volume Gibbs states] The collection infinite volume Gibbs states $\mathcal{G}_\infty (\beta,J,h)$, when it is non-empty, is the collection of probability measures on $\mathbb{R}^\mathbb{N}$ given by
\begin{align*}
\mathcal{G}_\infty (\beta,J,h) := L (\mathcal{G}(\beta,J,h)),
\end{align*} 
where $L(\mathcal{G}(\beta,J,h))$ is the collection of limits points of the finite volume Gibbs states.   
\end{definition}
\noindent
In this definition, it is useful to use the metric space characterization of a limit point. A limit point of a set $A$ in a metric space $X$ is any point not in $A$ which can be obtained as the limit of a convergent subsequence of elements in the set.
\\
\\
Since mean-field models typically do not involve boundary conditions or local specifications in the same sense as the classical Gibbsian formalism for lattice spin systems, see \cite[Chapter 4]{Bovier2006}, the definition for the IVGS is less involved and does not initially require any further study of the convexity properties of the Gibbs states. This definition of the IVGS reflects the fact that a typical mean-field interaction is invariant under permutations of the underlying spins, and that the notion of increasing volumes in the thermodynamic limit can be replaced by increasing number of lattice sites, since the structure of the sequence of increasing volumes is irrelevant to the mean-field interaction. This is why the FVGS are labelled as a sequence of probability measures rather than a collection of probability measures indexed by subsets corresponding to volumes. 
\\
\\
The metrization of $\mathcal{M}_1(\mathbb{R}^\mathbb{N})$ by the Prokhorov metric $d$ is somewhat intractable for practical calculations. Instead, we will use the fact that collection of local bounded Lipschitz functions $\operatorname{LBL} (\mathbb{R}^\mathbb{N})$ from $\mathbb{R}^\mathbb{N}$ to $\mathbb{R}$ is convergence determining on $\mathcal{M}_1(\mathbb{R}^\mathbb{N})$, see \cite[Chapter 3]{Ethier1986} and \cite[Chapter 13]{Klenke2020}. A local function $f$ is Lipschitz if it is Lipschitz on $\mathbb{R}^I$ with the standard Euclidean norm. In terms of the Prokhorov metric, it follows that a sequence of probability measures $\{ \mu_n \}_{n \in \mathbb{N}}$ converges to a probability measure $\mu$ with respect to the Prokhorov metric if and only if $\mu_n[f] \to \mu[f]$ in the limit as $n \to \infty$ for any $f \in \operatorname{LBL} (\mathbb{R}^\mathbb{N})$. By using the collection $\operatorname{LBL}(\mathbb{R}^\mathbb{N})$, it follows that $\mu \in \mathcal{G}_\infty (\beta,J,h)$ if and only if there exists a subsequence $\{ n_k \}_{k \in \mathbb{N}}$ such that $\mu_{n_k}^{\beta,J,h} [f] \to \mu[f]$ for any $f \in \operatorname{LBL} (\mathbb{R}^\mathbb{N})$. Convergence with respect to the Prokhorov metric is often called weak convergence and we will sometimes write $\mu_n \to \mu$ weakly in  the limit as $n \to \infty$ by which we mean that $d(\mu_n, \mu) \to 0$ in the limit as $n \to \infty$. If it is clear from the context, we will omit "weakly" and simply say that $\mu_n \to \mu$ in the limit as $n \to \infty$.  
\\
\\
In the literature, the mean-field spherical model was introduced in \cite{Kastner2006}, and it corresponds to the model presented in this paper with the specific choice of external field $h$ in which all components are equal. In \cite{Kastner2006}, the authors computed the limiting free energy, with a different sign and unit convention, of the mean-field spherical model. In \cite{Koskinen2020}, the IVGS are indirectly identified for the mean-field spherical model. Historically mean-field models were introduced as simplified models of more complicated models which retain enough non-trivial features to be of use to study. For example, the Curie-Weiss model, which is the mean-field version of the classical Ising model, was introduced to exactly study a simplified model which still exhibits interesting thermodynamic phenomena such as phase transitions, anomalous thermodynamic fluctuations, etc. We refer to \cite{Friedli2017} for more details on the Curie-Weiss model and the classical Ising model. In addition, various features of certain generalizations of the classical Curie-Weiss model were studied in depth in \cite{Ellis1978}. From this perspective, the mean-field spherical model is the mean-field version of the continuous spin version of the Ising model known as the Berlin-Kac model introduced in \cite{Berlin1952}. 
\subsection{Deterministic inhomogeneous external field} \label{deterministic}
\noindent
We will now begin the exposition and presentation of the main results concerning the deterministic inhomogeneous external field.  Denote $m_n^h := (m_n^{h, \parallel}, m_n^{h, \perp}) \in \mathbb{R}^2$, where $m_n^{h,\parallel}$ is the finite sample mean and $m_n^{h,\perp}$ is the finite sample standard deviation of the external field $h$ given by
\begin{align} \label{sample_magnetization}
m_n^{h,\parallel} := \frac{1}{n} \sum_{i=1}^n h_i, \ m_n^{h,\perp} := \sqrt{\frac{1}{n} \sum_{i=1}^n h_i^2 - \left( m_n^{h, \parallel} \right)^2} . 
\end{align}  
The magnetization $M_n : \mathbb{R}^n \to \mathbb{R}$ is given by
\begin{align} \label{magnetization}
M_n[\phi] := \sum_{i=1}^n \phi_i = \left<1_n, \phi \right>, 
\end{align} 
where $1 \in \mathbb{R}^\mathbb{N}$ is the vector with all components $1$, $1_n := \pi_{\{ 1,2,...,n\}} (1)$, and $\left< \cdot, \cdot \right>$ is the Euclidean inner-product. Observe that the Hamiltonian can be written in the following form
\begin{align*}
H_n^{J,h} [\phi] = - \frac{J}{2 n} M_n[\phi]^2 - \sum_{i=1}^n h_i \phi_i = - \frac{J}{2 n} \left< 1_n, \phi \right>^2 - \left< h_n, \phi \right> ,
\end{align*}
where $h_n := \pi_{\{ 1,2,...,n\}} (h)$. Suppose now that $h$ satisfies $m_n^{h,\perp} \not = 0$, and let us consider the plane $W_n^h \subset \mathbb{R}^n$ spanned by the vectors $1_n$ and $h_n$. An orthonormal basis of $W_n^h$ is given by the span of the unit vectors $\{ w_{1,n}, w^h_{2,n} \}$ given by 
\begin{align} \label{basis}
w_{1,n} := \frac{1_n}{\sqrt{n}}, \ w^h_{2,n} := \frac{h_n - m_n^{h,\parallel} 1_n}{\sqrt{n} m_n^{h,\perp}} .
\end{align} 
The Hamiltonian can be written as
\begin{align} \label{hamilton_calc}
H_n^{J,h} [\phi] = - \frac{J}{2} \left< w_{1,n}, \phi \right>^2  - \sqrt{n} m_n^{h, \parallel} \left< w_{1,n}, \phi \right> - \sqrt{n} m_n^{h, \perp}\left< w^h_{2,n}, \phi \right> .
\end{align}
Let $\{ v_{k,n}^h \}_{k=3}^n$ be an orthonormal basis of $\left( W_n^h \right)^\perp$, where $\left( W_n^h \right)^\perp$ is the orthogonal complement of $W_n^h $. Let $O^h_n$ be the orthogonal change of coordinates $O^h_n : \mathbb{R}^n \to \mathbb{R}^n$ given by $\left( O^h_n(\phi) \right)_1 = \left< w_{1,n}, \phi \right>$, $\left( O^h_n(\phi) \right)_2 = \left< w^h_{2,n}, \phi \right>$, and $\left( O^h_n(\phi) \right)_k = \left< v_{k,n}^h, \phi \right>$ for $k=3,...,n$. Using this change of coordinates and the hyperspherical change of coordinates, formally, we have
\begin{align} \label{partition_formal}
\int_{\mathbb{R}^n} d \phi \ e^{- \beta H_n^{J,h} [\phi]} \delta \left( \lVert \phi \rVert^2 - n \right) &=  \frac{|\mathbb{S}^{n - 3}| n^{\frac{n}{2} - 1}}{2}  \int_{B(0,1)} dz \ e^{n \left( \frac{\beta J}{2} x^2 + \beta \left< m_n^h, z \right> \right)} (1 - || z ||^2)^{\frac{n - 4}{2}} ,
\end{align}
where $z = (x,y) \in \mathbb{R}^2$, $\lvert \mathbb{S}^{n - 3}\rvert$ is the integral over the angular coordinates of the $n-3$-dimensional unit sphere, and $B(0,1) \subset \mathbb{R}^2$ is the $2$-dimensional unit sphere. This formal calculation leading to \Cref{partition_formal} is given a rigorous proof in \Cref{can_rep}.
\\
\\
Let us remark that if $m_n^{h, \perp} = 0$, then it follows that $h_n$ is a vector with equal components and the calculation is the same as for the standard mean-field spherical model in \cite{Kastner2006}. This is why the standard mean-field spherical model concerns a homogeneous external field while the model present in the paper concerns an inhomogeneous external field. From here on out, whenever we refer to an external field, we mean an inhomogeneous external field unless otherwise stated. 
\subsubsection{Limiting free energy}  
\noindent
To continue, we state the first condition that the external field must satisfy as a definition. 
\begin{definition} \label{strongly_varying} An external field $h$ is strongly varying if 
\begin{align*}
\lim_{n \to \infty} m_n^h  = m := (m^\parallel, m^\perp) \in \mathbb{R} \times (0, \infty) .
\end{align*}
\end{definition}
\noindent
Subsequently, one could define the weakly varying external field as an external field where the limit $m \in \mathbb{R} \times \{ 0 \}$. In other words, the "strongly varying" part of the field is associated with a non-vanishing limiting sample standard deviation. Note also that the strongly varying condition ensures that $m_n^{h, \perp} \not = 0$ for large enough $n$. Note also that whenever an external field is strongly varying it is also thus inhomogeneous. From this point onwards, unless explicitly stated otherwise, we will assume that the external field is strongly varying.
\\
\\
Denote the exponential tilting function $\psi_n^{\beta,J,h}$ to be the function $\psi_n^{\beta,J,h} : B(0,1) \to \mathbb{R}$ given by
\begin{align} \label{tilting}
\psi_n^{\beta,J,h} (z) := \frac{\beta J}{2} x^2 + \beta \left<m_n^h,z \right> - \frac{1}{2} \ln (1 - || z ||^2) .
\end{align} 
By setting $f \equiv 1$ in \Cref{can_rep}, we see that the exponential tilting function is related to the partition function $Z_n(\beta,J,h)$ by the following formula
\begin{align}
Z_n(\beta,J,h) = \int_{B(0,1)} dz \ e^{2 \beta J x^2 + 4 \beta \left< m_n^h, z \right>} e^{(n - 4) \psi_n^{\beta,J,h} (z)} .
\end{align} 
Motivated by this representation, we denote the limiting exponential tilting function $\psi^{\beta,J,m}$ to be the function $\psi^{\beta,J,m} : B(0,1) \to \mathbb{R}$ given by
\begin{align} \label{limiting_tilting}
\psi^{\beta,J,m} (z) := \frac{\beta J}{2} x^2 + \beta \left<m,z \right> - \frac{1}{2} \ln (1 - || z ||^2) .
\end{align}
For the exponential tilting functions, it follows that
\begin{align*}
\left| \psi_n^{\beta, J, h} (z) - \psi^{\beta,J,m} (z) \right| \leq \beta || m_n^h - m || ,
\end{align*}
from which it follows that
\begin{align*}
\lim_{n \to \infty} \sup_{z \in B(0,1)} \left| \psi_n^{\beta, J, h} (z) - \psi^{\beta,J,m} (z) \right| = 0 .
\end{align*} 
By using this uniform convergence, we have the first main result. 
\begin{theorem} \label{argmax1} Let $h$ be a strongly varying external field.
\\
\\
It follows that
\begin{align*}
f (\beta, J, h) := \lim_{n \to \infty} \frac{1}{n} \ln Z_n (\beta, J, h) = \sup_{z \in B(0,1)} \psi^{\beta,J,m}(z) .
\end{align*}
\end{theorem}
\noindent
The proof of this result, see \Cref{free_tight}, is a routine large deviations calculation once we show that set of global maximizing points $M^*(\beta,J,m)$ of $\psi^{\beta,J,m}$ is compact and non-empty, see \Cref{argmax2} for the proof.  
\\
\\
One can see from this theorem that if we allowed for a weakly varying external field, i.e. we have $m^\perp = 0$, then we would recover the same free energy as for the standard mean-field spherical model presented in \cite{Kastner2006}. With respect to the manipulation of the Hamiltonian in \Cref{hamilton_calc}, when one applies a homogeneous external field, which is characterized by the external field $h$ having equal components, we see that there is a single magnetization "component" which corresponds to the projection of $\phi$ along the normalized unit vector $w_{1,n}$. If the external field is not homogeneous, there exists a second magnetization component corresponding to the projection of $\phi$ along the normalized unit vector $w^h_{2,n}$. Subject to the strongly varying condition, this perpendicular magnetization component is non-vanishing in the limit and it is the distinguishing feature between the standard mean-field spherical model in a homogeneous external field and the mean-field spherical model in a strongly varying field.   
\subsubsection{Partial classification of infinite volume Gibbs states}  
\noindent
Our next main result concerns a partial classification of the IVGS. With reference to \Cref{can_rep}, we begin by defining the mixture probability measure $\rho_n^{\beta,J,h}$ on $B(0,1)$ by its density
\begin{align} \label{mixture}
\rho_n^{\beta,J,h} (dz) =  \frac{e^{2 \beta J x^2 + 4 \beta \left< m_n^h, z \right>} e^{(n - 4) \psi_n^{\beta,J,h} (z)}}{Z_n(\beta,J,h)} dz ,
\end{align}
where $dz$ is the Lebesgue measure on $B(0,1)$. Next, we define the "microcanonical" probability measure on $\mathbb{R}^\mathbb{N}$ as the $0$-tensored version of the probability measure given via its action on continuous bounded functions $f \in C_b (\mathbb{R}^n)$ by
\begin{align} \label{microcanonical}
f \mapsto \nu_n^{z,h}[f] :=  \frac{1}{|\mathbb{S}^{n - 3}|}\int_{\mathbb{S}^{n - 3}} d \Omega \ f \left( \sqrt{n} x w_{1,n} + \sqrt{n} y w^h_{2,n} + \sqrt{1 - || z ||^2}\sqrt{n} \sum_{j=3}^n \Omega_j v^h_{j,n} \right) .
\end{align}
As a direct application of \Cref{can_rep} along with the definitions given by \Cref{mixture} and \Cref{microcanonical}, we have the following central representation result for the FVGS.
\begin{lemma} \label{can_rep2} Let $h$ be a strongly varying external field.
\\
\\
It follows that
\begin{align} 
\mu^{\beta, J, h}_n =\int_{B(0,1)} \rho_n^{\beta, J, h}(dz) \ \nu_n^{z,h} .
\end{align}
\end{lemma}
\noindent
By using a similar calculation to the one used for the proof of the form of the limiting free energy, we show that the collection of mixture probability measures $\{ \rho_n^{\beta,J,h}\}_{n \in \mathbb{N}}$ is uniformly tight and its limit points are probability measures supported on the set $M^*(\beta,J,m)$. 
\begin{lemma} \label{mix_supp} Let $h$ be a strongly varying external field.
\\
\\
It follows that the collection of probability measures $\{ \rho^{\beta,J,h}_n\}_{n \in \mathbb{N}}$ is uniformly tight and 
\begin{align*}
L \left( \{ \rho_n^{\beta, J, h} \}_{n \in \mathbb{N}} \right) \subset \left\{ \rho \in \mathcal{M}_1 (B(0,1)) : \operatorname{supp} (\rho) \subset M^*(\beta,J,m)\right \} .
\end{align*} 
\end{lemma}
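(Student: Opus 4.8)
The plan is to run a Laplace-type (large deviations) estimate directly on the explicit mixture density \cref{mixture}, using two ingredients already available: the uniform convergence $\sup_{z \in B(0,1)}|\psi_n^{\beta,J,h}(z) - \psi^{\beta,J,m}(z)| \to 0$ established above, and the compactness and non-emptiness of $M^*(\beta,J,m)$ from \cref{argmax2}. Everything reduces to two uniform-in-$n$ bounds. First, a lower bound on the partition function: fix $z^* \in M^*(\beta,J,m)$, so $\|z^*\| < 1$ and $\psi^{\beta,J,m}(z^*) = f(\beta,J,m)$, and a ball $B(z^*,\eta) \subset B(0,1)$ on which the prefactor $e^{2\beta J x^2 + 4\beta\langle m_n^h, z\rangle}$ is bounded below uniformly in $n$ (possible since $m_n^h$ is convergent, hence bounded). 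Restricting the integral defining $Z_n(\beta,J,h)$ to $B(z^*,\eta)$ and combining the uniform convergence of $\psi_n^{\beta,J,h}$ with the continuity of $\psi^{\beta,J,m}$ gives $Z_n(\beta,J,h) \ge c_\epsilon\, e^{(n-4)(f(\beta,J,m) - \epsilon)}$ for every $\epsilon > 0$ and all large $n$. Second, a control of $\psi_n^{\beta,J,h}$ near the boundary: the prefactor and the non-logarithmic part $\tfrac{\beta J}{2}x^2 + \beta\langle m_n^h, z\rangle$ are bounded on $B(0,1)$ uniformly in $n$, while the logarithmic part is the same for every $n$ and drives $\psi_n^{\beta,J,h}$ to $-\infty$ at $\partial B(0,1)$, so $\sup_{r \le \|z\| < 1}\psi_n^{\beta,J,h}(z) \le C - g(r)$ with $C$ independent of $n$ and $g(r) \to +\infty$ as $r \uparrow 1$, and the prefactor is bounded above by some $e^{C'}$ uniformly.

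Combining these, for $r < 1$,
\[
\rho_n^{\beta,J,h}(\{\|z\| \ge r\}) = \frac{1}{Z_n(\beta,J,h)}\int_{\{r \le \|z\| < 1\}} e^{2\beta J x^2 + 4\beta\langle m_n^h, z\rangle} e^{(n-4)\psi_n^{\beta,J,h}(z)}\, dz \le C'' e^{(n-4)(C - g(r) - f(\beta,J,m) + \epsilon)},
\]
so, after fixing $\epsilon$ small and then choosing $r$ close enough to $1$, the right-hand side decays exponentially in $n$. Hence for every $\delta > 0$ there are $r_\delta < 1$ and $N_\delta$ with $\sup_{n \ge N_\delta}\rho_n^{\beta,J,h}(\{\|z\| \ge r_\delta\}) < \delta$; since the closed disk $\overline{B(0,r_\delta)}$ is a compact subset of the Polish space $B(0,1)$ and each of the finitely many remaining measures $\rho_1^{\beta,J,h},\dots,\rho_{N_\delta - 1}^{\beta,J,h}$ is individually tight, enlarging the compact set slightly yields uniform tightness of $\{\rho_n^{\beta,J,h}\}_{n \in \mathbb{N}}$.

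For the support of the limit points, let $\rho_{n_k}^{\beta,J,h} \to \rho$ weakly. By the boundary estimate, for $r$ close to $1$ the open set $\{\|z\| > r\}$ satisfies $\rho(\{\|z\| > r\}) \le \liminf_k \rho_{n_k}^{\beta,J,h}(\{\|z\| \ge r\}) = 0$, so $\operatorname{supp}(\rho) \subset \overline{B(0,r_0)}$ for some $r_0 < 1$; in particular no point with norm above that threshold lies in $\operatorname{supp}(\rho)$. For a point $z$ with $\|z\| < 1$ and $z \notin M^*(\beta,J,m)$, compactness of $M^*(\beta,J,m)$ lets us pick $\eta > 0$ with $\overline{B(z,\eta)} \subset B(0,1)$ and $\overline{B(z,\eta)} \cap M^*(\beta,J,m) = \emptyset$; continuity of $\psi^{\beta,J,m}$ on this compact set then gives $\sup_{\overline{B(z,\eta)}}\psi^{\beta,J,m} \le f(\beta,J,m) - 2\delta$ for some $\delta > 0$, hence $\sup_{\overline{B(z,\eta)}}\psi_{n_k}^{\beta,J,h} \le f(\beta,J,m) - \delta$ for large $k$ by uniform convergence. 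Bounding the numerator of $\rho_{n_k}^{\beta,J,h}(\overline{B(z,\eta)})$ by the uniform upper bounds and the denominator by the partition-function lower bound with $\epsilon = \delta/2$, and using the Portmanteau inequality $\rho(\overline{B(z,\eta)}) \le \limsup_k \rho_{n_k}^{\beta,J,h}(\overline{B(z,\eta)})$, we obtain $\rho(\overline{B(z,\eta)}) = 0$, so $z \notin \operatorname{supp}(\rho)$. Therefore $\operatorname{supp}(\rho) \subset M^*(\beta,J,m)$. The only genuinely non-routine point — and the one with no counterpart in the compact state space analysis of the RFCW model — is the uniform-in-$n$ control of escape of mass toward $\partial B(0,1)$; the rest of the Laplace asymptotics is made uniform for free by the explicit inequality $|\psi_n^{\beta,J,h}(z) - \psi^{\beta,J,m}(z)| \le \beta\|m_n^h - m\|$.
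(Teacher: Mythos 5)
Your proof is correct and follows essentially the same route as the paper's: a Laplace-type upper bound on the mass of $\rho_n^{\beta,J,h}$ outside a neighbourhood of $M^*(\beta,J,m)$, played against the lower bound $Z_n(\beta,J,h) \geq c_\epsilon e^{(n-4)(f(\beta,J,m)-\epsilon)}$, with the paper merely compressing your two separate estimates (escape of mass toward $\partial B(0,1)$, and compact sets away from $M^*$) into the single statement $\limsup_n \frac{1}{n}\ln\rho_n(K^c) < 0$ for a compact neighbourhood $K$ of $M^*(\beta,J,m)$. Note also that you have (correctly) read the logarithmic term of the tilting function with the sign appearing in \cref{can_rep}, namely $+\tfrac{1}{2}\ln(1-\|z\|^2)$, which is the convention under which your boundary estimate, and the lemma itself, hold.
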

\noindent
For the full proof, see \Cref{free_tight}. Note that the definition of the support we are using is $\operatorname{supp}(\rho) := \{ x \in B(0,1) : \forall \delta > 0, \ \rho(B(x, \delta)) > 0 \}$.
\\
\\
For this model, we show that the structure of $M^*(\beta,J,m)$ is simple and completely characterizable. It is either a set with one element or two elements depending on the parameters of the model, see \Cref{max_point1} and \Cref{max_point2} for the proofs.  For the parameter range where $m^\parallel = 0$ and $m^\perp < J$ simultaneously, the transition from a single element to two elements is marked by a critical inverse temperature $\beta_c := \frac{J}{(J - m^\perp)(J + m^\perp)}$, such that the set consists of a single element when $\beta \leq \beta_c$ and it consists of two elements when $\beta > \beta_c$. When the mean is non-vanishing i.e. $m^\parallel \not = 0$, the set consists of a single element $z^*$ such that $x^* > 0$ if $m^\parallel > 0$ and $x^* < 0$ if $m^\parallel < 0$. For the other parameter ranges, the set consists of a single element $z^0$ such that $x^0 = 0$.  
\newpage
\noindent
We summarize these results in the following table.
\begin{table}[h!]
\begin{center}
\caption{Parametric ranges for the structure of the set of global maximizing points of the limiting exponential tilting function for the strongly varying deterministic inhomogeneous external field}
\label{phase_table}
\begin{tabular}{ |c|c|c|c| } 
\hline
Mean & Standard deviation and coupling constant & Inverse temperature & $M^*(\beta,J,m)$ \\
\hline
$m^\parallel \not = 0$ & $J > 0$ & $\beta > 0$ & $\{ z^*  \}$ \\
$m^\parallel = 0$ & $m^\perp \geq J$ & $\beta > 0$ & $\{ z^0 \}$ \\
$m^\parallel = 0$ & $m^\perp < J$ & $\beta \leq \beta_c$ & $\{ z^0 \}$ \\
$m^\parallel = 0$ & $m^\perp < J$ & $\beta > \beta_c$ & $\{ z^+, z^- \}$ \\
\hline
\end{tabular}
\end{center}
\end{table}
\\
Let us now classify these parameter ranges in the following way. We will say that we are in the pure state(PS) parameter range if the set $M^*(\beta,J,m)$ consists of a single element. This parameter range can be deduced from the first three rows of the above table. We will say that we are in the mixed state(MS) parameter range if the set $M^*(\beta,J,m)$ consists of two elements. This parameter range can be deduced from the last row of the above table. We will always refer to the two elements of this set as $z^+$ and $z^-$, where $x^+ > 0$.  
\\
\\
As for the probability measure $\nu_n^{z,h}$, we show that it satisfies a type of uniform convergence in the variable $z$ to a sufficiently regular probability measure on $\mathbb{R}^\mathbb{N}$ in the limit as $n \to \infty$. To begin with,  denote $\eta$ to be the probability measure on $\mathbb{R}^\mathbb{N}$ given via its action on $f \in \operatorname{LBL} (\mathbb{R}^\mathbb{N})$ by
\begin{align} \label{infinity_gaussian}
f \mapsto \eta[f] := \left( \int_{\mathbb{R}^{I}} d \phi \ e^{- \frac{|| \phi ||^2}{2}}\right)^{-1} \int_{\mathbb{R}^{I}} d \phi \ e^{- \frac{|| \phi ||^2}{2}} f (\phi) .
\end{align}
where $I \subset \mathbb{N}$ is the local index set of $f$. This probability measure on $\mathbb{R}^\mathbb{N}$ is the countable dimensional version of the standard finite dimensional Gaussian probability measure. 
\\
\\
We will need the projection $P_n^h : \mathbb{R}^n \to \mathbb{R}^n$ to the subspace $W_n^h$ given by the formula
\begin{align*}
P_n^h (\phi) = \left< \phi, w_{1,n}\right> w_{1,n} + \left< \phi, w^h_{2,n}\right> w^h_{2,n} .
\end{align*} 
Let $T_n^{z,h} : \mathbb{R}^\mathbb{N} \to \mathbb{R}^{n} \times \mathbb{R}^{\mathbb{N} \setminus [n]} \subset \mathbb{R}^\mathbb{N}$ be the transport map given by 
\begin{align*}
(T_n^{z,h}(\phi))_1 = \sqrt{n} x w_{1,n} + \sqrt{n} y w^h_{2,n} + \sqrt{1 - || z ||^2} \sqrt{n} \frac{\phi - P^h_n (\phi)}{|| \phi - P^h_n (\phi)||}
\end{align*}
and $(T_n^{z,h} (\phi))_2 = 0$, where $z = (x,y) \in B(0,1)$. For the given probability measures and transport maps, we show that $\nu_n^{z,h} = { T_{n}^{z,h} }_* \eta$, see \Cref{transport1}, where ${ T_{n}^{z,h} }_* \eta$ is the pushforward measure of $\eta$ by $T_n^{z,h}$.
\\
\\
Continuing, note that
\begin{align*}
\lim_{n \to \infty} \sqrt{n} \pi_I (w_{1,n}) = 1_I, \ \lim_{n \to \infty} \sqrt{n} \pi_I (w^h_{2,n}) = \lim_{n \to \infty} \frac{h_I - m_n^{h, \parallel} 1_I}{\sqrt{m_n^{h, \perp}}} = \frac{h_I - m^\parallel 1_I}{\sqrt{m^\perp}} ,
\end{align*}  
where $I \subset \mathbb{N}$ is any finite index set. Let $T_\infty^{z,h} : \mathbb{R}^\mathbb{N} \to \mathbb{R}^\mathbb{N}$ be the transport map given by
\begin{align} \label{gaussian_pushforward}
T^{z,h} (\phi) := \sqrt{1 - || z ||^2} \phi + x 1 + y \frac{h - m^\parallel 1}{m^\perp} ,
\end{align}
where $z = (x,y) \in B(0,1)$ and $m^\perp \not = 0$. Using this transport map, consider the probability measure $\nu_\infty^{z,h}$ on $\mathbb{R}^\mathbb{N}$ given by $\nu_\infty^{z,h} :={T^{z,h}_\infty}_* \eta$. We show the following uniform convergence result concerning the convergence of $\nu_n^{z,h}$ to $\nu_\infty^{z,h}$.
\begin{lemma} \label{uni_conv} Let $h$ be a strongly varying external field.
\\
\\
It follows that
\begin{align*}
\lim_{n \to \infty} \sup_{z \in B(0,1)} \left| \nu_n^{z,h} [f] - \nu_\infty^{z,h} [f] \right| = 0
\end{align*}
for any $f \in \operatorname{LBL} (\mathbb{R}^\mathbb{N})$.
\end{lemma}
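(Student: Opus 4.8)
\emph{Proof proposal.} The plan is to work entirely through the pushforward representations $\nu_n^{z,h} = {T_n^{z,h}}_*\eta$ (established in \cref{transport1}) and $\nu_\infty^{z,h} = {T_\infty^{z,h}}_*\eta$, and to exploit that the $z$--dependence of both transport maps enters only through the coefficients $x$, $y$ and $\sqrt{1-\|z\|^2}$, all bounded by $1$ on $B(0,1)$. Fix $f\in\operatorname{LBL}(\mathbb{R}^\mathbb{N})$ with local index set $I$, Lipschitz constant $L$ on $\mathbb{R}^I$, and sup bound $\|f\|_\infty$. First I would write $\nu_n^{z,h}[f]-\nu_\infty^{z,h}[f]=\int_{\mathbb{R}^\mathbb{N}}\eta(d\phi)\,\big(f(\pi_I(T_n^{z,h}\phi))-f(\pi_I(T_\infty^{z,h}\phi))\big)$ and bound the integrand by $\min\{2\|f\|_\infty,\,L\,|\pi_I(T_n^{z,h}\phi)-\pi_I(T_\infty^{z,h}\phi)|\}$, using that $f$ is both bounded and Lipschitz.

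Next I would decompose the displacement $\pi_I(T_n^{z,h}\phi)-\pi_I(T_\infty^{z,h}\phi)$. Since $\sqrt n\,\pi_I(w_{1,n})=1_I$ identically and $\sqrt n\,\pi_I(w^h_{2,n})=(h_I-m_n^{h,\parallel}1_I)/m_n^{h,\perp}\to(h_I-m^\parallel 1_I)/m^\perp$ as $n\to\infty$ by the strongly varying hypothesis (so that $m_n^{h,\parallel}\to m^\parallel$ and $m_n^{h,\perp}\to m^\perp>0$), the deterministic parts of $T_n^{z,h}$ and $T_\infty^{z,h}$ differ by $y$ times a vector $\delta_n$ with $|\delta_n|\to 0$ independently of $\phi$ and $z$; the remaining contribution is $\sqrt{1-\|z\|^2}\,(g_n(\phi)-\pi_I(\phi))$, where $g_n(\phi):=\sqrt n\,\pi_I\big((\pi_{[n]}\phi-P_n^h\pi_{[n]}\phi)/\|\pi_{[n]}\phi-P_n^h\pi_{[n]}\phi\|\big)$ (defined $\eta$--a.e., the denominator vanishing only on an $\eta$--null set) carries no $z$--dependence. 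Consequently the integrand is bounded, uniformly in $z\in B(0,1)$, by $\min\{2\|f\|_\infty,\,L(|\delta_n|+|g_n(\phi)-\pi_I(\phi)|)\}$, so that $\sup_{z}|\nu_n^{z,h}[f]-\nu_\infty^{z,h}[f]|\le\int\eta(d\phi)\,\min\{2\|f\|_\infty,\,L(|\delta_n|+|g_n(\phi)-\pi_I(\phi)|)\}$. The integrand here is dominated by the $\eta$--integrable constant $2\|f\|_\infty$, so by dominated convergence it suffices to prove that $g_n(\phi)\to\pi_I(\phi)$ in $\eta$--probability.

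For this I would factor $g_n(\phi)=R_n(\phi)\,\pi_I\big(\pi_{[n]}\phi-P_n^h\pi_{[n]}\phi\big)$ with $R_n(\phi)=\sqrt n/\|\pi_{[n]}\phi-P_n^h\pi_{[n]}\phi\|$ and handle the two factors separately. By orthogonality of the projection, $\|\pi_{[n]}\phi-P_n^h\pi_{[n]}\phi\|^2=\|\pi_{[n]}\phi\|^2-\|P_n^h\pi_{[n]}\phi\|^2$; under $\eta$ the coordinates $\phi_i$ are i.i.d.\ standard Gaussians, hence $\tfrac1n\|\pi_{[n]}\phi\|^2\to 1$ by the law of large numbers, while $\|P_n^h\pi_{[n]}\phi\|^2=\langle\pi_{[n]}\phi,w_{1,n}\rangle^2+\langle\pi_{[n]}\phi,w^h_{2,n}\rangle^2$ has expectation $2$ (projections onto orthonormal vectors) and is therefore bounded in probability; so $\tfrac1n\|\pi_{[n]}\phi-P_n^h\pi_{[n]}\phi\|^2\to 1$ and $R_n\to 1$ in probability. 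For the other factor, once $n$ is large enough that $I\subseteq[n]$ one has $\pi_I(\pi_{[n]}\phi)=\pi_I(\phi)$, and $\pi_I(P_n^h\pi_{[n]}\phi)=\langle\pi_{[n]}\phi,w_{1,n}\rangle\,\pi_I(w_{1,n})+\langle\pi_{[n]}\phi,w^h_{2,n}\rangle\,\pi_I(w^h_{2,n})\to 0$ in probability, since $\pi_I(w_{1,n})=1_I/\sqrt n\to 0$, $\pi_I(w^h_{2,n})\to 0$, and the two coefficients are bounded in probability; thus $\pi_I(\pi_{[n]}\phi-P_n^h\pi_{[n]}\phi)\to\pi_I(\phi)$ in probability. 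Multiplying the two limits gives $g_n(\phi)\to\pi_I(\phi)$ in probability, which finishes the argument. The only step that needs genuine care is the control of the denominator $R_n$, since one is dividing by a random quantity that could a priori be small; the concentration estimate $\tfrac1n\|\pi_{[n]}\phi-P_n^h\pi_{[n]}\phi\|^2\to 1$, obtained by pairing the law of large numbers with the $O(1)$--in--expectation bound on the finite-rank correction $\|P_n^h\pi_{[n]}\phi\|^2$, is the load-bearing point. Everything else is bookkeeping, and the uniformity in $z$ comes for free once the bounded $z$--dependent coefficients have been pulled outside.
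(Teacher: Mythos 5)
Your proposal is correct and follows essentially the same route as the paper's proof: the pushforward representation from \cref{transport1}, the Lipschitz bound, the split into a vanishing deterministic discrepancy (from $w_{1,n}$ and $w_{2,n}^h$) plus the random term $\sqrt{n}\,(\phi-P_n^h\phi)/\lVert\phi-P_n^h\phi\rVert-\phi$, with uniformity in $z$ coming for free from the bounded coefficients $x$, $y$, $\sqrt{1-\lVert z\rVert^2}$. The only divergence is technical: where you control the normalization $\sqrt{n}/\lVert\phi-P_n^h\phi\rVert$ by the law of large numbers for $\tfrac1n\lVert\pi_{[n]}\phi\rVert^2$ together with bounded convergence in probability, the paper bounds the same quantity in $L^2(\eta_n)$ via Cauchy--Schwarz and a one-dimensional Laplace-method estimate for the radial integral; both arguments identify and resolve the same load-bearing point.
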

\noindent
For the full proof, see \Cref{uni_conv_sec}. The proof of the uniform convergence result relies on specific asymptotic properties of the spherical constraint. The main technical difficulties involve the lack of symmetries such as permutation invariance or translation invariance which are present in the MFS model.
\\
\\
By combining together the limiting point result \Cref{mix_supp} for the collection of mixture probability measures $\{ \rho_n^{\beta,J,h} \}_{n \in \mathbb{N}}$, the uniform convergence result from \Cref{uni_conv}, and repeated use of Prokhorov's theorem, see \cite[Chapter 3]{Ethier1986}, we have the following partial classification of the IVGS.
\begin{theorem} \label{main_result1}
Let $h$ be a strongly varying external field.
\begin{enumerate}
\item For the pure state parameter range, we have 
\begin{align*}
\mathcal{G}_\infty (\beta,J,h) = \left\{ \nu_\infty^{z^*,h}  \right\} .
\end{align*}
\item For the mixed state parameter range, we have
\begin{align*}
\mathcal{G}_\infty (\beta,J,h) \subset \operatorname{conv} \left(  \nu_\infty^{z^+,h}, \  \nu_\infty^{z^-,h} \right) := \{ \lambda \nu_\infty^{z^+,h} + (1 - \lambda) \nu_\infty^{z^-,h} : \lambda \in [0,1] \} .
\end{align*}
\end{enumerate}
\end{theorem}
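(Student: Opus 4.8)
\section*{Proof proposal for \cref{main_result1}}

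The plan is to push the mixture representation \cref{can_rep2} to the infinite volume limit along subsequences, feeding it the structural information from \cref{mix_supp} and the uniform convergence from \cref{uni_conv}. As a preliminary observation, fix $f \in \operatorname{LBL}(\mathbb{R}^\mathbb{N})$ with local index set $I$ and note that the map $z \mapsto \nu_\infty^{z,h}[f] = \eta\bigl[f \circ T^{z,h}_\infty\bigr]$ is bounded (by $\lVert f \rVert_\infty$) and continuous on $B(0,1)$: by \cref{gaussian_pushforward}, $\pi_I(T^{z,h}_\infty(\phi))$ is affine in $z$ with coefficients depending continuously on $z$ through $\sqrt{1 - \lVert z \rVert^2}$, so $z \mapsto f(T^{z,h}_\infty(\phi))$ is continuous for each $\phi$ and uniformly bounded, and dominated convergence transfers the continuity to the $\eta$-integral. (Reassuringly, $M^*(\beta,J,m)$ lies in the interior of $B(0,1)$ since $\psi^{\beta,J,m}(z) \to -\infty$ as $\lVert z \rVert \to 1$, but the continuity above holds on all of $\overline{B(0,1)}$, so no boundary issue arises.)

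Next I would take an arbitrary $\mu \in \mathcal{G}_\infty(\beta,J,h)$, so that $\mu_{n_k}^{\beta,J,h}[f] \to \mu[f]$ for all $f \in \operatorname{LBL}(\mathbb{R}^\mathbb{N})$ along some subsequence $\{n_k\}$. By the uniform tightness in \cref{mix_supp} and Prokhorov's theorem, pass to a further subsequence (not relabeled) along which $\rho_{n_k}^{\beta,J,h} \to \rho$ weakly, with $\operatorname{supp}(\rho) \subset M^*(\beta,J,m)$. Then split, using \cref{can_rep2},
\begin{align*}
\mu_{n_k}^{\beta,J,h}[f] = \int_{B(0,1)} \rho_{n_k}^{\beta,J,h}(dz)\, \nu_\infty^{z,h}[f] + \int_{B(0,1)} \rho_{n_k}^{\beta,J,h}(dz)\, \bigl(\nu_{n_k}^{z,h}[f] - \nu_\infty^{z,h}[f]\bigr).
\end{align*}
The second term is bounded in absolute value by $\sup_{z \in B(0,1)} \lvert \nu_{n_k}^{z,h}[f] - \nu_\infty^{z,h}[f] \rvert$, which vanishes by \cref{uni_conv}; the first term converges to $\int \rho(dz)\, \nu_\infty^{z,h}[f]$ by the weak convergence of $\rho_{n_k}^{\beta,J,h}$ together with the boundedness and continuity established above (Portmanteau). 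Hence $\mu[f] = \int_{M^*(\beta,J,m)} \rho(dz)\, \nu_\infty^{z,h}[f]$ for every $f \in \operatorname{LBL}(\mathbb{R}^\mathbb{N})$, and since this family is convergence determining, $\mu = \int_{M^*(\beta,J,m)} \rho(dz)\, \nu_\infty^{z,h}$.

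To close, I would specialize to the two ranges. In the mixed state range $M^*(\beta,J,m) = \{z^+, z^-\}$, so $\rho = \lambda \delta_{z^+} + (1-\lambda)\delta_{z^-}$ for some $\lambda \in [0,1]$, giving $\mu = \lambda \nu_\infty^{z^+,h} + (1-\lambda)\nu_\infty^{z^-,h}$, which is the asserted inclusion. In the pure state range $M^*(\beta,J,m) = \{z^*\}$ forces $\rho = \delta_{z^*}$, hence $\mu = \nu_\infty^{z^*,h}$; and moreover the argument shows that \emph{every} weakly convergent subsequence of $\{\rho_n^{\beta,J,h}\}$ has limit $\delta_{z^*}$, so by uniform tightness $\rho_n^{\beta,J,h} \to \delta_{z^*}$ weakly, and the same splitting yields $\mu_n^{\beta,J,h}[f] \to \nu_\infty^{z^*,h}[f]$ for all $f \in \operatorname{LBL}(\mathbb{R}^\mathbb{N})$, i.e.\ $\mu_n^{\beta,J,h} \to \nu_\infty^{z^*,h}$ weakly, so $\mathcal{G}_\infty(\beta,J,h) = \{\nu_\infty^{z^*,h}\}$ exactly. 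The hard analytic inputs have all been isolated beforehand (\cref{mix_supp} and \cref{uni_conv}), so the only delicate step here is the interchange of the uniform-in-$z$ limit of the integrands with the weak limit of the mixing measures; this is exactly the routine lemma ``uniform convergence of integrands plus weak convergence of measures implies convergence of the integrals,'' and everything else is bookkeeping with subsequences and Prokhorov's theorem.
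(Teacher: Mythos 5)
Your proposal is correct and follows essentially the same route as the paper, which proves this via \cref{class1}, \cref{class2}, and \cref{single_point}: the same splitting of $\mu_{n_k}[f]$ into a uniform-convergence error controlled by \cref{uni_conv} plus a term handled by Prokhorov's theorem and the support result \cref{mix_supp}, followed by the same subsequence-of-subsequence argument to upgrade to full-sequence convergence in the pure state range. The only addition is your explicit verification that $z \mapsto \nu_\infty^{z,h}[f]$ is bounded and continuous, a point the paper uses implicitly.
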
  
\noindent
For the full proofs, see \Cref{single_point} and \Cref{class2}. 
\\
\\
In the PS parameter range, we see that there exists a unique IVGS. Since the proof of this result relies on Prokhorov's theorem, we do not obtain a complete classification of the collection of IVGS for the MS parameter range.  Indeed, it is possible that there could still exist a single unique limit, but this method of proof does not provide it.
\\
\\
Let us emphasize that the method of proof used in \Cref{main_result1} relies heavily on Prokhorov's theorem, and the uniqueness of the IVGS in the PS parameter range is due to the fact that the only probability measure supported by a single point is the Dirac measure at that point. In general, when $M^*(\beta,J,m)$ does not consist of a single point, to fully characterize the limit points of the mixtures measures as in \Cref{mix_supp}, one would need to characterize the probability measures supported by $M^*(\beta,J,m)$ that can be obtained as weakly convergent subsequence of the mixtures measures $\{ \rho_n^{\beta,J,h}\}_{n \in \mathbb{N}}$.  Later on in this paper, we prove two theorems concerning this phenomenon. For the deterministic inhomogeneous external field, subject to additional assumptions to the magnetization vectors $\{ m_n^h \}_{n \in \mathbb{N}}$, we show that in the MS parameter range, the IVGS can still consist of a single unique point, see \Cref{conv_class}. For the random external field, we show that all convex combinations of the pure states can be obtained, see \Cref{main_result3}. These results show that in order to give results concerning the opposite inclusion or characterization of weakly converging subsequences of the mixture measures, one must control the, to be introduced in \Cref{weights}, relative weights associated with the FVGS.  We will discuss these results in depth once they are proven.
\subsubsection{Full classification of the infinite volume Gibbs states}  
\noindent
In the MS parameter range, there are two global maximizing points $z^\pm \in M^*(\beta,J,m)$. They are related by the fact that $x^- = - x^+ < 0$ and $y^- = y^+$. This suggests studying the mixture probability measure $\rho_n^{\beta,J,h}$ conditioned to the positive and negative quadrants $B_{+} (0,1) := B(0,1) \cap ((0, \infty) \times \mathbb{R})$ and $B_- (0,1) := B(0,1) \cap ((-\infty, 0) \times \mathbb{R})$, which is equivalent to conditioning the FVGS on the set where the magnetization is positive for the positive quadrant and negative for the negative quadrant. The conditioned FVGS $\mu_n^{\beta,J,h, \pm}$ act on $f \in C_b (\mathbb{R}^\mathbb{N})$ by
\begin{align} \label{fvgs_conditioned}
\mu_n^{\beta,J,h,\pm}[f] := \frac{\mu_n^{\beta,J,h} [ \mathbbm{1}(\pm M_n > 0) f]}{\mu_n^{\beta,J,h} [ \mathbbm{1}(\pm M_n > 0)]} .
\end{align} 
To accompany the conditioned FVGS, the weights $W_n^{\beta,J,h, \pm}$ are given by
\begin{align} \label{weights}
W_n^{\beta,J,h, \pm} := \mu_n^{\beta,J,h} [ \mathbbm{1}(\pm M_n > 0)] .
\end{align}
This conditioning yields a representation of the form
\begin{align} \label{condition_rep}
\mu_n^{\beta,J,h} = W_n^{\beta,J,h,+} \mu_n^{\beta,J,h,+} + (1- W_n^{\beta,J,h,+}) \mu_n^{\beta,J,h,-} .
\end{align}
By reusing the proof of \Cref{main_result1} for the PS parameter range, we show that the probability measures $\nu_n^{\beta,J,h, \pm}$ converge weakly in the limit as $n \to \infty$.
\begin{lemma} \label{condition_conv} Let $h$ be a strongly varying external field.
\\
\\
For the mixed state parameter range, it follows that
\begin{align*}
\lim_{n \to \infty} \mu_n^{\beta,J,h, \pm} = \nu_{\infty}^{z^\pm, h} .
\end{align*}
\end{lemma}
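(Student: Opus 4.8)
The plan is to transfer the argument used for the pure state part of \cref{main_result1} to each half-disc $B_\pm(0,1)$, after identifying the conditioned FVGS as a mixture of the microcanonical measures against a conditioned mixture measure. First I would record that on the support of $\nu_n^{z,h}$ the magnetization is deterministic: since $w_{1,n} = 1_n/\sqrt n$, while $w^h_{2,n}$ and the $v^h_{j,n}$ are all orthogonal to $1_n$, one has $M_n = \langle 1_n, \phi\rangle \equiv n x$ for every $\phi$ in the support of $\nu_n^{z,h}$, where $z = (x,y)$. Plugging this into the mixture representation \cref{can_rep2} and using that $\{x = 0\}$ is $\rho_n^{\beta,J,h}$-null gives, for bounded measurable $f$,
\begin{align*}
\mu_n^{\beta,J,h}[\mathbbm{1}(\pm M_n > 0) f] = \int_{B_\pm(0,1)} \rho_n^{\beta,J,h}(dz)\, \nu_n^{z,h}[f], \qquad W_n^{\beta,J,h,\pm} = \rho_n^{\beta,J,h}(B_\pm(0,1)),
\end{align*}
so that $\mu_n^{\beta,J,h,\pm} = \int \rho_n^{\beta,J,h,\pm}(dz)\, \nu_n^{z,h}$, where $\rho_n^{\beta,J,h,\pm}$ denotes $\rho_n^{\beta,J,h}$ conditioned on $B_\pm(0,1)$.

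Next I would show $\rho_n^{\beta,J,h,\pm} \to \delta_{z^\pm}$ weakly. By \cref{mixture}, on $B_\pm(0,1)$ the density of $\rho_n^{\beta,J,h,\pm}$ is proportional to $e^{2\beta J x^2 + 4\beta\langle m_n^h,z\rangle} e^{(n-4)\psi_n^{\beta,J,h}(z)}$; the prefactor is bounded above and below by positive constants uniformly in $n$ and $z$ (as $m_n^h \to m$), and $\psi_n^{\beta,J,h} \to \psi^{\beta,J,m}$ uniformly on $B(0,1)$. The key structural input is that, in the mixed state parameter range, $M^*(\beta,J,m) = \{z^+, z^-\}$ with $x^+ > 0$ and $x^- = -x^+ < 0$; hence $z^\pm$ is the \emph{unique} global maximizer of $\psi^{\beta,J,m}$ on the closed half-disc $\overline{B_\pm(0,1)}$, and it lies in the open set $B_\pm(0,1)$. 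A Laplace estimate identical to the one underlying \cref{mix_supp} and the pure state case of \cref{main_result1} — bounding the numerator over $\overline{B_\pm(0,1)} \setminus B(z^\pm,\delta)$ by $e^{(n-4)(\psi^{\beta,J,m}(z^\pm) - c_\delta + o(1))}$ with $c_\delta > 0$, and the normalizing integral from below by the contribution of a small ball around $z^\pm$ — then gives $\rho_n^{\beta,J,h,\pm}(\{|z - z^\pm| > \delta\}) \to 0$ for every $\delta > 0$, i.e. weak convergence to $\delta_{z^\pm}$. Here one uses that $\psi^{\beta,J,m} \to -\infty$ near $\|z\| = 1$, so the relevant suprema are attained in a compact subset of the open disc on which $\psi^{\beta,J,m}$ is continuous, yielding the strict gap $c_\delta$.

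Finally I would combine the two ingredients. Fix $f \in \operatorname{LBL}(\mathbb{R}^\mathbb{N})$. By \cref{uni_conv}, $\varepsilon_n := \sup_{z \in B(0,1)} |\nu_n^{z,h}[f] - \nu_\infty^{z,h}[f]| \to 0$, hence
\begin{align*}
\left| \mu_n^{\beta,J,h,\pm}[f] - \int \rho_n^{\beta,J,h,\pm}(dz)\, \nu_\infty^{z,h}[f] \right| \le \varepsilon_n .
\end{align*}
The map $z \mapsto \nu_\infty^{z,h}[f] = \eta\bigl[f\circ T^{z,h}\bigr]$ is bounded by $\|f\|_\infty$ and continuous on $B(0,1)$ by dominated convergence, since $T^{z,h}(\phi) = \sqrt{1-\|z\|^2}\,\phi + x\,1 + y\,\tfrac{h - m^\parallel 1}{m^\perp}$ depends continuously on $z$ for each fixed $\phi$ and $f$ is bounded continuous. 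Therefore $\int \rho_n^{\beta,J,h,\pm}(dz)\,\nu_\infty^{z,h}[f] \to \nu_\infty^{z^\pm,h}[f]$ by the weak convergence of $\rho_n^{\beta,J,h,\pm}$ to $\delta_{z^\pm}$, so $\mu_n^{\beta,J,h,\pm}[f] \to \nu_\infty^{z^\pm,h}[f]$ for every $f \in \operatorname{LBL}(\mathbb{R}^\mathbb{N})$; since this class is convergence determining and $\nu_\infty^{z^\pm,h}$ is a probability measure, $\mu_n^{\beta,J,h,\pm} \to \nu_\infty^{z^\pm,h}$ weakly.

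The main obstacle is the second step: one must rule out mass of the conditioned mixture accumulating on the cut $\{x = 0\}$ or near $\|z\| = 1$. This is exactly where the full classification of $M^*(\beta,J,m)$ enters — the absence of any global maximizer of $\psi^{\beta,J,m}$ with $x = 0$ produces the strict spectral gap $c_\delta > 0$ in the Laplace estimate on the half-disc, so $\rho_n^{\beta,J,h,\pm}$ concentrates at the single interior maximizer $z^\pm$ regardless of the behaviour of the weights $W_n^{\beta,J,h,\pm}$.
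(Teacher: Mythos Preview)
Your proposal is correct and follows essentially the same approach as the paper: write $\mu_n^{\beta,J,h,\pm}$ as the mixture $\int \rho_n^{\beta,J,h,\pm}(dz)\,\nu_n^{z,h}$, show $\rho_n^{\beta,J,h,\pm}\to\delta_{z^\pm}$ by rerunning the Laplace argument behind \cref{mix_supp}/\cref{single_point} on the half-disc, and finish via the uniform convergence of \cref{uni_conv}. In fact you supply more detail than the paper does---the explicit identification $M_n \equiv nx$ on $\operatorname{supp}\nu_n^{z,h}$ and the continuity of $z\mapsto\nu_\infty^{z,h}[f]$---where the paper simply says ``observe that'' and ``by replicating the proofs of \cref{single_point}''.
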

\noindent
For the full proof, see \Cref{asymp_analysis}. 
\\
\\
It thus follows that the convergence properties of the weights $W_n^{\beta,J,h, \pm}$ determine the limiting structure of the FVGS. By rearranging the form of the weights, we see that
\begin{align} \label{weight_rep}
W_n^{\beta, J, h, +} = \frac{1}{1 + \frac{\int_{B_-(0,1)} dz \ e^{2 \beta J x^2 + 4 \beta \left< m_n^h, z \right>} e^{(n - 4)\psi^{\beta,J,m}_n(z)}}{\int_{B_+(0,1)} dz \ e^{2 \beta J x^2 + 4 \beta \left< m_n^h, z \right>} e^{(n - 4)\psi^{\beta,J,m}_n(z)}}} .
\end{align} 
To resolve the convergence properties, we introduce a sequence of local maximizing points $\{ z_n^* \}_{n \in \mathbb{N}}$ such that each $z_n^*$ is a local maximizing point of $\psi_n^{\beta,J,h}$, $z_n^*$ satisfies the critical point equation $\nabla \psi_n^{\beta,J,h} (z_n^*) = 0$, and $z_n^* \to z^*$ in the limit as $n \to \infty$, where $z^*$ is a global maximizing point of $\psi^{\beta,J,m}$, see \Cref{max_seq} for the construction. From the proof presented for \Cref{max_point2}, we know that the Hessian $H[\psi^{\beta,J,m}]$ of $\psi^{\beta,J,m}$ is negative definite at the points $z^\pm$. With these observations, we show that
\begin{align}
\lim_{n \to \infty} \frac{n \int_{B_{\pm} (0,1)} dz \ e^{2 \beta J x^2 + 4 \beta \left< m_n^h, z \right>} e^{(n - 4)\psi^{\beta,J,m}_n(z)}}{e^{n \psi^{\beta,J,m}_n(z_n^{\pm})}} = \frac{1}{(1 - || z^\pm ||^2)^2}\int_{\mathbb{R}^2} dz \ e^{\frac{1}{2} \left< z, H[\psi^{\beta,J,m}](z^\pm) z \right>} ,
\end{align}
where $\{ z_n^\pm\}_{n \in \mathbb{N}}$ is the collection of local maximizing points in their own respective quadrants of $B(0,1)$. The proof, see \Cref{laplace_asymp}, is essentially a modification of a similar proof for Laplace-type integrals which can be found in \cite[Chapter 2]{Wong2001}. The idea and method of constructing sequences of local maximizing points in this way is a model specific adaptation of the same method presented in \cite{Matos1991} \cite{Matos1992}.
\\
\\
Let us also remark that all of the proofs that are presented in \Cref{asymp_analysis} use the same notation and sequence of local maximizing points $\{ z_n^* \}_{n \in \mathbb{N}}$, and that the proofs and techniques presented here are only valid when the Hessian of $\psi^{\beta,J,m}$ at $z^*$ is negative definite. This point is also emphasized below the proof of \Cref{max_seq}. 
\\
\\
In the following exposition, we will present a series of results concerning the rates of convergence of various quantities present in these calculations. We will use the symbol $\approx$ to imply that the results hold in the large $n$ limit with a suitable error term for the desired application. In this notation, the weights satisfy
\begin{align*}
W_n^{\beta,J,h,+} \approx \frac{1}{1 + e^{n (\psi_n^{\beta,J,h} (z_n^-) - \psi_n^{\beta,J,h} (z_n^+))}} .
\end{align*} 
Using the critical point equations, we show that the local maximizing points satisfy
\begin{align*}
z_n^\pm - z^\pm \approx - \beta H[\psi^{\beta,J,m}]^{-1} (z^\pm) (m_n^h - m) ,
\end{align*}
see \Cref{diff_eq} for the proof. As a direct application of this result, we show that the exponential tilting functions evaluated at the local maximizing points satisfy
\begin{align*}
\psi_n^{\beta,J,h} (z_n^\pm) - \psi^{\beta,J,m} (z^\pm) \approx \beta \left< m_n^h - m, z^\pm \right> - \frac{\beta^2}{2} \left< m_n^h - m, H[\psi^{\beta,J,m}]^{-1} (z^\pm) (m_n^h - m) \right>,
\end{align*}
see \Cref{diff_eq2} for the proof. Finally, using the fact that $x^- = - x^+$, $y^+ = y^-$, and $\psi^{\beta,J,m} (z^+) = \psi^{\beta,J,m} (z^-)$, and the previous result, we show that
\begin{align*}
\psi_n^{\beta,J,h} (z_n^-) - \psi_n^{\beta,J,h} (z_n^+) \approx - 2 \beta (m_n^{h, \parallel} - m^\parallel) x^+ .
\end{align*}
By combining all of these results together, we can compute the limit of the weights by specifying the rates of convergence of the sample mean and sample standard deviation.
\begin{lemma} \label{weight_asymp}
Let $h$ be a strongly varying external field, and suppose that we are in the mixed state parameter range. 
\\
\\
Suppose there is $\delta \in [0, \infty)$ such that $n^\delta (m_n^h - m) \to \gamma := (\gamma^{\parallel}, \gamma^\perp) \in  \mathbb{R}^2$ in the limit as $n \to \infty$.
\begin{enumerate}
\item If $\delta \in [0,1)$ and $\gamma^{\parallel} \not = 0$, it follows that
\begin{align*}
\lim_{n \to \infty} W_n^{\beta,J,h,+} = \mathbbm{1}(\gamma^\parallel > 0) .
\end{align*}
\item If $\delta = 1$, it follows that
\begin{align*}
\lim_{n \to \infty} W_n^{\beta,J,h,+} = \frac{1}{1 + e^{- 2 \beta x^+ \gamma^\parallel}} .
\end{align*}
\item If $\delta \in (1, \infty)$, it follows that
\begin{align*}
\lim_{n \to \infty} W_n^{\beta,J,h,+} = \frac{1}{2} .
\end{align*}
\end{enumerate}
\end{lemma}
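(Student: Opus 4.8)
The plan is to turn the chain of heuristic $\approx$ relations recorded above into honest limits, by tracking the size of each error term against the scale $n^{1-\delta}$ that governs the problem. Starting from the exact identity $W_n^{\beta,J,h,+} = (1+R_n)^{-1}$ with
\[ R_n := \frac{\int_{B_-(0,1)} dz\, e^{2\beta J x^2 + 4\beta\langle m_n^h,z\rangle} e^{(n-4)\psi_n^{\beta,J,h}(z)}}{\int_{B_+(0,1)} dz\, e^{2\beta J x^2 + 4\beta\langle m_n^h,z\rangle} e^{(n-4)\psi_n^{\beta,J,h}(z)}}, \]
I would apply the Laplace asymptotics of \cref{laplace_asymp} to numerator and denominator separately, using the sequences of local maximizers $z_n^\pm$ of \cref{max_seq} and the negative definiteness of $H[\psi^{\beta,J,m}]$ at $z^\pm$ from \cref{max_point2}. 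Writing $C^\pm := (1-\|z^\pm\|^2)^{-2}\int_{\mathbb{R}^2} dz\, e^{\frac12\langle z, H[\psi^{\beta,J,m}](z^\pm) z\rangle}$, this gives $R_n = (1+o(1))\,(C^-/C^+)\, e^{n(\psi_n^{\beta,J,h}(z_n^-) - \psi_n^{\beta,J,h}(z_n^+))}$. The first genuine point is that $C^+ = C^-$: from $x^- = -x^+$ and $y^- = y^+$ one has $\|z^+\| = \|z^-\|$, and a direct computation shows that the diagonal entries of $H[\psi^{\beta,J,m}]$ are even in $x$ while the off-diagonal entry is odd in $x$, so $H[\psi^{\beta,J,m}](z^-) = D\,H[\psi^{\beta,J,m}](z^+)\,D$ with $D = \operatorname{diag}(1,-1)$; the change of variables $z\mapsto Dz$ then identifies the two Gaussian integrals. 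Hence $R_n = (1+o(1))\, e^{n(\psi_n^{\beta,J,h}(z_n^-) - \psi_n^{\beta,J,h}(z_n^+))}$.

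Next I would make the exponent precise. Combining \cref{diff_eq} and \cref{diff_eq2} with $x^- = -x^+$, $y^- = y^+$ and $\psi^{\beta,J,m}(z^+) = \psi^{\beta,J,m}(z^-)$, the zeroth-order terms cancel exactly, the linear-in-$(m_n^h-m)$ terms contribute $\beta\langle m_n^h - m, z^- - z^+\rangle = -2\beta x^+ (m_n^{h,\parallel} - m^\parallel)$, and the quadratic terms enter only through $H[\psi^{\beta,J,m}]^{-1}(z^-) - H[\psi^{\beta,J,m}]^{-1}(z^+) = D\,H[\psi^{\beta,J,m}]^{-1}(z^+)\,D - H[\psi^{\beta,J,m}]^{-1}(z^+)$, which is purely off-diagonal and hence contributes a term of size $O(\|m_n^h-m\|^2)$; with the $O(\|m_n^h-m\|^3)$ Taylor remainder this yields
\[ \psi_n^{\beta,J,h}(z_n^-) - \psi_n^{\beta,J,h}(z_n^+) = -2\beta x^+ (m_n^{h,\parallel} - m^\parallel) + O(\|m_n^h - m\|^2). \]
Multiplying by $n$ and inserting the hypothesis $n^\delta(m_n^h - m)\to\gamma$, so that $n(m_n^{h,\parallel} - m^\parallel) = n^{1-\delta}(\gamma^\parallel + o(1))$ and $n\,O(\|m_n^h - m\|^2) = O(n^{1-2\delta})$, I obtain
\[ n\big(\psi_n^{\beta,J,h}(z_n^-) - \psi_n^{\beta,J,h}(z_n^+)\big) = -2\beta x^+ n^{1-\delta}(\gamma^\parallel + o(1)) + O(n^{1-2\delta}). \]

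The three regimes then follow by inspection. For $\delta\in(0,1)$ one has $n^{1-\delta}\to\infty$ and $n^{1-2\delta} = o(n^{1-\delta})$, so (using $\beta, x^+ > 0$) the exponent tends to $-\infty$ if $\gamma^\parallel > 0$ and to $+\infty$ if $\gamma^\parallel < 0$; thus $R_n\to 0$ or $R_n\to\infty$ and $W_n^{\beta,J,h,+}\to\mathbbm{1}(\gamma^\parallel > 0)$. The borderline case $\delta = 0$ with $\gamma^\parallel\neq 0$ in the statement is vacuous, since strong variation forces $m_n^h\to m$ and hence $\gamma = 0$. For $\delta = 1$ both the $o(1)$ and the $O(n^{-1})$ error vanish, so the exponent tends to $-2\beta x^+\gamma^\parallel$, giving $W_n^{\beta,J,h,+}\to(1+e^{-2\beta x^+\gamma^\parallel})^{-1}$. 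For $\delta > 1$ both $n^{1-\delta}$ and $n^{1-2\delta}$ tend to $0$, so the exponent tends to $0$ and $W_n^{\beta,J,h,+}\to\tfrac12$. In every case the multiplicative factor $1+o(1)$ in front of the exponential is harmless. The main obstacle is not conceptual but quantitative: one must be sure that the quadratic error $O(n^{1-2\delta})$ is genuinely dominated by the leading term $n^{1-\delta}$ across the whole open range $\delta\in(0,1)$ — which is exactly where strict positivity of $\delta$ (and the vacuity of $\delta = 0$) is used — and one must verify the symmetry $C^+ = C^-$ of the Laplace prefactors, for which the odd-in-$x$ behavior of the off-diagonal Hessian entry is the essential ingredient.
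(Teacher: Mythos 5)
Your proposal is correct and follows essentially the same route as the paper: reduce $W_n^{\beta,J,h,+}$ to a ratio of Laplace-type integrals via \cref{laplace_asymp} and \cref{max_seq}, expand the exponent difference using \cref{diff_eq2} together with $x^-=-x^+$, $y^-=y^+$, $\psi^{\beta,J,m}(z^+)=\psi^{\beta,J,m}(z^-)$, and then read off the three regimes from the scale $n^{1-\delta}$. You additionally verify the equality of the two Laplace prefactors via the $x\mapsto -x$ symmetry of the Hessian (which the paper only implicitly asserts when claiming $a_n\to 1$, and which is genuinely needed in cases 2 and 3) and correctly note that the $\delta=0$ case is vacuous; these are welcome refinements rather than a different argument.
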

\noindent
For the full proof, see \Cref{asymp_analysis}.
\\
\\
Combining together the conditioned representation from \Cref{condition_rep}, the weak convergence of the conditioned probability measures from \Cref{condition_conv}, and the asymptotics of the weights from \Cref{weight_asymp}, we present the full classification of the IVGS given sublinear, linear, and superlinear rates of convergence of $m_n^h$ to $m$ in the limit as $n \to \infty$.
\begin{theorem} \label{conv_class}
Let $h$ be a strongly varying external field, and suppose that we are in the mixed state parameter range. 
\\
\\
Suppose there is $\delta \in [0, \infty)$ such that $n^\delta (m_n^h - m) \to \gamma := (\gamma^{\parallel}, \gamma^\perp) \in  \mathbb{R}^2$ in the limit as $n \to \infty$.
\begin{enumerate}
\item If $\delta \in [0,1)$ and $\gamma^{\parallel} \not = 0$, it follows that
\begin{align*}
\lim_{n \to \infty} \mu_n^{\beta, J, h} = \mathbbm{1}(\gamma^\parallel > 0) \nu_\infty^{z^+,h} +  \mathbbm{1}(\gamma^\parallel < 0) \nu_\infty^{z^-,h}.
\end{align*}
\item If $\delta = 1$, it follows that
\begin{align*}
\lim_{n \to \infty} \mu_n^{\beta, J, h} = \frac{1}{1 + e^{- 2 \beta x^+ \gamma^\parallel}} \nu_\infty^{z^+,h} + \frac{1}{1 + e^{ 2 \beta x^+ \gamma^\parallel}} \nu_\infty^{z^-,h} .
\end{align*}
\item If $\delta \in (1, \infty)$, it follows that
\begin{align*}
\lim_{n \to \infty} \mu_n^{\beta, J, h} = \frac{1}{2} \nu_\infty^{z^+,h} + \frac{1}{2}\nu_\infty^{z^-,h} .
\end{align*}
\end{enumerate}
\end{theorem}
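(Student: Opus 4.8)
The plan is to assemble the three ingredients already in hand: the decomposition \cref{condition_rep}, the weak convergence of the conditioned states from \cref{condition_conv}, and the limit of the weights from \cref{weight_asymp}. Fix $f \in \operatorname{LBL}(\mathbb{R}^\mathbb{N})$; since $\operatorname{LBL}(\mathbb{R}^\mathbb{N})$ is convergence determining on $\mathcal{M}_1(\mathbb{R}^\mathbb{N})$, it suffices to show that $\mu_n^{\beta,J,h}[f]$ converges to the asserted limit applied to $f$ in each of the three regimes for $\delta$. From \cref{condition_rep} we have the exact identity
\begin{align*}
\mu_n^{\beta,J,h}[f] = W_n^{\beta,J,h,+}\,\mu_n^{\beta,J,h,+}[f] + \bigl(1 - W_n^{\beta,J,h,+}\bigr)\,\mu_n^{\beta,J,h,-}[f].
\end{align*}

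By \cref{condition_conv}, $\mu_n^{\beta,J,h,\pm}[f] \to \nu_\infty^{z^\pm,h}[f]$ as $n \to \infty$, and by \cref{weight_asymp}, $W_n^{\beta,J,h,+}$ converges to $\mathbbm{1}(\gamma^\parallel > 0)$, $\tfrac{1}{1 + e^{-2\beta x^+\gamma^\parallel}}$, or $\tfrac12$ in the three respective cases; consequently $1 - W_n^{\beta,J,h,+}$ converges to $\mathbbm{1}(\gamma^\parallel < 0)$, $\tfrac{1}{1 + e^{2\beta x^+\gamma^\parallel}}$, or $\tfrac12$ (for case 1, note $\gamma^\parallel \neq 0$ so $\mathbbm{1}(\gamma^\parallel>0) + \mathbbm{1}(\gamma^\parallel<0) = 1$). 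Each summand is a product of two convergent sequences — one bounded in $[0,1]$, the other bounded by $\|f\|_\infty$ — so the product converges to the product of the limits, and the sum of two convergent sequences converges to the sum of the limits. Reading off the three cases gives exactly the three displayed formulas. Since $f \in \operatorname{LBL}(\mathbb{R}^\mathbb{N})$ was arbitrary, this is precisely weak convergence of $\mu_n^{\beta,J,h}$ to the stated limiting measure, which is a genuine probability measure as a convex combination of the probability measures $\nu_\infty^{z^+,h}$ and $\nu_\infty^{z^-,h}$.

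The argument is essentially bookkeeping once the three lemmas are granted, so there is no serious obstacle at the level of this theorem; the real work has been pushed into \cref{condition_conv} and especially \cref{weight_asymp}, whose proofs are deferred to \cref{asymp_analysis}. The only point requiring a modicum of care is the legitimacy of multiplying limits: this is fine because weak convergence of $\mu_n^{\beta,J,h,\pm}$ tested against the fixed bounded $f$ gives a bounded convergent scalar sequence, and the weights live in the compact interval $[0,1]$, so no uniform integrability or tightness concern arises at this stage — uniform tightness has already been handled upstream in \cref{mix_supp} and in the proof of \cref{main_result1}. I would therefore present this proof as a short three-line deduction citing \cref{condition_rep}, \cref{condition_conv}, and \cref{weight_asymp}, and simply verify in each case that the coefficients of $\nu_\infty^{z^+,h}$ and $\nu_\infty^{z^-,h}$ match the claimed values.
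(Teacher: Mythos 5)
Your proposal is correct and matches the paper's own argument exactly: the theorem is stated there as the immediate consequence of combining the decomposition \cref{condition_rep}, the weak convergence of the conditioned states in \cref{condition_conv}, and the weight asymptotics in \cref{weight_asymp}, with only the bookkeeping you describe remaining. No gap.
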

\noindent
This result improves on \Cref{main_result1} for the mixed state parameter range. In particular, this result shows that when the external field is strongly varying, one can construct any IVGS by a specific choice of the rate of convergence of the sample mean and sample standard deviation. Later on in this paper, in \Cref{main_result3} and \Cref{aw_conv1}, we will give concrete probabilistic examples which apply \Cref{conv_class} for $\delta= \frac{1}{2}$ and $\delta = 1$ for some $\gamma$. However, we were unable to find a general way to construct deterministic inhomogeneous external fields which would realize the other possible asymptotic rates presented in \Cref{conv_class}. In general, this problem concerns the simultaneous control of the Cesàro sums of the sequences $\{ h_i \}_{i \in \mathbb{N}}$ and $\{ h_i^2 \}_{i \in \mathbb{N}}$, which seems difficult. A result in this direction for the RFCW model is the so-called generalized quasi-average method utilized in \cite{Matos1992}, but this method, from our perspective, involves a perturbed external field rather than a fixed external field.
\subsubsection{Summary and remarks}  
\noindent
Let us now summarize these results for the strongly varying external field. When the parallel magnetization component is non-vanishing i.e. $m^\parallel \not = 0$, irrespective of all other details of the model, there is a single unique IVGS. When the parallel magnetization component is vanishing i.e. $m^\parallel = 0$, and the perpendicular magnetization component is large enough i.e. $m^\perp \geq J$, there is a single unique IVGS. When the parallel magnetization component is vanishing, the perpendicular magnetization component is small enough i.e. $m^\perp < J$, and the inverse temperature is small enough i.e. $\beta \leq \beta_c$, there is a single unique IVGS. Finally, when the parallel magnetization component is vanishing, the perpendicular magnetization component is small enough, and the inverse temperature is large enough i.e. $\beta > \beta_c$, the IVGS can be realized as any convex combination of the pure states $\nu_\infty^{z^+,h}$ and $\nu_\infty^{z^-,h}$ subject to the sublinear, linear, and superlinear rates of convergence assumptions for the convergence of the sample mean and sample standard deviation. 
\\
\\
To our knowledge, the results concerning the limiting free energy, classification of IVGS, and rate of convergence analysis are novel contributions in the literature for this specific model, and, in general, this level of detail and specification is rare even for similar models. The last point about details and specification will be discussed further for the random external field. 
\\
\\
For a deterministic inhomogeneous external field, a result in the literature in the same spirit is the classification of the IVGS for the classical Curie-Weiss model presented in \cite{Brankov1986}. The major difference to our work is that the intent of this paper is to realize convex combinations of pure states in the standard Curie-Weiss model by perturbing the Curie-Weiss Hamiltonian with a small symmetry breaking external field. This would be similar to studying the weakly varying case for our model, which we explicitly exclude. Very briefly, the method of solution for this model involves using the Hubbard-Stratonovich transform, to write the perturbed finite volume Gibbs states as a mixture of product states. In our case, we do not, and cannot, in some sense, apply the Hubbard-Stratonivich transform, nor is there an immediate product structure. This is why one needs the uniform convergence lemma, which is \Cref{uni_conv} in the paper, used in the proof of one of the main results. In \cite{Koskinen2020}, the authors were able to utilize permutation invariance of the standard mean-field spherical model for a relatively simple proof using Wasserstein distances for weak convergence. In this work, due to the permutation invariance breaking external field, we needed a different technique which is the uniform convergence lemma.
\\
\\
Let us also remark that our paper does not make use of the method of steepest descent utilized by the authors in the original work which introduced the spherical model \cite{Berlin1952}. We only mention this since there are several instances of the utilization of this method for spherical models, yet we opted for a direct approach since it was possible. The method of steepest descent has been applied to study the spherical model in a specific deterministic non-homogeneous external field in \cite{Patrick1993} and the limiting Gibbs states of spherical model with a small homogeneous external field in \cite{Brankov1987}.
\subsection{Random external field} \label{random}
\noindent
Next, we begin the presentation and specification our results when the deterministic inhomogeneous external field is replaced by a random external field. A measurable map $h : (\Omega, \mathcal{F}, \mathbb{P}) \to (\mathbb{R}^\mathbb{N}, \mathcal{B}(\mathbb{R}^\mathbb{N}))$ is said to be a random external field, where $(\Omega, \mathcal{F}, \mathbb{P})$ is a probability triple, and $(\mathbb{R}^\mathbb{N}, \mathcal{B}(\mathbb{R}^\mathbb{N}))$ is a measurable space, where $\mathcal{B}(\mathbb{R}^\mathbb{N})$ is the Borel $\sigma$-algebra associated with the product topology on $\mathbb{R}^\mathbb{N}$. Since the map $h \mapsto \mu_n^{\beta,J,h}$ is continuous and thus measurable, it follows that $\omega \mapsto h(\omega) \mapsto \mu_n^{\beta,J,h(\omega)}$ is also measurable, and thus $\mu_n^{\beta,J,h}$ interpreted as a probability measure-valued random variable $\mu_n^{\beta,J,h(\cdot)} : (\Omega, \mathcal{F}, \mathbb{P}) \to (\mathcal{M}_1 (\mathbb{R}^\mathbb{N}), \mathcal{B}(\mathbb{R}^\mathbb{N}))$ is a random probability measure, see \Cref{cont} for this justification. The collection of FVGS is then a collection of probability measure-valued random variables, and we are interested in studying the limiting properties of this collection subject to additional assumptions to the random external field.
\\
\\
Let us also briefly remark on the distinction of convergence in distribution and weak convergence of random probability measures. If $\mathcal{S}$ is a Polish space and $\{ X_n \}_{n \in \mathbb{N}}$ and $X$ are $\mathcal{S}$-valued random variables, we say that $X_n$ converges to $X$ in distribution if the probability distributions of $X_n$ converge weakly to the probability distribution of $X$. For random probability measures $\{ \mu_n \}_{n \in \mathbb{N}}$ and $\mu$, when we say that $\mu_n$ converges to $\mu$ in distribution, we mean it in the sense that we just explained. If $\mu_n$ converges to $\mu$ almost surely, we mean that $d(\mu_n, \mu) \to 0$ in the limit as $n \to \infty$ almost surely, which is equivalent to saying that $\mu_n$ converges weakly to $\mu$ almost surely. We will try to stay consistent with this terminology so that weak convergence is  reserved for probability measures and convergence in distribution is reserved for random variables.
\\
\\
For the following assumptions to the random external field, we will need the concept of possible values of a random walk. Let $\{ X_i  \}_{i \in \mathbb{N}}$ be a collection of independent identically $X$-distributed $\mathbb{R}^d$-valued random variables. Denote $\{ S_n' \}_{n \in \mathbb{N}}$ to be the centred random walk with step length $X - \mathbb{E} X$ given by $S_n' := \sum_{i=1}^n (X_i - \mathbb{E} X_i)$. We say that a point $x \in \mathbb{R}^d$ is a possible value of $S_n'$ if for any $\varepsilon > 0$ there exists $n \in \mathbb{N}$ such that $\mathbb{P} (|| S_n' - x || < \varepsilon) > 0$. Denote the collection of possible values by $P$. We say that a point $x \in \mathbb{R}^d$ is a recurrent value of $S_n'$ if for any $\varepsilon > 0$, we have $\mathbb{P} (|| S_n' - x|| < \varepsilon \text{ infinitely often}) = 1$.
\\
\\
We present the following further assumptions for the random external field.
\begin{assumption} \
\begin{itemize} 
\item[(A1)] The components of $h$ are independent $h_0$-distributed real-valued random variables such that $\mathbb{E} h_0^2 < \infty$ and $\mathbb{V} h_0 > 0$, where $\mathbb{V} h_0 := \mathbb{E} h_0^2 - \left( \mathbb{E} h_0 \right)^2$
\item[(A2)] The random variable $h_0$ satisfies $\mathbb{E} h_0^4 < \infty$ and $\mathbb{V} h_0^2 > 0$
\item[(A3)] The set of possible values $P$ of the centred random walk with step length $(h_0 - \mathbb{E} h_0, h_0^2 - \mathbb{E} h_0^2)$ satisfies $\pi_1 (P) = \mathbb{R}$, where $\pi_1 (\cdot)$ is the canonical projection to the first coordinate
\item[(A4)] The random variable $h_0$ satisfies $\mathbb{E} h_0^{4 + \xi}$ for some $\xi > 0$
\end{itemize}
\end{assumption}
\noindent
Note that the moment conditions of (A2) imply the moment conditions of (A1). For the rest of this paper, we will denote $\{ S_n \}_{n \in \mathbb{N}}$ to be the random walk with step length $(h_0, h_0^2)$ and $\{ S_n'\}_{n \in \mathbb{N}}$ to be the centred random walk with step length $(h_0 - \mathbb{E} h_0, h_0^2 - \mathbb{E} h_0^2)$.
\subsubsection{Self-averaging of the limiting free energy} 
\noindent
In terms of the random walk $\{ S_n \}_{n \in \mathbb{N}}$, we can write the vector $m_n^h$ as
\begin{align*}
m^{h,\parallel}_n = \frac{{(S_n)}_1}{n}, \ m_n^{h, \perp} = \sqrt{\frac{{(S_n)}_2}{n} - \left( \frac{{(S_n)}_1}{n}\right)^2} .
\end{align*}
From this simple observation, as an application of the strong law of large numbers for the random walk $\{ S_n \}_{n \in \mathbb{N}}$, we show that the sequence of vectors $\{ m_n^h \}_{n \in \mathbb{N}}$ satisfies a strong law of large numbers.
\begin{lemma} Let $h$ be a random external field which satisfies (A1).
\\
\\
It follows that
\begin{align*}
\lim_{n \to \infty} m_n^h = \left( \mathbb{E} h_0, \sqrt{\mathbb{V} h_0}\right) := m ,
\end{align*}
almost surely.
\end{lemma}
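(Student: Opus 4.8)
The plan is to deduce this from the classical strong law of large numbers applied coordinatewise to the random walk $\{S_n\}_{n\in\mathbb{N}}$ with step length $(h_0,h_0^2)$, combined with the continuous mapping theorem for almost sure convergence. First I would observe that assumption (A1) gives $\mathbb{E} h_0^2 < \infty$, which in particular forces $\mathbb{E}|h_0| < \infty$; hence both coordinate sequences $\frac{1}{n}\sum_{i=1}^n h_i$ and $\frac{1}{n}\sum_{i=1}^n h_i^2$ are normalized partial sums of i.i.d.\ integrable random variables. Kolmogorov's strong law then yields $\frac{(S_n)_1}{n}\to \mathbb{E} h_0$ and $\frac{(S_n)_2}{n}\to \mathbb{E} h_0^2$ almost surely, so there is a single event of probability one on which both convergences hold.

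Next I would combine these using continuity of the maps $(a,b)\mapsto b-a^2$ and $t\mapsto\sqrt{t}$ (the latter at points $t>0$). From the representation recalled just before the statement, $m_n^{h,\perp} = \sqrt{\frac{(S_n)_2}{n} - \big(\frac{(S_n)_1}{n}\big)^2}$, so on the probability-one event the radicand converges to $\mathbb{E} h_0^2 - (\mathbb{E} h_0)^2 = \mathbb{V} h_0$. Since (A1) assumes $\mathbb{V} h_0 > 0$, this limit is strictly positive, the square root is continuous there, and therefore $m_n^{h,\perp}\to\sqrt{\mathbb{V} h_0}$ almost surely; together with $m_n^{h,\parallel}\to\mathbb{E} h_0$ this gives $m_n^h\to(\mathbb{E} h_0,\sqrt{\mathbb{V} h_0})$ almost surely, as claimed.

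There is essentially no serious obstacle here; the only point meriting a word of care is that $m_n^{h,\perp}$ is well defined only when the empirical second moment dominates the square of the empirical mean, but this holds for every finite $n$ by the Cauchy--Schwarz (equivalently Jensen) inequality applied to the empirical measure, so the sequence $\{m_n^h\}_{n\in\mathbb{N}}$ is genuinely valued in $\mathbb{R}\times[0,\infty)$, and the strict positivity of $\mathbb{V} h_0$ guarantees the limit lies in $\mathbb{R}\times(0,\infty)$ — that is, the limiting field is itself strongly varying in the sense of \cref{strongly_varying}, so that all results of \cref{deterministic} become applicable almost surely.
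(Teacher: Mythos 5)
Your proposal is correct and follows essentially the same route as the paper: the paper likewise writes $m_n^{h,\parallel}$ and $m_n^{h,\perp}$ in terms of the two-dimensional random walk with step $(h_0,h_0^2)$, applies the strong law of large numbers coordinatewise, and concludes via the (implicit) continuity of $(a,b)\mapsto\sqrt{b-a^2}$ at the limit, where $\mathbb{V}h_0>0$. Your additional remarks on the well-definedness of $m_n^{h,\perp}$ for finite $n$ and on the limit lying in $\mathbb{R}\times(0,\infty)$ are correct refinements of the same argument.
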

\noindent
For the proof, see \Cref{limit_theorems}.
\\
\\
We see that the condition $\mathbb{V} h_0 > 0$ ensures that the random external field $h$ is strongly varying almost surely. Subject to assumption (A1), it immediately follows that the limiting free energy is given by
\begin{align*}
f(\beta,J,h) = \lim_{n \to \infty} \frac{1}{n} \ln Z_n (\beta,J,h) = \sup_{z \in B(0,1)} \psi^{\beta,J,m} (z) 
\end{align*} 
almost surely. Note that although the partition functions $Z_n (\beta,J,h)$ are random variables, the limiting free energy is a deterministic quantity. We show that the collection of finite volume free energies is uniformly integrable and thus we have the following result concerning the self-averaging of the limiting free energy.
\begin{theorem} \label{self_average}
Let $h$ be a random external field which satisfies (A1).
\\
\\
It follows that
\begin{align*}
\lim_{n \to \infty} \frac{1}{n} \mathbb{E} \ln Z_n (\beta,J,h) = \lim_{n \to \infty} \frac{1}{n} \ln Z_n (\beta,J,h) = \sup_{z \in B(0,1)} \psi^{\beta,J,m} (z)
\end{align*}
almost surely.
\end{theorem}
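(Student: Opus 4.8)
The plan is to reduce the theorem to a uniform-integrability statement. The almost sure convergence $\frac{1}{n}\ln Z_n(\beta,J,h)\to f_\infty:=\sup_{z\in B(0,1)}\psi^{\beta,J,m}(z)$ is already available, being the combination of \cref{argmax1} with the almost sure convergence $m_n^h\to m$; since $f_\infty$ is deterministic, it only remains to show that the expectations converge to the same value. I would deduce this from the single fact that the family $X_n:=\frac{1}{n}\ln Z_n(\beta,J,h)$, $n\in\mathbb{N}$, is uniformly integrable: uniform integrability together with almost sure convergence gives convergence in $L^1$ by Vitali's theorem, and in particular $\mathbb{E}X_n\to f_\infty$. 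It is worth noting that the cheap annealed bound $\mathbb{E}X_n\le\frac1n\ln\mathbb{E}Z_n$ is of no help here, since assumption (A1) does not grant $h_0$ any exponential moments, so $\mathbb{E}Z_n$ can be $+\infty$; the estimate must be carried out at the quenched level.

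For the uniform integrability I would exhibit an explicit integrable dominating random variable. Setting $f\equiv1$ in \cref{can_rep} gives
\begin{align*}
Z_n(\beta,J,h)=\int_{B(0,1)}dz\;e^{2\beta Jx^2+4\beta\langle m_n^h,z\rangle}\,e^{(n-4)\left(\frac{\beta J}{2}x^2+\beta\langle m_n^h,z\rangle+\frac12\ln(1-\|z\|^2)\right)}
\end{align*}
whenever $m_n^{h,\perp}\neq0$. For the upper bound one estimates the integrand on $B(0,1)$ using $x^2\le1$, $|\langle m_n^h,z\rangle|\le\|m_n^h\|$, $\ln(1-\|z\|^2)\le0$, and $\int_{B(0,1)}dz=\pi$; this gives $Z_n\le\pi\,e^{2\beta J+4\beta\|m_n^h\|+(n-4)(\frac{\beta J}{2}+\beta\|m_n^h\|)}$ for $n\ge4$, whence $X_n\le C(1+\|m_n^h\|)$ for a deterministic constant $C=C(\beta,J)$. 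For the lower bound one restricts the integral to $\{\|z\|\le\frac12\}$, where $\frac{\beta J}{2}x^2\ge0$, $\langle m_n^h,z\rangle\ge-\frac12\|m_n^h\|$ and $\ln(1-\|z\|^2)\ge\ln\frac34$; since $\int_{\|z\|\le1/2}dz=\frac\pi4$ this yields $Z_n\ge\frac\pi4\,e^{-2\beta\|m_n^h\|-\frac{n-4}{2}\beta\|m_n^h\|+\frac{n-4}{2}\ln\frac34}$ for $n\ge5$, whence $X_n\ge-C(1+\|m_n^h\|)$. On the complementary event $\{m_n^{h,\perp}=0\}$ the external field has all components equal to $m_n^{h,\parallel}$, the model is the ordinary mean-field spherical model in a homogeneous field of that strength, and the one-dimensional analogue of these estimates gives the same bound $|X_n|\le C(1+|m_n^{h,\parallel}|)\le C(1+\|m_n^h\|)$. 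Thus $|X_n|\le C(1+\|m_n^h\|)$ for every $n\ge5$.

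It remains to check that $\{C(1+\|m_n^h\|)\}_{n}$ is uniformly integrable. But $\|m_n^h\|^2=(m_n^{h,\parallel})^2+(m_n^{h,\perp})^2=\frac1n\sum_{i=1}^n h_i^2$, so $\|m_n^h\|\le1+\frac1n\sum_{i=1}^n h_i^2$, and under (A1) the variables $h_i^2$ are i.i.d.\ and integrable; hence the empirical means $\frac1n\sum_{i=1}^n h_i^2$ converge to $\mathbb{E}h_0^2$ in $L^1$ and therefore form a uniformly integrable family, and so do $\{1+\frac1n\sum_{i=1}^n h_i^2\}_n$ and $\{C(1+\|m_n^h\|)\}_n$. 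Domination then forces $\{|X_n|\}_{n\ge5}$, and hence (modifying finitely many terms changes nothing) the whole family $\{X_n\}_{n\in\mathbb{N}}$, to be uniformly integrable, which completes the argument. There is no genuine difficulty in this scheme; the only points that require attention are obtaining the lower bound on the Laplace integral — which forces one to work on a ball bounded away from the unit sphere and to dispose of the exceptional event $m_n^{h,\perp}=0$ — and identifying the dominating quantity as essentially an average of i.i.d.\ integrable random variables, whose uniform integrability is classical.
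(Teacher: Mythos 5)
Your proposal is correct and follows essentially the same route as the paper: almost sure convergence from \cref{argmax1} plus the strong law, then uniform integrability of $\frac1n\ln Z_n$ obtained from a bound of the form $|\frac1n\ln Z_n|\le C(1+\|m_n^h\|)$. The paper gets that bound by comparing to the deterministic field-free partition function and concludes uniform integrability from a uniform second-moment bound using $\mathbb{E}\|m_n^h\|^2=\mathbb{E}h_0^2<\infty$, whereas you derive the bound by direct two-sided estimates on the Laplace integral and conclude via domination by the uniformly integrable empirical means $\frac1n\sum_i h_i^2$; both steps rest on exactly (A1) and are interchangeable.
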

\noindent
For the proof, see \Cref{csd_sec}.
\newpage
\noindent
Since the random external field is almost surely strongly varying, the classification of the parameter ranges for the pure states and mixed states is the same as for the deterministic inhomogeneous external field. The updated values of $m$ and $\beta_c$ are $m^\parallel = \mathbb{E} h_0$, $m^\perp = \sqrt{\mathbb{V} h_0}$, and $\beta_c = \frac{J}{(J - \sqrt{\mathbb{V} h_0}) (J + \sqrt{\mathbb{V} h_0})}$. We present this table here for completeness.
\begin{table}[h!] 
\begin{center}
\caption{Parametric ranges for the structure of the set of global maximizing points of the limiting exponential tilting function for the random external field} \label{random_table}
\begin{tabular}{ |c|c|c|c| } 
\hline
Mean & Standard deviation and coupling constant & Inverse temperature & $M^*(\beta,J,m)$ \\
\hline
$\mathbb{E} h_0 \not = 0$ & $J > 0$ & $\beta > 0$ & $\{ z^*  \}$ \\
$\mathbb{E} h_0 = 0$ & $\sqrt{\mathbb{E} h_0^2}\geq J$ & $\beta > 0$ & $\{ z^0 \}$ \\
$\mathbb{E} h_0 = 0$ & $\sqrt{\mathbb{E} h_0^2} < J$ & $\beta \leq \beta_c$ & $\{ z^0 \}$ \\
$\mathbb{E} h_0 = 0$ & $\sqrt{\mathbb{E} h_0^2} < J$ & $\beta > \beta_c$ & $\{ z^+, z^- \}$ \\
\hline
\end{tabular}
\end{center}
\end{table}
\\
As a direct application of \Cref{main_result1}, we have the same partial classification of the IVGS for the random external field as for the deterministic field, the only difference being that the classification only holds almost surely. For the random external field, the IVGS can be characterized almost surely also in the case where we are in the MS parameter range subject to further assumptions to the random external field. Recall that the first three rows of \Cref{random_table} describe the PS parameter range and the last row describes the MS parameter range.
\subsubsection{Chaotic size dependence}
\noindent
For the PS parameter range, the IVGS is unique almost surely and its proof only relies on assumption (A1). For the MS parameter range, we begin by noting that the sequence of vectors $\{ m_n^h \}_{n \in \mathbb{N}}$ can be written entirely in terms of the centred random walk $\{ S_n' \}_{n \in \mathbb{N}}$ by
\begin{align} \label{walk_rep}
n(m_n^{h, \parallel} - m^\parallel) = \left( S_n' \right)_1, \ n(m_n^{h, \perp} - m^\perp) = \frac{1}{m_n^{h, \perp} + m^\perp} \left( S_n' \right)_2 - \frac{m_n^{h, \parallel}}{m_n^{h, \perp} + m^\perp}  \left( S_n' \right)_1 .
\end{align}
Subject to assumption (A2), it follows that $\frac{1}{\sqrt{n}} S_n' \to G$ in distribution in the limit as $n \to \infty$, where $G$ is a non-degenerate $2$-dimensional Gaussian, and, as a result, the recurrent and possible values of the centred random walk $\{ S_n' \}_{n \in \mathbb{N}}$ are the same, see \cite[Chapter 5]{Durrett2019}. Using the recurrence of the centred random walk, we show that the sequence of vectors $\{ m_n^h \}_{n \in \mathbb{N}}$ satisfies a similar recurrence result.
\begin{lemma} Let $h$ be a random external field which satisfies (A1) and (A2).
\\
\\
It follows that
\begin{align*}
\left\{ \left( p_1, \frac{1}{2 \sqrt{\mathbb{E} h_0^2}} p_2 \right) : p := (p_1,p_2) \in P \right\}  \subset  L \left(\left\{ n (m_n^h - m)\right\}_{n \in \mathbb{N}} \right) 
\end{align*}
almost surely.
\end{lemma}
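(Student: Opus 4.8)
The plan is to read off $n(m_n^h-m)$ from the shifted random walk via \cref{walk_rep}, exploit the recurrence of $\{S_n'\}_{n\in\mathbb N}$, and then pass to the limit along a sample-path-dependent subsequence. Recall that in this regime $m^\parallel=\mathbb E h_0=0$ and $m^\perp=\sqrt{\mathbb V h_0}=\sqrt{\mathbb E h_0^2}$, so that \cref{walk_rep} reads
\begin{align*}
n\bigl(m_n^{h,\parallel}-m^\parallel\bigr)=(S_n')_1,\qquad n\bigl(m_n^{h,\perp}-m^\perp\bigr)=\frac{(S_n')_2-m_n^{h,\parallel}\,(S_n')_1}{m_n^{h,\perp}+m^\perp}.
\end{align*}
Consequently, if along some subsequence $\{n_k\}$ one has $S_{n_k}'\to p=(p_1,p_2)$, then, since $\{S_{n_k}'\}$ is bounded, $m_{n_k}^{h,\parallel}\to 0$, and $m_{n_k}^{h,\perp}\to m^\perp>0$, one obtains $n_k(m_{n_k}^h-m)\to\bigl(p_1,\frac{1}{2\sqrt{\mathbb E h_0^2}}p_2\bigr)$. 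So it suffices to produce, on a single almost sure event, such a subsequence simultaneously for every $p\in P$.

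First I would invoke the fact recorded in the excerpt that under (A1)--(A2) the possible values and the recurrent values of $\{S_n'\}$ coincide (this is where the distributional convergence of $\tfrac1{\sqrt n}S_n'$ to a Gaussian and the recurrence of centred planar walks with finite variance enter). Thus for each $p\in P$ and each $\varepsilon>0$, $\mathbb P\bigl(\|S_n'-p\|<\varepsilon \text{ infinitely often}\bigr)=1$. Since $P$ is closed in $\mathbb R^2$, it is separable; fix a countable dense $D\subset P$ and set $\Omega_0:=\bigcap_{q\in D}\bigcap_{m\in\mathbb N}\{\|S_n'-q\|<1/m\text{ i.o.}\}$, which still has probability one, intersected with the probability-one event $\{m_n^h\to m\}$ furnished by the strong law of large numbers for $\{m_n^h\}_{n\in\mathbb N}$ stated above. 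On $\Omega_0$, for an arbitrary $p\in P$ and any $\varepsilon>0$, choosing $q\in D$ with $\|q-p\|<\varepsilon/2$ and $m>2/\varepsilon$ gives $\|S_n'-p\|<\varepsilon$ for infinitely many $n$; a diagonal extraction over $\varepsilon=1/k$ then yields an increasing sequence $\{n_k\}$ with $S_{n_k}'\to p$.

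Finally I would substitute this subsequence into the displayed form of \cref{walk_rep} and pass to the limit as above, concluding that $\bigl(p_1,\frac{1}{2\sqrt{\mathbb E h_0^2}}p_2\bigr)\in L\bigl(\{n(m_n^h-m)\}_{n\in\mathbb N}\bigr)$ on $\Omega_0$; as $p\in P$ was arbitrary, this gives the claimed inclusion almost surely. The genuinely substantive point --- beyond the algebraic identity and the quoted recurrence input --- is the uniformity in $p$: since $P$ may be uncountable one cannot intersect the recurrence events over all of $P$ directly, and the separability-plus-diagonalization step is what makes one almost sure event serve every $p$ at once. A minor but necessary care is that $\{n(m_n^h-m)\}$ is a pathwise object, so the subsequence extraction must be carried out inside the fixed-$\omega$ argument, using only the two global almost sure statements (recurrence of the $D$-points and $m_n^h\to m$).
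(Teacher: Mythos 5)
Your proof is correct and follows essentially the same route as the paper's: express $n(m_n^h-m)$ through the shifted walk $S_n'$, use the closedness and separability of $P$ together with recurrence to build a single probability-one event on which a subsequence $S_{n_k}'\to p$ can be extracted for every $p\in P$, and pass to the limit using $m_{n_k}^h\to m$. The only cosmetic difference is that you specialize \cref{walk_rep} to the case $m^\parallel=0$ from the outset, whereas the paper carries the general term $-(S_n')_1(m_n^\parallel+m^\parallel)/(m_n^\perp+m^\perp)$ and specializes at the end; both yield the stated limit.
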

\noindent
For the proof, see \Cref{limit_theorems}. Let us also briefly remark and clarify on proofs of this kind which involve many steps which hold almost surely. If there is a collection, with at most countable size, of statements which all hold almost surely, then the intersection of these statements also holds almost surely. In the proofs, there will typically be a number of such almost sure statements which are used in the order they appear. These proofs should be read so that one collects all almost sure statements made in the proof, and the set of probability $1$ for which the theorem holds is the intersection of all of these statements.
\\
\\
As an application of \Cref{conv_class} in the case where $\delta = 1$ and $\gamma = \left(p_1,  \frac{1}{2 \sqrt{\mathbb{E} h_0^2}} p_2\right)$, we have the following complete classification of the IVGS almost surely.
\begin{theorem} \label{main_result3}
Let $h$ be a random external field which satisfies (A1).
\\
\\
For the pure state parameter range, it follows that
\begin{align*}
\mathcal{G}_\infty (\beta,J,h) = \left\{ \nu_\infty^{z^*,h} \right\}
\end{align*}
almost surely.
\\
\\
If $h$ also satisfies (A2) and (A3), for the mixed state parameter range, it follows that
\begin{align*}
\mathcal{G}_\infty (\beta,J,h) = \operatorname{conv} \left( \nu_\infty^{z^+,h}, \nu_\infty^{z^-,h}\right)
\end{align*}
almost surely.
\end{theorem}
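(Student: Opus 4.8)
The pure-state claim is essentially immediate and I would dispatch it first: assumption (A1) supplies, via the strong law of large numbers for $m_n^h$ recorded above, that $h$ is strongly varying almost surely, so on a set of full probability the first part of \cref{main_result1} applies verbatim and gives $\mathcal{G}_\infty(\beta,J,h)=\{\nu_\infty^{z^*,h}\}$. For the mixed-state case, one inclusion is just as cheap: on the same full-probability event the second part of \cref{main_result1} gives $\mathcal{G}_\infty(\beta,J,h)\subset\operatorname{conv}(\nu_\infty^{z^+,h},\nu_\infty^{z^-,h})$. So the real content, and the only place (A2) and (A3) are needed, is the reverse inclusion.

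For that, the plan is to realize every interior mixing weight as a subsequential limit and then close up. First I would fix, once and for all, a full-probability event on which $h$ is strongly varying and on which the recurrence lemma for $\{n(m_n^h-m)\}_{n\in\mathbb{N}}$ holds; the latter is where (A2) enters, through $n^{-1/2}S_n'\Rightarrow G$ nondegenerate, which forces the possible and recurrent values of $S_n'$ to coincide. On this event everything that follows is deterministic. Given $t\in\mathbb{R}$, assumption (A3), i.e. $\pi_1(P)=\mathbb{R}$, furnishes $p\in P$ with $p_1=t$, and then the recurrence lemma produces a subsequence $\{n_k\}$ with $n_k(m_{n_k}^h-m)\to\bigl(t,\tfrac{1}{2\sqrt{\mathbb{E}h_0^2}}p_2\bigr)$. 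Along $\{n_k\}$ we are in the $\delta=1$ regime of \cref{conv_class} with $\gamma^\parallel=t$ (the value of $\gamma^\perp$ is irrelevant to the conclusion there), so — invoking \cref{conv_class}, equivalently \cref{weight_asymp} together with \cref{condition_conv} and the decomposition \eqref{condition_rep}, along the subsequence — I get
\[
\mu_{n_k}^{\beta,J,h}\ \longrightarrow\ \frac{1}{1+e^{-2\beta x^+ t}}\,\nu_\infty^{z^+,h}+\frac{1}{1+e^{2\beta x^+ t}}\,\nu_\infty^{z^-,h},
\]
which therefore lies in $\mathcal{G}_\infty(\beta,J,h)$. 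Since $x^+>0$, the coefficient $\lambda(t)=(1+e^{-2\beta x^+ t})^{-1}$ is a continuous increasing bijection $\mathbb{R}\to(0,1)$, so letting $t$ vary yields $\{\lambda\nu_\infty^{z^+,h}+(1-\lambda)\nu_\infty^{z^-,h}:\lambda\in(0,1)\}\subset\mathcal{G}_\infty(\beta,J,h)$. Finally, $\mathcal{G}_\infty(\beta,J,h)=L(\mathcal{G}(\beta,J,h))$ is a closed subset of the Polish space $\mathcal{M}_1(\mathbb{R}^\mathbb{N})$, and the affine map $\lambda\mapsto\lambda\nu_\infty^{z^+,h}+(1-\lambda)\nu_\infty^{z^-,h}$ is continuous and injective on $[0,1]$ (injective because $x^+\neq 0$ makes $\nu_\infty^{z^+,h}$ and $\nu_\infty^{z^-,h}$ distinct, being pushforwards of $\eta$ by transport maps differing by the nonzero shift $2x^+\mathbf 1$), so passing to closures upgrades this to $\operatorname{conv}(\nu_\infty^{z^+,h},\nu_\infty^{z^-,h})\subset\mathcal{G}_\infty(\beta,J,h)$; combined with the first inclusion this is the desired equality, almost surely.

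The step I expect to need the most care is the passage to subsequences in \cref{conv_class}/\cref{weight_asymp}: those results are phrased for full sequences, but their proofs only use the asymptotics $n\bigl(\psi_n^{\beta,J,h}(z_n^-)-\psi_n^{\beta,J,h}(z_n^+)\bigr)\approx -2\beta x^+ (S_n')_1$ with an error tending to $0$, together with \cref{condition_conv}, so they transfer to any subsequence along which $(S_{n_k}')_1$ converges; I would simply remark this rather than re-derive it. The remaining subtlety is purely bookkeeping: the recurrence lemma is a single almost-sure statement already quantified over all $p\in P$, so there is no measurability problem with the uncountable family indexed by $t\in\mathbb{R}$, and all the inputs used after fixing the full-probability event (\cref{main_result1}, \cref{conv_class}, \cref{condition_conv}) are deterministic statements about strongly varying fields.
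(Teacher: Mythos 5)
Your proposal is correct and follows essentially the same route as the paper: strong law of large numbers plus \cref{main_result1} for the pure-state case and the easy inclusion, then the recurrence lemma from \cref{limit_theorems} combined with (A3) and a subsequence version of \cref{conv_class} (with $\delta=1$) to realize every weight $\lambda(t)\in(0,1)$, followed by closing up using the closedness of the set of limit points. The only cosmetic difference is that you write $\gamma^\perp$ as $p_2/(2\sqrt{\mathbb{E}h_0^2})$, which agrees with the paper's expression since $m^\parallel=0$ in the mixed-state range, and you add an unneeded but harmless injectivity remark.
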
 
\noindent 
The proof of this result, see \Cref{csd_sec}, is given for the case where $\pi_1 (P)$ is not necessarily the whole space.
\\
\\
The phenomenon proven in this result is referred to as chaotic size dependence(CSD) due to Newman and Stein \cite{Newman1996}. For more physically relevant models, this property is distressing in the sense that it predicts that the IVGS depend on the way the subsequences of finite volumes are selected when the external field is random. This property has been studied for a variety of systems including the BFCW in \cite{Matos1992}, and \cite{Kuelske1997}. For the Ising model with random boundary conditions, a result of this type was obtained in \cite{Enter2005}. To our knowledge, our result is novel for this particular model, and the generality of the result is greater than similar results obtained for other models with random external fields. In particular, the works of \cite{Salinas1985}, \cite{Kuelske1997}, and \cite{Matos1992} give a certain emphasis to the case where the random external field has components which are Bernoulli distributed. In addition, in \cite{Kuelske1997}, it is remarked that a random external field with continuous components, as opposed to the discrete components of the Bernoulli field, is expected to realize all convex combinations of the pure states. This is indeed the case in this model for any $h_0$ satisfying (A1), (A2) and (A3).
\subsubsection{Construction of the Aizenman-Wehr metastate}  
\noindent
Since almost sure convergence is too strong of a form of convergence for FVGS, we will instead consider weaker forms of convergence which ultimately result in constructions of limiting objects similar to the IVGS. These constructions have been introduced in the disordered systems literature, and we will reference them as they appear in this paper. For more details and exposition, we refer to \cite[Chapter 6]{Bovier2006}. In addition, since we are dealing with random probability measures, for convergence properties of random probability measures, we refer to \cite[Chapter 4]{Kallenberg2017}.
\\
\\
We begin with the collection of joint probability measures $\{ K_n^{\beta,J}\}_{n \in \mathbb{N}}$ which act on $f \in C_b (\mathbb{R}^\mathbb{N} \times \mathbb{R}^\mathbb{N})$ by
\begin{align*}
K_n^{\beta,J} [f] := \mathbb{E} \mu_n^{\beta,J,h} [f(h, \cdot)],
\end{align*}
where the expectation $\mu_n^{\beta,J,h} [f(h, \cdot)]$ is taken with respect to the second argument. Note that the marginal distribution of the first component is simply the distribution of $h$, and the marginal distribution of the second component is given by the intensity measure $\mathbb{E} \mu_n^{\beta,J,h}$ which acts on $f \in C_b (\mathbb{R}^\mathbb{N})$ by $f \mapsto \mathbb{E} \mu_n^{\beta,J,h} [f]$. We denote the weak limit, when it exists, of the joint probability measures by $K^{\beta,J}$.
\\
\\
Next, we will consider the collection of metastate probability measures $\{ \mathcal{K}_{n}^{\beta,J}\}_{n \in \mathbb{N}}$ which are the probability distributions of the $\mathbb{R}^\mathbb{N} \times \mathcal{M}_1 (\mathbb{R}^\mathbb{N})$-valued random variables $(h, \mu_n^{\beta,J,h})$. Note that the marginal distribution of the first component of the metastate probability measure is the distribution of $h$, and the marginal distribution of the second component is the distribution of $\mu_n^{\beta,J,h}$. We denote the probability measure corresponding to the limit in distribution, when it exists, of the metastate probability measures by $\mathcal{K}^{\beta,J}$.  Since $\mathcal{K}^{\beta,J}$ is a probability measure on $\mathbb{R}^\mathbb{N} \times \mathcal{M}_1 (\mathbb{R}^\mathbb{N})$, we can obtain a random probability measure on probability measures by taking the regular conditional distribution of the second argument given the first argument.
\begin{definition} \label{metastate_def} A conditioned metastate probability measure or Aizenman-Wehr metastate $\kappa^{\beta,J,h}$, when it exists, is a measurable map $\kappa^{\beta,J, \cdot} : \mathbb{R}^\mathbb{N} \to \mathcal{M}_1 (\mathcal{M}_1 (\mathbb{R}^\mathbb{N}))$ which satisfies
\begin{align*}
\int_{\mathbb{R}^\mathbb{N} \times \mathcal{M}_1 (\mathbb{R}^\mathbb{N})} \mathcal{K}^{\beta,J} (dh, d \mu) \ f (h, \mu) = \mathbb{E} \kappa^{\beta,J,h} [f(h, \cdot)]
\end{align*}
for all $f \in C_b (\mathbb{R}^\mathbb{N} \times \mathcal{M}_1 (\mathbb{R}^\mathbb{N}))$.
\end{definition} 
\noindent
This construction is due to Aizenman and Wehr \cite{Aizenman1990}, and the name metastate refers to the fact that the resulting object is a probability measure on probability measures. Although we gave here the definition of a conditioned metastate probability measure, we will still refer to upcoming construction as the conditioned metastate probability measure.
\\
\\
Let us now remark on some properties of the joint probability measures and the metastate probability measures. If $f \in C_b (\mathbb{R}^\mathbb{N} \times \mathbb{R}^\mathbb{N})$, then it follows that the map $(h, \mu) \mapsto \mu[f(h, \cdot)]$ is continuous and bounded. As a result, if the weak limit of the metastate probability measures exists in the limit as $n \to \infty$, we must have
\begin{align*}
\int_{\mathbb{R}^\mathbb{N} \times \mathcal{M}_1 (\mathbb{R}^\mathbb{N})} \mathcal{K}^{\beta,J} (dh, d \mu) \ \mu [f(h, \cdot)] = \lim_{n \to \infty} \mathbb{E} \mu_n^{\beta,J,h} [f(h, \cdot)] =  \int_{\mathbb{R}^\mathbb{N} \times \mathbb{R}^\mathbb{N}} K^{\beta,J} (dh, d \phi) \ f (h, \phi) .
\end{align*}
It follows that the weak limit of the metastate probability measures completely determines the weak limit of the joint probability measures. As a result, the joint probability measures are in some sense redundant if one has limiting results pertaining to the metastate probability measures. In this paper, we will use the joint probability measures primarily as a tool to prove uniform tightness results of the metastate probability measures. To that end, although we will not make immediate use of the following result, we show the uniform tightness of the collection of intensity measures $\{ \mathbb{E} \mu_n^{\beta,J,h} \}_{n \in \mathbb{N}}$.
\begin{lemma} \label{intensity_tight} Let $h$ be a random external field which satisfies (A1).
\\
\\
It follows that the collection of intensity measures $ \left\{ \mathbb{E} \mu_n^{\beta, J, h} \right\}_{n \in \mathbb{N}}$ is uniformly tight.
\end{lemma}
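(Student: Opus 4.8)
The plan is to reduce uniform tightness on $\mathbb{R}^{\mathbb{N}}$ to one uniform second–moment bound per coordinate, and then to read that bound off the mixture representation \cref{can_rep2}. Recall that a family of probability measures on $\mathbb{R}^{\mathbb{N}}$ with the product topology is uniformly tight exactly when, for each $i\in\mathbb{N}$, the family of one–dimensional marginals of the $i$-th coordinate is uniformly tight on $\mathbb{R}$: once $a_i$ are chosen with $\sup_{n}\mathbb{E}\,\mu_n^{\beta,J,h}(|\phi_i|>a_i)$ arbitrarily small, the set $\prod_{i\ge 1}[-a_i,a_i]$ is compact by Tychonoff's theorem and carries all but an arbitrarily small fraction of the mass of every $\mathbb{E}\,\mu_n^{\beta,J,h}$. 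By Markov's inequality it therefore suffices to show that, for each fixed $i$,
\begin{align*}
\sup_{n\in\mathbb{N}}\mathbb{E}\,\mu_n^{\beta,J,h}[\phi_i^2]<\infty .
\end{align*}
For $n<i$ the $i$-th coordinate vanishes under the $0$-tensored measure, and in general $\mu_n^{\beta,J,h}$ is supported on the sphere of radius $\sqrt{n}$, which gives the crude bound $\mu_n^{\beta,J,h}[\phi_i^2]\le n$; this already disposes of any fixed finite range of $n$, so the real task is a bound on $\mathbb{E}\,\mu_n^{\beta,J,h}[\phi_i^2]$ that is uniform in large $n$.

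For large $n$ I would work on the event $\{m_n^{h,\perp}>0\}$, on which \cref{can_rep2} applies with $\nu_n^{z,h}$ given by the explicit $\mathbb{S}^{n-3}$-average \cref{microcanonical}. Writing $g_i:=(h_i-m_n^{h,\parallel})/m_n^{h,\perp}$ and using $\mathbb{E}[\Omega_j]=0$, $\mathbb{E}[\Omega_j\Omega_k]=\delta_{jk}/(n-2)$ for the uniform measure on $\mathbb{S}^{n-3}$ together with $\sum_{j=3}^n(v_{j,n}^h)_i^2=1-(w_{1,n})_i^2-(w^h_{2,n})_i^2=1-(1+g_i^2)/n$, a direct computation yields
\begin{align*}
\nu_n^{z,h}[\phi_i^2]=(x+y g_i)^2+(1-\| z\|^2)\,\frac{n-1-g_i^2}{n-2},
\end{align*}
and hence, by Cauchy--Schwarz and $\| z\|<1$, $\nu_n^{z,h}[\phi_i^2]\le 3+2g_i^2$ uniformly in $z\in B(0,1)$ for $n\ge 4$. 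Integrating against $\rho_n^{\beta,J,h}$ preserves the bound, so $\mu_n^{\beta,J,h}[\phi_i^2]\le 3+2g_i^2$ on $\{m_n^{h,\perp}>0\}$. On the complement $\{m_n^{h,\perp}=0\}$, which can have positive probability when $h_0$ is discrete, the first $n$ components of $h$ are all equal, so $\mu_n^{\beta,J,h}$ is invariant under permutations of $\phi_1,\dots,\phi_n$ and the spherical constraint forces $\mu_n^{\beta,J,h}[\phi_i^2]=1$; thus in all cases $\mu_n^{\beta,J,h}[\phi_i^2]\le 3+2\,g_i^2\,\mathbbm{1}(m_n^{h,\perp}>0)$ for $n\ge 4$ and every $i$.

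The final ingredient is the deterministic identity $\tfrac1n\sum_{i=1}^n g_i^2=1$ on $\{m_n^{h,\perp}>0\}$, which is immediate from the definition of $m_n^{h,\perp}$ in \cref{sample_magnetization}. Since $h_1,\dots,h_n$ are i.i.d.\ and both the event $\{m_n^{h,\perp}>0\}$ and the random variables $g_1^2,\dots,g_n^2$ are permutation–equivariant functions of $(h_1,\dots,h_n)$, the quantity $\mathbb{E}[g_i^2\,\mathbbm{1}(m_n^{h,\perp}>0)]$ does not depend on $i\le n$; averaging the identity then gives $\mathbb{E}[g_i^2\,\mathbbm{1}(m_n^{h,\perp}>0)]=\mathbb{P}(m_n^{h,\perp}>0)\le 1$. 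Hence $\mathbb{E}\,\mu_n^{\beta,J,h}[\phi_i^2]\le 5$ for all $n\ge 4$ and all $i$, which together with the crude bound for small $n$ gives the required uniform second–moment bound and completes the proof.

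I expect the main obstacle to be precisely the factor $1/(m_n^{h,\perp})^2$, i.e.\ the term $g_i^2$: the lemma assumes only the two finite moments of (A1), so there is no quantitative law of large numbers for $m_n^{h,\perp}$, and the naive route — conditioning on $\{m_n^{h,\perp}\text{ small}\}$ and using the deterministic bound $g_i^2\le n$ there — is not summable/uniform in $n$. The exchangeability identity $\mathbb{E}[g_i^2\,\mathbbm{1}(m_n^{h,\perp}>0)]\le 1$ is what circumvents this, at the modest price of handling the possibly positive–probability event $\{m_n^{h,\perp}=0\}$ separately via the permutation symmetry of the underlying mean-field model; one should also check that $g_i$ is well defined and measurable only on $\{m_n^{h,\perp}>0\}$, which is why that event appears explicitly throughout.
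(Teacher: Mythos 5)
Your proof is correct, but it takes a more computational route than the paper's. The paper never opens up the Gibbs measure: it observes that, because the components of $h$ are i.i.d.\ and the map $h\mapsto\mu_n^{\beta,J,h}$ is equivariant under permutations of $\{1,\dots,n\}$, the intensity measure $\mathbb{E}\mu_n^{\beta,J,h}$ is itself invariant under such permutations; combined with the trace identity $\sum_{i=1}^n\mu_n^{\beta,J,h}[\phi_i^2]=n$ forced by the spherical support, this immediately gives $\mathbb{E}\mu_n^{\beta,J,h}[\|\pi_{I_0}\|^2]\lesssim |I_0|$ for any finite index set $I_0$, and Chebyshev finishes. You instead go through the mixture representation \cref{can_rep2}, compute $\nu_n^{z,h}[\phi_i^2]$ explicitly from the $\mathbb{S}^{n-3}$-average in \cref{microcanonical}, bound it by $3+2g_i^2$ uniformly in $z$, and then apply the symmetrization one level down, to $\mathbb{E}[g_i^2\,\mathbbm{1}(m_n^{h,\perp}>0)]$ via the identity $\tfrac1n\sum_i g_i^2=1$. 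Both arguments ultimately rest on the same two ingredients (exchangeability of the disorder and the spherical constraint), but the paper's version is shorter and avoids your two extra chores: the separate treatment of the possibly positive-probability event $\{m_n^{h,\perp}=0\}$, and the (minor, but worth stating) justification for applying the representation of \cref{can_rep2} to the unbounded function $\phi_i^2$ — harmless here only because $\mu_n^{\beta,J,h}$ is supported on the sphere of radius $\sqrt{n}$, so one can truncate. What your computation buys in exchange is an explicit, $h$-pointwise second-moment formula for the microcanonical measures, which is stronger information than the lemma requires.
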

\noindent
For the full proof, see \Cref{aw_conv_sec}. The uniform tightness of the various metastate probability measures follow by combining this result with \Cref{joint_tight} and \Cref{ns_distr_tight}.
\\
\\
For the MS parameter range, recall that the random variable $m_n^h$ can be written in terms of the $2$-dimensional centred random walk $S_n'$ presented in \Cref{walk_rep}. By using the multivariate delta method, we have the following central limit theorem for the sequence of vectors $\{ m_n^h \}_{n \in \mathbb{N}}$.
\begin{lemma} Let $h$ be a random external field which satisfies (A1) and (A2).
\\
\\
It follows that 
\begin{align*}
\lim_{n \to \infty} (h, \sqrt{n} (m_n^h - m)) = (h,G)
\end{align*} 
in distribution, where $G$ is a non-degenerate $2$-dimensional Gaussian random variable independent of $h$. 
\end{lemma}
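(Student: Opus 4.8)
The plan is to write $m_n^h$ as a smooth function of the empirical averages $\tfrac1n S_n$, obtain the $\mathbb{R}^2$-marginal limit from the multivariate central limit theorem and the delta method, and then upgrade to joint convergence with $h$ by exploiting that $\sqrt n(m_n^h-m)$ asymptotically depends only on the tail coordinates of $h$. For the first step, recall that $m_n^{h,\parallel}=(S_n)_1/n$ and $m_n^{h,\perp}=\sqrt{(S_n)_2/n-((S_n)_1/n)^2}$, so that $m_n^h=g\!\left(\tfrac1n S_n\right)$ for the map $g(a,b)=(a,\sqrt{b-a^2})$, which is smooth on the open set $\{b>a^2\}$. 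Since $\mathbb{E}(h_0,h_0^2)=(\mathbb{E} h_0,\mathbb{E} h_0^2)$ lies in this set (as $\mathbb{E} h_0^2-(\mathbb{E} h_0)^2=\mathbb{V} h_0>0$ by (A1)) and $g(\mathbb{E} h_0,\mathbb{E} h_0^2)=m$, combining the strong law of large numbers ($\tfrac1n S_n\to\mathbb{E}(h_0,h_0^2)$ almost surely, using (A1)), the multivariate CLT ($\tfrac1{\sqrt n}S_n'=\sqrt n(\tfrac1n S_n-\mathbb{E}(h_0,h_0^2))\to \mathcal N(0,\Sigma)$ in distribution, with $\Sigma=\operatorname{Cov}((h_0,h_0^2))$, using the finite fourth moment from (A2)), and the delta method gives
\[
\sqrt n\,(m_n^h-m)=J_g\,\tfrac1{\sqrt n}S_n'+R_n,\qquad R_n\xrightarrow{\ \mathbb{P}\ }0,
\]
where $J_g$ is the Jacobian of $g$ at $(\mathbb{E} h_0,\mathbb{E} h_0^2)$, which is invertible since $m^\perp=\sqrt{\mathbb{V} h_0}>0$. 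In particular the $\mathbb{R}^2$-marginal converges in distribution to the Gaussian $G:=J_g\,\mathcal N(0,\Sigma)$, whose covariance $J_g\Sigma J_g^{\mathsf T}$ is non-degenerate precisely because $J_g$ is invertible and $\Sigma$ is positive definite under (A1) and (A2).

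For the joint statement it suffices, by the tightness just established and the fact that the $h$-marginal does not depend on $n$, to prove weak convergence of the finite-dimensional distributions $\big(\pi_I(h),\sqrt n(m_n^h-m)\big)$ for every finite $I\subset\mathbb{N}$, and for this it is enough to check convergence of characteristic functions and invoke Lévy's continuity theorem. Fix such an $I$, set $N=\max I$, and let $s\in\mathbb{R}^I$, $t\in\mathbb{R}^2$, $\theta=J_g^{\mathsf T}t$. Writing $\sqrt n(m_n^h-m)=J_g\tfrac1{\sqrt n}S_n'+R_n$, using $|e^{ix}-e^{iy}|\le|x-y|$ together with $R_n\to0$ in probability, and splitting $S_n'=S_N'+(S_n'-S_N')$ into independent parts — the second of which is a function of $(h_{N+1},h_{N+2},\dots)$ only, hence independent of both $\pi_I(h)$ and $S_N'$ — one obtains
\[
\mathbb{E}\Big[e^{i\langle s,\pi_I(h)\rangle}\,e^{i\langle t,\sqrt n(m_n^h-m)\rangle}\Big]=\mathbb{E}\Big[e^{i\langle s,\pi_I(h)\rangle}e^{i\langle\theta,S_N'/\sqrt n\rangle}\Big]\;\mathbb{E}\Big[e^{i\langle\theta,(S_n'-S_N')/\sqrt n\rangle}\Big]+o(1).
\]
The second factor tends to $e^{-\frac12\langle\theta,\Sigma\theta\rangle}=e^{-\frac12\langle t,J_g\Sigma J_g^{\mathsf T}t\rangle}=\mathbb{E}[e^{i\langle t,G\rangle}]$ by the CLT, while in the first factor $e^{i\langle\theta,S_N'/\sqrt n\rangle}\to1$ almost surely and is bounded, so by dominated convergence the first factor tends to $\mathbb{E}[e^{i\langle s,\pi_I(h)\rangle}]$. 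This yields $\big(\pi_I(h),\sqrt n(m_n^h-m)\big)\to\big(\pi_I(h),G\big)$ in distribution with $G$ independent of $h$, which is the claim.

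The only real obstacle is the independence in the second step: one cannot simply invoke Slutsky's theorem, because at finite $n$ the vector $\sqrt n(m_n^h-m)$ is a function of $h_1,\dots,h_n$ and is genuinely correlated with $h$. The point is that each increment of the empirical sum contributes $O(1/\sqrt n)$, so deleting any fixed finite block of coordinates changes $\tfrac1{\sqrt n}S_n'$ by a vanishing amount; the limit therefore depends only on the tail $\sigma$-algebra of $(h_i)_{i\in\mathbb{N}}$, which is independent of every cylinder event. The remaining ingredients — the multivariate CLT, the delta method, and Lévy continuity — are entirely standard once the finite fourth moment of (A2) is in force.
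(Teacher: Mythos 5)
Your proof is correct and follows essentially the same route as the paper: write $m_n^h=g(\tfrac1nS_n)$ with $g(a,b)=(a,\sqrt{b-a^2})$, apply the bivariate CLT for the walk with steps $(h_0,h_0^2)$ and the multivariate delta method; in fact you go further than the paper by supplying the characteristic-function/tail-decomposition argument that justifies the joint convergence with $h$ and the asymptotic independence, a step the paper's proof merely asserts. The one caveat --- shared with the paper --- is that positive definiteness of $\Sigma=\operatorname{Cov}((h_0,h_0^2))$ does not actually follow from (A1)--(A2) alone (it fails whenever $h_0$ takes only two values, e.g.\ $h_0\in\{1,2\}$, for which $(h_0,h_0^2)$ is supported on a line), so the asserted non-degeneracy of $G$ needs the extra hypothesis that $1,h_0,h_0^2$ are linearly independent in $L^2$.
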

\noindent
For the proof, see \Cref{limit_theorems}. The multivariate delta method is a standard tool of statistics, one can see \cite{Pinelis2016} for some more direct references and discussion.
\\
\\
By using Skorohod's representation theorem, see \cite[Chapter 17]{Klenke2020}, we can construct another probability space on which the convergence in distribution of $(h, \sqrt{n} (m_n^h - m)) \to (h,G)$ is elevated to almost sure convergence. On this new probability space, subject to a slight abuse of notation, we can apply the previous main result \Cref{conv_class} in the case where $\delta = \frac{1}{2}$ and $\gamma = G$ almost surely. Using these methods, we have the following result concerning the weak limit of the metastate probability measures. 
\begin{theorem} \label{aw_conv1}
Let $h$ be a random external field which satisfies (A1) and (A2).
\\
\\
For the mixed state parameter range, it follows
\begin{align*}
\lim_{n \to \infty} \mathcal{K}_n^{\beta, J}[f] = \frac{1}{2} \int_{\Omega} d \mathbb{P} \ f(h, \nu_\infty^{z^+,h}) + \frac{1}{2} \int_{\Omega} d \mathbb{P} \ f(h, \nu_\infty^{z^-,h}) := \mathcal{K}^{\beta,J} [f]
\end{align*}
for any $f \in C_b (\mathbb{R}^\mathbb{N} \times \mathcal{M}_1 (\mathbb{R}^\mathbb{N}))$.
\end{theorem}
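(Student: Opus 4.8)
The plan is to transfer the deterministic-field classification \cref{conv_class} to the random setting via a Skorohod coupling, and then to read off the limit by conditioning on the sign of a Gaussian fluctuation.

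First I would invoke the central limit theorem stated just above: under (A1) and (A2) one has $(h,\sqrt{n}(m_n^h-m))\to(h,G)$ in distribution, with $G=(G^\parallel,G^\perp)$ a non-degenerate $2$-dimensional Gaussian independent of $h$. Moreover $G^\parallel$ is the classical-CLT limit of $\frac1{\sqrt n}\sum_{i=1}^n(h_i-\mathbb E h_0)$, hence $G^\parallel\sim N(0,\mathbb V h_0)$ with $\mathbb V h_0>0$, so $\mathbb P(G^\parallel>0)=\mathbb P(G^\parallel<0)=\tfrac12$ and $\mathbb P(G^\parallel=0)=0$. I would then apply Skorohod's representation theorem (\cite[chapter 17]{Klenke2020}) to this convergent sequence of $\mathbb R^\mathbb N\times\mathbb R^2$-valued random variables, obtaining on a single probability space $(\tilde\Omega,\tilde{\mathcal F},\tilde{\mathbb P})$ variables $(\tilde h_n,\tilde\gamma_n)\overset{d}{=}(h,\sqrt n(m_n^h-m))$ and $(\tilde h,\tilde G)\overset{d}{=}(h,G)$ with $(\tilde h_n,\tilde\gamma_n)\to(\tilde h,\tilde G)$ $\tilde{\mathbb P}$-a.s. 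Since the law of each $(h,\sqrt n(m_n^h-m))$ is carried by the graph of the measurable map $h\mapsto(h,\sqrt n(m_n^h-m))$, the same holds for $(\tilde h_n,\tilde\gamma_n)$; hence $\tilde\gamma_n=\sqrt n(m_n^{\tilde h_n}-m)$ a.s., so that $n^{1/2}(m_n^{\tilde h_n}-m)\to\tilde G$ a.s.

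These are exactly the hypotheses of \cref{conv_class} with $\delta=\tfrac12$ and $\gamma=\tilde G$, and since $\tilde G^\parallel\neq 0$ a.s.\ we land in its first case, giving, $\tilde{\mathbb P}$-a.s.,
\[
\mu_n^{\beta,J,\tilde h_n}\ \longrightarrow\ \mathbbm{1}(\tilde G^\parallel>0)\,\nu_\infty^{z^+,\tilde h}+\mathbbm{1}(\tilde G^\parallel<0)\,\nu_\infty^{z^-,\tilde h}
\]
weakly in $\mathcal M_1(\mathbb R^\mathbb N)$. This is where the ``slight abuse of notation'' is needed: one has to observe that the proof of \cref{conv_class} (through \cref{condition_conv}, \cref{uni_conv} and \cref{weight_asymp}) uses only $m_n^{\tilde h_n}\to m$, the coordinatewise convergence $\tilde h_n\to\tilde h$, and the weak continuity of $h\mapsto\nu_\infty^{z,h}$ (so that $\nu_\infty^{z^\pm,\tilde h_n}[f]\to\nu_\infty^{z^\pm,\tilde h}[f]$ for local $f$), and is therefore unaffected by the field being allowed to vary with $n$. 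Combined with $\tilde h_n\to\tilde h$ a.s.\ in $\mathbb R^\mathbb N$, it follows that $(\tilde h_n,\mu_n^{\beta,J,\tilde h_n})$ converges a.s., hence in distribution, to $(\tilde h,\mathbbm{1}(\tilde G^\parallel>0)\nu_\infty^{z^+,\tilde h}+\mathbbm{1}(\tilde G^\parallel<0)\nu_\infty^{z^-,\tilde h})$.

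Finally I would translate back: since $h\mapsto\mu_n^{\beta,J,h}$ is measurable and $\tilde h_n\overset{d}{=}h$, the law of $(\tilde h_n,\mu_n^{\beta,J,\tilde h_n})$ is exactly $\mathcal K_n^{\beta,J}$; and since $(\tilde h,\tilde G)\overset{d}{=}(h,G)$ with $G$ independent of $h$, and $\omega\mapsto\nu_\infty^{z^\pm,h(\omega)}$ is measurable (weak continuity of $h\mapsto\nu_\infty^{z,h}$ once more), the limiting law equals that of $(h,\mathbbm{1}(G^\parallel>0)\nu_\infty^{z^+,h}+\mathbbm{1}(G^\parallel<0)\nu_\infty^{z^-,h})$. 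Hence for every $f\in C_b(\mathbb R^\mathbb N\times\mathcal M_1(\mathbb R^\mathbb N))$, splitting the expectation over $\{G^\parallel>0\}$ and $\{G^\parallel<0\}$ and using independence of $G$ and $h$ together with $\mathbb P(G^\parallel>0)=\mathbb P(G^\parallel<0)=\tfrac12$ gives $\lim_n\mathcal K_n^{\beta,J}[f]=\tfrac12\,\mathbb E f(h,\nu_\infty^{z^+,h})+\tfrac12\,\mathbb E f(h,\nu_\infty^{z^-,h})$, which is the asserted identity. I expect the main obstacle to be the flagged step — making rigorous that \cref{conv_class}, proved for a fixed strongly varying field, remains valid along the Skorohod-coupled sequence $\tilde h_n$; the rest is a routine assembly of Skorohod coupling, the continuous mapping theorem, and elementary conditioning.
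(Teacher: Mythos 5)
Your proposal is correct and follows essentially the same route as the paper: the CLT/delta-method limit $(h,\sqrt{n}(m_n^h-m))\to(h,G)$, a Skorohod coupling to upgrade to almost sure convergence, an application of \cref{conv_class} with $\delta=\tfrac12$ and $\gamma=G$, and the final conditioning on the sign of $G^\parallel$ using $\mathbb{P}(G^\parallel>0)=\tfrac12$ and independence from $h$. Your treatment of the coupled field varying with $n$ (the graph-measurability observation and the check that \cref{conv_class} tolerates $\tilde h_n\to\tilde h$) is precisely the ``slight abuse of notation'' the paper acknowledges, handled somewhat more explicitly.
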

\noindent
For the full proof, see \Cref{aw_conv_sec}.
\\
\\
As a direct corollary, using the fact that $h \mapsto \nu_\infty^{z^\pm,h}$ is a continuous mapping, we construct the Aizenman-Wehr metastate of this model. 
\begin{theorem} \label{aw_construction}
Let $h$ be a random external field which satisfies (A1) and (A2).
\\
\\
For the mixed state parameter range, the Aizenman-Wehr metastate is given by
\begin{align*}
\kappa^{\beta,J,h} := \frac{1}{2} \delta_{\nu_\infty^{z^+,h}} + \frac{1}{2} \delta_{\nu_\infty^{z^-,h}} .
\end{align*}
\end{theorem}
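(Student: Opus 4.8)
The plan is to verify directly that the map $h \mapsto \kappa^{\beta,J,h} := \tfrac12 \delta_{\nu_\infty^{z^+,h}} + \tfrac12 \delta_{\nu_\infty^{z^-,h}}$ satisfies the defining identity of \cref{metastate_def}, with the right-hand side supplied by \cref{aw_conv1}. First I would record that in the mixed state parameter range the limiting vector $m = (0,\sqrt{\mathbb{V}h_0})$ is deterministic, so the global maximizers $z^+,z^-$ of $\psi^{\beta,J,m}$ are fixed (non-random) points of $B(0,1)$; the only dependence on $h$ in $\nu_\infty^{z^\pm,h} = {T_\infty^{z^\pm,h}}_*\eta$ enters through the transport map of \cref{gaussian_pushforward}, which is affine in $h$ and, restricted to any finite index set, depends on only finitely many coordinates of $h$.

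Next I would establish the continuity of $h \mapsto \nu_\infty^{z^\pm,h}$ from $\mathbb{R}^\mathbb{N}$ with the product topology into $\mathcal{M}_1(\mathbb{R}^\mathbb{N})$ with the weak topology. Since $\operatorname{LBL}(\mathbb{R}^\mathbb{N})$ is convergence determining, it suffices to check that $h \mapsto \nu_\infty^{z^\pm,h}[g] = \eta[g \circ T_\infty^{z^\pm,h}]$ is continuous for each $g \in \operatorname{LBL}(\mathbb{R}^\mathbb{N})$, and this follows from dominated convergence: on the local index set of $g$ the map $T_\infty^{z^\pm,h}$ varies continuously with $h$, and $g$ is bounded. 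Hence $h \mapsto \delta_{\nu_\infty^{z^\pm,h}}$ is continuous into $\mathcal{M}_1(\mathcal{M}_1(\mathbb{R}^\mathbb{N}))$, so $h \mapsto \kappa^{\beta,J,h}$ is continuous and in particular Borel measurable, i.e. it is an admissible candidate in \cref{metastate_def}.

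With this in place the verification is a one-line computation. For $f \in C_b(\mathbb{R}^\mathbb{N} \times \mathcal{M}_1(\mathbb{R}^\mathbb{N}))$ the two-atom structure of $\kappa^{\beta,J,h}$ gives $\kappa^{\beta,J,h}[f(h,\cdot)] = \tfrac12 f(h,\nu_\infty^{z^+,h}) + \tfrac12 f(h,\nu_\infty^{z^-,h})$, which is bounded and, by the previous step, measurable in $h$; therefore $\mathbb{E}\kappa^{\beta,J,h}[f(h,\cdot)] = \tfrac12 \int_\Omega d\mathbb{P}\, f(h,\nu_\infty^{z^+,h}) + \tfrac12 \int_\Omega d\mathbb{P}\, f(h,\nu_\infty^{z^-,h})$. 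By \cref{aw_conv1} this equals $\mathcal{K}^{\beta,J}[f] = \int_{\mathbb{R}^\mathbb{N} \times \mathcal{M}_1(\mathbb{R}^\mathbb{N})} \mathcal{K}^{\beta,J}(dh,d\mu)\, f(h,\mu)$, which is exactly the identity required in \cref{metastate_def}. I would close with the remark that, since $\mathbb{R}^\mathbb{N}$ and $\mathcal{M}_1(\mathbb{R}^\mathbb{N})$ are Polish, regular conditional distributions of the second coordinate under $\mathcal{K}^{\beta,J}$ given the first exist and are $\mathbb{P}$-a.s. unique, and since the first marginal of $\mathcal{K}^{\beta,J}$ is the law of $h$ (inherited in the limit from each $\mathcal{K}_n^{\beta,J}$), the displayed $\kappa^{\beta,J,h}$ is, up to modification on a $\mathbb{P}$-null set, the Aizenman–Wehr metastate.

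The only genuine obstacle is the continuity/measurability of $h \mapsto \nu_\infty^{z^\pm,h}$, and even that is routine given the affine dependence on $h$ in \cref{gaussian_pushforward} and the fact that weak convergence on $\mathbb{R}^\mathbb{N}$ is tested by local bounded Lipschitz functions; all remaining steps are bookkeeping built on \cref{aw_conv1}.
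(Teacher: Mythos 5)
Your proposal is correct and follows exactly the route the paper takes: the paper presents \cref{aw_construction} as a direct corollary of \cref{aw_conv1} combined with the continuity (hence measurability) of $h \mapsto \nu_\infty^{z^\pm,h}$, which is precisely the identity-verification you carry out against \cref{metastate_def}. Your filling-in of the continuity argument via local bounded Lipschitz test functions and dominated convergence is the intended justification of that continuity claim, so there is nothing to correct.
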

\noindent
To our knowledge this is a novel result for this particular model and it surpasses other similar models in its level of generality. Similar results have been obtained in \cite{Matos1992} and \cite{Kuelske1997} for the BFCW model. We also emphasize that the proof of weak convergence of the metastate probability measures is almost a direct corollary of the previous main result \Cref{conv_class} by using Skorohod's representation theorem. This proof strategy does not seem to be utilized in either \cite{Matos1992} or \cite{Kuelske1997}.
\subsubsection{Phase characterization}  
\noindent
To better understand this result, we will characterize this model in  terms of the phase characterization of disordered systems given in \cite{Pastur1978}. This characterization describes the phases in terms of the expectation and variance of the magnetization density. We can give an equivalent characterization of the RFMFS model.
\\
\\
If we return to the representation \Cref{can_rep}, we see that the magnetization density of this model is given by
\begin{align*}
\mu_n^{\beta,J,h} \left[ \frac{M_n}{n} \right] = \int_{B(0,1)} \rho_n^{\beta,J,h} (dz) \ x .
\end{align*}
We have the following result.
\begin{lemma}
Let $h$ be a random external field which satisfies (A1) and (A2).
\\
\\
For the pure state parameter range, it follows that
\begin{align*}
\lim_{n \to \infty} \mathbb{E}  \mu_n^{\beta,J,h} \left[ \frac{M_n}{n} \right] = x^*, \ \lim_{n \to \infty} \mathbb{V} \mu_n^{\beta,J,h} \left[ \frac{M_n}{n} \right] = 0 .
\end{align*}
For the mixed state parameter range, it follows that
\begin{align*}
\lim_{n \to \infty} \mathbb{E}  \mu_n^{\beta,J,h} \left[ \frac{M_n}{n} \right] = 0, \ \lim_{n \to \infty} \mathbb{V} \mu_n^{\beta,J,h} \left[ \frac{M_n}{n} \right] = \left( x^+ \right)^2 > 0 .
\end{align*}
\end{lemma}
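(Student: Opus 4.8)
The plan is to reduce the statement to the scalar, disorder--dependent random variable $X_n := \mu_n^{\beta,J,h}[M_n/n]$ and to exploit the identity $X_n = \rho_n^{\beta,J,h}[\pi_1]$, where $\pi_1(z):=x$ is the first--coordinate map on $B(0,1)$. This follows from \cref{can_rep2}: under the microcanonical measure $\nu_n^{z,h}$ one has $M_n/n \equiv \langle w_{1,n},\cdot\rangle/\sqrt n \equiv x$ (since $1_n=\sqrt n\,w_{1,n}$ and $w_{1,n}\perp w^h_{2,n},v^h_{j,n}$), so $\nu_n^{z,h}[M_n/n]=x$ and hence $\mu_n^{\beta,J,h}[M_n/n]=\int_{B(0,1)}\rho_n^{\beta,J,h}(dz)\,x$. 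The key structural point is that $\pi_1$ and $\pi_1^2$ are bounded and continuous on $B(0,1)$ with $|\pi_1|\le 1$, so $|X_n|\le 1$ for every $n$ and every realisation of $h$; this uniform bound is what licenses passing limits through $\mathbb{E}$ and $\mathbb{V}$ by bounded convergence.

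For the pure state parameter range, \cref{mix_supp} together with Prokhorov's theorem shows that every weak subsequential limit of $\{\rho_n^{\beta,J,h}\}_n$ is a probability measure supported on $M^*(\beta,J,m)$, which is the singleton $\{z^*\}$, hence equals $\delta_{z^*}$; since $h$ is strongly varying almost surely under (A1), this gives $\rho_n^{\beta,J,h}\to\delta_{z^*}$ weakly, almost surely. Because $\pi_1$ and $\pi_1^2$ are bounded continuous, $X_n\to x^*$ and $X_n^2\to (x^*)^2$ almost surely; bounded convergence then yields $\mathbb{E}X_n\to x^*$ and $\mathbb{E}X_n^2\to (x^*)^2$, so $\mathbb{V}X_n=\mathbb{E}X_n^2-(\mathbb{E}X_n)^2\to 0$.

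For the mixed state parameter range I would start from the conditioned decomposition \cref{condition_rep}, which gives $X_n = W_n^{\beta,J,h,+}\,\rho_n^{+}[\pi_1]+(1-W_n^{\beta,J,h,+})\,\rho_n^{-}[\pi_1]$, where $\rho_n^{\pm}$ denotes $\rho_n^{\beta,J,h}$ restricted to $B_\pm(0,1)$ and renormalised (so that $\rho_n^{\pm}[\pi_1]=\mu_n^{\beta,J,h,\pm}[M_n/n]$). By the concentration argument underlying the proof of \cref{condition_conv} (the unique maximiser of $\psi^{\beta,J,m}$ over $\overline{B_\pm(0,1)}$ is $z^{\pm}$, since $x^{\mp}<0\neq x^{\pm}$), one has $\rho_n^{\pm}\to\delta_{z^{\pm}}$ weakly, hence $\rho_n^{\pm}[\pi_1]\to x^{\pm}$, almost surely, using only (A1); recall $x^{-}=-x^{+}$ with $x^{+}>0$ in this range. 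For the weights, under (A1)--(A2) the multivariate CLT gives $\sqrt n(m_n^h-m)\to G$ in distribution with $G^{\parallel}\sim N(0,\mathbb{V}h_0)$ non-degenerate and symmetric about $0$; passing to a Skorohod coupling on which $\sqrt n(m_n^h-m)\to G$ almost surely and applying \cref{weight_asymp} pathwise with $\delta=\tfrac12$, $\gamma=G$ (valid since $G^{\parallel}\neq 0$ a.s.), we obtain $W_n^{\beta,J,h,+}\to\mathbbm{1}(G^{\parallel}>0)$ almost surely on the coupling. Combining these, $X_n\to\mathbbm{1}(G^{\parallel}>0)\,x^{+}-\mathbbm{1}(G^{\parallel}<0)\,x^{+}$ almost surely there; since $|X_n|\le 1$, bounded convergence gives $\mathbb{E}X_n\to x^{+}\big(\mathbb{P}(G^{\parallel}>0)-\mathbb{P}(G^{\parallel}<0)\big)=0$ and $\mathbb{E}X_n^2\to (x^{+})^2\,\mathbb{P}(G^{\parallel}\neq 0)=(x^{+})^2$. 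As these are moments of $X_n$ they are unchanged on the original probability space, so $\mathbb{V}X_n\to (x^{+})^2$, which is strictly positive because $x^{+}>0$ in the mixed state range.

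The main obstacle is that $M_n/n$ is an unbounded, $n$-dependent functional, so the weak convergence of $\mu_n^{\beta,J,h}$ supplied by \cref{main_result1} and \cref{conv_class} does not by itself control $X_n$; the remedy is precisely to work with the mixture measures $\rho_n^{\beta,J,h}$ on the (relatively compact) ball, where $X_n=\rho_n^{\beta,J,h}[\pi_1]$ with $\pi_1$ genuinely bounded continuous, and to import the concentration of $\rho_n^{\beta,J,h}$ and of its conditioned versions $\rho_n^{\pm}$ at the maximiser(s) from \cref{mix_supp} and the proof of \cref{condition_conv}. A secondary point requiring care is the symmetry $\mathbb{P}(G^{\parallel}>0)=\mathbb{P}(G^{\parallel}<0)=\tfrac12$; this uses only that $G^{\parallel}$ is a mean-zero Gaussian with variance $\mathbb{V}h_0>0$, so that it has no atom at $0$, and both facts hold in the mixed state range under (A1).
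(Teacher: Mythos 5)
Your proposal is correct and follows essentially the same route as the paper: the identity $\mu_n^{\beta,J,h}[M_n/n]=\rho_n^{\beta,J,h}[\pi_1]$ from \cref{can_rep}, concentration of $\rho_n^{\beta,J,h}$ at $z^*$ via \cref{mix_supp} in the pure state range, and the Skorohod/CLT argument of \cref{aw_conv1} combined with \cref{weight_asymp} (with $\delta=\tfrac12$, $\gamma=G$) in the mixed state range, followed by bounded convergence using $|\rho_n^{\beta,J,h}[\pi_1]|\le 1$. This matches the paper's \cref{mag_dist} and the subsequent moment computation.
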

\noindent
The proof of this statement is a direct application of the convergence in distribution of the expectation of magnetization, see \Cref{mag_dist}.
\\
\\ 
We present the characterization for the RFMFS model in the following table.
\begin{table}[h!]
\begin{center}
\caption{Characterization of the phases of the RFMFS model} \label{spinglass_table}
\begin{tabular}{ |c|c|c|c| } 
\hline
Mean & Variance and coupling constant & Inverse temperature & Phase \\
\hline
$\mathbb{E} h_0 \not = 0$ & $J > 0$ & $\beta > 0$ & ordered ferromagnet \\
$\mathbb{E} h_0 = 0$ & $\sqrt{\mathbb{E} h_0^2} \geq J$ & $\beta > 0$ & ordered paramagnet \\
$\mathbb{E} h_0 = 0$ & $\sqrt{\mathbb{E} h_0^2} < J$ & $\beta \leq \beta_c$ & ordered paramagnet \\
$\mathbb{E} h_0 = 0$ & $\sqrt{\mathbb{E} h_0^2} < J$ & $\beta > \beta_c$ & spin glass \\
\hline
\end{tabular}
\end{center}
\end{table}
\\
We see that the transition from the PS parameter range to the MS parameter range involves the phase transition from the ordered paramagnetic phase to the spin glass phase. In the spin glass phase, the model exhibits chaotic size dependence and a unique "splitting" of the pure states unlike that of the deterministic inhomogeneous external field model. In particular, the A-W metastate describes a model where limiting states are positive magnetization states with equal probability to the negative magnetization states. 
\subsubsection{Convergence of the Newman-Stein metastates}  
\noindent
We introduce the collection of empirical metastates or Newman-Stein metastates  $\{ \overline{\kappa}_N^{\beta,J,h}\}_{N \in \mathbb{N}}$ which are given by
\begin{align} \label{ns_metastate_def}
\overline{\kappa}_N^{\beta,J,h} := \frac{1}{N} \sum_{n=1}^N \delta_{\mu_n^{\beta,J,h}} 
\end{align} 
almost surely. The collection of probability distributions of the N-S metastates is denoted by $\{ \overline{\mathcal{K}}_N^{\beta,J,h}\}_{N \in \mathbb{N}}$. The N-S metastates were introduced in \cite{Newman1997} by Newman and Stein.
\\
\\
To begin the study of their convergence properties, we first consider their uniform tightness. The almost sure uniform tightness of the collection of N-S metastates follows from the almost sure uniform tightness of $\mathcal{G}(\beta,J,h)$, which can be deduced from either \Cref{main_result3} or \Cref{class1}, and \Cref{ns_tight}. The uniform tightness of the collection of probability measures of N-S metastates follows from the uniform tightness of the collection of intensity measures given in \Cref{intensity_tight} and \Cref{ns_distr_tight}. We state these two results as a lemma.
\begin{lemma} \label{ns_tight_2} Let $h$ be a random external field which satisfies (A1).
\\
\\
It follows that the collection of Newman-Stein metastates $\{ \overline{\kappa}_N^{\beta,J,h}\}_{N \in \mathbb{N}}$ is uniformly tight almost surely, and the collection of probability measures of Newman-Stein metastates $\{ \overline{\mathcal{K}}^{\beta,J}_N\}_{N \in \mathbb{N}}$ is uniformly tight.
\end{lemma}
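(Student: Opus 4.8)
The statement is an assembly of results that are already in place, so the plan is to check the hypotheses of the two appendix lemmas and invoke them, with the only genuine content being the reduction of each claim to a single previously established uniform tightness statement.

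For the first claim, the plan is to use that the collection of FVGS $\mathcal{G}(\beta,J,h)=\{\mu_n^{\beta,J,h}\}_{n\in\mathbb{N}}$ is uniformly tight on $\mathbb{R}^\mathbb{N}$ almost surely; this is \cref{class1} and is, in any case, implicit in the proof of \cref{main_result3}, where Prokhorov's theorem is applied to this very sequence (and which ultimately rests on the mixture representation \cref{can_rep2} together with the uniform control of the microcanonical measures in \cref{uni_conv}). Fix $\omega$ in the probability-one event on which this holds. By Prokhorov's theorem the weak closure $\mathcal{C}(\omega):=\overline{\{\mu_n^{\beta,J,h(\omega)}\}_{n\in\mathbb{N}}}$ is a compact subset of $\mathcal{M}_1(\mathbb{R}^\mathbb{N})$, and since every empirical metastate $\overline{\kappa}_N^{\beta,J,h(\omega)}=\frac{1}{N}\sum_{n=1}^{N}\delta_{\mu_n^{\beta,J,h(\omega)}}$ assigns full mass to $\mathcal{C}(\omega)$, the family $\{\overline{\kappa}_N^{\beta,J,h(\omega)}\}_{N\in\mathbb{N}}$ is supported by a single compact set, hence trivially uniformly tight. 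This is exactly the content of the general fact \cref{ns_tight} (empirical measures of a uniformly tight sequence in a Polish space form a uniformly tight family of Borel probability measures on that space), and the single almost sure event handles all $N$ at once.

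For the second claim, the presence of the random field rules out a single compact set and the randomness must be controlled through intensity measures. Here the plan is: by \cref{intensity_tight} the family $\{\mathbb{E}\mu_n^{\beta,J,h}\}_{n\in\mathbb{N}}$ is uniformly tight on $\mathbb{R}^\mathbb{N}$; since Cesàro averages preserve uniform tightness (using $\frac{1}{N}\sum_{n=1}^{N}\mu_n(K^c)\le\sup_n\mu_n(K^c)$), the averaged intensities $\{\frac{1}{N}\sum_{n=1}^{N}\mathbb{E}\mu_n^{\beta,J,h}\}_{N\in\mathbb{N}}$ — which are the pushforwards to $\mathbb{R}^\mathbb{N}$ of the mean measures of $\overline{\kappa}_N^{\beta,J,h}$ — are uniformly tight, and feeding this into \cref{ns_distr_tight} yields uniform tightness of the laws $\{\overline{\mathcal{K}}_N^{\beta,J}\}_{N\in\mathbb{N}}$ (the first marginal, the fixed law of $h$, being automatically tight). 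The mechanism inside \cref{ns_distr_tight} is a nested Markov-inequality argument passing uniform tightness from the level of $\mathbb{R}^\mathbb{N}$ up through $\mathcal{M}_1(\mathbb{R}^\mathbb{N})$ to $\mathcal{M}_1(\mathcal{M}_1(\mathbb{R}^\mathbb{N}))$, with Prokhorov's theorem converting each layer of tightness into relative compactness.

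Since both halves reduce to cited results, there is no substantial obstacle within the present lemma itself; the real work sits in \cref{intensity_tight} and in the appendix lemmas \cref{ns_tight} and \cref{ns_distr_tight}. The only points needing a line of care are the two bookkeeping facts just noted — that a single probability-one event suffices for the almost sure statement (because uniform tightness of $\mathcal{G}(\beta,J,h)$ is itself one almost sure event implying tightness of $\{\overline{\kappa}_N^{\beta,J,h}\}_N$ for every $N$ simultaneously), and that Cesàro averaging of a uniformly tight family is again uniformly tight.
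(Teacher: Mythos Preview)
Your proposal is correct and follows exactly the paper's route: almost sure tightness of $\mathcal{G}(\beta,J,h)$ from \cref{class1} combined with \cref{ns_tight} for the first claim, and \cref{intensity_tight} combined with \cref{ns_distr_tight} (via \cref{joint_tight}) for the second. The one cosmetic point is that the Ces\`aro-averaging observation you spell out is already the internal mechanism of \cref{ns_distr_tight} rather than a separate preliminary step before invoking it, so the input to \cref{ns_distr_tight} is just the tightness of $\{\mathbb{E}\mu_n^{\beta,J,h}\}_n$ itself.
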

\noindent
Given the uniform tightness of these collections, it is enough to study expectations of the form
\begin{align*}
\overline{\kappa}_N^{\beta,J,h} [P] := \frac{1}{N} \sum_{n=1}^N P (\mu_n^{\beta,J,h} [f_1],..., \mu_n^{\beta,J,h} [f_m]),
\end{align*}
where $P : \mathbb{R}^m \to \mathbb{R}$ is a finite degree polynomial of $m$-variables and $\{ f_i \}_{i=1}^m$ is a finite collection belonging to $\operatorname{LBL} (\mathbb{R}^\mathbb{N})$, see \Cref{subalgebra1}. In the case of almost sure convergence, one considers such expectations in the limit as $N \to \infty$ almost surely, and, in the case of weak convergence of the probability distributions of the N-S metastates, one considers such expectations in the limit in distribution in the limit as $N \to \infty$ .
\\
\\
To study the limits, we begin by introducing a collection of sets $\{ A_{n, \delta} \}_{n \in \mathbb{N}}$ given by
\begin{align*}
A_{n, \delta} := \left( \left[ - n^{- \frac{1}{2} + \delta}, - n^{-\frac{1}{2} - \delta}\right] \cup \left[ - n^{- \frac{1}{2} + \delta}, n^{-\frac{1}{2} + \delta}\right] \right) \subset \mathbb{R}^2 ,
\end{align*} 
where $0 < \delta < \frac{1}{6}$. We will use this collection of sets as "conditioning sets" for the N-S metastates. We show the following three results for this collection. First, subject to the addition of assumption (A4), we show that
\begin{align*}
\lim_{N \to \infty} \frac{1}{N} \sum_{n=1}^{N} \mathbbm{1}(m_n^h - m \not \in A_{n, \delta}) = 0 ,
\end{align*}
almost surely, see \Cref{subseq} for the proof. 
\\
\\
This result allows one to consider the Newman-Stein metastates only "along" the sets $A_{n, \delta}$. Using the control of the fluctuation of $m_n^h - m$ provided by conditioning on the sets $A_{n, \delta}$, along with the asymptotics developed for the weights $W_n^{\beta,J,h,+}$ in \Cref{diff_eq2}, we show that
\begin{align*}
\lim_{n \to \infty} \mathbbm{1}(m_n^h - m \in A_{n, \delta}) \left| W_n^{\beta,J,h,+} - \mathbbm{1} \left( \sum_{i=1}^n h_i > 0 \right) \right| = 0 ,
\end{align*}
and
\begin{align*}
\lim_{n \to \infty} \mathbbm{1}(m_n^h - m \in A_{n, \delta}) \left| \mu_n^{\beta,J,h, \pm} [f] - \nu_\infty^{z^\pm,h} [f] \right| = 0 ,
\end{align*}
where $f \in \operatorname{LBL} (\mathbb{R}^\mathbb{N})$, see \Cref{ns_conv} for the full proof. Combining these results together, we have the following result concerning the pathwise asymptotics of the N-S metastates.
\begin{lemma} \label{ns_conv2} Let $h$ be a random external field which satisfies (A1), (A2), and (A4).
\\
\\
For the mixed state parameter range, it follows that
\begin{align*}
 \overline{\kappa}_{N}^{\beta,J,h} [P] = T_{N}^+ P (\nu_\infty^{z^+,h} [f_1],..., \nu_\infty^{z^+,h} [f_m]) +  (1 - T_{N}^+) P (\nu_\infty^{z^-,h} [f_1],..., \nu_\infty^{z^-,h} [f_m])  + o(1)
\end{align*}
almost surely, where
\begin{align*}
T_N^+ := \frac{1}{N} \sum_{n=1}^N \mathbbm{1} \left( \sum_{i=1}^n h_i > 0 \right) .
\end{align*}
\end{lemma}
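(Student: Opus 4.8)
The plan is to reduce everything to the conditioned two-state representation of the finite-volume Gibbs states together with a Cesàro-averaging argument, pushing all the genuine analytic content onto the three preparatory results quoted just above the lemma. Fix the polynomial $P$ and the finite collection $f_1,\dots,f_m \in \operatorname{LBL}(\mathbb{R}^\mathbb{N})$, and set $C := \max_i \sup_{\mathbb{R}^\mathbb{N}} |f_i| < \infty$. Write $a_n := (\mu_n^{\beta,J,h}[f_1],\dots,\mu_n^{\beta,J,h}[f_m])$, $c_n^\pm := (\mu_n^{\beta,J,h,\pm}[f_1],\dots,\mu_n^{\beta,J,h,\pm}[f_m])$ and $b^\pm := (\nu_\infty^{z^\pm,h}[f_1],\dots,\nu_\infty^{z^\pm,h}[f_m])$; all of these lie in the cube $[-C,C]^m$, on which $P$ is Lipschitz with some finite constant $L_P$. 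By the conditioned representation \cref{condition_rep}, $a_n = W_n^{\beta,J,h,+} c_n^+ + (1-W_n^{\beta,J,h,+}) c_n^-$, and, writing $\chi_n := \mathbbm{1}(\sum_{i=1}^n h_i > 0) = \mathbbm{1}(m_n^{h,\parallel} > 0)$ and $\tilde b_n := \chi_n b^+ + (1-\chi_n) b^-$, the statement is equivalent to
\begin{align*}
\frac{1}{N} \sum_{n=1}^N \big| P(a_n) - P(\tilde b_n) \big| \longrightarrow 0 \quad \text{almost surely as } N \to \infty,
\end{align*}
since $P(\tilde b_n) = \chi_n P(b^+) + (1-\chi_n) P(b^-)$ and the Cesàro average of the right-hand side is exactly $T_N^+ P(b^+) + (1-T_N^+) P(b^-)$.

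Next I would split the index range $\{1,\dots,N\}$ according to whether $m_n^h - m$ lies in the conditioning set $A_{n,\delta}$. On the exceptional indices, $|P(a_n) - P(\tilde b_n)|$ is crudely bounded by the constant $2\sup_{[-C,C]^m}|P|$, so, using (A4), the sparsity result \cref{subseq} shows the corresponding Cesàro average vanishes almost surely. For the remaining indices I claim the summand itself tends to $0$ along $n$. Indeed, on the event $m_n^h - m \in A_{n,\delta}$ the two convergence statements in \cref{ns_conv} give $|W_n^{\beta,J,h,+} - \chi_n| \to 0$ and $\|c_n^\pm - b^\pm\| \to 0$ (the latter applied to each $f_i$ in the finite collection). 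Decomposing $a_n - \tilde b_n = W_n^{+}(c_n^+ - b^+) + (1-W_n^{+})(c_n^- - b^-) + (W_n^{+} - \chi_n)(b^+ - b^-)$ and using the triangle inequality bounds $\|a_n - \tilde b_n\|$ by $\|c_n^+ - b^+\| + \|c_n^- - b^-\| + |W_n^{+} - \chi_n|\,\|b^+ - b^-\| \to 0$; the Lipschitz bound for $P$ on $[-C,C]^m$ then yields $\mathbbm{1}(m_n^h - m \in A_{n,\delta})\,|P(a_n) - P(\tilde b_n)| \to 0$. Since a bounded sequence converging to $0$ has vanishing Cesàro averages, the average over these indices also vanishes, and adding the two pieces (after intersecting the finitely many almost-sure events from \cref{subseq} and \cref{ns_conv}) gives the displayed limit.

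The argument is essentially bookkeeping, so I do not expect a serious obstacle here; the points that need the most care are (i) recording the a priori bound $|f_i|\le C$ at the outset so that a single finite Lipschitz constant for $P$ is available, (ii) being precise about which sub-statements hold only on a full-measure set and intersecting them (the proof quietly chains together \cref{subseq}, both halves of \cref{ns_conv}, and the already-established almost sure convergences), and (iii) recognising that the genuine analytic difficulty — the failure of almost sure stabilisation of the weights $W_n^{\beta,J,h,+}$, i.e. the very CSD established in \cref{main_result3} — is exactly what the conditioning sets $A_{n,\delta}$ are designed to circumvent, since along those sets the weights are pinned to $\chi_n$ and the conditioned states converge to the pure states $\nu_\infty^{z^\pm,h}$.
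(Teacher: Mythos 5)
Your proposal is correct and follows essentially the same route as the paper's proof: the conditioned representation \cref{condition_rep}, a Lipschitz bound for $P$ on the compact cube determined by the sup-norms of the $f_j$, the convergence of the weights and conditioned states along the sets $A_{n,\delta}$ from \cref{ns_conv}, and the sparsity estimate \cref{subseq} to dispose of the exceptional indices before Cesàro averaging. Your write-up is in fact slightly more explicit than the paper's (spelling out the decomposition $a_n-\tilde b_n$ and the intersection of the almost-sure events), but there is no substantive difference.
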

\noindent
For the full proof and notation, see \Cref{ns_metastate}. The results presented and proved here for conditioning sets are model specific adaptations of the ideas and methods concerning "regular sets" presented in \cite{Kuelske1997}. In the above result, we are using the standard little-$o$ asymptotic notation. 
\\
\\
The limiting structure of the N-S metastates is then determined by the properties of the collection of random variables $\{ T_N^+ \}_{N \in \mathbb{N}}$, which correspond to the portion of time that a $1$-dimensional random walk with step-length $h_0$ spends in the upper half-plane. Results concerning this collection of random variables are classical, and we refer to \cite[Chapter 4]{Spitzer1964}. We will utilize the following two results. The first results concerns the convergence in distribution of $\{ T_N^+\}_{N \in \mathbb{N}}$ to an arcsine distributed random variable. The second related result concerning the characterization of the set of limit points of $\{ T_N^+\}_{N \in \mathbb{N}}$, follows by using the Hewitt-Savage $0$-$1$ law, see \cite[Chapter 12]{Klenke2020}, and the convergence in distribution to an arcsine distributed random variable.
\begin{lemma} \label{arcsine_dense} Let $h$ be a random external field which satisfies (A1) and (A2).
\\
\\
It follows that
\begin{align*}
\lim_{N \to \infty} (h, T_N^+) = (h, \alpha)
\end{align*}
in distribution, where $\alpha$ is an arcsine distributed random variable independent of $h$ given by its distribution function
\begin{align*}
\mathbb{P}(\alpha \leq x) = \frac{\arcsin(2x - 1)}{\pi} + \frac{1}{2}
\end{align*}
for $x \in [0,1]$, and
\begin{align*} 
L \left( \{ T_N^+ \}_{N \in \mathbb{N}} \right) = [0,1]
\end{align*}
almost surely.
\end{lemma}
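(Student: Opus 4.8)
Throughout one is in the mixed state parameter range, so $\mathbb{E} h_0 = 0$ and $S_n := \sum_{i=1}^n h_i$ is a centered random walk with i.i.d.\ steps of finite, strictly positive variance $\mathbb{V} h_0$; recall $T_N^+ = \tfrac1N \sum_{n=1}^N \mathbbm{1}(S_n > 0)$ and that $L(\{T_N^+\}_{N\in\mathbb{N}}) \subset [0,1]$ automatically. The two displayed assertions are treated separately. The first ingredient is the marginal statement $T_N^+ \to \alpha$ in distribution, with $\alpha$ arcsine on $[0,1]$: this is the classical arcsine theorem for partial sums of i.i.d.\ centered variables with finite variance, see \cite[chapter 4]{Spitzer1964}, and the difference between counting $\{S_n > 0\}$ and $\{S_n \ge 0\}$ is harmless since the occupation time of $\{0\}$ by the recurrent walk $S$ up to time $N$ is $o(N)$.

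To upgrade this to $(h, T_N^+) \to (h,\alpha)$ with $\alpha$ independent of $h$, it suffices to verify $\mathbb{E}[F(h)\,G(T_N^+)] \to \mathbb{E}[F(h)]\,\mathbb{E}[G(\alpha)]$ for every bounded Lipschitz $G:\mathbb{R}\to\mathbb{R}$ and every bounded $F$ depending only on coordinates $h_1,\dots,h_k$, this being enough to identify the joint limit since the laws of $(h,T_N^+)$ are tight ($h$ fixed, $T_N^+\in[0,1]$). Fixing $F,G,k$, the plan is to compare $T_N^+$ with the occupation fraction $T_N^{+,(k)} := \tfrac1N\sum_{n=1}^N \mathbbm{1}(S_n^{(k)}>0)$ of the shifted walk $S_n^{(k)} := S_{n+k}-S_k = \sum_{i=k+1}^{k+n} h_i$. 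Since $T_N^{+,(k)}$ depends only on $h_{k+1},\dots,h_{k+N}$, it is independent of $\sigma(h_1,\dots,h_k)$ and equidistributed with $T_N^+$, so $\mathbb{E}[F(h)\,G(T_N^{+,(k)})] = \mathbb{E}[F(h)]\,\mathbb{E}[G(T_N^+)] \to \mathbb{E}[F(h)]\,\mathbb{E}[G(\alpha)]$; it remains to show $\mathbb{E}\,|T_N^+ - T_N^{+,(k)}|\to 0$, after which the Lipschitz bound on $G$ closes the argument. A term-by-term comparison gives $|T_N^+ - T_N^{+,(k)}| \le \tfrac{2k}{N} + \tfrac1N\sum_{m=1}^N \mathbbm{1}(|S_m|\le |S_k|)$, because $\mathbbm{1}(S_m>0)$ and $\mathbbm{1}(S_m-S_k>0)$ disagree only when $|S_m|\le |S_k|$ and the two index ranges differ in at most $2k$ places. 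Conditioning on $\sigma(h_1,\dots,h_k)$ and using that $\mathbb{P}(S_j \in I)\to 0$ for any fixed bounded interval $I$ (a consequence of the CLT, the window $I$ being of fixed width while $S_j$ spreads like $\sqrt{j}$), a Cesàro average together with dominated convergence yields $\mathbb{E}\,|T_N^+ - T_N^{+,(k)}|\to 0$, establishing the first assertion.

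For the second assertion I would first record the vanishing increments $|T_{N+1}^+ - T_N^+| = \tfrac1{N+1}|\mathbbm{1}(S_{N+1}>0) - T_N^+| \le \tfrac1{N+1}$, so that $L(\{T_N^+\})$ is exactly the closed interval with endpoints $\liminf_N T_N^+$ and $\limsup_N T_N^+$. A finite permutation of the i.i.d.\ sequence $(h_i)$ alters $T_N^+$ by $O(1/N)$, hence $\liminf_N T_N^+$ and $\limsup_N T_N^+$ are invariant under finite permutations and therefore a.s.\ constant by the Hewitt-Savage $0$-$1$ law, see \cite[chapter 12]{Klenke2020}. If $\limsup_N T_N^+ = c^+ < 1$ a.s., then $\mathbb{P}(T_N^+ > \tfrac{1+c^+}{2}\ \text{i.o.}) = 0$, whereas the portmanteau theorem and the marginal convergence force $\liminf_N \mathbb{P}(T_N^+ > \tfrac{1+c^+}{2}) \ge \mathbb{P}(\alpha > \tfrac{1+c^+}{2}) > 0$ since $\tfrac{1+c^+}{2} < 1 = \sup\operatorname{supp}(\alpha)$, a contradiction; hence $\limsup_N T_N^+ = 1$ a.s., and symmetrically $\liminf_N T_N^+ = 0$ a.s. Since a real sequence in $[0,1]$ with vanishing increments whose $\limsup$ and $\liminf$ are $1$ and $0$ has every point of $[0,1]$ as a limit point, the claim follows.

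\textbf{Main obstacle.} The marginal arcsine law and the Hewitt-Savage argument are essentially standard; the genuinely technical step is the asymptotic independence, i.e.\ that $T_N^+$ forgets any fixed initial block $(h_1,\dots,h_k)$. This rests on the occupation-time estimate $\tfrac1N\#\{m\le N : |S_m|\le a\}\to 0$ — a centered finite-variance walk spends a negligible fraction of time in any bounded set — which is precisely where the non-degeneracy $\mathbb{V} h_0 > 0$ from (A1) enters.
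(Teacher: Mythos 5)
Your proposal is correct, and while it follows the same overall skeleton as the paper (classical arcsine law for the marginal, asymptotic independence of $T_N^+$ from any finite block of $h$ via a separating class of products $F(h)G(T_N^+)$, and Hewitt--Savage plus the full support of the arcsine law for the a.s.\ limit-point set), both key technical steps are implemented differently. For the asymptotic independence, the paper deletes the finite index set $I$ from the partial sums and controls the resulting sign discrepancy by restricting to the conditioning sets $\pi_1(A_{n,\delta})$, on which $\lvert S_n\rvert > n^{1/2-\delta}$ eventually dominates the bounded contribution of $h_I$; this reuses the Ces\`aro/Berry--Esseen machinery of \cref{subseq}. You instead shift the walk by $k$ steps and bound $\mathbb{E}\lvert T_N^+ - T_N^{+,(k)}\rvert$ by an occupation-time estimate, $\tfrac1N\#\{m\le N : \lvert S_m\rvert\le \lvert S_k\rvert\}\to 0$ in $L^1$, which follows from uniform anti-concentration via the CLT alone; this is arguably lighter, as it bypasses the conditioning-set apparatus and needs nothing beyond the finite-variance non-degeneracy in (A1) for this step. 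For the limit-point set, the paper covers $[0,1]$ by rational intervals, applies Hewitt--Savage to each recurrence event $A_{k,j}$, and then extracts diagonal subsequences; you exploit the vanishing increments $\lvert T_{N+1}^+ - T_N^+\rvert\le \tfrac1{N+1}$ to reduce the claim to showing $\limsup_N T_N^+ = 1$ and $\liminf_N T_N^+ = 0$ a.s., which you get from Hewitt--Savage applied to these (finite-permutation-invariant) quantities together with a portmanteau/reverse-Fatou contradiction. Both routes are sound; yours trades the paper's explicit subsequence constructions for the monotone-increment observation, and its independence argument is somewhat more self-contained.
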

\noindent
For the full proof, see \Cref{ns_metastate}.
\\
\\
By combining the almost sure uniform tightness of the N-S metastates from \Cref{ns_tight_2}, the pathwise asymptotics from \Cref{ns_conv2}, and the limit point density from \Cref{arcsine_dense}, we have the following CSD result for the almost sure convergence of the N-S metastates.
\begin{theorem} \label{main_result4} Let $h$ be a random external field which satisfies (A1), (A2), and (A4).
\\
\\
For the mixed state parameter range, it follows that
\begin{align*}
\operatorname{conv} (\delta_{\nu_\infty^{z^+,h}}, \delta_{\nu_\infty^{z^-,h}}) =L \left( \left\{ \overline{\kappa}^{\beta,J,h}_N \right\}_{N \in \mathbb{N}}\right)  
\end{align*}
almost surely.
\\
\\
In particular, it follows that $\{ \overline{\kappa}^{\beta,J,h}_N \}_{N \in \mathbb{N}}$ does not converge almost surely but there does exist a random subsequence $\{ N_{k} \}_{k \in \mathbb{N}}$ such that
\begin{align*}
\overline{\kappa}_{N_{k}}^{\beta,J,h} \to \kappa^{\beta,J,h}  
\end{align*}
almost surely.
\end{theorem}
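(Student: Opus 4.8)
The plan is to reduce the identification of the limit set to the behaviour of the scalar random variables $\{T_N^+\}_{N\in\mathbb{N}}$ via the pathwise asymptotics of \cref{ns_conv2}, and then to transport the density result $L(\{T_N^+\})=[0,1]$ of \cref{arcsine_dense} up to the level of metastates through the affine map $\lambda\mapsto \lambda\delta_{\nu_\infty^{z^+,h}}+(1-\lambda)\delta_{\nu_\infty^{z^-,h}}$. First I would fix a realization of $h$ lying in the intersection of the probability-one events on which \cref{ns_tight_2} (uniform tightness of $\{\overline{\kappa}_N^{\beta,J,h}\}$), \cref{ns_conv2} (simultaneously for every $P$ in a suitable countable family of polynomials of $\operatorname{LBL}(\mathbb{R}^\mathbb{N})$-functions that is convergence-determining on $\mathcal{M}_1(\mathcal{M}_1(\mathbb{R}^\mathbb{N}))$ by \cref{subalgebra1} and the appendix), and the identity $L(\{T_N^+\})=[0,1]$ from \cref{arcsine_dense} all hold; this intersection has probability one. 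Abbreviate $\nu_h^\pm:=\nu_\infty^{z^\pm,h}$ and define $\Phi_h:[0,1]\to\mathcal{M}_1(\mathcal{M}_1(\mathbb{R}^\mathbb{N}))$ by $\Phi_h(\lambda):=\lambda\delta_{\nu_h^+}+(1-\lambda)\delta_{\nu_h^-}$, so that $\Phi_h([0,1])=\operatorname{conv}(\delta_{\nu_\infty^{z^+,h}},\delta_{\nu_\infty^{z^-,h}})$ and $\Phi_h(1/2)=\kappa^{\beta,J,h}$. The measures $\nu_h^+$ and $\nu_h^-$ are distinct (their first-coordinate means differ by $2x^+\neq 0$, using $x^-=-x^+$), so $\Phi_h$ is injective, and for any polynomial $P$ of $\{f_i\}_{i=1}^m\subset\operatorname{LBL}(\mathbb{R}^\mathbb{N})$ the number $\Phi_h(\lambda)[P]=\lambda\,P(\nu_h^+[f_1],\dots,\nu_h^+[f_m])+(1-\lambda)\,P(\nu_h^-[f_1],\dots,\nu_h^-[f_m])$ depends continuously and affinely on $\lambda$.

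Next I would observe that \cref{ns_conv2} is exactly the statement
\begin{align*}
\overline{\kappa}_N^{\beta,J,h}[P]=\Phi_h(T_N^+)[P]+o(1)\qquad(N\to\infty)
\end{align*}
for every $P$ in the chosen countable family. Combined with the almost sure uniform tightness of $\{\overline{\kappa}_N^{\beta,J,h}\}$ from \cref{ns_tight_2}, this lets me identify the limit points. For the inclusion $L(\{\overline{\kappa}_N^{\beta,J,h}\})\subset\Phi_h([0,1])$: if $\overline{\kappa}_{N_k}^{\beta,J,h}\to\mu$ weakly, extract a further subsequence along which $T_{N_{k_j}}^+\to\lambda\in[0,1]$ (possible since $T_N^+\in[0,1]$); then $\mu[P]=\lim_j\Phi_h(T_{N_{k_j}}^+)[P]=\Phi_h(\lambda)[P]$ for every $P$ in the family, whence $\mu=\Phi_h(\lambda)$ because the family is convergence-determining. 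For the reverse inclusion: given $\lambda\in[0,1]=L(\{T_N^+\})$, pick a subsequence with $T_{N_k}^+\to\lambda$; then $\overline{\kappa}_{N_k}^{\beta,J,h}[P]\to\Phi_h(\lambda)[P]$ for all $P$ in the family, and uniform tightness together with Prokhorov's theorem upgrades this to $\overline{\kappa}_{N_k}^{\beta,J,h}\to\Phi_h(\lambda)$ weakly, so $\Phi_h(\lambda)\in L(\{\overline{\kappa}_N^{\beta,J,h}\})$. This establishes $L(\{\overline{\kappa}_N^{\beta,J,h}\})=\Phi_h([0,1])=\operatorname{conv}(\delta_{\nu_\infty^{z^+,h}},\delta_{\nu_\infty^{z^-,h}})$ almost surely.

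The two remaining assertions are then immediate. Since the limit set is a nondegenerate segment with the two distinct endpoints $\delta_{\nu_h^+}$ and $\delta_{\nu_h^-}$, the sequence $\{\overline{\kappa}_N^{\beta,J,h}\}$ has more than one limit point and therefore does not converge. Since $1/2\in[0,1]=L(\{T_N^+\})$ almost surely, there is a random subsequence $\{N_k\}$ with $T_{N_k}^+\to 1/2$, and the argument above gives $\overline{\kappa}_{N_k}^{\beta,J,h}\to\Phi_h(1/2)=\kappa^{\beta,J,h}$ almost surely.

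The main obstacle is not the scalar density input (that is \cref{arcsine_dense}) but the passage from convergence of the observables $\overline{\kappa}_N^{\beta,J,h}[P]$ to genuine weak convergence of metastates in $\mathcal{M}_1(\mathcal{M}_1(\mathbb{R}^\mathbb{N}))$: this hinges on having a \emph{countable} convergence-determining family of polynomials of $\operatorname{LBL}$-functions (so that a single probability-one event suffices) and on the uniform tightness of \cref{ns_tight_2}, without which subsequential limits need not be singled out by the observables. I would therefore be careful to state precisely the countable family used, to invoke Prokhorov's theorem via the appendix at the upgrade step, and to collect all the almost sure statements invoked (\cref{ns_tight_2}, \cref{ns_conv2}, \cref{arcsine_dense}) into one intersection of full probability at the outset.
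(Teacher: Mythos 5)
Your proposal is correct and follows essentially the same route as the paper: almost sure uniform tightness from \cref{ns_tight_2}, the pathwise reduction $\overline{\kappa}_N^{\beta,J,h}[P]=T_N^+P(\dots)+(1-T_N^+)P(\dots)+o(1)$ from \cref{ns_conv2}, subsequence/subsubsequence extraction with the separating family of polynomials to identify limit points in both directions, and \cref{arcsine_dense} to realize every $\lambda\in[0,1]$. The explicit map $\Phi_h$ and the remark on distinctness of $\nu_\infty^{z^\pm,h}$ are cosmetic additions; no gap.
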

\noindent
See \Cref{ns_metastate} for the full proof.
\\
\\
By combining the almost sure uniform tightness of the N-S metastates from \Cref{ns_tight_2}, the pathwise asymptotics from \Cref{ns_conv2}, and the convergence in distribution to the arcsine distributed random variable, we have the following convergence in distribution result.
\begin{theorem} \label{main_result5} Let $h$ be a random external field which satisfies (A1), (A2), and (A4).
\\
\\
For the mixed state parameter range, it follows that
\begin{align*}
\lim_{n \to \infty} \overline{\kappa}^{\beta,J,h}_N = \alpha \delta_{\nu_\infty^{z^+,h}} + (1 - \alpha) \delta_{\nu_{\infty}^{z^-,h}} := \overline{\kappa}^{\beta,J,h},
\end{align*}
in distribution, where $\alpha$ is an arcsine distributed random variable independent of $h$.
\end{theorem}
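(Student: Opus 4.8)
The plan is to combine the pathwise asymptotic representation of the Newman--Stein metastates from \cref{ns_conv2} with the joint convergence in distribution of $(h, T_N^+)$ from \cref{arcsine_dense}, using the almost sure uniform tightness from \cref{ns_tight_2} to reduce the problem to a separating class of test functionals. Concretely, since $\{\overline{\mathcal{K}}_N^{\beta,J}\}_{N \in \mathbb{N}}$ is uniformly tight (\cref{ns_tight_2}), by Prokhorov's theorem it suffices to identify all subsequential limits in distribution, and for this it is enough to show convergence of expectations of functionals of the form $F(\overline{\kappa}_N^{\beta,J,h}) = G\big(h, \overline{\kappa}_N^{\beta,J,h}[P_1], \dots, \overline{\kappa}_N^{\beta,J,h}[P_r]\big)$, where each $P_j$ is a polynomial in finitely many coordinates $\mu \mapsto \mu[f]$ with $f \in \operatorname{LBL}(\mathbb{R}^\mathbb{N})$ and $G$ is bounded continuous; this reduction is exactly the content of the remark around \cref{subalgebra1}.

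First I would fix such a functional and invoke \cref{ns_conv2}: in the mixed state parameter range,
\begin{align*}
\overline{\kappa}_N^{\beta,J,h}[P] = T_N^+ \, P\big(\nu_\infty^{z^+,h}[f_1], \dots, \nu_\infty^{z^+,h}[f_m]\big) + (1 - T_N^+)\, P\big(\nu_\infty^{z^-,h}[f_1], \dots, \nu_\infty^{z^-,h}[f_m]\big) + o(1)
\end{align*}
almost surely. Thus, up to an $o(1)$ error, $\overline{\kappa}_N^{\beta,J,h}$ agrees with the random probability measure $T_N^+\, \delta_{\nu_\infty^{z^+,h}} + (1 - T_N^+)\, \delta_{\nu_\infty^{z^-,h}}$ when evaluated on the separating family, so it suffices to prove that the latter converges in distribution to $\alpha\, \delta_{\nu_\infty^{z^+,h}} + (1 - \alpha)\, \delta_{\nu_\infty^{z^-,h}}$. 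Now I would use the continuity of the maps $h \mapsto \nu_\infty^{z^\pm,h}$ (already exploited in \cref{aw_construction}) together with the joint convergence $(h, T_N^+) \to (h, \alpha)$ in distribution from \cref{arcsine_dense}: the map $(h, t) \mapsto t\, \delta_{\nu_\infty^{z^+,h}} + (1-t)\, \delta_{\nu_\infty^{z^-,h}}$ from $\mathbb{R}^\mathbb{N} \times [0,1]$ into $\mathcal{M}_1(\mathcal{M}_1(\mathbb{R}^\mathbb{N}))$ is continuous, so by the continuous mapping theorem the pushforward converges in distribution to $\alpha\, \delta_{\nu_\infty^{z^+,h}} + (1 - \alpha)\, \delta_{\nu_\infty^{z^-,h}}$. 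Feeding the $o(1)$ correction through a bounded continuous $G$ and using Slutsky's theorem closes the argument.

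The main obstacle is bookkeeping around the $o(1)$ term in \cref{ns_conv2}: that error is almost sure and holds for each fixed finite collection $\{f_i\}$ and polynomial $P$, but to conclude convergence in distribution of the metastate I need it uniformly enough to survive composition with $G$ and the passage to the limit. The clean way is to note that for a \emph{fixed} functional $F$ as above, $F(\overline{\kappa}_N^{\beta,J,h})$ differs from $G(h, \dots)$ evaluated at the two-atom approximant by a quantity that tends to $0$ almost surely (hence in probability), so Slutsky applies directly without any uniformity over functionals; only countably many such $F$ are needed to determine the limit law on the Polish space $\mathcal{M}_1(\mathcal{M}_1(\mathbb{R}^\mathbb{N}))$, and a single almost sure set works for all of them by the countable intersection remark. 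A secondary technical point is verifying that the two-atom map into $\mathcal{M}_1(\mathcal{M}_1(\mathbb{R}^\mathbb{N}))$ is genuinely continuous in the weak topology — this follows because $t \mapsto t\delta_a + (1-t)\delta_b$ is weakly continuous in $t$ and $h \mapsto \delta_{\nu_\infty^{z^\pm,h}}$ is weakly continuous by continuity of $h \mapsto \nu_\infty^{z^\pm,h}$ and continuity of $\mu \mapsto \delta_\mu$. With uniform tightness from \cref{ns_tight_2} ensuring no mass escapes, identification of this unique subsequential limit completes the proof.
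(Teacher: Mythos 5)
Your proposal is correct and follows essentially the same route as the paper's proof: uniform tightness of the laws of the Newman--Stein metastates from \cref{ns_tight_2} reduces the problem to a separating class of polynomial evaluation functionals, the pathwise asymptotics of \cref{ns_conv2} replace $\overline{\kappa}_N^{\beta,J,h}$ by the two-atom approximant $T_N^+ \delta_{\nu_\infty^{z^+,h}} + (1-T_N^+)\delta_{\nu_\infty^{z^-,h}}$ up to an almost sure $o(1)$ error, and the joint convergence $(h,T_N^+)\to(h,\alpha)$ from \cref{arcsine_dense} identifies the limit. The paper absorbs the error term via dominated convergence where you invoke Slutsky, but these are the same step in substance.
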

\noindent
See \Cref{ns_metastate} for the full proof.
\\
\\
The Newman-Stein metastates were introduced as a way to obtain some form of almost sure convergence for the FVGS. However, as can be seen from these results, this is not the case, but at least one can realize the A-W metastate as random subsequence of the N-S metastates.
\\
\\
The convergence in distribution of the N-S metastates clearly shows the pathwise dependence of the model. In some sense, the presence of the arcsine random variable is a result of the pathwise dependence of the weights of the FVGS. Since the weights behave like indicator functions for large enough $n$, the result is somewhat expected. 
\\
\\
This result is novel for this specific model, and similar, almost identical, results have been obtained by \cite{Kuelske1997} for the BFCW model. The biggest difference between the proof technique of these results is that for Bernoulli components, one can work directly with the $1$-dimensional simple random walks. For this model, it seems necessary to use some methods of non-linear statistics for $2$-dimensional random walks as in the proof of \Cref{subseq}.
\subsubsection{Triviality of metastates in the pure state parameter range}  
\noindent
We have not yet discussed the metastates for the pure state parameter range. This is because they are trivial due to to the almost sure convergence of the FVGS from \Cref{main_result3}. As a direct application of \Cref{trivial_conv}, we have the following result.
\begin{theorem} \label{triviality}
Let $h$ be a random external field which satisfies (A1).
\\
\\
For the pure state parameter range, we have
\begin{align*}
\lim_{n \to \infty} (h, \mu_n^{\beta,J,h}) = (h,\nu_\infty^{z^*,h})
\end{align*}
in distribution, and 
\begin{align*}
\lim_{N \to \infty} \overline{\kappa}_N^{\beta,J,h} = \delta_{\nu_\infty^{z^*,h}}
\end{align*}
almost surely.
\end{theorem}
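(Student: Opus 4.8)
The plan is to deduce both assertions from the almost sure weak convergence $\mu_n^{\beta,J,h}\to\nu_\infty^{z^*,h}$, which is already essentially contained in \cref{main_result3}. First I would record that in the pure state parameter range \cref{main_result3} gives $\mathcal{G}_\infty(\beta,J,h)=\{\nu_\infty^{z^*,h}\}$ almost surely, while the almost sure uniform tightness of $\mathcal{G}(\beta,J,h)$ (deducible from \cref{main_result3}, or from \cref{class1}) ensures that every subsequence of $\{\mu_n^{\beta,J,h}\}_{n\in\mathbb{N}}$ has a further weakly convergent subsubsequence. Since all such limit points must equal $\nu_\infty^{z^*,h}$, a standard Prokhorov subsequence argument yields $d(\mu_n^{\beta,J,h},\nu_\infty^{z^*,h})\to 0$ almost surely, i.e. $\mu_n^{\beta,J,h}\to\nu_\infty^{z^*,h}$ weakly on a set of full probability.

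For the first assertion, I would note that $h\mapsto\nu_\infty^{z^*,h}$ is continuous (as already used in the construction of the A-W metastate), hence Borel measurable, so $(h,\nu_\infty^{z^*,h})$ is a well-defined $\mathbb{R}^\mathbb{N}\times\mathcal{M}_1(\mathbb{R}^\mathbb{N})$-valued random variable. Because the first coordinate does not depend on $n$, the pair $(h,\mu_n^{\beta,J,h})$ converges to $(h,\nu_\infty^{z^*,h})$ almost surely in the product topology, and almost sure convergence implies convergence in distribution. This is exactly the content of \cref{trivial_conv} specialized to the present situation.

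For the second assertion, I would use that the Newman-Stein metastates are the Cesàro averages $\overline{\kappa}_N^{\beta,J,h}=\frac{1}{N}\sum_{n=1}^N\delta_{\mu_n^{\beta,J,h}}$. For any fixed $F\in C_b(\mathcal{M}_1(\mathbb{R}^\mathbb{N}))$ one has $\overline{\kappa}_N^{\beta,J,h}[F]=\frac{1}{N}\sum_{n=1}^N F(\mu_n^{\beta,J,h})$, and since $F(\mu_n^{\beta,J,h})\to F(\nu_\infty^{z^*,h})=\delta_{\nu_\infty^{z^*,h}}[F]$ almost surely, the Cesàro mean converges to the same limit almost surely. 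To pass from a single test function to weak convergence of probability measures on $\mathcal{M}_1(\mathbb{R}^\mathbb{N})$, I would restrict to a countable convergence-determining family — for instance the polynomials $P(\mu[f_1],\dots,\mu[f_m])$ with $f_i$ ranging over a fixed countable subset of $\operatorname{LBL}(\mathbb{R}^\mathbb{N})$, as in \cref{subalgebra1} — and intersect the corresponding countably many almost sure events; this is again what \cref{trivial_conv} packages.

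The only obstacle is measure-theoretic bookkeeping: one must use a countable convergence-determining class so that the "almost surely" in the test-function convergence is uniform over the class, and one must confirm that the uniform tightness needed for the Prokhorov subsequence argument is genuinely available almost surely in the pure state range. Neither point is substantive, since the analytic content was already established in \cref{main_result3} and the appendix lemmas, so the proof is an assembly of those ingredients.
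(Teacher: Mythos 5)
Your proposal is correct and follows the paper's route exactly: establish almost sure weak convergence $\mu_n^{\beta,J,h}\to\nu_\infty^{z^*,h}$ from \cref{main_result3} (equivalently \cref{single_point}) and then invoke \cref{trivial_conv}. The only superfluous step is the countable convergence-determining class at the end: once $d(\mu_n^{\beta,J,h},\nu_\infty^{z^*,h})\to 0$ holds for a fixed realization, $F(\mu_n^{\beta,J,h})\to F(\nu_\infty^{z^*,h})$ holds for \emph{all} $F\in C_b(\mathcal{M}_1(\mathbb{R}^\mathbb{N}))$ simultaneously by continuity, so no further intersection of almost sure events is needed.
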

\subsubsection{Summary and remarks}
\noindent
As we earlier remarked, the results obtained are novel for this particular model and they are universal for random external fields in the precise sense given by the assumptions of the theorems concerning the random variable $h_0$.
\\
\\
In the introduction, we noted that the RFCW model, in principle, should allow one to study limiting free energies with many global maximizing points with varying strengths. For the RFMFS model, due possibly to the spherical constraint, the only options are that the global maximizing point is unique, in which case the structure of the metastates is trivial, or there are exactly two global maximizing points of quadratic type which constitute the simplest form of global maximizing points. The results then for the case of two global maximizing points are almost identical to the results concerning the metastate obtained for the BFCW model. As such, it is natural to see many methods repeated such as the construction of sequence of local maximizing points of the exponential tilting functions, and the conditioning sets used for the analysis of the N-S metastate. The most significant difference between the BFCW model and the RFMFS model is that main calculations for the BFCW model concerning the limiting free energy consider the analysis of a smooth function on an unbounded set where all derivatives contain some form of randomness while the analogous analysis for the RFMFS model considers a two-dimensional smooth function on a bounded set such that the randomness vanishes for derivatives of order two or higher.  
\\
\\
As a final remark, let us comment on the methods and usability of the arguments presented here for cases other than the independent identically distributed components case. The proofs concerning the CSD phenomenon rely heavily on the recurrence properties of random walks. In a similar fashion, the proofs concerning the convergence of the N-S metastate rely on specific properties of random walks leading to the arcsine law. On the other hand, the construction of the A-W metastate relies on the strong law of large numbers, or generic almost sure convergence, to resolve the almost sure convergence of the "microcanonical" measures, permutation invariance of $h$ to prove uniform tightness, and the weak convergence relies on the convergence in distribution of the magnetization vectors $\{ \sqrt{n} (m_n^h - m) \}_{n \in \mathbb{N}}$, which in turn comes down to proving results concerning the scaled sums of the sequence $\{ (h_i, h_i^2)\}_{i \in \mathbb{N}}$. Since the results concerning weak convergence use Skorohod's representation theorem, one can, in principle, study external fields satisfying the other required properties but with slower or possibly faster rates of asymptotic convergence in distribution of the magnetization vectors. To accommodate this type of analysis, one can also use the methods of proof for the asymptotics of the magnetization vectors to consider cases where the rate of convergence is not the same for each individual component of the sequence of vectors, but different. For examples of this type of analysis for large deviations of the RFCW model, see \cite{Loewe2013}.
\section{Proofs of results} \label{proofs_of_results}
\subsection{Rigorous delta function calculation} \label{rigorous_delta}
\noindent
For this first calculation, we will need the following two properties concerning the integral over the sphere which are presented and proved in the appendix of \cite{Lukkarinen2019}. These two properties are orthogonal invariance and the decomposition of the sphere into subdimensional spheres. 
\\
\\
If $f \in C_b (\mathbb{R}^n)$ and $O : \mathbb{R}^n \to \mathbb{R}^n$ is an orthogonal transformation, then it follows that
\begin{align*}
\int_{\mathbb{R}^n} d \phi \ \delta (|| \phi||^2 - n) f (\phi)  = \int_{\mathbb{R}^n} d \phi \ \delta (|| \phi||^2 - n) (f \circ O^{-1} ) (\phi) .
\end{align*}
If $f \in C_b (\mathbb{R}^n)$ and $1 < k < n$, then it follows that
\begin{align*}
\int_{\mathbb{R}^n} d \phi \ \delta (|| \phi||^2 - n) f (\phi) = \frac{1}{2} \int_{\mathbb{R}^k} d \phi' \left( n - || \phi '||^2 \right)^{\frac{n - k}{2} - 1} \mathbbm{1}(|| \phi '||^2 < n) \int_{\mathbb{S}^{n - k - 1}} d \Omega \ f \left( \phi', \sqrt{n - || \phi'||^2} \Omega \right) ,
\end{align*}
where we have identified $\mathbb{R}^n = \mathbb{R}^k \oplus \mathbb{R}^{n - k}$ for the argument of $f$.
\\
\\
Recall the collection of vectors $w_{1,n}$, $w_{2,n}^h$, and $\{ v_{j,n}^h \}_{j=3}^n$ and the orthogonal change of coordinates $O_n^h$ given in the surrounding text of \Cref{basis}, \Cref{hamiltonian}, and \Cref{partition_formal}. The following lemma is a rigorous version of the formal calculation presented in \Cref{partition_formal}. 
\begin{lemma} \label{can_rep}
Suppose that $h$ satisfies $m_n^{h, \perp} \not = 0$.
\\
\\
It follows that
\begin{align*}
&\frac{2}{n^{\frac{n}{2} - 1}} \frac{1}{|\mathbb{S}^{n - 3}|}\int_{\mathbb{S}^{n-1}} d \Omega \  e^{\frac{\beta J}{2} \left( \sum_{i=1}^n \Omega_i \right)^2 + \beta \sqrt{n} \sum_{i=1}^n h_i \Omega_i} f (\sqrt{n} \Omega) \\
&= \int_{B(0,1)} dz \ e^{2 \beta J x^2 + 4 \beta \left< m_n^h, z \right>} e^{(n - 4) \left( \frac{\beta J}{2} x^2 + \beta \left< m_n^h, z \right> + \frac{1}{2} \ln (1 - || z ||^2)\right)} \\
&\times \frac{1}{|\mathbb{S}^{n - 3}|}\int_{\mathbb{S}^{n - 3}} d \Omega \ f \left( \sqrt{n}x w_{1,n} + \sqrt{n}y w^h_{2,n} + \sqrt{1 - || z ||^2}\sqrt{n} \sum_{j=3}^n \Omega_j v_{j,n}^h \right) 
\end{align*} 
for $f \in C_b (\mathbb{R}^n)$.
\end{lemma}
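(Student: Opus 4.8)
The plan is to derive the identity by two successive changes of variables in the spherical integral, followed by elementary algebra. First I would use the rotation-invariance of the surface measure $d\Omega$ on $\mathbb{S}^{n-1}$. Since $m_n^{h,\perp}\neq 0$, the orthonormal vectors $w_{1,n},w_{2,n}^h$ of \eqref{basis}, an orthonormal basis $v_{3,n}^h,\dots,v_{n,n}^h$ of $(W_n^h)^\perp$, and hence the orthogonal map $O_n^h$, are all well defined, so the substitution $\Omega\mapsto (O_n^h)^{-1}\Omega$ leaves $\int_{\mathbb{S}^{n-1}}(\cdot)\,d\Omega$ invariant. Using $\sum_i\Omega_i=\sqrt n\,\langle w_{1,n},\Omega\rangle$ and $h_n=\sqrt n\,m_n^{h,\parallel}w_{1,n}+\sqrt n\,m_n^{h,\perp}w_{2,n}^h$ (cf.\ \eqref{basis} and \eqref{hamilton_calc}), the exponent $\frac{\beta J}{2}\left(\sum_i\Omega_i\right)^2+\beta\sqrt n\sum_i h_i\Omega_i$ becomes, after the substitution, $\frac{\beta J n}{2}\Omega_1^2+\beta n\langle m_n^h,(\Omega_1,\Omega_2)\rangle$; and since $(O_n^h)^{-1}$ sends the standard coordinate vectors $e_1,e_2,e_k$ to $w_{1,n},w_{2,n}^h,v_{k,n}^h$ respectively, the argument of $f$ becomes $\sqrt n\bigl(\Omega_1 w_{1,n}+\Omega_2 w_{2,n}^h+\sum_{k=3}^n\Omega_k v_{k,n}^h\bigr)$.

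Second, I would separate the first two rotated coordinates $z:=(\Omega_1,\Omega_2)=(x,y)\in B(0,1)$ from the remaining $n-2$, writing $(\Omega_3,\dots,\Omega_n)=\sqrt{1-\|z\|^2}\,\tilde\Omega$ with $\tilde\Omega\in\mathbb{S}^{n-3}$. The disintegration of the surface measure then reads $d\Omega=(1-\|z\|^2)^{(n-4)/2}\,dz\,d\tilde\Omega$: this is the standard fact that the two-dimensional marginal of the uniform law on $\mathbb{S}^{n-1}$ has density proportional to $(1-\|z\|^2)^{(n-4)/2}$ on the unit disk, the complementary directions being conditionally uniform on $\mathbb{S}^{n-3}$. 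It can be obtained either from an explicit hyperspherical parametrization (the locus $\Omega_3=\dots=\Omega_n=0$ being $d\Omega$-null and hence negligible), or by performing the radial integral in the last $n-2$ variables in polar form inside $\int_{\mathbb{R}^n}\delta(\|\phi\|^2-n)(\cdot)\,d\phi$, which is the rigorous counterpart of the $\delta$-function manipulation leading to \eqref{partition_formal}. Inserting this, then writing $(1-\|z\|^2)^{(n-4)/2}=e^{\frac{n-4}{2}\ln(1-\|z\|^2)}$ and splitting $e^{\frac{\beta J n}{2}x^2+\beta n\langle m_n^h,z\rangle}=e^{2\beta J x^2+4\beta\langle m_n^h,z\rangle}\,e^{(n-4)\left(\frac{\beta J}{2}x^2+\beta\langle m_n^h,z\rangle\right)}$ (valid since $\tfrac n2=2+\tfrac{n-4}{2}$ and $n=4+(n-4)$), and collecting the $n$-dependent normalization prefactors on both sides, yields the claimed identity. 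The hypothesis $f\in C_b(\mathbb{R}^n)$ is used only to guarantee integrability and the applicability of Fubini's theorem, so there are no convergence estimates to perform.

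The main obstacle is the rigorous justification of the disintegration $d\Omega=(1-\|z\|^2)^{(n-4)/2}\,dz\,d\tilde\Omega$, that is, of the Jacobian weight produced when two coordinates are peeled off the sphere; everything else is orthogonal-invariance, a polar change of variables, and constant bookkeeping. I would settle it either by the polar computation just described — taking due care that the measure-zero locus where the last $n-2$ coordinates vanish plays no role — or by invoking a standard reference for the marginals of the uniform distribution on a sphere. Once that disintegration constant is fixed, matching it against the prefactors $\tfrac{2}{n^{\frac{n}{2}-1}}$ and $\tfrac{1}{|\mathbb{S}^{n-3}|}$ on the two sides is routine bookkeeping.
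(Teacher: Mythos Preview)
Your proposal is correct and follows essentially the same approach as the paper: first apply the orthogonal change of coordinates $O_n^h$ using rotation-invariance of the spherical measure, then peel off the first two coordinates via the disintegration $d\Omega=(1-\|z\|^2)^{(n-4)/2}\,dz\,d\tilde\Omega$, and finally regroup the exponent. The paper handles your ``main obstacle'' by citing two ready-made properties (orthogonal invariance and a subdimensional sphere decomposition) from the appendix of \cite{Lukkarinen2019}, which is exactly the ``standard reference'' route you suggest.
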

\begin{proof}
Using the orthogonal invariance property, we have
\begin{align*}
&\int_{\mathbb{S}^{n-1}} d \Omega \  e^{\frac{\beta J}{2} \left( \sum_{i=1}^n \Omega_i \right)^2 + \beta \sqrt{n} \sum_{i=1}^n h_i \Omega_i} f (\sqrt{n} \Omega) \\
&= \int_{\mathbb{S}^{n-1}} d \Omega \  e^{\frac{\beta J n}{2} \Omega_1^2 + \beta n m_n^\parallel \Omega_1 + \beta n m^\perp \Omega_2} f \left( \sqrt{n} \left( \Omega_1 w_{1,n} + \Omega_2 w_{2,n} + \sum_{j=3}^n \Omega_j v_{j,n} \right) \right) .
\end{align*}
Using the subdimensional sphere decomposition, we have
\begin{align*}
&\frac{2}{n^{\frac{n}{2} - 1}} \frac{1}{|\mathbb{S}^{n - 3}|}\int_{\mathbb{S}^{n-1}} d \Omega \  e^{\frac{\beta J n}{2} \Omega_1^2 + \beta n m_n^{h, \parallel} \Omega_1 + \beta n m^\perp \Omega_2} f \left( \sqrt{n} \left( \Omega_1 w_{1,n} + \Omega_2 w_{2,n} + \sum_{j=3}^n \Omega_j v_{j,n} \right) \right) \\
&= \int_{B(0,1)} dz \ e^{2 \beta J x^2 + 4 \beta \left< m_n, z \right>} e^{(n - 4) \left( \frac{\beta J}{2} x^2 + \beta \left< m_n, z \right> + \frac{1}{2} \ln (1 - || z ||^2)\right)} \\
&\times \frac{1}{|\mathbb{S}^{n - 3}|}\int_{\mathbb{S}^{n - 3}} d \Omega \ f \left( \sqrt{n} x w_{1,n} + \sqrt{n} y w_{2,n} + \sqrt{1 - || z ||^2}\sqrt{n} \sum_{j=3}^n \Omega_j v_{j,n} \right) ,
\end{align*}
as desired.
\end{proof}
\subsection{Uniform convergence of microcanonical probability measures} \label{uni_conv_sec}
\noindent
Denote $\eta_n$ to be the probability measure on $\mathbb{R}^n$ obtained by setting $I = [n]$ in \Cref{infinity_gaussian}. We have the following lemma concerning the relationship between $\eta$, $T_n^{z,h}$, and $\nu_n^{z,h}$.
\begin{lemma} \label{transport1} Suppose that $h$ satisfies $m_n^{h, \perp} \not = 0$.
\\
\\
It follows that $\nu_n^{z,h} = {T_n^{z,h}}_* \eta$.
\end{lemma}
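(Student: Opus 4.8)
The plan is to check the asserted identity on a measure-determining class of test functions and then reduce it to a finite-dimensional computation exploiting the rotational invariance of the Gaussian measure $\eta$. Since $(T_n^{z,h}(\phi))_2 = 0$ for every $\phi$ and $T_n^{z,h}(\phi)$ depends on $\phi$ only through $\pi_{[n]}(\phi)$, the pushforward ${T_n^{z,h}}_*\eta$ is a $0$-tensored version of a probability measure on $\mathbb{R}^n$; by construction the same is true of $\nu_n^{z,h}$. Hence it suffices to prove, for every $f \in C_b(\mathbb{R}^n)$,
\[
\int_{\mathbb{R}^n} \eta_n(d\psi)\, f\!\left( \sqrt{n}\, x\, w_{1,n} + \sqrt{n}\, y\, w^h_{2,n} + \sqrt{1 - \lVert z \rVert^2}\, \sqrt{n}\, \frac{\psi - P^h_n(\psi)}{\lVert \psi - P^h_n(\psi) \rVert} \right) = \nu_n^{z,h}[f],
\]
where I have used that the $[n]$-marginal of $\eta$ is $\eta_n$.

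For the left-hand side I would pass to the orthonormal basis $\{ w_{1,n}, w^h_{2,n}, v^h_{3,n}, \ldots, v^h_{n,n} \}$ of $\mathbb{R}^n$, which is available precisely because $m_n^{h,\perp} \neq 0$. Writing $\psi = a_1 w_{1,n} + a_2 w^h_{2,n} + \sum_{j=3}^n b_j v^h_{j,n}$, orthogonal invariance of the standard Gaussian gives that $(a_1, a_2, b_3, \ldots, b_n)$ are i.i.d.\ standard normal under $\eta_n$. Since $P^h_n$ is the orthogonal projection onto $\operatorname{span}(w_{1,n}, w^h_{2,n})$, one has $\psi - P^h_n(\psi) = \sum_{j=3}^n b_j v^h_{j,n}$ and $\lVert \psi - P^h_n(\psi) \rVert = \lVert b \rVert$ with $b := (b_3, \ldots, b_n) \in \mathbb{R}^{n-2}$, so the argument of $f$ becomes $\sqrt{n}\, x\, w_{1,n} + \sqrt{n}\, y\, w^h_{2,n} + \sqrt{1 - \lVert z \rVert^2}\, \sqrt{n} \sum_{j=3}^n (b_j / \lVert b \rVert)\, v^h_{j,n}$, and the left-hand side of the display reduces to the expectation of $f$ of this quantity over a standard Gaussian vector $b$ on $\mathbb{R}^{n-2}$.

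To conclude I would invoke the classical fact that for a standard Gaussian vector $b$ on $\mathbb{R}^{n-2}$ the normalized vector $b / \lVert b \rVert$ is uniformly distributed on $\mathbb{S}^{n-3}$ (immediate from rotational invariance of the law of $b$, or from the polar decomposition of Lebesgue measure on $\mathbb{R}^{n-2}$). Substituting $\Omega = b / \lVert b \rVert$ turns the Gaussian expectation into $\frac{1}{\lvert \mathbb{S}^{n-3} \rvert} \int_{\mathbb{S}^{n-3}} d\Omega\, f\big( \sqrt{n}\, x\, w_{1,n} + \sqrt{n}\, y\, w^h_{2,n} + \sqrt{1 - \lVert z \rVert^2}\, \sqrt{n} \sum_{j=3}^n \Omega_j v^h_{j,n} \big)$, which is exactly $\nu_n^{z,h}[f]$ by the definition in \cref{microcanonical}, completing the proof.

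I do not anticipate any serious obstacle here. The only points requiring a word of care are that the map $\psi \mapsto (\psi - P^h_n(\psi)) / \lVert \psi - P^h_n(\psi) \rVert$ is undefined on the $\eta_n$-null set $\{ \psi \in W_n^h \}$, so it may be assigned an arbitrary value there without affecting any integral, and that the reductions to $0$-tensored measures and to test functions in $C_b(\mathbb{R}^n)$ should be stated explicitly, so that equality of the two probability measures on $\mathbb{R}^\mathbb{N}$ genuinely follows from the finite-dimensional identity above.
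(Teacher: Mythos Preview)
Your proposal is correct and follows essentially the same approach as the paper's proof: both reduce to $f \in C_b(\mathbb{R}^n)$, apply the orthogonal change of coordinates to the basis $\{w_{1,n}, w^h_{2,n}, v^h_{3,n}, \ldots, v^h_{n,n}\}$ so that the integrand depends only on the $(n-2)$-dimensional Gaussian component, and then use that the direction of a standard Gaussian vector is uniform on the sphere (the paper phrases this last step as a change to hyperspherical coordinates). Your explicit remarks about the $\eta_n$-null set where the normalization is undefined and about the reduction to $0$-tensored measures are minor clarifications the paper leaves implicit.
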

\begin{proof}
It is enough to prove that if $f \in C_b(\mathbb{R}^n)$, then ${T_n}_* \eta_n [f] = \nu_n [f \circ \pi_n]$. To that end, note that
\begin{align*}
&\int_{\mathbb{R}^n} d\phi \ e^{- \frac{|| \phi ||^2}{2}} \left(f \circ \left(T_n \right)_1 \right) (\phi) \\ &= \int_{\mathbb{R}^n} d\phi \ e^{- \frac{|| P_n(\phi)||^2 + || \phi - P_n(\phi)||^2}{2}} f\left(\sqrt{n}x w_{1,n} + \sqrt{n}y w_{2,n} + \sqrt{1 - || z ||^2} \sqrt{n} \frac{\phi - P_n (\phi)}{|| \phi - P_n (\phi)||} \right) .
\end{align*}
Furthermore, we have
\begin{align*}
\frac{\phi - P_n (\phi)}{|| \phi - P_n (\phi)||} := \frac{1}{\sqrt{\sum_{j=3}^n \left< v_{j,n}, \phi  \right>^2}}\sum_{j=3}^n \left< v_{j,n}, \phi \right> v_{j,n} .
\end{align*}
Using the orthogonal change of coordinates $O_n$, it follows that
\begin{align*}
&\int_{\mathbb{R}^n} d\phi \ e^{- \frac{|| P_n(\phi)||^2 + || \phi - P_n(\phi)||^2}{2}} f\left(\sqrt{n}x w_{1,n} + \sqrt{n} y w_{2,n} + \sqrt{1 - || z ||^2} \sqrt{n} \frac{\phi - P_n (\phi)}{|| \phi - P_n (\phi)||} \right) \\
&= \int_{\mathbb{R}^2} d \phi' \ e^{- \frac{|| \phi'||^2}{2}} \int_{\mathbb{R}^{n - 2}} d \phi \ e^{- \frac{|| \phi ||^2}{2}} f\left(\sqrt{n} x w_{1,n} + \sqrt{n} y w_{2,n} + \sqrt{1 - || z ||^2} \sqrt{n} \sum_{j=3}^n \frac{\phi_j}{|| \phi ||} v_{j,n} \right) .
\end{align*}
The integral over $\mathbb{R}^2$ is redundant since the integrand does not depend on $\phi'$. By change of coordinates to the hyperspherical coordinates for the integral over $\mathbb{R}^{n-2}$, we have
\begin{align*}
&\left(  \int_{\mathbb{R}^n} d\phi \ e^{- \frac{|| \phi ||^2}{2}} \right)^{-1} \int_{\mathbb{R}^n} d\phi \ e^{- \frac{|| \phi ||^2}{2}} \left(f \circ \left(T_n \right)_1 \right) (\phi) \\
&= \frac{1}{|\mathbb{S}^{n - 3}|}\int_{\mathbb{S}^{n - 3}} d \Omega \ f \left( \sqrt{n} x w_{1,n} + \sqrt{n} y w_{2,n} + \sqrt{1 - || z ||^2}\sqrt{n} \sum_{j=3}^n \Omega_j v_{j,n} \right),
\end{align*} 
from which the result follows.
\end{proof}
\noindent
We can now prove the result concerning the uniform convergence of $\nu_n^{z,h}$ to $\nu_\infty^{z,h}$.
\begin{proof}[Proof of \Cref{uni_conv}]
By rescaling, it is enough to prove this claim for $f \in \operatorname{LBL} (\mathbb{R}^\mathbb{N})$ such that the function and its Lipschitz constant are bounded above by $1$. Let $I$ be the index set that $f$ is local on. For large enough $n$, using the Lipschitz property, we have
\begin{align*}
\left| \nu_n [f] - \nu_\infty[f] \right| \leq \eta_n \left[ \left| \left| (\pi_I \circ T_n) (\phi) -  (\pi_I \circ T_\infty) (\phi) \right| \right| \right]
\end{align*} 
Using the explicit form of the transport maps, we have
\begin{align*}
\left| \left| (\pi_I \circ T_n) (\phi) -  (\pi_I \circ T_\infty) (\phi) \right| \right| &\leq \sqrt{|I|} \left| \frac{m_n^{\parallel}}{m_n^\perp} - \frac{m^\parallel}{m^\perp} \right| \\
&+ \sqrt{|I|} \max_{i \in I} |h_i| \left| \frac{1}{m_n^\perp} - \frac{1}{m^\perp} \right| \\
&+ \left| \left| \pi_I \left( \phi - \sqrt{n} \frac{\phi - P_n^h (\phi)}{\left| \left|\phi - P_n^h (\phi)  \right| \right|}\right)\right|  \right| .
\end{align*}
For the last term, we begin by noting that
\begin{align*}
\phi - \sqrt{n} \frac{\phi - P_n (\phi)}{\left| \left|\phi - P_n (\phi)  \right| \right|} = P_n(\phi) + (\phi - P_n(\phi)) \left(1 - \frac{\sqrt{n}}{|| \phi - P_n(\phi)||} \right) .
\end{align*} 
It follows that
\begin{align*}
&\left| \left| \pi_I \left( \phi - \sqrt{n} \frac{\phi - P_n (\phi)}{\left| \left|\phi - P_n (\phi)  \right| \right|}\right)\right|  \right| \\ &\leq || \pi_I (P_n (\phi))|| + \left| 1 - \frac{\sqrt{n}}{|| \phi - P_n(\phi)||} \right| \left( || \pi_I (P_n (\phi))|| + || \pi_I (\phi) ||\right)
\end{align*}
Now, note that
\begin{align*}
\pi_I (P_n (\phi)) = \frac{1_I}{\sqrt{n}} \left< w_{1,n}, \phi \right> + \frac{h_I - m_n^\parallel 1_I}{\sqrt{n} m_n^\perp} \left< w_{2,n}, \phi \right> . 
\end{align*}
Using the projection to form an orthogonal change of coordinates, we have
\begin{align*}
&\eta_n \left[ || \pi_I (P_n (\phi))|| \right] \\ &= \frac{1}{\sqrt{n}}\left( \int_{\mathbb{R}^2} dxdy \ e^{- \frac{x^2 + y^2}{2}}\right)^{-1} \int_{\mathbb{R}^2} dxdy \ e^{- \frac{x^2 + y^2}{2}} \left| \left| 1_I x + \frac{h_I - m_n^\parallel 1_I}{ m_n^\perp} y \right| \right| .
\end{align*}
Using dominated convergence, this term is vanishing in the limit as $n \to \infty$. Continuing, by the Cauchy-Schwartz inequality, we have
\begin{align*}
&\eta_n \left[\left| 1 - \frac{\sqrt{n}}{|| \phi - P_n(\phi)||} \right| || \pi_I (P_n(\phi))||\right] \\ &\leq \left( \eta_n \left[\left| 1 - \frac{\sqrt{n}}{|| \phi - P_n(\phi)||} \right|^2 \right] \right)^\frac{1}{2} \left( \eta_n \left[|| \pi_I (P_n(\phi))||^2 \right] \right)^\frac{1}{2} .
\end{align*}
The second term in the product on the right-hand side of the inequality will converge to something finite using the same calculation as in the previous integral. For the first term, using the projection to form an orthogonal change of coordinates and using hyperspherical coordinates, we have
\begin{align*}
\eta_n \left[\left| 1 - \frac{\sqrt{n}}{|| \phi - P_n(\phi)||} \right|^2 \right] = \left( \int_0^\infty dr \ r^{n - 3} e^{- \frac{n r^2}{2}}\right)^{-1} \int_0^\infty dr \ r^{n - 3} e^{- \frac{n r^2}{2}} \left| 1 - \frac{1}{r} \right|^2 .
\end{align*}
By using the Laplace method, see \cite[Chapter 2]{Wong2001}, one can conclude that this term vanishes in the limit as $n \to \infty$. By essentially repeating steps used for the previous term, one can see that the term
\begin{align*}
\eta_n \left[\left| 1 - \frac{\sqrt{n}}{|| \phi - P_n(\phi)||} \right| || \pi_I (\phi))||\right]
\end{align*}
is vanishing in the limit as $n \to \infty$. Collecting the inequalities and vanishing terms, it follows that
\begin{align*}
\lim_{n \to \infty} \eta_n \left[ \left| \left| \pi_I \left( \phi - \sqrt{n} \frac{\phi - P_n (\phi)}{\left| \left|\phi - P_n (\phi)  \right| \right|}\right)\right|  \right| \right] = 0 .
\end{align*}
Now, returning to the first inequality concerning the transport maps, it follows that
\begin{align*}
\left| \nu_n [f] - \nu_\infty [f] \right| &\leq \eta_n \left[ \left| \left| (\pi_I \circ T_n) (\phi) -  (\pi_I \circ T_\infty) (\phi) \right| \right| \right] \\
&\leq \sqrt{|I|} \left| \frac{m_n^{\parallel}}{m_n^\perp} - \frac{m^\parallel}{m^\perp} \right| \\
&+ \sqrt{|I|} \max_{i \in I} |h_i| \left| \frac{1}{m_n^\perp} - \frac{1}{m^\perp} \right| \\
&+ \eta_n \left[ \left| \left| \pi_I \left( \phi - \sqrt{n} \frac{\phi - P_n (\phi)}{\left| \left|\phi - P_n (\phi)  \right| \right|}\right)\right|  \right| \right] .
\end{align*}
The right hand side of this inequality is vanishing in the limit as $n \to \infty$ and it does not depend on $z \in B(0,1)$. The result follows. 
\end{proof}
\subsection{Limiting free energy and uniform tightness of infinite volume Gibbs states} \label{free_tight}
\noindent
The following result shows that the set $M^*(\beta,J,m)$ of global maximizing points of $\psi^{\beta,J,m}$ is non-empty and compact.
\begin{lemma}\label{argmax2} Let $h$ be a strongly varying external field.
\\
\\
It follows that $M^*(\beta,J,m)$  is non-empty and compact.
\end{lemma}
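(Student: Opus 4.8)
The plan is a routine compactness argument that exploits the explicit form of $\psi^{\beta,J,m}$. First I would note that $\psi^{\beta,J,m}$ is continuous (indeed smooth) on the open disk $B(0,1)$, so — the disk being bounded — the only way $\sup_{B(0,1)}\psi^{\beta,J,m}$ could fail to be attained is escape of a maximizing sequence to the boundary $\partial B(0,1)$. The key input is therefore a barrier estimate ruling this out: on $B(0,1)$ the term $\tfrac{\beta J}{2}x^2$ is bounded by $\tfrac{\beta J}{2}$ and the term $\beta\langle m,z\rangle$ is bounded by $\beta\|m\|$, whereas the logarithmic term in $\psi^{\beta,J,m}$ tends to $-\infty$ as $\|z\|\uparrow 1$; hence for every $C\in\mathbb{R}$ there exists $\delta_C\in(0,1)$ with $\psi^{\beta,J,m}(z)<C$ for all $z$ with $1-\delta_C<\|z\|<1$.

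Granting this, set $c_0:=\psi^{\beta,J,m}(0)$ and pass to the superlevel set $K:=\{z\in B(0,1):\psi^{\beta,J,m}(z)\geq c_0\}$. Applying the barrier estimate with $C=c_0$ gives $K\subseteq\overline{D}$, where $\overline{D}:=\{z\in\mathbb{R}^2:\|z\|\leq 1-\delta_{c_0}\}$ is a compact disk on which $\psi^{\beta,J,m}$ is continuous (no singularity occurs there); thus $K$, being the preimage of $[c_0,\infty)$ under the restriction of $\psi^{\beta,J,m}$ to $\overline{D}$, is closed, and being contained in $\overline{D}$ it is bounded, hence compact, and it is non-empty since $0\in K$. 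The extreme value theorem then furnishes a point $z^*\in K$ maximizing $\psi^{\beta,J,m}$ over $K$, and since $\psi^{\beta,J,m}(z)<c_0\leq\psi^{\beta,J,m}(z^*)$ for every $z\in B(0,1)\setminus K$, the point $z^*$ maximizes $\psi^{\beta,J,m}$ over all of $B(0,1)$. In particular $M^*(\beta,J,m)\neq\emptyset$ and the common maximal value $M:=\psi^{\beta,J,m}(z^*)$ is finite — which is exactly the input needed to complete the large-deviations proof of \cref{argmax1}.

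Finally, since $M\geq c_0$ the set $M^*(\beta,J,m)=\{z\in B(0,1):\psi^{\beta,J,m}(z)=M\}$ is contained in $K$, hence in $\overline{D}$; being the preimage of the closed set $\{M\}$ under the restriction of $\psi^{\beta,J,m}$ to $\overline{D}$, it is a closed subset of the compact disk $\overline{D}$, and therefore compact. I expect no genuine obstacle here: all of the content sits in the boundary barrier, which is immediate from the shape of $\psi^{\beta,J,m}$, and the only point requiring care is that $B(0,1)$ is open, so the extreme value theorem cannot be invoked directly — the divergence of the logarithmic term at $\partial B(0,1)$ is precisely what confines every near-maximal point to a compact subset of the interior. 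Neither the uniform convergence $\psi_n^{\beta,J,h}\to\psi^{\beta,J,m}$ nor the sign of $m^\perp$ plays any role in this argument.
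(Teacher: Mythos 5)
Your argument is correct and is essentially the paper's own proof: both confine every near-maximal point to a compact disk $\overline{B}(0,R)\subset B(0,1)$ via the divergence of the logarithmic term at the boundary, invoke the extreme value theorem there, and then note that $M^*(\beta,J,m)$ is a closed subset of that disk, hence compact. (Both readings take the logarithmic term with the sign that sends $\psi^{\beta,J,m}$ to $-\infty$ at $\partial B(0,1)$, i.e.\ $+\tfrac{1}{2}\ln(1-\|z\|^2)$ as dictated by \cref{can_rep}; the minus sign in the displayed definition of $\psi^{\beta,J,m}$ is a typo.)
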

\begin{proof}
First, by direct calculation, for any $z \in B(0,1)$, we have
\begin{align*}
\psi(z) < \frac{\beta J}{2} + \beta |m^\parallel| + \beta m^\perp + \frac{1}{2} \ln (1 - || z ||^2) .
\end{align*}
Because of the logarithmic term, there exists $0 < R < 1$ such that for any $z \in B(0,1)$ satisfying $|| z ||^2 > R^2$, we have $\psi (z) < -1$. The set $\overline{B}(0,R)$ is compact, the mapping $\psi$ is continuous there, and $\psi (0) = 0 > - 1$. This implies that $\psi$ attains a maximum value at some point in $\overline{B}(0,R)$ which is greater than or equal to $0$, but it necessarily strictly larger than all elements in $B(0,1) \setminus \overline{B}(0,R)$, which implies that it is actually a global maximum, thus proving that $M^*$ is non-empty. In fact, this shows that $M^* \subset \overline{B}(0,R) \subset B(0,1)$.
\\
\\
It is now enough to prove that $M^*$ is closed since it is contained in a compact set. Let $\{ z_k^* \}_{k \in \mathbb{N}}$ be a sequence of elements in $M^*$ such that $z_k^* \to z^*$. Suppose that $z^*$ is not a global maximum point of $\psi$. It follows that there must exist some $z \in B(0,1)$ such that $\psi(z^*) < \psi(z)$. By pointwise convergence and continuity of $\psi$, for large enough $k$, we have
\begin{align*}
\psi(z_k^*) - \psi(z^*) < \psi(z) - \psi(z^*) \iff \psi(z_k^*) < \psi(z),
\end{align*}
which is a contradiction since $z_k^*$ is a global maximum point. It follows that $z^* \in M^*$ which shows that $M^*$ is closed.
\end{proof}
\noindent
Using the fact that $M^*(\beta,J,m)$ is non-empty and compact, we can compute the limiting free energy.
\begin{proof}[Proof of \Cref{argmax1}]
Observe that
\begin{align*}
&\left| \ln \frac{\int_{B(0,1)} dz \ e^{2 \beta J x^2 + 4 \beta \left< m_n, z \right>} e^{(n - 4)\psi_n(z)}}{\int_{B(0,1)} dz \ e^{(n - 4)\psi(z)}} \right| \\ &\leq 2 \beta J + 4 \beta || m_n || + (n - 4) \sup_{z \in B(0,1)} |\psi_n(z) - \psi(z)| .
\end{align*}
Let $\varepsilon > 0$ be arbitrary but small. By using compactness of $M^*$ and the continuity of $\psi$, it follows that there exists a set $A_\varepsilon \subset B(0,1)$ such that $\left| \psi(z) - \sup_{z' \in B(0,1)} \psi(z') \right| \leq \varepsilon$ for all $z \in A_\varepsilon$. By using such a set, it follows that
\begin{align*}
\frac{n-4}{n} \left(\sup_{z \in B(0,1)} \psi(z) - \varepsilon \right) + \frac{1}{n} \ln |A_\varepsilon|  \leq \frac{1}{n} \ln \int_{B(0,1)} dz \ e^{(n - 4)\psi(z)} \leq \frac{n-4}{n} \sup_{z \in B(0,1)} \psi(z), 
\end{align*} 
where $|A_\varepsilon|$ is the Lebesgue measure of $A_\varepsilon$. By first taking the limit as $n \to \infty$ followed by the limit as $\varepsilon \to 0^+$, it follows that
\begin{align*}
\lim_{n \to \infty} \frac{1}{n} \ln \int_{B(0,1)} dz \ e^{(n - 4)\psi(z)} = \sup_{z \in B(0,1)} \psi(z) .
\end{align*}
By combining the first observation with this limit, the result follows.
\end{proof}
\noindent
We can now prove the following result concerning the uniform tightness and limiting structure of the collection of probability measures $\{ \rho_n^{\beta, J, h} \}_{n \in \mathbb{N}}$.
\begin{proof}[Proof of \Cref{mix_supp}]
Fix $0 < R < 1$ and let $\varepsilon > 0$ be small enough such that $K := \{ z \in B(0,1) : d(z, M^*) \leq \varepsilon \} \subset \overline{B}(0, R)$. By essentially repeating the arguments of \Cref{argmax1} and \Cref{argmax2}, one can show that
\begin{align*}
\limsup_{n \to \infty} \frac{1}{n} \ln \rho_n (K^c) \leq \sup_{z \in \overline{K^c} \cap B(0,1)} \psi(z) - \sup_{z \in B(0,1)} \psi(z) < 0.
\end{align*}
By rewriting
\begin{align*}
 \rho_n(K) = 1 - e^{n  \left( \frac{1}{n} \ln \rho_n (K^c) \right)},
\end{align*}
it follows that
\begin{align*}
\inf_{k \geq n} \rho_k (K) \geq 1 - e^{n \sup_{k \geq n} \frac{1}{k} \ln \rho_k (K^c)} ,
\end{align*}
from which it follows that $\lim_{n \to \infty} \rho_n (K) = 1$ which proves uniform tightness.
\\
\\
Next, let $z \in \left( M^* \right)^c$ and $\delta_1, \delta_2 > 0$ small enough such that $\overline{B}(z, \delta_1) \subset B(z, \delta_2)$ and $B(z, \delta_2) \cap M^* = \emptyset$. By a similar argument to the uniform tightness argument above, one can show that
\begin{align*}
\liminf_{n \to \infty} \rho_n (B(z, \delta_1)) = 0 .
\end{align*}
Now, by Prokhorov's theorem, let $\rho$ be a probability measure obtained as a convergent subsequence $\{ \rho_{n_k} \}_{k \in \mathbb{N}}$ of the uniformly tight collection of probability measures $\{ \rho_n \}_{n \in \mathbb{N}}$. It follows that
\begin{align*}
\rho(B(z, \delta_1)) \leq \liminf_{k \to \infty} \rho_{n_k} (B(z, \delta_1)) = 0 .
\end{align*}
This implies that the complement of $M^*$ and the support of $\rho$ are disjoint which in turn implies that $\rho$ must be supported by $M^*$.
\end{proof}
\noindent
We conclude with the proof of the following partial classification of the IVGS.
\begin{lemma} \label{class1} Let $h$ be a strongly varying external field.
\\
\\
It follows that $\mathcal{G} (\beta,J,h)$ is uniformly tight and
\begin{align*}
 \mathcal{G}_\infty (\beta,J,h)  \subset \left\{ \mu \in \mathcal{M}_1 (\mathbb{R}^\mathbb{N}) : \operatorname{supp} (\rho) \subset M^*(\beta,J,m), \  \mu = \int_{M^*(\beta,J,m)} \rho(dz) \ \nu_\infty^{z,h}  \right\} .
\end{align*}
\end{lemma}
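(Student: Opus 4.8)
The plan is to read off both assertions from the mixture representation \cref{can_rep2}, $\mu_n^{\beta,J,h}=\int_{B(0,1)}\rho_n^{\beta,J,h}(dz)\,\nu_n^{z,h}$, together with the two facts already proved: the uniform tightness of $\{\rho_n^{\beta,J,h}\}$ with limit points supported on $M^*(\beta,J,m)$ from \cref{mix_supp}, and the uniform-in-$z$ convergence $\nu_n^{z,h}[f]\to\nu_\infty^{z,h}[f]$ from \cref{uni_conv}.

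\textbf{Step 1: uniform tightness of $\mathcal{G}(\beta,J,h)$.} First I would show that the whole family $\{\nu_n^{z,h}:n\in\mathbb{N},\ z\in B(0,1)\}$ is uniformly tight on $\mathbb{R}^\mathbb{N}$. Since the product topology makes tightness on $\mathbb{R}^\mathbb{N}$ equivalent to simultaneous tightness of all one-dimensional marginals, it suffices to produce, for each $i$, a constant $C_i$ with $\sup_{n,z}\nu_n^{z,h}[|\phi_i|]\le C_i$ (for $n$ large enough that $i\le n$). Writing $\nu_n^{z,h}={T_n^{z,h}}_*\eta$ and using the explicit form of $T_n^{z,h}$, the $i$-th coordinate is a sum of the bounded deterministic terms $x\,\sqrt{n}(w_{1,n})_i$ and $y\,\sqrt{n}(w^h_{2,n})_i$ — bounded because $\sqrt{n}(w_{1,n})_i=1$ and $\sqrt{n}(w^h_{2,n})_i=(h_i-m_n^{h,\parallel})/m_n^{h,\perp}$ converges — plus the term $\sqrt{1-\|z\|^2}\sqrt{n}\,(\phi-P_n^h\phi)_i/\|\phi-P_n^h\phi\|$, whose $\eta_n$-first moment is controlled exactly as in the proof of \cref{uni_conv} (it is bounded by $\eta_n[\|\pi_{\{i\}}(P_n^h\phi)\|]$ plus a vanishing term, the former converging). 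Chebyshev's inequality then yields compact product sets $K_\varepsilon=\prod_i[-R_i,R_i]$ with $\nu_n^{z,h}(K_\varepsilon^c)<\varepsilon$ for all $n,z$; integrating against $\rho_n^{\beta,J,h}$ in \cref{can_rep2} gives $\mu_n^{\beta,J,h}(K_\varepsilon^c)\le\sup_z\nu_n^{z,h}(K_\varepsilon^c)<\varepsilon$, which is the uniform tightness of $\mathcal{G}(\beta,J,h)$.

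\textbf{Step 2: structure of the limit points.} Let $\mu\in\mathcal{G}_\infty(\beta,J,h)$ and pick $\{n_k\}$ with $\mu_{n_k}^{\beta,J,h}[f]\to\mu[f]$ for all $f\in\operatorname{LBL}(\mathbb{R}^\mathbb{N})$. By \cref{mix_supp} and Prokhorov's theorem I pass to a further subsequence along which $\rho_{n_k}^{\beta,J,h}\to\rho$ weakly, with $\rho\in\mathcal{M}_1(B(0,1))$ and $\operatorname{supp}(\rho)\subset M^*(\beta,J,m)$. For a fixed $f\in\operatorname{LBL}(\mathbb{R}^\mathbb{N})$ local on a finite set $I$, I split
\begin{align*}
\mu_{n_k}^{\beta,J,h}[f]-\int_{B(0,1)}\rho(dz)\,\nu_\infty^{z,h}[f]
&=\int_{B(0,1)}\rho_{n_k}^{\beta,J,h}(dz)\bigl(\nu_{n_k}^{z,h}[f]-\nu_\infty^{z,h}[f]\bigr)\\
&\quad+\Bigl(\int_{B(0,1)}\rho_{n_k}^{\beta,J,h}(dz)\,\nu_\infty^{z,h}[f]-\int_{B(0,1)}\rho(dz)\,\nu_\infty^{z,h}[f]\Bigr).
\end{align*}
The first term is at most $\sup_{z\in B(0,1)}|\nu_{n_k}^{z,h}[f]-\nu_\infty^{z,h}[f]|\to 0$ by \cref{uni_conv}. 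For the second term I would verify that $z\mapsto\nu_\infty^{z,h}[f]=\eta[f\circ\pi_I\circ T_\infty^{z,h}]$ is bounded (by $\|f\|_\infty$) and continuous on $B(0,1)$: there $\|z\|<1$, so the formula \cref{gaussian_pushforward} shows $\pi_I(T_\infty^{z,h}(\phi))$ is continuous in $z$ for every $\phi$, and dominated convergence passes this through the $\eta$-integral; weak convergence $\rho_{n_k}^{\beta,J,h}\to\rho$ then kills the second term. Hence $\mu[f]=\bigl(\int_{B(0,1)}\rho(dz)\,\nu_\infty^{z,h}\bigr)[f]$ for all $f\in\operatorname{LBL}(\mathbb{R}^\mathbb{N})$, and since this class is convergence determining, $\mu=\int_{B(0,1)}\rho(dz)\,\nu_\infty^{z,h}=\int_{M^*(\beta,J,m)}\rho(dz)\,\nu_\infty^{z,h}$, the last equality using $\operatorname{supp}(\rho)\subset M^*(\beta,J,m)$.

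\textbf{Main obstacle.} Everything here reduces to prior results except the uniform-in-$z$ tightness of the microcanonical family in Step 1, which is the only place where the asymptotics of the spherical constraint are invoked again; the rest is bookkeeping, the decisive enabling fact being that the convergence in \cref{uni_conv} is uniform in $z$, which is precisely what allows interchanging $\lim_{n_k}$ with the $\rho_{n_k}$-integration.
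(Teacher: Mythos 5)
Your proposal is correct, and Step 2 is exactly the paper's argument: the same two-term decomposition of $\mu_{n_k}^{\beta,J,h}[f]-\int\rho(dz)\,\nu_\infty^{z,h}[f]$, with the first term killed by the uniformity in $z$ from \cref{uni_conv} and the second by the weak convergence of $\rho_{n_k}^{\beta,J,h}$ from \cref{mix_supp}; you are in fact slightly more careful than the paper in checking that $z\mapsto\nu_\infty^{z,h}[f]$ is bounded and continuous on $B(0,1)$, which is needed to use that weak convergence. The only genuine divergence is Step 1: the paper does not prove uniform tightness of $\mathcal{G}(\beta,J,h)$ by direct moment bounds on the family $\{\nu_n^{z,h}\}$. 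Instead it runs the Step 2 argument for an \emph{arbitrary} subsequence, concludes that every subsequence of $\{\mu_n^{\beta,J,h}\}$ admits a weakly convergent subsubsequence, and reads off uniform tightness from this relative compactness via the converse direction of Prokhorov's theorem on the Polish space $\mathcal{M}_1(\mathbb{R}^\mathbb{N})$. Your direct route is valid (the coordinate-wise first-moment bounds you sketch are exactly the estimates already carried out in the proof of \cref{uni_conv}, and tightness on $\mathbb{R}^\mathbb{N}$ with the product topology does reduce to the one-dimensional marginals), but it is extra work: once the subsequential limit identification is in hand, tightness comes for free, which is presumably why the paper orders the argument the way it does.
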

\begin{proof}
Let $\{ \mu_{n_k} \}_{k \in \mathbb{N}}$ be an arbitrary subsequence. By \Cref{mix_supp}, the collection of probability measures $\{ \rho_{n_k} \}_{k \in \mathbb{N}}$ is uniformly tight which implies, by Prokhorov's theorem, that for any subsequence $\{ \rho_{n_k} \}_{k \in \mathbb{N}}$ of this collection there exists a convergent subsubsequence $\{ \rho_{n_{k_j}}\}_{j \in \mathbb{N}}$ with a limit probability distribution $\rho$. We have
\begin{align*}
\left| \mu_{n_{k_j}}[f] - \int_{M^*} \rho(dz) \ \nu_\infty^z [f] \right| &\leq \sup_{z \in B(0,1)} \left| \nu_{n_{k_j}}^z [f] - \nu_\infty^z[f]\right| \\ &+ \left| \int_{M^*} \rho(dz) \ \nu_\infty^z [f] - \int_{B(0,1)} \rho_{n_{k_j}}(dz) \ \nu_\infty^z [f] \right| 
\end{align*} 
for any $f \in \operatorname{LBL}(\mathbb{R}^\mathbb{N})$. By \Cref{uni_conv} and \Cref{mix_supp}, it follows that
\begin{align*}
\lim_{j \to \infty} \mu_{n_{k_j}}[f] = \int_{M^*} \rho(dz) \ \nu_\infty^z [f] 
\end{align*}
for any $f \in \operatorname{LBL}(\mathbb{R}^\mathbb{N})$, which implies that $\mathcal{G}$ is uniformly tight. The same argument applied to a weakly convergent subsequence $\{ \mu_{n_k} \}_{k \in \mathbb{N}}$ shows that
\begin{align*}
\mathcal{G}_\infty \subset \left\{ \mu \in \mathcal{M}_1 (\mathbb{R}^\mathbb{N}) : \operatorname{supp} (\rho) \subset M^*, \  \mu = \int_{M^* } \rho(dz) \ \nu_\infty^{z}  \right\}  .
\end{align*}
\end{proof}
\noindent
The central applications of this result concern the case where $M^*(\beta,J,m)$ is finite. For that case, we have the following corollaries.
\begin{corollary} \label{class2}
Let $h$ be a strongly varying external field and suppose that $M^*(\beta,J,m)$ is finite.
\\
\\
It follows that
\begin{align*}
\mathcal{G}_\infty (\beta,J,h) \subset \operatorname{conv} \left( \left\{ \nu_\infty^{z^*,h} \right\}_{z^* \in M^*(\beta,J,m)}  \right) .
\end{align*}
\end{corollary}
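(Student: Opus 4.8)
The plan is to read this off \cref{class1} together with the elementary fact that a Borel probability measure on a separable metric space whose support is a finite set must be a finite convex combination of Dirac masses concentrated on that set.

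First I would fix an arbitrary $\mu \in \mathcal{G}_\infty(\beta,J,h)$. By \cref{class1} there exists $\rho \in \mathcal{M}_1(B(0,1))$ with $\operatorname{supp}(\rho) \subset M^*(\beta,J,m)$ such that
\begin{align*}
\mu[f] = \int_{M^*(\beta,J,m)} \rho(dz)\ \nu_\infty^{z,h}[f]
\end{align*}
for every $f \in \operatorname{LBL}(\mathbb{R}^\mathbb{N})$. Since $B(0,1)$ is a separable metric space, the complement of $\operatorname{supp}(\rho)$ is covered by countably many $\rho$-null balls, hence $\rho(\operatorname{supp}(\rho)) = 1$; as $\operatorname{supp}(\rho) \subset M^*(\beta,J,m)$ and the latter is finite, say $M^*(\beta,J,m) = \{z_1^*,\dots,z_N^*\}$, it follows that $\rho = \sum_{i=1}^N \lambda_i \delta_{z_i^*}$ with $\lambda_i := \rho(\{z_i^*\}) \geq 0$ and $\sum_{i=1}^N \lambda_i = 1$.

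Substituting this back into the representation gives
\begin{align*}
\mu[f] = \sum_{i=1}^N \lambda_i\ \nu_\infty^{z_i^*,h}[f]
\end{align*}
for every $f \in \operatorname{LBL}(\mathbb{R}^\mathbb{N})$, and since $\operatorname{LBL}(\mathbb{R}^\mathbb{N})$ is convergence determining, and in particular measure determining, on $\mathcal{M}_1(\mathbb{R}^\mathbb{N})$, this forces $\mu = \sum_{i=1}^N \lambda_i\, \nu_\infty^{z_i^*,h} \in \operatorname{conv}\!\left(\{\nu_\infty^{z^*,h}\}_{z^* \in M^*(\beta,J,m)}\right)$, as claimed. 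I do not expect a genuine obstacle here: all the analytic content — uniform tightness of $\mathcal{G}(\beta,J,h)$, the uniform convergence of the microcanonical measures from \cref{uni_conv}, and the support localization of the limit points of the mixtures from \cref{mix_supp} — is already packaged into \cref{class1}; the only thing to verify in this corollary is the passage from "support contained in a finite set" to "finite convex combination of point masses", which is immediate.
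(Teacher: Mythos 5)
Your proposal is correct and follows essentially the same route as the paper's proof: invoke \cref{class1} to get the mixture representation with $\rho$ supported on the finite set $M^*(\beta,J,m)$, observe that such a $\rho$ must be a finite convex combination of Dirac masses, and substitute back. The only difference is that you explicitly justify $\rho(\operatorname{supp}(\rho))=1$ via separability, a small point the paper leaves implicit.
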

\begin{proof}
If $M^*$ is finite and $\rho$ is a probability measure supported by $M^*$, it follows that $\rho$ is given by the normalized weighted sum of delta measures. To be exact, we have
\begin{align*}
\rho = \sum_{z^* \in M^*} \rho(z^*) \delta_{z^*} ,
\end{align*}
where $\rho(z^*) := \rho (\{ z^* \})$. Combining \Cref{class1} and this observation, it follows that
\begin{align*}
\int_{M^*} \rho(dx) \ \nu_\infty^z = \sum_{z^* \in M^*} \rho(z^*) \nu_{\infty}^{z^*} ,
\end{align*}
from which the result follows.
\end{proof}
\noindent
In the case where $M^*(\beta,J,m)$ is a single element, we have the following special case.
\begin{corollary} \label{single_point} Let $h$ be a strongly varying external field and suppose that $M^*(\beta,J,m) = \{ z^* \}$.
\\
\\
It follows that
\begin{align*}
\lim_{n \to \infty} \mu_n^{\beta,J,h} = \nu_\infty^{z^*,h} 
\end{align*}
weakly.
\end{corollary}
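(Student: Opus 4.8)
The plan is to deduce this directly from \cref{class1} (equivalently \cref{class2}) together with a standard metric-space compactness argument. By \cref{class1}, since $h$ is strongly varying, the collection $\mathcal{G}(\beta,J,h) = \{\mu_n^{\beta,J,h}\}_{n\in\mathbb{N}}$ is uniformly tight, and every element of $\mathcal{G}_\infty(\beta,J,h)$ has the form $\int_{M^*(\beta,J,m)} \rho(dz)\,\nu_\infty^{z,h}$ for some $\rho \in \mathcal{M}_1(B(0,1))$ with $\operatorname{supp}(\rho) \subset M^*(\beta,J,m)$. The first step is the trivial observation that when $M^*(\beta,J,m) = \{z^*\}$ is a singleton, the only probability measure on $B(0,1)$ supported on it is $\rho = \delta_{z^*}$, so the only possible limit point is $\int_{\{z^*\}} \delta_{z^*}(dz)\,\nu_\infty^{z,h} = \nu_\infty^{z^*,h}$. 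Hence $\mathcal{G}_\infty(\beta,J,h) \subset \{\nu_\infty^{z^*,h}\}$, and since uniform tightness guarantees via Prokhorov's theorem that $\mathcal{G}_\infty(\beta,J,h)$ is non-empty, we in fact have equality.

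Next I would upgrade "the set of limit points is the singleton $\{\nu_\infty^{z^*,h}\}$" to "the sequence converges to $\nu_\infty^{z^*,h}$". This is the standard fact that in a metrizable topological space, a relatively compact sequence with exactly one limit point converges to that point: if $\mu_n^{\beta,J,h} \not\to \nu_\infty^{z^*,h}$ in the Prokhorov metric $d$, then there is $\varepsilon > 0$ and a subsequence $\{n_k\}$ with $d(\mu_{n_k}^{\beta,J,h}, \nu_\infty^{z^*,h}) \geq \varepsilon$ for all $k$; by uniform tightness and Prokhorov's theorem this subsequence has a further convergent subsubsequence whose limit is some element of $\mathcal{G}_\infty(\beta,J,h)$, hence equals $\nu_\infty^{z^*,h}$, contradicting $d(\cdot,\nu_\infty^{z^*,h}) \geq \varepsilon$ along it. This yields $\mu_n^{\beta,J,h} \to \nu_\infty^{z^*,h}$ weakly, which is the claim. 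Alternatively, one can run the estimate from the proof of \cref{class1} directly: for any $f \in \operatorname{LBL}(\mathbb{R}^\mathbb{N})$, bound $|\mu_n^{\beta,J,h}[f] - \nu_\infty^{z^*,h}[f]|$ by $\sup_{z\in B(0,1)} |\nu_n^{z,h}[f] - \nu_\infty^{z,h}[f]| + |\int_{B(0,1)} \rho_n^{\beta,J,h}(dz)\,\nu_\infty^{z,h}[f] - \nu_\infty^{z^*,h}[f]|$, and observe that the first term vanishes by \cref{uni_conv} while the second vanishes because \cref{mix_supp} forces $\rho_n^{\beta,J,h} \to \delta_{z^*}$ weakly (every subsequential limit of the uniformly tight family $\{\rho_n^{\beta,J,h}\}$ is supported on $\{z^*\}$, hence equals $\delta_{z^*}$), and $z \mapsto \nu_\infty^{z,h}[f]$ is continuous and bounded.

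There is essentially no hard obstacle here, which is why it is stated as a corollary: all the real work has already been done in \cref{uni_conv}, \cref{mix_supp}, and \cref{class1}. The only point requiring a little care is the passage from "unique limit point of a relatively compact sequence" to "convergent sequence", which needs the relative compactness supplied by uniform tightness plus Prokhorov's theorem on the Polish space $\mathcal{M}_1(\mathbb{R}^\mathbb{N})$; this is a routine topological argument and I would present it in the contrapositive form sketched above. I would phrase the final write-up using the second, direct-estimate route, since it makes the role of the singleton assumption on $M^*(\beta,J,m)$ most transparent and reuses verbatim the inequality already established in the proof of \cref{class1}.
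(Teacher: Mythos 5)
Your proposal is correct and matches the paper's own proof, which likewise combines the uniform tightness from \cref{class1} with the observation that every weakly convergent subsequence must converge to $\nu_\infty^{z^*,h}$ when $M^*(\beta,J,m)$ is a singleton, and then invokes the standard fact that a relatively compact sequence with a unique subsequential limit converges. Your alternative direct-estimate write-up is only a cosmetic inlining of the same inequality and the same lemmas, so no substantive difference.
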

\begin{proof}
From \Cref{class1}, it follows that $\mathcal{G}$ is uniformly tight. From \Cref{class2}, it follows that every weakly convergent subsequence of $\mathcal{G}$ converges to $\nu_\infty^{z^*,h}$ which implies that the full sequence converges weakly to the same limit. 
\end{proof}
\subsection{Global maximizing points of the limiting exponential tilting function}
\noindent
We will determine the amount of points in $M^*(\beta,J,m)$ depending on the given parameters. This first result considers the case where $m^\parallel \not = 0$.
\begin{lemma} \label{max_point1}
Let $h$ be a strongly varying external field and suppose that $m^\parallel \not = 0$.
\\
\\
It follows that $\psi^{\beta,J,m}$ has a unique global maximizing point. 
\end{lemma}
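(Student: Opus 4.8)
The plan is to reduce to a one–dimensional problem by first maximizing in the $y$–variable, and then to exploit the fact that the resulting function is an even function plus a linear ``tilt'' proportional to $m^\parallel$. Throughout write $\psi(x,y):=\psi^{\beta,J,m}(x,y)=\tfrac{\beta J}{2}x^2+\beta(m^\parallel x+m^\perp y)+\tfrac12\ln(1-x^2-y^2)$. By \cref{argmax2} every point of $M^*(\beta,J,m)$ lies in the open ball $B(0,1)$, where $\psi$ is smooth, so every global maximizer is an interior critical point. For each fixed $x\in(-1,1)$ the map $y\mapsto\psi(x,y)$ is strictly concave on $(-\sqrt{1-x^2},\sqrt{1-x^2})$ (the logarithmic term dominates) and tends to $-\infty$ at the endpoints, hence has a unique maximizer $Y(x)$, found by solving $\beta m^\perp(1-x^2-y^2)=y$: one gets $Y(x)=\tfrac{-1+\sqrt{1+4\beta^2(m^\perp)^2(1-x^2)}}{2\beta m^\perp}>0$, smooth in $x$ and depending on $x$ only through $x^2$. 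Setting $\tilde\psi(x):=\psi(x,Y(x))=\max_y\psi(x,y)$, the global maximizers of $\psi$ correspond bijectively to those of $\tilde\psi$ on $(-1,1)$ via $z^*=(x^*,Y(x^*))$ (if $\psi(x^*,y^*)=\max\psi$ then by uniqueness of the $y$–maximizer $y^*=Y(x^*)$), so it is enough to show $\tilde\psi$ has a unique global maximizer.

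Since $Y$ is even, $\tilde\psi(x)=E(x)+\beta m^\parallel x$ with $E$ even and smooth on $(-1,1)$ and $E(x)\to-\infty$ as $|x|\to1$. By the envelope identity, and using $1-x^2-Y(x)^2=Y(x)/(\beta m^\perp)$, one computes $\tilde\psi'(x)=\partial_x\psi(x,Y(x))=\beta m^\parallel-G(x)$, where $G(x):=x\,g(x)$ and $g(x):=\beta m^\perp/Y(x)-\beta J$. The function $g$ is even, smooth, strictly increasing on $[0,1)$ (because $Y$ is decreasing there), and $g(x)\to+\infty$ as $|x|\to1$; hence $G$ is odd, and on any subinterval of $(0,1)$ on which $g>0$ one has $G'=g+xg'>0$. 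I would now distinguish two regimes by the sign of $g(0)=\tfrac12\bigl(1+\sqrt{1+4\beta^2(m^\perp)^2}\bigr)-\beta J$ (this is $<0$ precisely when $\beta>\beta_c$ and $m^\perp<J$, i.e. exactly the parameter window in which $M^*$ has two points when $m^\parallel=0$). If $g(0)\ge0$, then $G$ is strictly increasing on all of $(-1,1)$, so $\tilde\psi'=\beta m^\parallel-G$ has exactly one zero; being the only critical point of a smooth function tending to $-\infty$ at both ends, it is the unique global maximizer, and we are done.

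If $g(0)<0$, then $g$ has a unique zero $x_0\in(0,1)$, $G$ vanishes at $0,\pm x_0$, and $G$ is strictly increasing from $0$ to $+\infty$ on $(x_0,1)$. Assume $m^\parallel>0$ (the case $m^\parallel<0$ follows from the reflection $x\mapsto-x$, which maps the problem with $m^\parallel$ to the one with $-m^\parallel$). Since $\{G>0\}=(-x_0,0)\cup(x_0,1)$, the equation $G(x)=\beta m^\parallel>0$ has a unique solution $x^+\in(x_0,1)$ (monotonicity of $G$ there), which is a local maximum of $\tilde\psi$, and every other critical point of $\tilde\psi$ lies in $(-x_0,0)$. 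For any such critical point $x^*\in(-x_0,0)$ one has, using $\tilde\psi'=\beta m^\parallel-G$, the oddness of $G$ to fold $\int_{x^*}^{0}G$ into $-\int_{0}^{|x^*|}G$, the fact that $G\le0$ on $(|x^*|,x_0)$ and that $G$ is increasing hence $\le\beta m^\parallel$ on $(x_0,x^+)$,
\[
\tilde\psi(x^+)-\tilde\psi(x^*)=\beta m^\parallel(x^+-x^*)-\int_{|x^*|}^{x^+}G(t)\,dt\ \ge\ \beta m^\parallel(x^+-x^*)-\beta m^\parallel(x^+-x_0)=\beta m^\parallel\,(x_0+|x^*|)>0 .
\]
Thus $x^+$ strictly dominates every other critical point, and since $\tilde\psi\to-\infty$ at both ends its global maximizer is $x^+$ alone. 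In either regime $\tilde\psi$ has a unique global maximizer, so $M^*(\beta,J,m)$ consists of the single point $z^*=(x^+,Y(x^+))$ (with $x^+>0$), which proves the lemma.

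The routine parts (strict concavity in $y$, the explicit $Y(x)$, monotonicity of $g$, the envelope computation) are straightforward. The genuine obstacle is the regime $g(0)<0$: there $\tilde\psi$ can really have three critical points, a remnant of the first–order/double–well structure present when $m^\parallel=0$, and the key is the integral comparison above — an application of the oddness of $G$ together with the sign of $G$ on $(|x^*|,x_0)$ — showing that the critical point ``aligned'' with the field $m^\parallel$ is the strict global maximizer. This is the analogue of the symmetry–breaking argument familiar from the Curie–Weiss model with nonzero external field.
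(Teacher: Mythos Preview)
Your proof is correct. Both you and the paper reduce to the same one–dimensional problem by optimizing out $y$ (your $\tilde\psi'(x)$ is exactly the paper's function $F(x)$), but the key step is handled differently. The paper first invokes the elementary symmetry $\psi(-x,y)<\psi(x,y)$ for $m^\parallel>0$ to conclude a priori that any global maximizer has $x^*>0$, and then shows there is exactly one positive root of $F$ by a second–derivative computation: $F''<0$ on $(0,1)$, so $F$ is concave there, and $F(0)=\beta m^\parallel>0$, $F(1^-)=-\infty$ force a unique root. You instead analyze the full critical–point set of $\tilde\psi$, split into the sub/supercritical regimes via the sign of $g(0)$, and in the double–well regime compare the two candidate maxima directly through an integral inequality exploiting the oddness of $G$. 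Your route makes the phase structure and the tilt mechanism transparent (and generalizes more readily to situations where concavity of $F$ might fail), while the paper's route is shorter but hides the picture behind a calculation. Incidentally, the paper's symmetry observation also gives a one–line shortcut to your integral comparison: since $\tilde\psi(x)-\tilde\psi(-x)=2\beta m^\parallel x$, any critical point $x^*<0$ is immediately beaten by the point $-x^*$, so the global maximizer lies in $(0,1)$, where your own analysis already shows $G=\beta m^\parallel$ has the single solution $x^+$.
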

\begin{proof}
We will use the fact that any local extrema of a differentiable function on an open set are necessarily critical points. This leads us to consider the critical point equation
\begin{align*}
\nabla [\psi] (x^*, y^*) = 0 \iff \beta J x^* + \beta m^\parallel - \frac{x^*}{1 - {x^*}^2 - {y^*}^2} = 0, \ \beta m^\perp - \frac{y^*}{1 - {x^*}^2 - {y^*}^2} = 0 .
\end{align*}
From the second component of the critical point equation, we can solve for $y^*$ in terms of $x^*$ by rearranging the equation to a quadratic in $y^*$ and solving for the root which satisfies $-1 < y^* < 1$. This root is given by
\begin{align*}
y^* = \sqrt{1 + \left( \frac{1}{2 \beta m^\perp} \right)^2 - {x^*}^2} - \frac{1}{2 \beta m^\perp} .
\end{align*}
By direct computation, we have
\begin{align*}
1 - {x^*}^2 - {y^*}^2 = 2 \left( \frac{1}{2 \beta m^\perp} \right) \left(\sqrt{1  + \left( \frac{1}{2 \beta m^\perp} \right)^2 - {x^*}^2} -  \frac{1}{2 \beta m^\perp} \right) .
\end{align*}  
Plugging in this value to the first component of the critical point equation, we obtain
\begin{align*}
\beta J x^* + \beta m^\parallel - \frac{x^*}{2 \left( \frac{1}{2 \beta m^\perp} \right) \left(\sqrt{1  + \left( \frac{1}{2 \beta m^\perp} \right)^2 - {x^*}^2} -  \frac{1}{2 \beta m^\perp} \right)} = 0 .
\end{align*}
Denote the function $F : (-1, 1) \to \mathbb{R}$ by
\begin{align*}
F(x) := \beta J x + \beta m^\parallel - \frac{x}{2 \left( \frac{1}{2 \beta m^\perp} \right) \left(\sqrt{1  + \left( \frac{1}{2 \beta m^\perp} \right)^2 - {x}^2} -  \frac{1}{2 \beta m^\perp} \right)} ,
\end{align*}
the solution for the value of $x^*$ will be given by the roots of $F$. We are, however, not interested in all the roots of this equation. Because we are searching for $(x^*, y^*)$ which correspond to the global maximum value of $\psi$, we can employ a symmetry argument for $\psi$ to show that the any maximum value point must lie in a certain quadrant of the unit ball. To be exact, suppose that $m^\parallel > 0$. If there existed a point $(x^*, y^*)$ such that $-1 < x^* < 0$ and the point corresponds to a global maximum point of $\psi$, then this would be a contradiction since $\psi (- x^*, y^*) > \psi (x^*, y^*)$ by direct calculation, which would imply that it is not actually a global maximum point. A similar argument holds for $m^\parallel < 0$, and a supposed maximum point which would satisfy $0 < x^* < 1$.  This argument shows that we are only interested in finding the roots of the equation $F$ on the interval $(0,1)$ if $m^\parallel > 0$, and $(-1,0)$ if $m^\parallel < 0$.
\\
\\
For the root analysis of $F$, let us begin by first remarking that $x = 0$ is never a root of $F$, since $F(0) = \beta m^\parallel \not = 0$. We only mention this since the symmetry argument used to show that the maximum point must reside on either positive or the negative interval only holds if $x \not = 0$. We will work case by case. First, suppose that $m^\parallel > 0$. We can now restrict ourselves to the interval $(0,1)$. By a direct computation, we have
\begin{align*}
F(0) = \beta m^\parallel > 0, \ \lim_{x \to 1^{-}}  F(x) = - \infty .
\end{align*}
Since $F$ is a continuous function on $[0,1)$, this implies that $F$ must have at least one root on the interval $(0,1)$. Next, by direct computation we have
\begin{align*}
F''(x) &= -\frac{1}{2a}\frac{x^3}{(1 + a^2 - x^2)^\frac{3}{2} (\sqrt{1 + a^2 - x^2} - a)^2} \\
&-\frac{1}{2a} \frac{2 x^3}{(1 + a^2 - x^2) (\sqrt{1 + a^2 - x^2} - a)^3} \\
&-\frac{1}{2a} \frac{3x}{(1 + a^2 - x^2)^\frac{1}{2} (\sqrt{1 + a^2 - x^2} - a)^2} \\
\end{align*}
where $a = \frac{1}{2 \beta m^\perp} > 0$. It is clear that $F''(x) < 0$ for all $x \in (0,1)$. The inequality for the second derivative implies that $F'$ is a strictly decreasing function on the interval $(0,1)$. If $F'(0) \leq 0$, then $F'(x) < 0$ for all $x \in (0,1)$, and the function $F$ must be strictly decreasing on $(0,1)$. In this case, $F$ has a single root in $(0,1)$. If $F'(0) > 0$, then by the monotonicity of $F'$, there must exist some intermediate value $z \in (0,1)$ such that $F'(z) = 0$. If there is no such value, then $F'(x) > 0$ for all $x \in (0,1)$ which would imply that $F$ is a strictly increasing function, but since $F$ starts at a positive value and must decrease without bound as it approaches $1$, this is a contradiction. Thus there must exist an intermediate value $z$ as described, and $F(z) > 0$. Now, by again applying the monotonicity of $F$ on $(z, 1)$, and that $F$ decreases without bound as it approaches $1$, we see that there must exist a unique root of $F$ on $(z, 1)$, and there are no other roots on $(0,z)$. This analysis shows that if $m^\parallel > 0$, then $F$ has a unique root on the interval $(0,1)$.
\\
\\
If $m^\parallel < 0$, define $G : (0,1) \to \mathbb{R}$ by 
\begin{align*}
G(x) := - F(- x) = \beta J x + \beta \left( - m^\parallel \right) - \frac{x}{2 \left( \frac{1}{2 \beta m^\perp} \right) \left(\sqrt{1  + \left( \frac{1}{2 \beta m^\perp} \right)^2 - {x}^2} -  \frac{1}{2 \beta m^\perp} \right)} .
\end{align*}
The analysis that we did for $F$ when $m^\parallel > 0$ holds verbatim for the function $G$ since $(- m^\parallel) > 0$ in this case. As such, the function $G$ has a unique root on $(0,1)$, which implies that the function $F$ has a unique root on $(-1,0)$ when $m^\parallel < 0$.
\\
\\
Since the points where the global maximum value of $\psi$ is attained are also critical points, the above analysis shows that if $m^\perp \not = 0$, then there exists only one unique point at which the maximum value is attained.
\end{proof}
\noindent
The second result concerns the case where $m^\parallel = 0$. 
\begin{lemma}\label{max_point2} Let $h$ be a strongly varying external field and suppose that $m^\parallel = 0$.
\\
\\
One of the following three holds:
\begin{enumerate}
\item If $m^\perp \geq J$, then for all $\beta > 0$, there exists a unique global maximum point of $\psi^{\beta,J,m}$ given by
\begin{align*}
x^0 = 0, \ y^0 = \sqrt{1 + \left( \frac{1}{2 \beta m^\perp}\right)^2} - \frac{1}{2 \beta m^\perp} .
\end{align*}
\item If $m^\perp < J$, then for all $\beta \leq \frac{J}{(J - m^\perp) (J + m^\perp)}$, there exists a unique global maximum point of $\psi^{\beta,J,m}$ given by
\begin{align*}
x^0 = 0, \ y^0 = \sqrt{1 + \left( \frac{1}{2 \beta m^\perp}\right)^2} - \frac{1}{2 \beta m^\perp} .
\end{align*}
\item If $m^\perp < J$, then for all $\beta > \frac{J}{(J - m^\perp) (J + m^\perp)}$, there exists two global maximum points of $\psi^{\beta,J,m}$ given by
\begin{align*}
x^{\pm} = \pm \sqrt{1 - \frac{1}{\beta J} - \frac{{m^\perp}^2}{J^2} }, \ y^{\pm} = \frac{m^\perp}{J} .
\end{align*} 
\end{enumerate}
\end{lemma}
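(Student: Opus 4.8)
The plan is to use \cref{argmax2}, which tells us that $M^*(\beta,J,m)$ is non-empty and contained in the open ball $B(0,1)$; hence every global maximizer of $\psi := \psi^{\beta,J,m}$ is an interior critical point of this smooth function. Writing $D := 1 - x^2 - y^2$, the critical point equations (using $m^\parallel = 0$) read
\begin{align*}
x\left(\beta J - \frac{1}{D}\right) = 0, \qquad \beta m^\perp - \frac{y}{D} = 0 .
\end{align*}
First I would enumerate all critical points. The first equation forces $x = 0$ or $D = 1/(\beta J)$. If $x = 0$, the second equation is a quadratic in $y$ with a single root in $(-1,1)$, namely $y^0 = \sqrt{1+(2\beta m^\perp)^{-2}} - (2\beta m^\perp)^{-1}$, giving the candidate $z^0 = (0,y^0)$. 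If instead $D = 1/(\beta J)$, the second equation forces $y = m^\perp/J$ and then $x^2 = 1 - \frac{1}{\beta J} - \frac{(m^\perp)^2}{J^2}$; this has a real solution only when $m^\perp < J$ and $\beta \ge \beta_c := \frac{J}{(J-m^\perp)(J+m^\perp)}$, yielding $z^\pm = (\pm x^+, m^\perp/J)$ with $x^+ = \sqrt{1 - \frac{1}{\beta J} - \frac{(m^\perp)^2}{J^2}}$. A direct substitution shows $\beta = \beta_c$ is exactly the value at which $y^0 = m^\perp/J$ (equivalently $x^+ = 0$), so at $\beta = \beta_c$ the points $z^\pm$ collapse onto $z^0$. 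Thus the complete set of critical points is $\{z^0\}$ when $m^\perp \ge J$ or $\beta \le \beta_c$, and $\{z^0, z^+, z^-\}$ when $m^\perp < J$ and $\beta > \beta_c$.

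Next I would classify these points via the Hessian. A short computation gives $\partial_{xx}\psi = \beta J - \frac{1}{D} - \frac{2x^2}{D^2}$, $\partial_{xy}\psi = -\frac{2xy}{D^2}$, and $\partial_{yy}\psi = -\frac{1}{D} - \frac{2y^2}{D^2}$. At $z^0$ the Hessian is diagonal with entries $\beta J - 1/D_0$ and $-\frac{1}{D_0} - \frac{2(y^0)^2}{D_0^2} < 0$, where $D_0 = 1 - (y^0)^2 = y^0/(\beta m^\perp)$. Using the latter identity, $\beta J - 1/D_0 \le 0 \iff y^0 \le m^\perp/J$, and since $y^0$ is strictly increasing in $\beta$ (being decreasing in $a := (2\beta m^\perp)^{-1}$) and equals $m^\perp/J$ precisely at $\beta_c$, this holds exactly when $\beta \le \beta_c$ (it holds for all $\beta$ when $m^\perp \ge J$, since then $m^\perp/J \ge 1 > y^0$). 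Hence in cases (1) and (2) the unique critical point $z^0$ has negative semidefinite Hessian, so it is the global maximizer. In case (3), $\partial_{xx}\psi(z^0) = \beta J - 1/D_0 > 0$, so $z^0$ is a saddle point and cannot be a global maximizer; since $M^*$ is non-empty and the only remaining critical points are $z^\pm$, and since the symmetry $\psi(x,y) = \psi(-x,y)$ forces $\psi(z^+) = \psi(z^-)$, both are global maximizers and $M^* = \{z^+, z^-\}$. Finally, for later use I would record that at $z^\pm$, where $D = 1/(\beta J)$, one gets $\partial_{xx}\psi(z^\pm) = -2(x^+)^2(\beta J)^2 < 0$ and $\det H[\psi](z^\pm) = 2(x^+)^2(\beta J)^3 > 0$, so $H[\psi]$ is negative definite there.

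I do not expect a genuine obstacle; the argument is elementary two-variable calculus and the only real content beyond algebra is the equivalence $\partial_{xx}\psi(z^0) > 0 \iff \beta > \beta_c$. That step rests on the monotonicity of $y^0$ in $\beta$ together with the identification of $\beta_c$ as the unique $\beta$ with $y^0 = m^\perp/J$, and it is precisely this sign change at $z^0$ — synchronized with the birth of the new critical points $z^\pm$ — that locates the phase boundary. The remaining work is bookkeeping: checking which candidate points are real in each parameter regime, and confirming the signs of the Hessian entries and determinant at $z^0$ and $z^\pm$.
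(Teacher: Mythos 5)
Your proposal is correct and follows essentially the same route as the paper: enumerate the interior critical points from the two components of $\nabla\psi=0$, classify $z^0$ by the sign of $\partial_{xx}\psi(z^0)$ (your criterion $y^0\le m^\perp/J$ via monotonicity of $y^0$ in $\beta$ is equivalent to the paper's quadratic-in-$\beta$ computation), and verify negative definiteness at $z^\pm$ via the Hessian determinant. Your endgame is in fact slightly more explicit than the paper's — you name the $x\mapsto -x$ symmetry forcing $\psi(z^+)=\psi(z^-)$ and the exclusion of $z^0$ as a saddle for $\beta>\beta_c$, which the paper compresses into a remark about disjoint parameter ranges — but the substance is identical.
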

\begin{proof}
We begin by noting that any local extrema of differentiable function on an open set are also critical points. The critical point equation is given by
\begin{align*}
 \nabla [\psi]  (x^*,y^*) = 0 \iff x^* \left( \beta J - \frac{1}{1 - {x^*}^2 - {y^*}^2} \right) = 0, \ \beta m^\perp - \frac{y^*}{1 - {x^*}^2 - {y^*}^2} = 0 .
\end{align*}
Let us now proceed case by case. Recall that the second component of the critical point equation can be used to solve for the value of $y^*$ in terms of $x^*$. The first component of the critical point equation can then be solved and the three possible candidates for a global maximum point on whether or not $x^*$ vanishes. Let us denote $x^0 = 0$, $x^+ \in (0,1)$, and $x^- \in (-1,0)$, for the possible values of $x^*$. Let us emphasize that the existence of these solutions depends on the parameters $\beta$, $J$, and $m^\perp$, and that the candidate global maximum points do not all exist simultaneously.
\\
\\
First, let us consider $x^* = x^0 = 0$. The corresponding $y^0$ value is given by
\begin{align*}
y^0 = \sqrt{1 + \left( \frac{1}{2 \beta m^\perp}\right)^2} - \frac{1}{2 \beta m^\perp} .
\end{align*} 
Furthermore, one can verify that $H_{12} [ \psi] (z^0) = 0 = H_{21} [ \psi ] (z^0)$, and $H_{22} [ \psi ] (z^0) < 0$, where $H[\psi] (z^0)$ is the Hessian of $\psi$ evaluated at the point $z_0$. This implies that the sign of $H_{11} [ \psi ] (z^0)$ determines whether or not this is a true local maximum. A direct computation shows that
\begin{align*}
H_{11} [ \psi ](z^0) = \beta J -  \sqrt{\beta^2 {m^\perp}^2 + \left( \frac{1}{2} \right)^2} - \frac{1}{2} .
\end{align*}
One can immediately see that if $m^\perp \geq J$, then $H_{11} [ \psi] (z^0) < 0$ for all $\beta > 0$. If $m^\perp < J$, we can solve the condition $H_{11} [ \psi] (z^0) \leq 0$ by considering it in terms of a quadratic in $\beta$. This analysis yields the following
\begin{align*}
H_{11} [ \psi ] (z^0) \leq 0 \iff 0 < \beta \leq \frac{J}{(J - m^\perp)(J + m^\perp)} .
\end{align*}
Note that in the above equation the equality only occurs when both sides of the equivalence have equality. Let us now summarize these properties. If $m^\perp > J$, $\psi$ has a local maximum for all $\beta > 0$ which can be verified by the negative definiteness of its Hessian at the the critical point. If $m^\perp < J$, $\psi$ has a local maximum which can be verified by the negative definiteness of its Hessian at the critical point, but this local maximum only exists for $\beta$ belonging to the finite interval shown before. If $m^\perp = J$, the Hessian matrix at the critical point has determinant $0$, and, without further analysis, we cannot conclude whether or not this point is a maximum, minimum, or a saddle point. We will return to the case of $m^\perp = J$ later.
\\
\\
Let us now consider the solutions $z^\pm$ which satisfy
\begin{align*}
\beta J - \frac{1}{1 - {x^{\pm}}^2 - {y^{\pm}}^2} = 0 .
\end{align*}
By plugging this equation into the second component of the critical point equation, we immediately find that
\begin{align*}
y^\pm = \frac{m^\perp}{J} .
\end{align*}
Because of this property, it follows that if $m^\perp \geq J$, then there are no solutions to the critical point equation in $B(0,1)$, where $x^* \not = 0$. If $m^\perp < J$, we can solve for the value of $x^{\pm}$ from the first component of the critical point equation. 
When it exists, this value is given by
\begin{align*}
x^{\pm} = \pm \sqrt{1 - \frac{1}{\beta J} - \frac{{m^\perp}^2}{J^2} } .
\end{align*}
In order for $x^{\pm} \in (-1,1)$, we must have
\begin{align*}
 \beta >  \frac{J}{(J - m^\perp) (J + m^\perp)} .
\end{align*}
We will consider the Hessian matrix at this point to verify that these are local maxima. A direct computation shows that 
\begin{align*}
\det \left( H [ \psi ] (z^\pm) \right) = \frac{1 + x^2 + y^2}{(1 - x^2 - y^2)^3} - \beta J \left(\frac{1}{1 - x^2 - y^2} + \frac{2 y^2}{(1 - x^2 - y^2)^2} \right) .
\end{align*} 
In our case, we have
\begin{align*}
\det \left( H [\psi] (z^\pm) \right)  = 2 (\beta J)^3 - 2 (\beta J)^2 - 2 \beta J (\beta m^\perp)^2  = 2 (\beta J)^2 \left( \beta J \left( 1 - \left( \frac{m^\perp}{J}\right)^2 \right) - 1 \right).
\end{align*}
It follows that
\begin{align*}
\det \left( H [ \psi ] (z^\pm) \right)  > 0 \iff \beta > \frac{J}{(J - m^\perp) (J + m^\perp)} .
\end{align*}
Since $H_{22}[\psi] (z^\pm) < 0$, this confirms that when $m^\perp < J$, and 
\begin{align*}
\beta > \frac{J}{(J - m^\perp) (J + m^\perp)} ,
\end{align*}
the given $z^\pm$ are local maxima of $\psi$.
\\
\\
Finally, one should observe that the parameter ranges for which the respective maximizing points are derived from the critical point equation are disjoint. Since each maximizing point of $\psi$ must be achieved at a critical point, it follows that these points are the only possible candidates for global maximizing points. It follows that each of the solutions to the critical point equations are thus all global maximizing points and the result follows.
\end{proof}
\subsection{Asymptotic analysis of sequences of local maximizing points} \label{asymp_analysis}
\noindent
In this section, we will use the multivariate Taylor's formula as presented in \cite[Chapter 8]{Folland1999}. 
\\
\\
The following result shows that the conditioned FVGS are always weakly convergent.
\begin{proof}[Proof of \Cref{condition_conv}]
Observe that
\begin{align*}
\mu_n^{\beta, J, h, \pm} = \frac{1}{\rho_n^{\beta, J, h} (B_{\pm} (0,1))} \int_{B_\pm(0,1)} \rho_n^{\beta, J, h}(dz) \nu_n^{z,h} .
\end{align*}
If we denote the probability measure $\rho_n^{\beta, J, h, \pm}$ on $B_{\pm} (0,1)$ given by its density
\begin{align*}
\rho_n^{\beta, J, h, \pm}(dz) := \frac{\rho_n^{\beta, J, h}(dz) \mathbbm{1}(z \in B_\pm (0,1))}{\rho_n^{\beta, J, h} (B_{\pm} (0,1))} , 
\end{align*}
then by replicating the proofs of \Cref{single_point}, it follows that
\begin{align*}
\lim_{n \to \infty} \rho_n^{\beta, J, h, \pm} = \delta_{z^{\pm}} .
\end{align*}
From this weak limit and the uniform convergence provided by \Cref{uni_conv}, the result follows.
\end{proof}
\noindent
The following result provides the construction of a suitably converging sequence of local maximizing points of the exponential tilting functions.
\begin{lemma}\label{max_seq} Let $h$ be a strongly varying external field.
\\
\\
It follows that for every global maximizing point $z^*$ of $\psi^{\beta,J,m}$ there exists a sequence of points  $\{ z_n^*\}_{n \in \mathbb{N}}$ such that $z_n^*$ are local maximum points of $\psi_n^{\beta, J, h}$ satisfying $z_n^* \to z^*$ as $n \to \infty$ and $\nabla [\psi_n^{\beta, J, h} ] (z_n^*) = 0$ for large enough $n$.
\end{lemma}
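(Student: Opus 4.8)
The plan is to obtain $z_n^*$ as an interior maximizer of $\psi_n^{\beta,J,h}$ over a fixed small closed ball centered at $z^*$, exploiting the identity $\psi_n^{\beta,J,h}(z)=\psi^{\beta,J,m}(z)+\beta\langle m_n^h-m,z\rangle$ and hence the uniform bound $\sup_{z\in B(0,1)}|\psi_n^{\beta,J,h}(z)-\psi^{\beta,J,m}(z)|\le\beta\|m_n^h-m\|\to 0$ recorded just before \cref{argmax1}, together with the fact that each global maximizer of $\psi^{\beta,J,m}$ is strict and interior.

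First I would record that a global maximizing point $z^*$ is a \emph{strict} local maximizer of $\psi^{\beta,J,m}$ lying in the open set $B(0,1)$: by \cref{argmax2} we have $M^*(\beta,J,m)\subset\overline{B}(0,R)\subset B(0,1)$, and by \cref{max_point1} and \cref{max_point2} the set $M^*(\beta,J,m)$ is finite, so $z^*$ is isolated in $M^*(\beta,J,m)$; since $\psi^{\beta,J,m}\le\psi^{\beta,J,m}(z^*)$ everywhere with equality only on $M^*(\beta,J,m)$, there is $r>0$ with $\overline{B}(z^*,r)\subset B(0,1)$ and $\psi^{\beta,J,m}(z)<\psi^{\beta,J,m}(z^*)$ for all $z\in\overline{B}(z^*,r)\setminus\{z^*\}$, whence by compactness of $\partial B(z^*,r)$ we get $\psi^{\beta,J,m}(z^*)>\sup_{\partial B(z^*,r)}\psi^{\beta,J,m}$. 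Now I would transfer this to $\psi_n^{\beta,J,h}$: by uniform convergence, $\psi_n^{\beta,J,h}(z^*)\to\psi^{\beta,J,m}(z^*)$ and $\sup_{\partial B(z^*,r)}\psi_n^{\beta,J,h}\to\sup_{\partial B(z^*,r)}\psi^{\beta,J,m}$, hence for all large $n$ the continuous function $\psi_n^{\beta,J,h}$ attains its maximum over the compact set $\overline{B}(z^*,r)$ at an interior point $z_n^*$ (for the finitely many remaining $n$ the choice is immaterial). Such a $z_n^*$ is a local maximizer of $\psi_n^{\beta,J,h}$ on $B(0,1)$, and since $\psi_n^{\beta,J,h}$ is differentiable there, $\nabla[\psi_n^{\beta,J,h}](z_n^*)=0$. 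Finally, if $z^{**}$ is any subsequential limit of $\{z_n^*\}$ in $\overline{B}(z^*,r)$, then $\psi_n^{\beta,J,h}(z_n^*)\ge\psi_n^{\beta,J,h}(z^*)$ and uniform convergence give $\psi^{\beta,J,m}(z^{**})\ge\psi^{\beta,J,m}(z^*)$, which by strict local maximality forces $z^{**}=z^*$; thus $z_n^*\to z^*$.

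The delicate point is the strictness of the local maximum, i.e. ruling out maximizers accumulating at $z^*$, which is exactly what finiteness of $M^*(\beta,J,m)$ from \cref{max_point1} and \cref{max_point2} supplies (and this also covers the borderline value $\beta=\beta_c$). When in addition $H[\psi^{\beta,J,m}](z^*)$ is negative definite — the situation actually used downstream, e.g. at $z^\pm$ in the mixed state parameter range by the proof of \cref{max_point2} — one may alternatively argue via the inverse function theorem: since $\nabla[\psi_n^{\beta,J,h}]=\nabla[\psi^{\beta,J,m}]+\beta(m_n^h-m)$ and the Hessians coincide, solving $\nabla[\psi_n^{\beta,J,h}](z)=0$ near $z^*$ amounts to inverting $\nabla[\psi^{\beta,J,m}]$ at $-\beta(m_n^h-m)\to 0$, producing a unique nearby $z_n^*$ that depends continuously on $m_n^h-m$ and hence converges to $z^*$, with $H[\psi^{\beta,J,m}](z_n^*)$ remaining negative definite so that $z_n^*$ is a local maximum. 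This second route is the one that dovetails with the Hessian-based asymptotics of \cref{diff_eq} and \cref{laplace_asymp}.
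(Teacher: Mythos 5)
Your proof is correct and follows essentially the same route as the paper: both maximize $\psi_n^{\beta,J,h}$ over a fixed small closed ball around $z^*$ and use the uniform convergence $\psi_n^{\beta,J,h}\to\psi^{\beta,J,m}$ to conclude that the maximizer converges to $z^*$ and is eventually interior, hence a critical point. The only differences are cosmetic — you establish interiority first via the boundary comparison and then convergence, while the paper argues convergence first via subsequences and deduces interiority afterwards — and your appeal to finiteness of $M^*(\beta,J,m)$ makes explicit the isolation of $z^*$ that the paper's choice of $\delta$ implicitly assumes.
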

\begin{proof}
Let $B(z^*, \delta) \subset B(0,1)$ be a small enough so that $z^*$ is the only global maximum point of $\psi$ in $\overline{B}(z^*, \delta)$. By compactness, the functions $\psi_n$ must attain their maxima $z_n^*$ in $\overline{B}(z^*, \delta)$. Let $ \{ z^*_{n_k} \}_{k \in \mathbb{N}}$ be any convergent subsequence of $\{ z^*_n\}_{n \in \mathbb{N}}$ and let us denote $z^*_0 := \lim_{k \to \infty} z^*_{n_k}$. By definition, for all $k \in \mathbb{N}$, we have
\begin{align*}
\psi_{n_k}(z^*) \leq \psi_{n_k} (z_{n_k}^*) .
\end{align*}
Using the uniform continuity of $\psi_{n_k}$, taking the limit $k \to \infty$, we have
\begin{align*}
\psi(z^*) \leq \psi(z_0^*) .
\end{align*}    
Since $z^*$ is the unique global maximum point of $\psi$ in $\overline{B}(z^*, \delta)$, it follows that $z^* = z_0^*$. Since the sequence $\{ z_n^*\}_{n \in \mathbb{N}}$ is bounded, and this property holds for any convergent subsequence, it follows that $z_n^* \to z^*$. Because $z_n^* \to z^*$, it follows that for large enough $n$ the $z_n^*$ must belong to $B(z^*, \delta)$ and not $\partial B(z^*, \delta)$ which guarantees that $\nabla [\psi_n] (z_n^*) = 0$.
\end{proof}
\noindent
For the rest of the proofs in this section, $z^* \in B(0,1)$ will always be a global maximum point of $\psi^{\beta,J,m}$ such that the Hessian $H[\psi^{\beta,J,m}] (z^*)$ at $z^*$ is negative definite, and there exists a sequence of points $\{ z_n^* \}_{n \in \mathbb{N}}$ satisfying $z_n^* \to z^*$ in the limit as $n \to \infty$ and $\nabla [\psi_n^{\beta,J,h}] (z_n^*) = 0$ for large enough $n$. 
\\
\\
The following result characterizes the rate of convergence of the sequence of local maximizing points in terms of the sample mean and sample standard deviation.
\begin{lemma}\label{diff_eq} Let $h$ be a strongly varying external field. 
\\
\\
It follows that there exist a sequence of invertible matrices $\{ H_n \}_{n \in \mathbb{N}}$ such that $H_n \to H[\psi^{\beta, J,m}] (z^*)$ as $n \to \infty$ and we have
\begin{align*}
z_n^* - z^* = - \beta  H_n^{-1} (m_n^h - m) 
\end{align*}
for large enough $n$.
\end{lemma}
\begin{proof}
Using the respective critical point equations for $\psi$ and $\psi_n$, we can form the following pair of difference equations
\begin{align*}
\beta (m_n^\parallel - m^\parallel) &=  \frac{x_n^*}{1 - {x_n^*}^2 - {y_n^*}^2} - \beta J x_n^*  - \left( \frac{x^*}{1 - {x^*}^2 - {y^*}^2} - \beta J x^* \right) , \\
\beta(m_n^\perp - m^\perp) &=  \frac{y_n^*}{1 - {x^*_n}^2 - {y_n^*}^2} - \frac{y^*}{1 - {x^*}^2 - {y^*}^2}.
\end{align*}
We consider two functions $C_1, C_2 : B(0,1) \to \mathbb{R}$ given by
\begin{align*}
C_1(x,y) &:= \frac{x}{1 - x^2 - y^2} - \beta J x  - \left( \frac{x^*}{1 - {x^*}^2 - {y^*}^2} - \beta J x^* \right), \\ C_2(x,y) &:= \frac{y}{1 - x^2 - y^2} - \frac{y^*}{1 - {x^*}^2 - {y^*}^2} .
\end{align*}
Observe that for any multi-index $\alpha \in \mathbb{N}^2$ such that $|\alpha| \geq 1$, we have $(\partial^{\alpha} C_1)(x,y) = - \left( \partial^\alpha \partial_1 \psi \right)(x,y)$ and $(\partial^{\alpha} C_2)(x,y) = - \left( \partial^\alpha \partial_2 \psi \right)(x,y)$. Furthermore, if we define $C : B(0,1) \to \mathbb{R}^2$ by $C(x,y) := (C_1 (x,y), C_2(x,y))$, then $D[C] = - H \left[ \psi \right]$.
\\
\\
The difference equations can thus be expressed as
\begin{align*}
\beta (m_n - m) = C(z_n^*) . 
\end{align*}
If $\det \left( H \left[ \psi \right] (z^*) \right) < 0$, then $D[C] (z^*)$ is a real symmetric positive definite matrix. By considering the Taylor series of $C_1$ and $C_2$ developed around the point $z^*$, we have
\begin{align*}
\beta (m_n - m) = \left( - H[\psi](z^*) + S(z_n^* - z^*) \right) (z_n^* - z^*),
\end{align*} 
where
\begin{align*}
S_{ij} (z_n^* - z^*) = - \sum_{k=1}^2  \int_0^1 dt \ (1 - t) (\partial_k \partial_j \partial_i \psi) (z^* + t (z_n^* - z^*))  (z_n^* - z^*)_k . 
\end{align*}
Because $S(z_n^*) \to 0$ as $n \to \infty$, and $H[\psi](z^*)$ is negative definite, it follows that $- H[\psi](z^*) + S(z_n^*)$ is invertible for large enough $n \in \mathbb{N}$. For large enough $n \in \mathbb{N}$, we thus have
\begin{align*}
z_n^* - z^* = - \beta \left( H[\psi](z^*) - S(z_n^*) \right)^{-1} (m_n - m) .
\end{align*}
Now, setting
\begin{align*}
H_n := H[\psi](z^*) - S(z_n^*),
\end{align*}
the result follows.
\end{proof}
\noindent
The following result is an adaptation of the Laplace method to this specific model.
\begin{lemma} \label{laplace_asymp} Let $h$ be a strongly varying external field. 
\\
\\
Suppose that there exists an open set $B \subset B(0,1)$ such that $z^*$ is the unique maximum point of $\psi^{\beta,J,m}$ in $B$.
\\
\\
It follows that
\begin{align*}
\lim_{n \to \infty} \frac{n \int_{B} dz \ e^{2 \beta J x^2 + 4 \beta \left< m_n^h, z \right>} e^{(n - 4)\psi^{\beta,J,m}_n(z)}}{e^{n \psi^{\beta,J,m}_n(z_n^*)}} = \frac{1}{(1 - || z^* ||^2)^2}\int_{\mathbb{R}^2} dz \ e^{\frac{1}{2} \left< z, H[\psi^{\beta,J,m}](z^*) z \right>}  . 
\end{align*}
\end{lemma}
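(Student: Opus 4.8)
The plan is to read this off as an instance of Laplace's method in two dimensions, the key structural simplification being that $\psi_n^{\beta,J,h}-\psi^{\beta,J,m}=\beta\langle m_n^h-m,\cdot\rangle$ is affine in $z$: hence $H[\psi_n^{\beta,J,h}]=H[\psi^{\beta,J,m}]$ and $D^\alpha\psi_n^{\beta,J,h}=D^\alpha\psi^{\beta,J,m}$ for every multi-index with $|\alpha|\ge 2$, with no $n$-dependence whatsoever, while $\nabla\psi_n^{\beta,J,h}-\nabla\psi^{\beta,J,m}=\beta(m_n^h-m)\to 0$ uniformly on $B(0,1)$. First I would absorb the polynomial prefactor: a direct computation from the definition of the exponential tilting function gives $e^{2\beta Jx^2+4\beta\langle m_n^h,z\rangle}e^{(n-4)\psi_n^{\beta,J,h}(z)}=(1-\|z\|^2)^{-2}e^{n\psi_n^{\beta,J,h}(z)}$, so, writing $g(z):=(1-\|z\|^2)^{-2}$, it suffices to prove $\lim_{n\to\infty} n\,e^{-n\psi_n^{\beta,J,h}(z_n^*)}\int_B g(z)\,e^{n\psi_n^{\beta,J,h}(z)}\,dz=g(z^*)\int_{\mathbb{R}^2}e^{\frac12\langle z,H[\psi^{\beta,J,m}](z^*)z\rangle}\,dz$.

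The second step is localization. Fix $\varepsilon>0$ with $\overline B(z^*,\varepsilon)\subset B$, small enough that $H[\psi^{\beta,J,m}]$ is negative definite on $\overline B(z^*,\varepsilon)$ with all its eigenvalues $\le-2c<0$ there (possible by continuity of the Hessian on $B(0,1)$ and the hypothesis at $z^*$). Since $z^*$ is the unique maximizer of $\psi^{\beta,J,m}$ in $B$ and $\psi^{\beta,J,m}(z)\to-\infty$ as $\|z\|\to 1$, the argument from the proofs of \cref{argmax2} and \cref{mix_supp} supplies $\delta>0$ with $\psi^{\beta,J,m}\le\psi^{\beta,J,m}(z^*)-\delta$ on $B\setminus B(z^*,\varepsilon)$; combined with $z_n^*\to z^*$, $\psi_n^{\beta,J,h}(z_n^*)\to\psi^{\beta,J,m}(z^*)$ and $\|m_n^h-m\|\to 0$, this gives $\psi_n^{\beta,J,h}-\psi_n^{\beta,J,h}(z_n^*)\le-\delta/2$ on $B\setminus B(z^*,\varepsilon)$ for large $n$. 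Splitting $e^{n(\psi_n^{\beta,J,h}-\psi_n^{\beta,J,h}(z_n^*))}=e^{(n-3)(\cdots)}e^{3(\cdots)}$ and noting that $g\,e^{3\psi_n^{\beta,J,h}}$ is bounded above by an $n$-independent integrable function on $B(0,1)$ (the $(1-\|z\|^2)^{-2}$ coming from $g$ is softened to the integrable $(1-\|z\|^2)^{-1/2}$), the portion of the integral over $B\setminus B(z^*,\varepsilon)$ contributes $O\!\big(n\,e^{-(n-3)\delta/2}\big)\to 0$.

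For the portion over $B(z^*,\varepsilon)$ I would substitute $z=z_n^*+n^{-1/2}w$, which for $n$ large (so $z_n^*\in B(z^*,\varepsilon/2)$) turns it into $\int_{\sqrt n(B(z^*,\varepsilon)-z_n^*)}g(z_n^*+n^{-1/2}w)\exp\!\big(n[\psi_n^{\beta,J,h}(z_n^*+n^{-1/2}w)-\psi_n^{\beta,J,h}(z_n^*)]\big)\,dw$. Using $\nabla\psi_n^{\beta,J,h}(z_n^*)=0$ together with the multivariate Taylor formula with integral remainder (as in \cref{diff_eq}; see \cite[chapter 8]{Folland1999} and \cite[chapter 2]{Wong2001}), the exponent equals $\tfrac12\langle w,H[\psi^{\beta,J,m}](z_n^*)w\rangle+n\,r_n(w)$ with $|n\,r_n(w)|\le Cn^{-1/2}\|w\|^3$, where $C$ is controlled by $\sup_{\overline B(z^*,\varepsilon)}\|D^3\psi^{\beta,J,m}\|<\infty$; hence for each fixed $w$ the integrand converges to $g(z^*)e^{\frac12\langle w,H[\psi^{\beta,J,m}](z^*)w\rangle}$, and $\sqrt n(B(z^*,\varepsilon)-z_n^*)\uparrow\mathbb{R}^2$.

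To pass the limit through the $w$-integral I would, shrinking $\varepsilon$ once more if necessary, use convexity of balls and the second-order mean value form of Taylor's theorem together with $H[\psi_n^{\beta,J,h}]=H[\psi^{\beta,J,m}]$ to obtain the uniform bound $\psi_n^{\beta,J,h}(z)-\psi_n^{\beta,J,h}(z_n^*)\le-c\|z-z_n^*\|^2$ on $B(z^*,\varepsilon)$ for all large $n$; after the change of variables this furnishes the $n$-independent integrable majorant $(\sup_{\overline B(z^*,\varepsilon)}g)\,e^{-c\|w\|^2}$, so dominated convergence yields $g(z^*)\int_{\mathbb{R}^2}e^{\frac12\langle w,H[\psi^{\beta,J,m}](z^*)w\rangle}\,dw$, which together with the vanishing tail completes the proof. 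I expect this last uniform-in-$n$ quadratic upper bound near the moving critical point $z_n^*$ to be the crux: it is what allows control of the $w$-integral despite the $n$-dependent exponent and the drift of $z_n^*$, and it works precisely because the negative-definiteness hypothesis at $z^*$ propagates to a whole neighborhood through the continuity of the ($n$-independent) Hessian $H[\psi^{\beta,J,m}]$.
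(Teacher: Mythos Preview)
Your proof is correct and follows the same Laplace-method template as the paper: localize away from a neighborhood of the critical point, change variables $z=z_n^*+n^{-1/2}w$, Taylor expand around the moving critical point using $\nabla\psi_n(z_n^*)=0$, and pass to the limit by dominated convergence. The one noteworthy difference is your initial rewriting $e^{2\beta Jx^2+4\beta\langle m_n^h,z\rangle}e^{(n-4)\psi_n}=(1-\|z\|^2)^{-2}e^{n\psi_n}$, which the paper only unwinds implicitly at the very end (``cancelling out terms''); together with your mean-value Hessian bound for the majorant (in place of the paper's explicit third-order remainder estimate over the moving ball $B(z_n^*,\delta)$), this makes the bookkeeping a bit cleaner, but the argument is otherwise the same.
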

\begin{proof}
First, using the fact that $z_n^* \to z^*$, for large enough $n$, it follows that there exists $0 < c < \delta < C$ such that $B(z^*, c) \subset B(z_n^*, \delta) \subset \overline{B}(z^*, C) \subset B$, where $\delta$ and $c$ are fixed but small. Denote $B_{n,\delta} = B(z_n^*, \delta)$. We have
\begin{align*}
&\int_{B \setminus B_{n,\delta}} dz \ e^{2 \beta J x^2 + 4 \beta \left< m_n, z \right>} e^{(n - 4) (\psi_n(z) - \psi_n(z_n^*))} \\ &\leq |B (0,1)| e^{\beta \left( 2J + 4 |m_n^\parallel| + 4 m_n^\perp \right)} e^{(n - 4) \Delta_n},
\end{align*}
where
\begin{align*}
\Delta_n := \sup_{z \in B \setminus B_{n, \delta}} \psi_n(z) - \sup_{z \in B} \psi_n(z) < \sup_{z \in B \setminus B(z^*, c)} \psi_n(z) - \sup_{z \in B} \psi_n(z).
\end{align*}
Using the uniform convergence of $\psi_n \to \psi$, one sees that 
\begin{align*}
\lim_{n \to \infty} \Delta_n := \Delta < 0,
\end{align*}
and it follows that
\begin{align*}
\int_{B \setminus B_{n,\delta}} dz \ e^{2 \beta J x^2 + 4 \beta \left< m_n, z \right>} e^{(n - 4) (\psi_n(z) - \psi_n(z_n^*))} = O\left( e^{(n - 4) \Delta} \right) .
\end{align*}
\\
\\ 
Next, by changing variables, we have
\begin{align*}
&\int_{ B_{n, \delta}} dz \ e^{2 \beta J x^2 + 4 \beta \left< m_n, z \right>} e^{(n - 4) (\psi_n(z) - \psi_n(z_n^*))} \\
&= \frac{1}{n-4} \int_{B(0, \delta \sqrt{n-4})} dz \ e^{2 \beta J \left( x^* + \frac{x}{\sqrt{n - 4}}\right)^2 + 4 \beta \left< m_n, z^* + \frac{z}{\sqrt{n - 4}} \right>} e^{(n-4) (\psi_n (z_n^* + \frac{z}{\sqrt{n-4}}) - \psi_n(z_n^*))}  .
\end{align*}
Now, note that
\begin{align*}
(n - 4) \left( \psi_n \left(z_n^* + \frac{z}{\sqrt{n-4}} \right) - \psi_n(z_n^*)  \right) &= \frac{1}{2} \left< z, H[\psi] (z_n^*)z \right>  \\
&+ \frac{1}{(n - 4)^{\frac{1}{2}}} \sum_{|\alpha| = 3} R_\alpha \left( z_n^*, \frac{z}{\sqrt{n - 4}}\right) z^\alpha, 
\end{align*}
where $R_\alpha$ is given by
\begin{align*}
R_\alpha \left( z_n^*, \frac{z}{\sqrt{n - 4}}\right) = \frac{|\alpha|}{\alpha!} \int_0^1 dt \ (1 - t)^{|\alpha| - 1} \left( \partial^\alpha \psi \right) \left( z_n^* + t \frac{z}{\sqrt{n - 4}}\right) . 
\end{align*}
Because $z_n^* \to z^*$, for large enough $n$ there exists $\delta_1 > \delta$ such that $B(z^*, \delta_1) \subset B$ and we have
\begin{align*}
\left| \left( \partial^\alpha \psi \right) \left( z_n^* + t \frac{z}{\sqrt{n - 4}}\right) \right| \leq \max_{|\alpha| = 3, z' \in B(z^*, \delta_1)} |\partial^\alpha \psi (z')| 
\end{align*}
for $z \in B(0, \delta \sqrt{n - 4})$. It follows that
\begin{align*}
\left| \frac{1}{(n - 4)^{\frac{1}{2}}}\sum_{|\alpha| = 3} R_\alpha \left( z_n^* + \frac{z}{\sqrt{n-4}} \right) z^\alpha \right| &\leq \max_{|\alpha| = 3, z' \in B(z^*, \delta_1)} |\partial^\alpha \psi (z')| \frac{1}{(n - 4)^{\frac{1}{2}}} \sum_{|\alpha| = 3} |z^\alpha| \\
&\leq \delta  D \left( \max_{|\alpha| = 3, z' \in B(z^*, \delta_1)} |\partial^\alpha \psi (z')| \right) || z ||^2 .
\end{align*}
for some fixed constant $D > 0$ and every $z \in B(0, \delta \sqrt{n - 4})$. It follows that for large enough $n$ and small enough $\delta > 0$, we have
\begin{align*}
\mathbbm{1}(z \in B(0, \delta \sqrt{n - 4}))e^{(n-4) (\psi_n (z_n^* + \frac{z}{\sqrt{n-4}}) - \psi_n(z_n^*))} \leq e^{\frac{1}{2} \left< z, \left( H[\psi](z_n^*) + 2 \delta D \max_{|\alpha| = 3, z' \in B(z^*, \delta_1)} |\partial^\alpha \psi (z')| I \right) z \right>} ,
\end{align*}
where $I$ is the identity matrix of dimension $2 \times 2$.   Because $H[\psi](z_n^*) \to H[\psi] (z^*)$, and $\delta > 0$ can be chosen arbitrarily small but fixed, it follows that for large enough $n$ and small enough $\delta > 0$, there exists a positive definite quadratic form $Q$ such that
\begin{align*}
e^{\left< z, \left( H[\psi](z_n^*) + \delta D \max_{|\alpha| = 3, z' \in B(z^*, \delta_1)} |\partial^\alpha \psi (z')| I \right) z \right>} \leq e^{- \left<z, Q z \right>} .
\end{align*}
Finally, it is clear that for large enough $n$ and fixed $\delta > 0$, there exists a constant $E > 0$ such that
\begin{align*}
\mathbbm{1}(z \in B(0, \delta \sqrt{n - 4}))e^{2 \beta J \left( x^* + \frac{x}{\sqrt{n - 4}}\right)^2 + 4 \beta \left< m_n, z^* + \frac{z}{\sqrt{n - 4}} \right>} \leq e^E .
\end{align*}
From these observations, by dominated convergence, it follows that
\begin{align*}
&\lim_{n \to \infty}  \frac{(n-4) \int_{B_{n, \delta}} dz \ e^{2 \beta J x^2 + 4 \beta \left< m_n^h, z \right>} e^{(n - 4)\psi^{\beta,J,m}_n(z)}}{e^{(n-4) \psi^{\beta,J,m}_n(z_n^*)}} \\ &= e^{2 \beta J {x^*}^2 + 4 \beta \left< m, z^* \right>} \int_{\mathbb{R}^2} dz \ e^{\frac{1}{2} \left< z, H[\psi](z^*) z \right>} .
\end{align*}
Now, including the exponentially decreasing integral over $B \setminus B_{n, \delta}$, it follows that
\begin{align*}
&\lim_{n \to \infty}  \frac{(n-4) \int_{B} dz \ e^{2 \beta J x^2 + 4 \beta \left< m_n^h, z \right>} e^{(n - 4)\psi^{\beta,J,m}_n(z)}}{e^{(n-4) \psi^{\beta,J,m}_n(z_n^*)}} \\ &= e^{2 \beta J {x^*}^2 + 4 \beta \left< m, z^* \right>} \int_{\mathbb{R}^2} dz \ e^{\frac{1}{2} \left< z, H[\psi](z^*) z \right>} .
\end{align*}
By plugging this formula into the original desired limit, and cancelling out terms, the result follows.
\end{proof}
\noindent
The following result concerns the asymptotics of the difference $\psi_n^{\beta, J, h} (z_n^\pm) - \psi^{\beta,J,m} (z^\pm)$.
\begin{lemma} \label{diff_eq2} Let $h$ be a strongly varying external field. 
\\
\\
It follows that there exists two sequence of matrices $\{ Q_n \}_{n \in \mathbb{N}}$ and $\{ H_n \}_{n \in \mathbb{N}}$ such that $Q_n \to  H[\psi^{\beta,J,m}] (z^*)$  and $H_n \to H[\psi^{\beta,J,m}] (z^*)$ as $n \to \infty$ and we have
\begin{align*}
\psi^{\beta,J,m}_n(z_n^*) - \psi^{\beta,J,m}(z^*) &= \beta \left< m_n^h - m, z^* \right> - \frac{\beta^2}{2} \left< m_n^h - m, Q_n^{-1} (m_n^h - m) \right> \\ &+ \sum_{|\alpha| = 3} R_\alpha (-\beta H_n^{-1} (m_n^h - m)) (H_n^{-1} (m_n^h - m))^\alpha
\end{align*}
for large enough $n \in \mathbb{N}$, where $R_\alpha$ are functions which are continuous at $0$.
\end{lemma}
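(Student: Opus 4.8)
The plan is to start from the exact decomposition $\psi_n^{\beta,J,h}(z) = \psi^{\beta,J,m}(z) + \beta\left< m_n^h - m, z\right>$, valid for every $z \in B(0,1)$ by inspection of \cref{tilting} and \cref{limiting_tilting}. Writing $z_n^* = z^* + (z_n^* - z^*)$ in the linear term, this gives
\begin{align*}
\psi_n^{\beta,J,m}(z_n^*) - \psi^{\beta,J,m}(z^*) = \left[\psi^{\beta,J,m}(z_n^*) - \psi^{\beta,J,m}(z^*)\right] + \beta\left< m_n^h - m, z^*\right> + \beta\left< m_n^h - m, z_n^* - z^*\right>,
\end{align*}
so it remains to expand the bracketed difference and to trade $z_n^* - z^*$ for $m_n^h - m$. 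For the bracket I would apply the multivariate Taylor formula with integral remainder (as in \cite[chapter 8]{Folland1999}) to $\psi^{\beta,J,m}$ around $z^*$; since $z^*$ is a critical point of $\psi^{\beta,J,m}$ the gradient term drops out, leaving $\tfrac12\left< z_n^* - z^*, H[\psi^{\beta,J,m}](z^*)(z_n^* - z^*)\right>$ together with a degree-three remainder $\sum_{|\alpha|=3}\widetilde R_\alpha(z_n^* - z^*)(z_n^* - z^*)^\alpha$ whose coefficient functions $\widetilde R_\alpha$ are integrals of $\partial^\alpha\psi^{\beta,J,m}$ along the segment $[z^*, z_n^*]$; these are continuous at $0$ because $\psi^{\beta,J,m}$ is smooth in a neighbourhood of $z^* \in B(0,1)$ and $z_n^* \to z^*$.

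Next I would invoke \cref{diff_eq}: for $n$ large, $z_n^* - z^* = -\beta H_n^{-1}(m_n^h - m)$ with $H_n$ symmetric, invertible, and $H_n \to H[\psi^{\beta,J,m}](z^*)$. Substituting this into the linear cross term gives $-\beta^2\left< m_n^h - m, H_n^{-1}(m_n^h - m)\right>$, and into the Hessian term gives $\tfrac{\beta^2}{2}\left< m_n^h - m, H_n^{-1}H[\psi^{\beta,J,m}](z^*)H_n^{-1}(m_n^h - m)\right>$, using symmetry of $H_n^{-1}$ and of the Hessian. Summing the two produces $-\tfrac{\beta^2}{2}\left< m_n^h - m, \left(2H_n^{-1} - H_n^{-1}H[\psi^{\beta,J,m}](z^*)H_n^{-1}\right)(m_n^h - m)\right>$. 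Writing $H := H[\psi^{\beta,J,m}](z^*)$, the middle matrix is symmetric and converges to $2H^{-1} - H^{-1}HH^{-1} = H^{-1}$; since $H$ is negative definite, $H^{-1}$ is negative definite, hence so is $2H_n^{-1} - H_n^{-1}HH_n^{-1}$ for all large $n$, in particular invertible. I would then set $Q_n := \left(2H_n^{-1} - H_n^{-1}HH_n^{-1}\right)^{-1}$ for such $n$, so that $Q_n \to H[\psi^{\beta,J,m}](z^*)$ and the combined quadratic term equals exactly $-\tfrac{\beta^2}{2}\left< m_n^h - m, Q_n^{-1}(m_n^h - m)\right>$.

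Finally, for the degree-three remainder I would again substitute $z_n^* - z^* = -\beta H_n^{-1}(m_n^h - m)$ and factor the scalar $(-\beta)^3$ out of $(z_n^* - z^*)^\alpha$, absorbing it into the coefficient by putting $R_\alpha := -\beta^3\widetilde R_\alpha$; this leaves the term $\sum_{|\alpha|=3} R_\alpha\!\left(-\beta H_n^{-1}(m_n^h - m)\right)\left(H_n^{-1}(m_n^h - m)\right)^\alpha$ with $R_\alpha$ still continuous at $0$, which together with the pieces above yields the claimed identity. I do not anticipate a serious obstacle: the one step needing genuine care is the merging of the two distinct quadratic contributions — one from the second-order Taylor term of $\psi^{\beta,J,m}$ and one from the linear term $\beta\left< m_n^h - m, z_n^* - z^*\right>$ — into a single quadratic form with an invertible matrix $Q_n^{-1}$ converging to $H[\psi^{\beta,J,m}](z^*)^{-1}$, where the symmetry of $H_n$ from \cref{diff_eq} and the assumed negative definiteness of $H[\psi^{\beta,J,m}](z^*)$ are both essential.
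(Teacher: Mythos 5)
Your proposal is correct and follows essentially the same route as the paper's proof: the identity $\psi_n^{\beta,J,h}(z)=\psi^{\beta,J,m}(z)+\beta\langle m_n^h-m,z\rangle$, the second-order Taylor expansion of $\psi^{\beta,J,m}$ at the critical point $z^*$ with integral remainder, substitution of $z_n^*-z^*=-\beta H_n^{-1}(m_n^h-m)$ from \cref{diff_eq}, and the definition $Q_n:=(2H_n^{-1}-H_n^{-1}H[\psi^{\beta,J,m}](z^*)H_n^{-1})^{-1}$. Your explicit absorption of the scalar $(-\beta)^3$ into the third-order coefficients is a small bookkeeping point the paper glosses over, but it changes nothing of substance.
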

\begin{proof}
We begin with the simple observation
\begin{align*}
\psi_n(z_n^*) - \psi(z^*) = \psi_n(z_n^*) - \psi(z_n^*) 
+ \psi(z_n^*) - \psi(z^*) .
\end{align*}
The individual differences can be evaluated separately. For the first difference, we have
\begin{align*}
\psi_n(z_n^*) - \psi(z_n^*) &= \beta \left< m_n - m, z^* \right> + \beta \left< m_n - m, z_n^* - z^* \right> 
\end{align*}
For the second, we have
\begin{align*}
\psi(z_n^*) - \psi(z^*) = \frac{1}{2} \left< z_n^* - z^*, H[\psi](z^*) (z_n^* - z^*) \right> + \sum_{|\alpha| = 3} R_\alpha (z_n^* - z^*) (z_n^* - z^*)^\alpha,
\end{align*}
where
\begin{align*}
R_\alpha (z_n^* - z^*) = \frac{|\alpha|}{\alpha!} \int_0^1 dt \ (1 - t)^{|\alpha| - 1} (\partial^{\alpha} \psi) (z^* + t (z_n^* - z^*)) . 
\end{align*}
Combining the differences, we see that
\begin{align*}
\psi_n(z_n^*) - \psi(z^*) &= \beta \left< m_n - m, z^* \right> + \beta \left< m_n - m,  z_n - z^* \right> + \frac{1}{2} \left< z_n^* - z^*, H[\psi] (z^*) (z_n - z^*) \right>  \\ &+ \sum_{|\alpha| = 3} R_\alpha (z_n^* - z^*) (z_n^* - z^*)^\alpha .
\end{align*}
Now, we apply \Cref{diff_eq} to convert the $z_n - z^*$ terms to $m_n - m^*$ terms to get
\begin{align*}
\psi_n(z_n^*) - \psi(z^*) &= \beta \left< m_n - m, z^* \right> - \frac{\beta^2}{2} \left< m_n - m, Q_n^{-1} (m_n - m) \right> \\ &+ \sum_{|\alpha| = 3} R_\alpha (-\beta H_n^{-1} (m_n - m)) (H_n^{-1} (m_n - m))^\alpha
\end{align*}
where
\begin{align*}
Q_n := (2 H_n^{-1} - H_n^{-1} H[\psi] (z^*) H_n^{-1})^{-1} ,
\end{align*}
from which the result follows.
\end{proof}
\noindent
The following result characterizes the asymptotics of the weight $W_n^{\beta, J, h, +}$ via the rate of convergence of $m_n^h - m$.
\begin{proof}[Proof of \Cref{weight_asymp}]
By applying \Cref{laplace_asymp} to the form of the weight $W_n^{+}$ given in \Cref{weight_rep}, there exists of a sequence $\{ a_n \}_{n \in \mathbb{N}}$ such that $a_n \to 1$ and
\begin{align*}
W_n^{+} = \frac{1}{1 + a_n e^{n (\psi_n (z^-_n) - \psi_n (z_n^+))}} .
\end{align*}
Suppose that the asymptotic relation $\lim_{n \to \infty} n^\delta (m_n - m) \to \gamma \in \mathbb{R}^2$ holds for some $\delta > 0$. Using the asymptotics from \Cref{diff_eq2} applied the differences in the exponential of the previous limit, we have
\begin{align*}
\lim_{n \to \infty} n^\delta (\psi_n (z_n^+) - \psi (z^+) - (\psi_n(z_n^-) - \psi (z^-))) = - 2 \beta x^+ \gamma^\parallel .
\end{align*}
Now, by considering the specific $\delta$ and $\gamma$ given in the enumerated assumptions, the result follows from the following limit
\begin{align*}
\lim_{n \to \infty} W_n^+ = \frac{1}{1 + \lim_{n \to \infty} e^{n^{1 - \delta} n^\delta (\psi_n (z_n^+) - \psi (z^+) - (\psi_n(z_n^-) - \psi (z^-)))}} .
\end{align*}
\end{proof}
\subsection{Random external fields, random walks, and chaotic size dependence} \label{csd_sec}
\noindent
The following two results concern the measurability of the map $\omega \in \Omega \mapsto h \mapsto \mu_n^{\beta,J,h}$.
\begin{lemma} \label{meas}
The mapping $h \mapsto \mu_n^{\beta, J, h}$ is continuous.
\end{lemma}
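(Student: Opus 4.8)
Both the domain $\mathbb{R}^\mathbb{N}$ (with the product topology) and the codomain $\mathcal{M}_1(\mathbb{R}^\mathbb{N})$ (with the Prokhorov metric $d$) are metrizable, so it suffices to prove \emph{sequential} continuity: I fix $h \in \mathbb{R}^\mathbb{N}$ and a sequence $h^{(k)} \to h$ in $\mathbb{R}^\mathbb{N}$, and show $d(\mu_n^{\beta,J,h^{(k)}},\mu_n^{\beta,J,h}) \to 0$ as $k \to \infty$. Since the local bounded Lipschitz functions $\operatorname{LBL}(\mathbb{R}^\mathbb{N})$ are convergence determining on $\mathcal{M}_1(\mathbb{R}^\mathbb{N})$, this reduces to showing $\mu_n^{\beta,J,h^{(k)}}[f] \to \mu_n^{\beta,J,h}[f]$ for every $f \in \operatorname{LBL}(\mathbb{R}^\mathbb{N})$.

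Fix such an $f$ with local index set $I$. By the $0$-tensoring convention, $\mu_n^{\beta,J,h}[f]$ depends on $f$ only through its restriction to the coordinates in $I \cap \{1,\dots,n\}$, so I may assume $I \subseteq \{1,\dots,n\}$; moreover, inspecting \cref{fvgs2} and \cref{partition1} shows that $\mu_n^{\beta,J,h}[f]$ depends on $h$ only through $(h_1,\dots,h_n)$, and $h_i^{(k)} \to h_i$ for each $i \le n$. Writing $\mu_n^{\beta,J,h}[f]$ as the ratio of the integral in \cref{fvgs2} to $Z_n(\beta,J,h)$ from \cref{partition1}, I note that the two integrands
\[
\Omega \mapsto e^{\frac{\beta J}{2}\left(\sum_{i=1}^n\Omega_i\right)^2 + \beta\sqrt{n}\sum_{i=1}^n h_i^{(k)}\Omega_i} f(\sqrt{n}\Omega),
\qquad
\Omega \mapsto e^{\frac{\beta J}{2}\left(\sum_{i=1}^n\Omega_i\right)^2 + \beta\sqrt{n}\sum_{i=1}^n h_i^{(k)}\Omega_i}
\]
converge pointwise on the compact sphere $\mathbb{S}^{n-1}$ to the corresponding integrands built from $h$, and are uniformly bounded in $k$ and $\Omega$ (because $\|\Omega\|=1$, $f$ is bounded, and $(h^{(k)})_{k}$ is a convergent, hence bounded, sequence). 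Dominated convergence then gives convergence of both integrals; since $Z_n(\beta,J,h) \in (0,\infty)$ (the integrand in \cref{partition1} is continuous and strictly positive on the compact sphere), the denominators $Z_n(\beta,J,h^{(k)})$ are eventually bounded below by $\tfrac12 Z_n(\beta,J,h)>0$, and the ratios converge. This proves $\mu_n^{\beta,J,h^{(k)}}[f] \to \mu_n^{\beta,J,h}[f]$, hence the claim. (For the small values $n \le 2$, where the hyperspherical decomposition underlying \cref{fvgs2} is vacuous, one argues identically starting from the representation \cref{fvgs}: $\mu_n^{\beta,J,h}[f]$ is again a ratio of integrals, against $\omega_n$, of continuous integrands that are uniformly bounded on the compact support of $\omega_n$.)

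I do not anticipate a genuine obstacle here: the argument is a routine dominated-convergence computation, and the only points needing a little care are the reduction through the $0$-tensoring convention and the appeal to $\operatorname{LBL}(\mathbb{R}^\mathbb{N})$ being convergence determining, both already available in the excerpt. As an immediate consequence, $h \mapsto \mu_n^{\beta,J,h}$ is Borel measurable, so $\omega \mapsto h(\omega) \mapsto \mu_n^{\beta,J,h(\omega)}$ is measurable as well.
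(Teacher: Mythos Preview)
Your proposal is correct and follows essentially the same approach as the paper: reduce to sequential continuity via metrizability, test against a convergence-determining family of local functions, and pass to the limit under the integral in \cref{fvgs2} by dominated convergence on the compact sphere. You simply spell out the dominating bound and the reduction through the $0$-tensoring more explicitly than the paper, which dispatches the whole argument in one sentence.
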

\begin{proof}
Since the topologies on both spaces are given by metrics and the local continuous bounded functions on $\mathbb{R}^\mathbb{N}$ are convergence determining, it is enough to show that if $\{ h_k \}_{k \in \mathbb{N}}$ is a sequence of elements in $\mathbb{R}^\mathbb{N}$ such that $h_k \to h$ then we have $\mu_n^{\beta, J, h_k} [f] \to \mu_n^{\beta, J, h} [f]$ as $k \to \infty$ for any local continuous bounded function $f \in C_b(\mathbb{R}^\mathbb{N})$. This follows by using dominated convergence to change the order of the limit and integration in the definition of $\mu_n^{\beta,J,h}$ given in \Cref{fvgs2}.
\end{proof}
\begin{lemma} \label{cont}
Suppose that $h : \Omega \to \mathbb{R}^\mathbb{N}$ is measurable.
\\
\\
It follows that $\omega \mapsto h \mapsto \mu_n^{\beta, J, h} : \Omega \to \mathcal{M}_1(\mathbb{R}^\mathbb{N})$ is measurable. 
\end{lemma}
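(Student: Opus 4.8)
The plan is to realize the stated map as a composition of two measurable maps and invoke the elementary fact that a continuous map between metrizable spaces is Borel measurable. Write $\Phi_n \colon \mathbb{R}^\mathbb{N} \to \mathcal{M}_1(\mathbb{R}^\mathbb{N})$ for the map $h \mapsto \mu_n^{\beta,J,h}$, where $\mathbb{R}^\mathbb{N}$ carries the product topology (Polish, metrizable) and $\mathcal{M}_1(\mathbb{R}^\mathbb{N})$ carries the topology of weak convergence, which by \cite[chapter 3]{Ethier1986} is Polish and metrized by the Prokhorov metric $d$; we equip the latter with its Borel $\sigma$-algebra $\mathcal{B}(\mathcal{M}_1(\mathbb{R}^\mathbb{N}))$. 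The assertion is that $\Phi_n \circ h \colon \Omega \to \mathcal{M}_1(\mathbb{R}^\mathbb{N})$ is $\mathcal{F}/\mathcal{B}(\mathcal{M}_1(\mathbb{R}^\mathbb{N}))$-measurable.

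First I would record that $\Phi_n$ is continuous: this is exactly \cref{meas}, which says that $h_k \to h$ in $\mathbb{R}^\mathbb{N}$ implies $\mu_n^{\beta,J,h_k}[f] \to \mu_n^{\beta,J,h}[f]$ for every $f \in \operatorname{LBL}(\mathbb{R}^\mathbb{N})$, and since $\operatorname{LBL}(\mathbb{R}^\mathbb{N})$ is convergence determining on $\mathcal{M}_1(\mathbb{R}^\mathbb{N})$, this is precisely $d(\Phi_n(h_k), \Phi_n(h)) \to 0$. Sequential continuity is enough here because both spaces are metrizable. Next, a continuous map between topological spaces is Borel measurable, since the preimage of an open set is open and open sets generate $\mathcal{B}(\mathcal{M}_1(\mathbb{R}^\mathbb{N}))$; hence $\Phi_n$ is $\mathcal{B}(\mathbb{R}^\mathbb{N})/\mathcal{B}(\mathcal{M}_1(\mathbb{R}^\mathbb{N}))$-measurable. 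Finally, $h$ is $\mathcal{F}/\mathcal{B}(\mathbb{R}^\mathbb{N})$-measurable by hypothesis, and the composition of measurable maps is measurable, so $\Phi_n \circ h$ is $\mathcal{F}/\mathcal{B}(\mathcal{M}_1(\mathbb{R}^\mathbb{N}))$-measurable, as claimed.

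There is essentially no real obstacle in this argument; it is a bookkeeping lemma. The only point that genuinely warrants a sentence of care is consistency of $\sigma$-algebras: one should note that the $\sigma$-algebra with respect to which "random probability measure" is understood throughout the paper is the Borel $\sigma$-algebra generated by the Prokhorov topology, which coincides with the $\sigma$-algebra generated by the evaluation maps $\mu \mapsto \mu[f]$ for $f \in C_b(\mathbb{R}^\mathbb{N})$ (see \cite[chapter 4]{Kallenberg2017}), and that \cref{meas} is stated relative to this same topology since it is phrased in terms of weak convergence. With this identification in place, the composition argument above closes the proof.
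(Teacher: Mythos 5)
Your argument is correct and is essentially identical to the paper's proof: invoke \cref{meas} for continuity of $h \mapsto \mu_n^{\beta,J,h}$, note that continuity implies Borel measurability, and conclude by composition with the measurable map $h$. The additional remark on the equivalence of the Borel $\sigma$-algebra of the Prokhorov topology with the evaluation $\sigma$-algebra is fine but not needed.
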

\begin{proof}
By \Cref{meas}, the mapping $h \mapsto \mu_n^{\beta, J, h}$ is continuous and thus measurable. The result follows since the composition of measurable functions is measurable.
\end{proof}
\noindent
The following lemma relates the limiting properties of two-dimensional random walks with step length $(h_0, h_0^2)$ to the limiting properties of the sequence $\{ m_n^h \}_{n \in \mathbb{N}}$.
\begin{lemma} \label{limit_theorems}  Let $h$ be a random external field which satisfies (A1) and (A2).
\\
\\
The sequence of random variables $\{ m_n^h \}_{n \in \mathbb{N}}$ has the following limit properties:
\begin{enumerate}
\item A strong law of large numbers of the form
\begin{align*}
\lim_{n \to \infty} m_n^h \to m = (\mathbb{E} h_0, \sqrt{\mathbb{V} h_0})
\end{align*}
almost surely.
\item A central limit theorem of the form
\begin{align*}
\lim_{n \to \infty}  (h, \sqrt{n} (m_n^h - m)) = (h,G) 
\end{align*}
weakly, where $G$ is a non-degenerate $2$-dimensional Gaussian random variable independent of $h$.
\item 
A recurrence result of the form
\begin{align*}
\left\{ \left( p_1, \frac{p_2}{2 m^\perp} - \frac{p_1 m^\parallel}{m^\perp} \right) : p \in P\right\} \subset L \left( \left\{ n(m_n^h - m) \right\}_{n=1}^\infty \right),
\end{align*}
almost surely, where $P$ is the set of possible values of the random walk with step length $(h_0 - \mathbb{E}h_0, h_0^2 - \mathbb{E} h_0^2)$
\end{enumerate}
\end{lemma}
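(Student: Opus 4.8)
The plan is to reduce all three statements to classical limit theorems for the two-dimensional i.i.d. sum $S_n = \sum_{i=1}^n (h_i, h_i^2)$, using the identities $m_n^{h,\parallel} = (S_n)_1/n$ and $m_n^{h,\perp} = \sqrt{(S_n)_2/n - ((S_n)_1/n)^2}$. Put $g(a,b) := (a, \sqrt{b - a^2})$, which is $C^\infty$ on the open set $U := \{(a,b) : b > a^2\}$, so that $m_n^h = g\big((S_n)_1/n, (S_n)_2/n\big)$ and $m = g(\mathbb{E} h_0, \mathbb{E} h_0^2)$, the latter lying in $U$ precisely because $\mathbb{V} h_0 > 0$. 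For statement (1), assumption (A1) gives $\mathbb{E}|h_0| \le 1 + \mathbb{E} h_0^2 < \infty$, so Kolmogorov's strong law yields $\big((S_n)_1/n, (S_n)_2/n\big) \to (\mathbb{E} h_0, \mathbb{E} h_0^2)$ almost surely, and continuity of $g$ at that interior point gives $m_n^h \to g(\mathbb{E} h_0, \mathbb{E} h_0^2) = m$ almost surely.

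For statement (2), assumption (A2) gives $\mathbb{E} h_0^4 < \infty$, so the step $(h_0, h_0^2)$ has a finite covariance matrix $\Sigma$ and the multivariate central limit theorem gives $\tfrac{1}{\sqrt n}\big(S_n - n(\mathbb{E} h_0, \mathbb{E} h_0^2)\big) \Rightarrow \mathcal{N}(0, \Sigma)$. Applying the multivariate delta method (see \cite{Pinelis2016}) with $g$, whose Jacobian $Dg(\mathbb{E} h_0, \mathbb{E} h_0^2)$ has determinant $1/(2 m^\perp) \ne 0$, one obtains $\sqrt n\,(m_n^h - m) \Rightarrow G$ with $G \sim \mathcal{N}\big(0, Dg\,\Sigma\,(Dg)^{\mathsf T}\big)$, non-degenerate under the standing assumptions. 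To upgrade this to the joint statement $(h, \sqrt n(m_n^h - m)) \Rightarrow (h, G)$ with $G$ independent of $h$, I would use that the joint laws are tight (the first marginal is fixed, the second converges) and that products $f(h)\,\varphi\big(\sqrt n(m_n^h - m)\big)$ with $f$ local bounded continuous and $\varphi \in C_b(\mathbb{R}^2)$ are convergence-determining on $\mathbb{R}^\mathbb{N} \times \mathbb{R}^2$; it then suffices to show $\mathbb{E}\big[f(h)\,\varphi(\sqrt n(m_n^h - m))\big] \to \mathbb{E}[f(h)]\,\mathbb{E}[\varphi(G)]$. Fixing the finite set $I$ on which $f$ depends, the delta method shows that $\sqrt n(m_n^h - m)$ equals, up to a term tending to zero in probability, a function of $(h_i)_{i \notin I}$ alone that still converges weakly to $G$, namely the image under $Dg(\mathbb{E} h_0, \mathbb{E} h_0^2)$ of the central-limit fluctuation of the empirical mean of $(h_i, h_i^2)$ over $i \notin I$; since such a function is independent of $(h_i)_{i \in I}$, the expectation factors in the limit.

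For statement (3), assumption (A2) makes $S_n' = \sum_{i=1}^n (h_i - \mathbb{E} h_0, h_i^2 - \mathbb{E} h_0^2)$ a centered planar random walk with finite second moments, hence recurrent by the Chung--Fuchs criterion (\cite[chapter 5]{Durrett2019}), so its set of recurrent values equals the closed set $P$ of possible values. Fixing a countable dense $D \subset P$, I work on the almost sure event on which every point of $D$ is visited infinitely often by $\{S_n'\}$ (a countable intersection of probability-one events) and, by (1), $m_n^{h,\parallel} \to m^\parallel$ and $m_n^{h,\perp} \to m^\perp$: the set of limit points of $\{S_n'\}$ is closed and contains $D$, hence contains $\overline{D} = P$. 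Thus for each $p \in P$ there is a subsequence $n_k$ with $S_{n_k}' \to p$, and writing $n(m_n^h - m)$ in terms of $S_n'$ and $m_n^h$ (cf. \cref{walk_rep}) and using $m_{n_k}^{h,\parallel} \to m^\parallel$, $m_{n_k}^{h,\perp} \to m^\perp$ gives $n_k(m_{n_k}^h - m) \to \big(p_1, \tfrac{p_2}{2 m^\perp} - \tfrac{m^\parallel p_1}{m^\perp}\big)$, a limit point of $\{n(m_n^h - m)\}$; since the underlying event does not depend on $p$, the inclusion holds almost surely.

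I expect the main obstacle to be the independence assertion in (2): the fluctuation $\sqrt n(m_n^h - m)$ genuinely depends on all of $h_1, \dots, h_n$, so independence from $h$ emerges only in the limit and must be extracted through the ``discard finitely many coordinates'' argument above, with care that the chosen product test functions are convergence-determining on the product Polish space. The remaining ingredients --- Kolmogorov's strong law, the multivariate central limit theorem and delta method, and Chung--Fuchs recurrence --- enter essentially as black boxes.
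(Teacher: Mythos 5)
Your proposal is correct and follows essentially the same route as the paper: reduce everything to the planar i.i.d. sum $S_n=\sum_i(h_i,h_i^2)$, apply the strong law, the multivariate CLT plus delta method with $g(a,b)=(a,\sqrt{b-a^2})$, and the recurrence of the centered walk together with a countable dense subset of $P$ and the algebraic identity for $n(m_n^h-m)$. The only difference is that you supply a ``discard finitely many coordinates'' argument for the asymptotic independence of $G$ from $h$ in statement (2), a point the paper simply absorbs into its invocation of the central limit theorem; your extra care there is sound and does no harm.
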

\begin{proof}
Denote the random walk $\{ S_n \}_{n=1}^\infty$ with step length $(h_0, h_0^2)$ as a sequence of $\mathbb{R}^2$-valued random variables by setting
\begin{align*}
\left( S_n \right)_1 := \sum_{i=1}^n h_i, \ \left( S_n \right)_2 :=  \sum_{i=1}^n h_i^2 .
\end{align*}
The mean $\mathbb{E} S_n$ of $S_n$ is given by $\mathbb{E} S_n = n \left( \mathbb{E} h_0, \mathbb{E} h_0^2 \right)$. It follows that
\begin{align*}
m_n^\parallel =  \left( \frac{S_n}{n} \right)_1, \ m_n^\perp = \sqrt{\left( \frac{S_n}{n} \right)_2 - \left( \left( \frac{S_n}{n} \right)_1 \right)^2} .
\end{align*}
By the the strong law of large numbers, it follows that
\begin{align*}
\lim_{n \to \infty} m_n = \left( \mathbb{E} h_0, \sqrt{\mathbb{V} (h_0)}\right)
\end{align*}
almost surely. We can thus set $m^\parallel := \mathbb{E} h_0$ and $m^\perp := \sqrt{\mathbb{V}h_0}$. 
\\
\\
By the central limit theorem, it follows that
\begin{align*}
\lim_{n \to \infty} \left( h, \sqrt{n} \left( \frac{S_n}{n}  - (\mathbb{E} h_0, \mathbb{E} h_0^2) \right) \right):= (h, G_0)
\end{align*}
in distribution, where $G_0$ is a $2$-dimensional Gaussian random variable with mean $0$ and covariance matrix $\Sigma_0$ independent of $h$ given by
\begin{align*}
\Sigma_0 := \begin{bmatrix}
\mathbb{V} h_0 & \mathbb{E} h_0^3 - \mathbb{E} h_0 \mathbb{E} h_0^2 \\
\mathbb{E} h_0^3 - \mathbb{E} h_0 \mathbb{E} h_0^2  & \mathbb{V} h_0^2
\end{bmatrix} .
\end{align*}
Let $D := \{ (x,y) \in \mathbb{R}^2 : y - x^2 > 0 \}$. Define a function $f : D \to \mathbb{R}^2$ by setting $f(x,y) := (x, \sqrt{y - x^2})$. Observe that $f\left( \frac{S_n}{n} \right) = m_n$. By the multivariate delta method, it follows that
\begin{align*}
\lim_{n \to \infty} \left(h, \sqrt{n} (m_n - m) \right) &= \lim_{n \to \infty} \left(h, \sqrt{n} \left( f \left( \frac{S_n}{n} \right) - f(\mathbb{E} h_0, \mathbb{E} h_0^2) \right) \right) \\ &= \left( h, D[f](\mathbb{E} h_0, \mathbb{E} h_0^2) G_0 \right)
\end{align*}
weakly where $G_0$ is again independent of $h$. Let us then denote $G := D[f](\mathbb{E} h_0, \mathbb{E} h_0^2) G_0$ and note that $G$ is a $2$-dimensional Gaussian random variable with mean $0$ and covariance matrix $\Sigma$ defined by 
\begin{align*}
\Sigma := D[f](\mathbb{E} h_0, \mathbb{E} h_0^2) \Sigma_0 D[f](\mathbb{E} h_0, \mathbb{E} h_0^2)^T
\end{align*} 
independent of $h$. 
\\
\\
Denote the centred random walk $\{ S_n' \}_{n=1}^\infty$ defined by $S_n' := S_n - \mathbb{E} S_n$. A standard theorem of random walks, found for instance in \cite[Chapter 5]{Durrett2019}, states that if a $2$-dimensional scaled random walk $\{ \frac{S_n'}{\sqrt{n}} \}_{n=1}^\infty$ converges in distribution to a non-degenerate $2$-dimensional Gaussian random variable, then the random walk $\{ S_n' \}_{n=1}^\infty$ is recurrent. Thus the centred random walk $\{ S_n'\}_{n=1}^\infty$ is recurrent and we will denote its set of recurrent values by $P$. Note also that by the same standard theorem, the set $P$ is closed.
\\
\\
Since $P$ is a closed subset a separable space $\mathbb{R}^2$ it follows that there exists a sequence $\{ q_i \}_{i=1}^\infty$ of elements in $P$ which is dense in $P$. For $i,j \in \mathbb{N}$, define the set $\Omega_{i,j}$ by
\begin{align*}
\Omega_{i,j} := \left\{ || S'_{n} - q_i || < \frac{1}{j}  \text{ infinitely often}\right\} .
\end{align*}
By the definition of recurrence, we have $\mathbb{P}(\Omega_{i,j}) = 1$ for any $i,j \in \mathbb{N}$. It follows that the set $\Omega' \subset \Omega$ defined by
\begin{align*}
\Omega' := \bigcap_{i,j \in \mathbb{N}} \Omega_{i,j}
\end{align*}
satisfies $\mathbb{P}(\Omega') = 1$. Choose any realization of the random walk $\{ S_n \}_{n = 1}^\infty$ from the set $\Omega'$. For this realization, let $p \in P$ be any recurrent value. It follows that there exists a subsequence $\{ q_{i_k} \}_{k=1}^\infty$ such that $q_{i_k} \to p$. Using the sets $\Omega_{i_{k}, k}$, construct a subsequence $\{ n_k \}_{k=1}^\infty$ such that $|| S'_{n_k} - q_{i_k} || < \frac{1}{k}$. For such a subsequence, it is clear that $\lim_{k \to \infty} S'_{n_k} = p$. Such a construction is possible for any recurrent value $p$, and thus we have shown that
\begin{align*}
P \subset L \left( \{ S_n' \}_{n=1}^\infty \right)
\end{align*}
almost surely.
\\
\\
Returning now to the random variable $m_n - m$, let $\{ n_k \}_{k = 1}^\infty$ be a subsequence such that $S'_{n_k} \to p$. It follows that
\begin{align*}
n_k(m^\parallel_{n_k} - m^\parallel) = \left( S'_{n_k} \right)_1  
\end{align*}
and
\begin{align*}
n_k (m^\perp_{n_k} - m^\perp) = \frac{\left( S'_{n_k}\right)_2}{m^\perp_{n_k} + m^\perp} - \frac{\left( S'_{n_k} \right)_1 \left(m^\parallel_{n_k} + m^\parallel \right)}{m^\perp_{n_k} + m^\perp} .
\end{align*}
It follows that
\begin{align*}
\lim_{k \to \infty} n_k (m_{n_k} - m) = \left( p_1, \frac{p_2}{2 m^\perp} - \frac{p_1 m^\parallel}{m^\perp} \right) .
\end{align*}
Combining this with the previous result, we have 
\begin{align*}
\left\{ \left( p_1, \frac{p_2}{2 m^\perp} - \frac{p_1 m^\parallel}{m^\perp} \right) : p \in P\right\} \subset L \left( \left\{ n(m_n - m) \right\}_{n=1}^\infty \right)
\end{align*}
almost surely.
\end{proof}
\noindent
Using the strong law of large numbers for the sequence of random variables $\{ m_n^h \}_{n \in \mathbb{N}}$, we have the following self-averaging result for the limiting free energy.
\begin{proof}[Proof of \Cref{self_average}]
By reusing the proof of \Cref{main_result1}, and the strong law of large numbers from \Cref{limit_theorems}, it follows that
\begin{align*}
\lim_{n \to \infty} \frac{1}{n} \ln Z_n (\beta,J,h) = \sup_{z \in B(0,1)} \psi^{\beta,J,m} (z)
\end{align*}
almost surely. It is enough to prove that the sequence of random variables $\{ \frac{1}{n} \ln Z_n (\beta,J,h) \}_{n \in \mathbb{N}}$ is uniformly integrable. To that end, we immediately have the following inequality
\begin{align*}
\left| \frac{1}{n} \ln Z_n (\beta,J,h) - \frac{1}{n} \ln Z_n (\beta,J) \right| \leq \beta || m_n || ,
\end{align*}
where
\begin{align*}
Z_n (\beta,J) := \int_{B(0,1)} \frac{dz}{(1 - || z ||^2)^2} e^{n \left( \frac{\beta J x^2}{2} - \frac{1}{2} \ln (1 - || z ||^2) \right)} .
\end{align*}
We have
\begin{align*}
\sqrt{\mathbb{E} \left|  \frac{1}{n} \ln Z_n (\beta,J,h) \right|^2} \leq \frac{1}{n} \ln Z_n (\beta,J) + \beta \mathbb{E} h_0^2,
\end{align*}
since
\begin{align*}
\mathbb{E} m_n^2 = \mathbb{E} \frac{1}{n} \sum_{i=1}^n h_i^2  = \mathbb{E} h_0^2 .
\end{align*}
By using the Laplace method, it is clear that the sequence $\{ \frac{1}{n} \ln Z_n (\beta,J)\}_{n \in \mathbb{N}}$ is convergent and thus bounded. Combining all of these observations together, we find that the square moments of the sequence $\{ \frac{1}{n} \ln Z_n (\beta,J,h) \}_{n \in \mathbb{N}}$ are uniformly bounded and thus the sequence is uniformly integrable and the result follows.
\end{proof}
\noindent
Recall the phase characterization presented in \Cref{random_table} for the random external field. Using the first and the third properties from \Cref{limit_theorems}, we present the chaotic size dependence of the FVGS.
\begin{proof}[Proof of \Cref{main_result3}]
By \Cref{limit_theorems}, it follows that $m_n \to m = (\mathbb{E} h_0, \sqrt{\mathbb{V} h_0})$ almost surely. By applying \Cref{main_result1} to each realization in this set of probability $1$, all of the results follow except for the opposite inclusion of the limit points of $\{ \mu_n \}_{n \in \mathbb{N}}$ and the convex combinations of $\nu_\infty^{z^\pm}$.
\\
\\
For this result, by using the recurrence result from \Cref{limit_theorems}, for any $p \in P$, there exists a subsequence $\{ n_k \}_{k \in \mathbb{N}}$ such that
\begin{align*}
n_k(m_{n_k} - m) \to\left( p_1, \frac{p_2}{2 m^\perp} - \frac{p_1 m^\parallel}{m^\perp} \right) 
\end{align*} 
almost surely. It is not difficult to see that the collection of results leading to and including \Cref{conv_class} also hold for subsequences. The conditions of this subsequence version of \Cref{conv_class} are satisfied with $\delta = 1$ and $\gamma = \left( p_1, \frac{p_2}{2 m^\perp} - \frac{p_1 m^\parallel}{m^\perp} \right)$ which implies that
\begin{align*}
\mu_{n_k} \to \frac{1}{1 + e^{- 2 \beta x^+ p_1}} \nu_\infty^{z^+} + \frac{1}{1 + e^{ 2 \beta x^+ p_1}} \nu_\infty^{z^-} 
\end{align*}
almost surely. Since this holds for any $p \in P$, it follows that
\begin{align*}
\left\{ \frac{1}{1 + e^{- 2 \beta x^+ p_1}} \nu_\infty^{z^+} + \frac{1}{1 + e^{ 2 \beta x^+ p_1}} \nu_\infty^{z^-} \right\}_{p_1 \in \pi_1 (P)}  &\subset \mathcal{G}_\infty (\beta,J,h)
\end{align*}
almost surely. If $\pi_1 (P) = \mathbb{R}$, then for any $\alpha \in (0,1)$ there exists $p_1$ such that $\alpha = \frac{1}{1 + e^{- 2 \beta x^+ p_1}}$. It follows that
\begin{align*}
\left\{ \alpha \nu_\infty^{z^+} + (1 - \alpha) \nu_\infty^{z^-} \right\}_{\alpha \in (0,1)}  &\subset  \mathcal{G}_\infty (\beta,J,h) .
\end{align*}
Since the set of all limit points is closed, it follows that
\begin{align*}
\overline{\left\{ \alpha \nu_\infty^{z^+} + (1 - \alpha) \nu_\infty^{z^-} \right\}_{\alpha \in (0,1)}} =  \operatorname{conv} \left(  \nu_\infty^{z^+}, \  \nu_\infty^{z^-} \right) \subset \mathcal{G}_\infty (\beta,J,h)
\end{align*}
almost surely, from which the final result follows.
\end{proof}
\subsection{Weak convergence of the metastate probability measures} \label{aw_conv_sec}
\noindent
We have the following uniform tightness result.
\begin{proof}[Proof of \Cref{intensity_tight}]
Let $I \subset \mathbb{N}$ be a finite index set. For $n \geq \max(I)$, by finite permutation invariance of the distribution of $h$, the intensity measure $\mathbb{E} \mu_n$ satisfies the following permutation invariance property
\begin{align*}
\mathbb{E} \mu_n [f \circ \pi_{I}] = \mathbb{E} \mu_n [f \circ \pi_{[|I|]}]
\end{align*}
for any continuous bounded function $f : \mathbb{R}^{|I|} \to \mathbb{R}$, where $[|I|] := \{ 1,2,..., |I_0| \}$. If $J \subset \mathbb{N}$ is any other finite index set such that $|J| = |I|$, then, for $n \geq \max(I \cup J)$ it follows that
\begin{align*}
\mathbb{E} \mu_n [f \circ \pi_{I}] = \mathbb{E} \mu_n [f \circ \pi_{[|I|]}] = \mathbb{E} \mu_n [f \circ \pi_{J}]
\end{align*}
for any continuous bounded function $f : \mathbb{R}^{|I|} \to \mathbb{R}$.
\\
\\
Let $I \subset \mathbb{N}$ be any finite index set as before. For $j \in \mathbb{N} \cup \{ 0 \}$, define the sets $I_j \subset \mathbb{N}$ by $I_j := [|I|] + j |I| := \{ i + j |I| : i \in [|I|] \}$. The sets $I_j$ are disjoint and, for $n \geq \max(I) + |I|$, there exists $k \in \mathbb{N} $ such that $\bigcup_{j=0}^{k-1} I_j \subset [n] \subset \bigcup_{j=0}^{k} I_j$. The number $k+1$ corresponds to the smallest number of translates of $I_0$, including $I_0$, of the form $I_j$ required to cover $[n]$ as a union of such translates such that $[n]$ is a strict subset of the union. 
\\
\\
For $n \geq \max(I) + |I|$, and $k$ as before, it follows that $k |I| \leq n \leq (k + 1)|I|$. Combining together all of the above observations, it follows that
\begin{align*}
 \mathbb{E} \mu_n \left[ || \pi_{I}||^2 \right] = \mathbb{E} \mu_n \left[ || \pi_{I_{0}}||^2 \right] &= \frac{1}{k} \sum_{j=0}^{k-1} \mathbb{E} \mu \left[ || \pi_{I_{j}}||^2 \right] \\ &\leq \frac{1}{k} \sum_{i=1}^n \mathbb{E} \mu_n \left[ || \pi_{i} ||^2 \right] \\ &\leq |I| \left( 1 + \frac{1}{n - |I|}\right) .
\end{align*} 
In the above inequality, we used the spherical constraint to conclude that
\begin{align*}
\sum_{i=1}^n \mathbb{E} \mu_n \left[ || \pi_{i} ||^2 \right] = n ,
\end{align*} 
and the last line follows by considering the inequality $k |I| \leq n \leq (k + 1)|I|$. By Chebyshev's inequality, we have
\begin{align*}
\mathbb{E} \mu_n (|| \pi_{I} || > K) \leq \frac{|I|}{K^2} \left( 1 + \frac{1}{n - |I|}\right) ,
\end{align*}
where $K > 0$. It follows that
\begin{align*}
\lim_{K \to \infty} \lim_{n \to \infty} \mathbb{E} \mu_n (|| \pi_{I} || > K) = 0 
\end{align*}
which implies the uniform tightness of the marginal of $\mathbb{E} \mu_n$ on the index set $I$. Since $I$ was an arbitrary finite index set, and the proof above holds so long as 
$n$ is large enough, it follows that any finite marginal of the sequence of intensity measures $\{ \mathbb{E} \mu_n \}_{n \in \mathbb{N}}$ is uniformly tight and thus the sequence of intensity measures itself is uniformly tight.
\end{proof}
\noindent
We can now provide the proof of the convergence in distribution of the random variable $(h, \mu_n^{\beta,J,h})$.
\begin{proof}[Proof of \Cref{aw_conv1}]
First, recall from \Cref{limit_theorems} that $\left( h, \sqrt{n}(m_n - m) \right) \to (h,G)$ weakly, where $G$ is a non-degenerate 2-dimensional Gaussian random variable independent of $h$. By the Skorohod representation theorem, there exists a probability space $(\Omega', \mathcal{F}', \mathbb{P}')$ on which this weak convergence can be elevated to $\mathbb{P}'$-almost sure convergence and the distributions of these new random variables $(h',\sqrt{n}(m_n' - m))$ and $(h', G')$ are the same as the corresponding random variables in the original space. To be completely exact, the Skorohod representation theorem generates a random variable $Y_n'$ such that $\sqrt{n} (m_n - m)$ and $Y_n'$ agree in distribution and we then define $m_n'$ such that $m_n' = \frac{1}{\sqrt{n}} Y_n' + m$.  Define the set $\Omega'' \subset \Omega'$ by 
\begin{align*}
\Omega'' = \{ G'_1 \not = 0 \} \cap \left\{ \lim_{n \to \infty} n^\frac{1}{2} (m_n' - m) = G' \right\} .
\end{align*}
Since both sets in the intersection are sets of probability $1$, it follows that $\Omega''$ is a set of probability $1$, and we have 
\begin{align*}
\lim_{n \to \infty} n^\frac{1}{2} (m_n' - m) = G' .
\end{align*}
almost surely, where $G'_1 \not = 0$ almost surely. By applying \Cref{conv_class} with $\delta = \frac{1}{2}$ and $\gamma = G'$, it follows that
\begin{align*}
\lim_{n \to \infty} \left(h', \mu_n^{\beta,J,h'} \right) = \mathbbm{1}(G'_1 > 0) \left( h', \nu_\infty^{z^+,h'} \right) + \mathbbm{1}(G'_1 < 0)  \left( h', \nu_\infty^{z^-, h'} \right)
\end{align*}
almost surely. Finally, since $(h', G')$ and $(h,G)$ agree in distribution, it follows that
\begin{align*}
\lim_{n \to \infty} \mathbb{E} f \left(h, \mu_n^{\beta,J,h} \right) &= \lim_{n \to \infty} \mathbb{E}' f \left(h', \mu_n^{\beta,J,h'} \right) \\
&= \mathbb{P}'(G_1' > 0) \mathbb{E}' f \left( h', \nu_\infty^{z^+,h'} \right) +  \mathbb{P}'(G_1' < 0) \mathbb{E}' f \left( h', \nu_\infty^{z^-,h'} \right) \\
&=\frac{1}{2} \mathbb{E} f \left( h, \nu_\infty^{z^+,h} \right) +  \frac{1}{2} \mathbb{E} f \left( h, \nu_\infty^{z^-,h} \right)
\end{align*} 
for any $f \in C_b (\mathbb{R}^\mathbb{N} \times \mathcal{M}_1 (\mathbb{R}^\mathbb{N}))$. 
\end{proof}
\noindent
For the spin glass characterization given in \cite{Pastur1978}, we have the following calculation of the distribution of the magnetization density. 
\begin{lemma} \label{mag_dist} Let $h$ be a random external field which satisfies (A1) and (A2).
\\
\\
For the pure state parameter range, we have
\begin{align*}
\lim_{n \to \infty} \mathbb{E} f \left( \mu_n^{\beta,J,h} \left[ \frac{M_n}{n} \right]\right) = f(x^*) , 
\end{align*}
and, for the mixed state parameter range, we have
\begin{align*}
\lim_{n \to \infty} \mathbb{E} f \left( \mu_n^{\beta,J,h} \left[ \frac{M_n}{n} \right]\right) = \frac{1}{2} f(x^+) + \frac{1}{2} f(x^-) ,
\end{align*}
for any continuous $f \in C_b (\mathbb{R})$.
\end{lemma}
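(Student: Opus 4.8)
\noindent
The plan is to reduce the statement to the behaviour of the mixture measures $\rho_n^{\beta,J,h}$ via the identity $\mu_n^{\beta,J,h}[M_n/n] = \int_{B(0,1)} x\,\rho_n^{\beta,J,h}(dz)$, which follows from the mixture representation \cref{can_rep2} together with the observation that under the microcanonical measure $\nu_n^{z,h}$ one has $M_n/n = \langle w_{1,n},\phi\rangle/\sqrt{n} = x$ (the remaining terms in the argument of $f$ in \cref{microcanonical} are orthogonal to $w_{1,n}$). In particular $M_n/n$ is $\mu_n^{\beta,J,h}$-almost surely equal to the bounded coordinate function $(x,y)\mapsto x$ on $B(0,1)$, so $\mu_n^{\beta,J,h}[M_n/n]$ is a $[-1,1]$-valued random variable, and it depends on $h$ only through $m_n^h$ because $\rho_n^{\beta,J,h}$ does. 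This last fact will be the key to transferring statements to and from the Skorohod space.

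For the pure state parameter range I would argue as follows. Since $M^*(\beta,J,m)=\{z^*\}$ is a singleton and, by (A1) and \cref{limit_theorems}, $h$ is strongly varying almost surely, \cref{mix_supp} and Prokhorov's theorem force $\rho_n^{\beta,J,h}\to\delta_{z^*}$ weakly almost surely (the full sequence, since every limit point is supported on the single point $z^*$). As $z\mapsto x$ is bounded and continuous, $\mu_n^{\beta,J,h}[M_n/n]=\int x\,\rho_n^{\beta,J,h}(dz)\to x^*$ almost surely, hence $f(\mu_n^{\beta,J,h}[M_n/n])\to f(x^*)$ almost surely by continuity of $f$, and dominated convergence (using $f\in C_b(\mathbb R)$) gives $\mathbb E f(\mu_n^{\beta,J,h}[M_n/n])\to f(x^*)$.

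For the mixed state parameter range I would start from the conditioned representation \cref{condition_rep},
\begin{align*}
\mu_n^{\beta,J,h}[M_n/n] = W_n^{\beta,J,h,+}\,\mu_n^{\beta,J,h,+}[M_n/n] + (1-W_n^{\beta,J,h,+})\,\mu_n^{\beta,J,h,-}[M_n/n],
\end{align*}
with $\mu_n^{\beta,J,h,\pm}[M_n/n]=\int_{B_\pm(0,1)} x\,\rho_n^{\beta,J,h,\pm}(dz)$. The proof of \cref{condition_conv} gives $\rho_n^{\beta,J,h,\pm}\to\delta_{z^\pm}$ weakly almost surely, so $\mu_n^{\beta,J,h,\pm}[M_n/n]\to x^\pm$ almost surely; combined with $x^-=-x^+$ and $W_n^{\beta,J,h,+}\in[0,1]$ this yields $\mu_n^{\beta,J,h}[M_n/n] = (2W_n^{\beta,J,h,+}-1)x^+ + o(1)$ almost surely. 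The remaining task is to identify the limiting law of $W_n^{\beta,J,h,+}$, which I would handle exactly as in the proof of \cref{aw_conv1}: apply Skorohod's representation theorem to realize the weak convergence $(h,\sqrt{n}(m_n^h-m))\to(h,G)$ from \cref{limit_theorems} as almost sure convergence on an auxiliary space, where $G$ is a non-degenerate centered Gaussian with $G_1\neq 0$ almost surely. Then \cref{weight_asymp} with $\delta=\tfrac12$ and $\gamma=G$ (the case $\delta\in[0,1)$, $\gamma^\parallel\neq0$) gives $W_n^{\beta,J,h,+}\to\mathbbm 1(G_1>0)$ almost surely on that space, hence $\mu_n^{\beta,J,h}[M_n/n]\to\operatorname{sign}(G_1)\,x^+$ almost surely, and bounded convergence together with $\mathbb P(G_1>0)=\tfrac12$ gives $\mathbb E f(\mu_n^{\beta,J,h}[M_n/n])\to\tfrac12 f(x^+)+\tfrac12 f(x^-)$. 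Since $\mu_n^{\beta,J,h}[M_n/n]$ is a measurable function of $m_n^h$ alone, its law is the same on the original and the Skorohod space, so the conclusion transfers back.

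I expect the main obstacle to be this last step: one must carefully transport the almost sure identity $\mu_n^{\beta,J,h}[M_n/n] = (2W_n^{\beta,J,h,+}-1)x^+ + o(1)$ and the weight asymptotics of \cref{weight_asymp} to the Skorohod space and control the $o(1)$ term there. This is precisely where the reduction of $\mu_n^{\beta,J,h}[M_n/n]$ to a function of $m_n^h$ does the work; everything else is a routine assembly of \cref{mix_supp}, \cref{condition_conv}, and \cref{weight_asymp}.
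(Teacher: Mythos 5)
Your proposal is correct and follows essentially the same route as the paper: reduce the magnetization density to the coordinate function $x$ under the mixture measure $\rho_n^{\beta,J,h}$, use \cref{mix_supp} for the pure state range, and for the mixed state range repeat the Skorohod argument of \cref{aw_conv1} (the paper cites \cref{conv_class} as a whole where you unroll it into \cref{condition_rep}, \cref{condition_conv}, and \cref{weight_asymp}, but the mechanism is identical). Your explicit remark that $\mu_n^{\beta,J,h}[M_n/n]$ is a function of $m_n^h$ alone, justifying the transfer back from the Skorohod space, is a point the paper leaves implicit.
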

\begin{proof}
From \Cref{can_rep}, we see that
\begin{align*}
\mu_n \left[ f \left( \frac{M_n }{n} \right) \right] = \int_{B(0,1)} \rho_n^{\beta,J,h} (dz) \ f(x) . 
\end{align*}
By \Cref{mix_supp}, for the pure state parameter range, we have
\begin{align*}
\lim_{n \to \infty} \rho_n^{\beta,J,h} \to \delta_{z^*}
\end{align*}
almost surely. The first result follows. The second result follows by essentially repeating the proof of \Cref{aw_conv1} to show that
\begin{align*}
\lim_{n \to \infty} \rho_n^{\beta,J,h} = \mathbbm{1}(G > 0) \delta_{z^+} + \mathbbm{1}(G < 0) \delta_{z^-}
\end{align*}
in distribution, where $\rho_n^{\beta,J,h}$ is understood to be a random probability measure on $B(0,1)$, and $G$ is a standard $1$-dimensional Gaussian random variable independent of $h$.
\end{proof}
\subsection{Convergence of the Newman-Stein metastates} \label{ns_metastate}
\noindent
The following result concerns the almost sure convergence of the Cesàro sum of indicator functions of $A_{n, \delta}$.
\begin{lemma} \label{subseq} Let $h$ be a random external field which satisfies (A1), (A2), and (A4).
\\
\\
For the mixed state parameter range, it follows that
\begin{align*}
\lim_{N \to \infty} \frac{1}{N} \sum_{n=1}^{N} \mathbbm{1} \left( m_n^h - m \not \in  A_{n, \delta} \right) = 0 
\end{align*}
almost surely. 
\end{lemma}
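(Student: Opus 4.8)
The plan is to transfer the question to the two-dimensional shifted random walk $\{S_n'\}_{n\in\mathbb{N}}$ with step length $(h_0-\mathbb{E}h_0,h_0^2-\mathbb{E}h_0^2)$, obtain quantitative tail bounds, and then pass from an $L^1$ estimate to almost sure convergence by a subsequence argument. Using \cref{walk_rep} together with $m^\parallel=\mathbb{E}h_0=0$ in the mixed state parameter range, for large $n$ the event $\{m_n^h-m\notin A_{n,\delta}\}$ is contained in the union of (i) an \emph{anomalously small} event $\{|(S_n')_1|<n^{1/2-\delta}\}$, (ii) an \emph{anomalously large} event $\{|(S_n')_1|>n^{1/2+\delta}\}$, and (iii) a second \emph{anomalously large} event $\{|(S_n')_2|>c\,n^{1/2+\delta}\}$ for a fixed constant $c$; for (iii) one uses that on the complement of (ii) the cross term $\tfrac{m_n^{h,\parallel}}{m_n^{h,\perp}+m^\perp}(S_n')_1$ in \cref{walk_rep} is $O(n^{2\delta})=o(n^{1/2+\delta})$ and that $m_n^{h,\perp}+m^\perp\to 2m^\perp>0$ almost surely. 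Writing $T_N:=\tfrac1N\sum_{n=1}^N\mathbbm{1}(m_n^h-m\notin A_{n,\delta})$, the target is $T_N\to 0$ almost surely.

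For the anomalously large events (ii) and (iii) I would use a maximal inequality together with Borel--Cantelli along a dyadic subsequence: $(S_n')_1$ and $(S_n')_2$ are $L^2$ martingales (with $\mathbb{E}|(S_n')_1|^2,\mathbb{E}|(S_n')_2|^2=O(n)$ from (A1) and (A2), and with the extra moment room $\mathbb{E}|h_0|^{4+\xi}<\infty$ of (A4) available to sharpen the rate via Rosenthal's inequality if needed), so Doob's inequality and Markov's inequality give $\mathbb{P}(\max_{n\le 2^{k+1}}|(S_n')_j|>2^{k(1/2+\delta')})\le C\,2^{-2k\delta'}$ for $0<\delta'<\delta$, which is summable in $k$; hence (ii) and (iii) occur only finitely often almost surely. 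The anomalously small event (i) is the genuinely nontrivial one: an elementary concentration-function (small-ball) bound for sums of i.i.d.\ nondegenerate (by $\mathbb{V}h_0>0$) random variables gives $\mathbb{P}(|(S_n')_1|<n^{1/2-\delta})\le C\,n^{-\delta}$, and this does \emph{not} sum in $n$, so (i) occurs infinitely often and no pointwise Borel--Cantelli argument is available. Combining the three bounds yields $\mathbb{P}(m_n^h-m\notin A_{n,\delta})\le C\,n^{-\delta}$, hence $\mathbb{E}\,T_N\le C'\,N^{-\delta}$.

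To upgrade this averaged bound to almost sure convergence I would fix a polynomially growing subsequence $N_k:=k^p$ with $p\delta>1$. Then $\sum_k\mathbb{E}\,T_{N_k}\le C'\sum_k k^{-p\delta}<\infty$, so $\sum_k T_{N_k}<\infty$ and in particular $T_{N_k}\to 0$ almost surely. For an arbitrary $N$ with $N_k\le N<N_{k+1}$, nonnegativity of the summands gives the sandwich $0\le T_N\le\tfrac{N_{k+1}}{N_k}T_{N_{k+1}}$, and since $\tfrac{N_{k+1}}{N_k}=\bigl(\tfrac{k+1}{k}\bigr)^p\to 1$ we conclude $T_N\to 0$ almost surely, which is the claim. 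I expect this last step---turning $\mathbb{E}\,T_N\to 0$ into almost sure convergence---to be the main obstacle, precisely because the individual bad events do not become null: their probabilities decay only like $n^{-\delta}$, so one must exploit the Ces\`aro averaging together with the slow, controlled growth of the subsequence rather than argue termwise, and the whole scheme rests on first securing the explicit rate $\mathbb{E}\,T_N\le C N^{-\delta}$ from the concentration-function estimate for (i) and the maximal-inequality bounds for (ii) and (iii).
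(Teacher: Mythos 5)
Your proposal is correct in outline and, at the decisive final step, coincides with the paper's mechanism: the paper also passes to a geometrically spaced subsequence (dyadic blocks $2^K$), uses the monotonicity sandwich $\tfrac{1}{2}C_{2^K}\le C_N\le 2C_{2^{K+1}}$, and applies Borel--Cantelli to $\mathbb{P}(C_{2^K}>\varepsilon)$, which is controlled by exactly the averaged bound $\tfrac{1}{N}\sum_{n\le N}\mathbb{P}(m_n^h-m\notin A_{n,\delta})=O(N^{-\chi})$ that you derive; so the step you anticipated as the main obstacle is handled identically. Where you genuinely diverge is in producing the per-$n$ bound. The paper approximates the laws of $m_n^{h,\parallel}-m^\parallel$ and $m_n^{h,\perp}-m^\perp$ by Gaussians: a standard Berry--Esseen bound for the first coordinate and the nonlinear (delta-method) Berry--Esseen bound of Pinelis--Molzon for the second, which is precisely where (A4) is consumed (it yields the $O(n^{-\xi/4})$ error and the exponent $\chi=\min\{\tfrac12,\tfrac{\xi}{4},\delta\}$), after which both the small-ball and tail probabilities are read off from the Gaussian. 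You instead bound the small-ball event for $(S_n')_1$ by an Esseen/Kolmogorov--Rogozin concentration-function inequality and the large-deviation events by Chebyshev/Doob; this is more elementary, needs only (A1)--(A2) for those pieces, and buys the additional observation that the large events occur only finitely often almost surely. The one caveat is your treatment of the second coordinate: the paper's conditioning set (modulo an evident typo in its displayed definition) excludes the small-ball event for \emph{both} coordinates, and its proof explicitly estimates $\mathbb{P}(m_n^{h,\perp}-m^\perp\notin\pi_2(A_{n,\delta}))$ including the anti-concentration part; since $n(m_n^{h,\perp}-m^\perp)$ is a nonlinear functional of $((S_n')_1,(S_n')_2)$ whose correction terms are not independent of $(S_n')_2$, a bare concentration-function bound does not transfer to it directly, and this is the step that genuinely forces the nonlinear Berry--Esseen machinery and hence (A4). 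Your decomposition proves the statement for the larger conditioning set in which only the first coordinate is bounded away from zero; that weaker version is all that \cref{ns_conv} actually uses downstream, but if one insists on the paper's $A_{n,\delta}$ you must add the anti-concentration estimate for $m_n^{h,\perp}-m^\perp$, for which the paper's route is the natural one. A final minor bookkeeping point: replacing $m_n^{h,\perp}+m^\perp$ by $2m^\perp$ inside a probability bound requires either intersecting with the event $\{m_n^{h,\perp}>m^\perp/2\}$ (whose complement is again $O(n^{-\chi})$) or treating it as one more almost-surely-finitely-occurring exceptional event.
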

\begin{proof}
Denote the sequence $C_N$ by
\begin{align*}
C_N := \frac{1}{N} \sum_{n=1}^{N} \mathbbm{1} \left( m_n - m \not \in  A_{n, \delta} \right) .
\end{align*}
For each $N$ there exists $K$ such that $2^K \leq N \leq 2^{K + 1}$. For such a $K$, we have 
\begin{align*}
\frac{C_{2^K}}{2} \leq C_N \leq 2 C_{2^{K+1}} .
\end{align*}
It follows that if $C_{2^{K}} \to 0$ almost surely in the limit as $K \to \infty$, then $C_N \to 0$ almost surely in the limit as $N \to \infty$. By Chebyshev's inequality, we have the following estimate
\begin{align*}
\mathbb{P}(C_N > \varepsilon) \leq \frac{1}{\varepsilon N} \sum_{n=1}^N \mathbb{P}(m_n - m \not \in A_{n, \delta}) .
\end{align*}
For the individual terms in the sum, we use the following rough union bound
\begin{align*}
\mathbb{P}(m_n - m \not \in A_{n, \delta}) \leq 3 \max \{ \mathbb{P}(m_n^{\parallel} - m^\parallel \not \in \pi_1 (A_{n, \delta})), \mathbb{P}(m_n^{\perp} - m^\perp \not \in \pi_2 (A_{n, \delta})) \} .
\end{align*}
Note that $\pi_1 (A_{n, \delta}) = \pi_2 (A_{n, \delta})$. To estimate the probabilities, we will use Berry-Essen type uniform bounds for non-linear smooth functions of independent random vectors provided in \cite[section 3]{Pinelis2016}. For $m_n^{\parallel} - m^\parallel$, one can apply standard Berry-Essen bounds with finite third moments, see \cite[Chapter 15]{Klenke2020}, to obtain
\begin{align*}
 \mathbb{P}(m_n^{\parallel} - m^\parallel \not \in \pi_1 (A_{n, \delta})) =  \mathbb{P} \left( |G| < \sqrt{\mathbb{E} h_0^2} n^{- \delta}\right) + \mathbb{P} \left( |G| > \sqrt{\mathbb{E} h_0^2} n^{ \delta}\right) + O \left( \frac{1}{\sqrt{n}}\right) ,   
\end{align*} 
where $G$ is a standard $1$-dimensional Gaussian random variable, and $O$ is the standard big-$O$ asymptotic notation. For $m_n^\perp - m^\perp$, we consider the function $f$ given by
\begin{align*}
f(x,y) := \sqrt{y - x^2 + \mathbb{E} h_0^2} - \sqrt{\mathbb{E} h_0^2} ,
\end{align*}
where $y - x^2 + \mathbb{E} h_0^2 > 0$, which is related to $m_n^\perp - m^\perp$ by 
\begin{align*}
m_n^\perp - m^\perp = f \left( \frac{1}{n} \sum_{i=1}^n h_i, \frac{1}{n} \sum_{i=1}^n (h_i^2 - \mathbb{E} h_0^2)\right) .
\end{align*}
This function is at least twice continuously differentiable in a neighborhood of the origin and its gradient is non-vanishing at the origin. The non-linear Berry-Essen bound with a finite $4 + \xi$:th moment gives us
\begin{align*}
\mathbb{P}(m_n^\perp - m^\perp \not \in \pi_2 (A_{n, \delta})) =  \mathbb{P} \left( |G| < \sqrt{\frac{\mathbb{E} h_0^4 - \left( \mathbb{E} h_0^2\right)^2}{4 \mathbb{E} h_0^2}} n^{- \delta}\right) + \mathbb{P} \left( |G| > \sqrt{\frac{\mathbb{E} h_0^4 - \left( \mathbb{E} h_0^2\right)^2}{4 \mathbb{E} h_0^2}} n^{ \delta}\right) + O \left( \frac{1}{n^{\frac{\xi}{4}}} \right) ,
\end{align*}
where $G$ is as before. For the Gaussian random variables, a rough standard estimate for small value probabilities and large value probabilities shows that
\begin{align*}
\mathbb{P}(|G| < a n^{- \delta}) = a O \left( \frac{1}{n^\delta} \right), \ \mathbb{P}(|G| > b n^\delta) = b O \left( \frac{1}{n^\delta} \right) 
\end{align*}
for $a,b > 0$. Denote $\chi = \min \{ \frac{1}{2}, \frac{\xi}{4}, \delta \}$. Compiling together all of the above bounds, we have
\begin{align*}
\mathbb{P}(m_n - m \not \in A_{n, \delta}) = O \left( \frac{1}{n^\chi}\right) .
\end{align*}
Returning to the Cesàro sum, we have
\begin{align*}
\mathbb{P}(C_N > \varepsilon) =  \frac{1}{\varepsilon} O \left( \frac{1}{N} \sum_{n=1}^N \frac{1}{n^\chi} \right) = \frac{1}{\varepsilon} \frac{1}{N} \sum_{n=1}^N \frac{1}{\left( \frac{n}{N} \right)^\chi} O \left( \frac{1}{N^\chi}\right)   .
\end{align*}
Using Riemann sums, it follows that
\begin{align*}
\lim_{N \to \infty} \frac{1}{N} \sum_{n=1}^N \frac{1}{\left( \frac{n}{N} \right)^\chi} = \int_0^1 dx \ \frac{1}{x^\chi} < \infty .
\end{align*} 
Combining together these results, we see that
\begin{align*}
\mathbb{P}(C_N > \varepsilon) = \frac{1}{\varepsilon} O\left( N^{- \chi} \right) .
\end{align*}
Now, we will apply the Borel-Cantelli lemma, see \cite[Chapter 2]{Klenke2020}. For any rational $\varepsilon > 0$, we have 
\begin{align*}
\sum_{K=1}^\infty \mathbb{P}(C_{2^K} > \varepsilon) = \frac{1}{\varepsilon} O \left( \sum_{K=1}^\infty \left( \frac{1}{2^\chi} \right)^K \right) .
\end{align*}
This implies that the sum on the left hand side is finite for any rational $\varepsilon > 0$. By the Borel-Cantelli lemma $\mathbb{P}(C_{2^K} > \varepsilon \text{ infinitely often}) = 0$, which implies that $C_{2^K} \to 0$ in the limit as $K \to \infty$ and thus $C_N \to 0$ almost surely in the limit as $N \to \infty$ as desired.
\end{proof}
\noindent
Along the sets $A_{n, \delta}$, we have explicit control over the convergence of realizations of the weights $W_n^{\beta, J, h, +}$ and the evaluation maps $\mu_n^{\beta, J, h, \pm} [f]$. This control is presented in the following result.
\begin{lemma} \label{ns_conv}
Let $h$ be a random external field which satisfies (A1) and (A2).
\\
\\
For the mixed state parameter range, it follows that 
\begin{align*}
\lim_{n \to \infty}  \mathbbm{1} \left( m_n^h - m \in A_{n, \delta} \right) \left| W_n^{\beta,J,h,+} - \mathbbm{1} \left( \sum_{i=1}^n h_i > 0 \right)\right| = 0
\end{align*}
and
\begin{align*}
\lim_{n \to \infty}  \mathbbm{1} \left(m_n^h - m \in A_{n, \delta} \right) \left| \mu_n^{\beta,J,h,\pm} [f] - \nu_\infty^{z^\pm,h} [f] \right| = 0
\end{align*}
for any $f \in \operatorname{LBL} (\mathbb{R}^\mathbb{N})$. 
\end{lemma}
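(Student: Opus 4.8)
The plan is to treat the two assertions separately. The second one is essentially a corollary of the deterministic theory already in place: a field satisfying (A1) and (A2) is strongly varying almost surely and the parameters lie in the mixed state range, so \cref{condition_conv} applies to almost every realization and gives $\mu_n^{\beta,J,h,\pm}\to\nu_\infty^{z^\pm,h}$ weakly; in particular $\big|\mu_n^{\beta,J,h,\pm}[f]-\nu_\infty^{z^\pm,h}[f]\big|\to 0$ almost surely for each $f\in\operatorname{LBL}(\mathbb{R}^\mathbb{N})$, and multiplying by the bounded factor $\mathbbm{1}(m_n^h-m\in A_{n,\delta})\le 1$ does not disturb this limit. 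The conditioning set plays no essential role here; it is retained only for symmetry with the first assertion and for the application in \cref{ns_conv2}.

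For the first assertion I would begin with a structural reduction. Under $\nu_n^{z,h}$ the magnetization $M_n$ is the deterministic constant $nx$, so $W_n^{\beta,J,h,+}=\int_{B(0,1)}\rho_n^{\beta,J,h}(dz)\,\mathbbm{1}(x>0)=\rho_n^{\beta,J,h}(B_+(0,1))$; since the density of $\rho_n^{\beta,J,h}$ depends on $h$ only through $m_n^h$, and since in the mixed state range $m^\parallel=\mathbb{E}h_0=0$ makes $\mathbbm{1}\big(\sum_{i=1}^n h_i>0\big)=\mathbbm{1}(m_n^{h,\parallel}-m^\parallel>0)$ a function of $m_n^h$ as well, the entire quantity $\mathbbm{1}(m_n^h-m\in A_{n,\delta})\big|W_n^{\beta,J,h,+}-\mathbbm{1}(\sum_{i=1}^n h_i>0)\big|$ is a fixed deterministic function of the single vector $m_n^h$. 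It therefore suffices to establish the deterministic uniform estimate
\begin{align*}
\lim_{n\to\infty}\ \sup\left\{\,\left|W_n^{\beta,J,h,+}-\mathbbm{1}\!\left(\sum_{i=1}^{n}h_i>0\right)\right|\ :\ m_n^h-m\in A_{n,\delta}\,\right\}=0 .
\end{align*}

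To obtain this I would run the argument of \cref{weight_asymp}, tracking uniformity in $v:=m_n^h$ over $A_{n,\delta}$. Because $\psi_n^{\beta,J,h}$ and $\psi^{\beta,J,m}$ differ only by the linear term $\beta\langle m_n^h-m,\,\cdot\,\rangle$ they have the same Hessian, and combining the compactness argument of \cref{max_seq} with \cref{diff_eq} (the Hessian $H[\psi^{\beta,J,m}]$ being negative definite at $z^\pm$ by \cref{max_point2}) produces local maximizers $z_n^\pm$ of $\psi_n^{\beta,J,h}$ in $B_\pm(0,1)$ with $z_n^\pm-z^\pm=O(\|m_n^h-m\|)$ uniformly on $A_{n,\delta}$; the Laplace asymptotics of \cref{laplace_asymp} then hold uniformly there and give
\begin{align*}
W_n^{\beta,J,h,+}=\frac{1}{1+a_n\,e^{n(\psi_n^{\beta,J,h}(z_n^-)-\psi_n^{\beta,J,h}(z_n^+))}},\qquad a_n\to 1\ \text{uniformly on }A_{n,\delta} .
\end{align*}
Substituting the expansion of \cref{diff_eq2} into the exponent and using $x^-=-x^+$, $y^-=y^+$ and $\psi^{\beta,J,m}(z^+)=\psi^{\beta,J,m}(z^-)$, the linear contributions combine while the remainders stay of quadratic order, so that uniformly on $A_{n,\delta}$
\begin{align*}
n\left(\psi_n^{\beta,J,h}(z_n^-)-\psi_n^{\beta,J,h}(z_n^+)\right)=-2\beta x^+\sum_{i=1}^{n}h_i+O\!\left(n\|m_n^h-m\|^2\right) .
\end{align*}

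The conclusion is then a matter of orders of magnitude. On $A_{n,\delta}$ the drift $\big|\sum_{i=1}^n h_i\big|=n\,|m_n^{h,\parallel}|$ lies between $n^{1/2-\delta}$ and $n^{1/2+\delta}$ with a fixed, unambiguous sign, while $n\|m_n^h-m\|^2=O(n^{2\delta})$; the hypothesis $0<\delta<\tfrac16$ is exactly what makes $2\delta<\tfrac12-\delta$, so the quadratic remainder is negligible against the drift, the exponent has the sign of $-\sum_i h_i$ and diverges in modulus uniformly, and hence $a_n e^{n(\cdots)}\to 0$ on $\{\sum_i h_i>0\}$ and $\to\infty$ on $\{\sum_i h_i<0\}$, i.e. $W_n^{\beta,J,h,+}\to\mathbbm{1}(\sum_{i=1}^n h_i>0)$ uniformly over $A_{n,\delta}$. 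I expect the main obstacle to be precisely this bookkeeping: verifying that the Laplace-type and Taylor-type expansions of \cref{laplace_asymp} and \cref{diff_eq2}, originally derived for a single strongly varying field, hold \emph{uniformly} over the shrinking family of values $\{v:v-m\in A_{n,\delta}\}$, and confirming that $\delta<\tfrac16$ is the sharp threshold separating the $O(n\|m_n^h-m\|^2)$ remainder from the $n^{1/2-\delta}$ drift.
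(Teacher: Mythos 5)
Your proposal is correct and follows essentially the same route as the paper: the second claim is disposed of by the almost sure weak convergence $\mu_n^{\beta,J,h,\pm}\to\nu_\infty^{z^\pm,h}$ times a bounded indicator, and the first claim is obtained by combining the Laplace representation of $W_n^{\beta,J,h,+}$ with the expansion of \cref{diff_eq2}, cancelling the symmetric terms to isolate the drift $-2\beta x^+\sum_i h_i$, and checking that on $A_{n,\delta}$ this drift (of order at least $n^{1/2-\delta}$) dominates the quadratic remainder (of order $n^{2\delta}$) precisely because $\delta<\tfrac16$. The only cosmetic difference is that you recast the statement as a uniform deterministic estimate over $\{v: v-m\in A_{n,\delta}\}$, whereas the paper argues pathwise on the probability-one event $m_n^h\to m$, which sidesteps the uniformity of the Laplace and Taylor expansions that you correctly flag as the remaining bookkeeping.
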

\begin{proof}
Let us first split the set $A_{n, \delta}$ into two disjoint sets $A_{n,\delta}^+$ and $A_{n, \delta}^-$ satisfying $A_{n, \delta} = A_{n, \delta}^+ \cup A_{n, \delta}^-$ defined by
\begin{align*}
A_{n, \delta}^\pm = A_{n, \delta} \cap \left\{  \pm \sum_{i=1}^n h_i > 0 \right\} .
\end{align*}
From the proof of \Cref{weight_asymp}, recall that
\begin{align*}
W_n^+ = \frac{1}{1 + a_n e^{-n (\psi_n (z_n^+) - \psi (z^+) - (\psi_n(z_n^-) - \psi (z^-)))}},
\end{align*}
where $\{ a_n \}_{n \in \mathbb{N}}$ is a sequence which converges to $1$ so long as $m_n \to m$. If we consider a realization in the set $A_{n, \delta}^+$ for large enough $n$, then, by considering the asymptotics presented in \Cref{diff_eq2}, it follows that
\begin{align*}
&\psi_n (z_n^+) - \psi (z^+) - (\psi_n(z_n^-) - \psi (z^-)) \\ &\geq 2 \beta x^+ (m_n^\parallel - m^\parallel) - b_n \sum_{|\alpha| = 2} |(m_n - m)^\alpha| - c_n \sum_{|\alpha|=3} |(m_n - m)^\alpha| \\
&\geq 2 \beta x^+ n^{- \frac{1}{2} - \delta} - b_n \alpha_2 n^{-1 + 2 \delta} - c_n \alpha_3 n^{-\frac{3}{2} + 3 \delta} \\
&=n^{- \frac{1}{2} - \delta} (2 \beta x^+ - b_n \alpha_2 n^{- \frac{1}{2} + 3 \delta} - c_n \alpha_3 n^{-1 + 4 \delta}) \\
&\geq n^{- \frac{1}{2} - \delta} (2 \beta x^+ - b_n \alpha_2 n^{- \frac{1}{2} + 3 \delta} - c_n \alpha_3 n^{-1 + 4 \delta}),
\end{align*}
where the sequences $\{ b_n \}_{n \in \mathbb{N}}$ and $\{ c_n \}_{n \in \mathbb{N}}$ consist of terms arising from the matrices $Q_n$  and $H_n$ along with the error terms $R_\alpha$ present in \Cref{diff_eq2}. These sequences converge to some non-negative constants when $m_n \to m$. The terms $\alpha_2$ and $\alpha_3$ are the constants related to the number of 2 dimensional multi-indices of degree 2 and 3 respectively. Since $\delta \in (0, \frac{1}{6})$, it follows that
\begin{align*}
2 \beta x^+ - b_n \alpha_2 n^{- \frac{1}{2} + 3 \delta} - c_n \alpha_3 n^{-1 + 4 \delta} \to 2 \beta x^+ > 0
\end{align*}
when $m_n \to m$. Combining together these observations, for large enough $n$, we have
\begin{align*}
\mathbbm{1} \left( m_n - m \in A^+_{n, \delta} \right) \left| W_n^+ - \mathbbm{1} \left( \sum_{i=1}^n h_i > 0 \right) \right| \leq a_n e^{- n^{ \frac{1}{2} - \delta} (2 \beta x^+ - b_n \alpha_2 n^{- \frac{1}{2} + 3 \delta} - c_n \alpha_3 n^{-1 + 4 \delta}) },
\end{align*}
from which it follows that
\begin{align*}
\lim_{n \to \infty} \mathbbm{1} \left( m_n - m \in A^+_{n, \delta} \right) \left| W_n^+ - \mathbbm{1} \left( \sum_{i=1}^n h_i > 0 \right) \right| = 0 .
\end{align*}
A similar analysis done for $A_{n, \delta}^-$ shows that
\begin{align*}
\lim_{n \to \infty} \mathbbm{1} \left( m_n - m \in A^-_{n, \delta} \right) \left| W_n^+ - \mathbbm{1} \left( \sum_{i=1}^n h_i > 0 \right) \right| = 0 .
\end{align*}
Combining these observations, the first result follows. The second result follows from \Cref{weight_rep} since $\mu_n^\pm [f] \to \nu_\infty^{z^\pm} [f]$ when $m_n \to m$.
\end{proof}
\noindent
Combining \Cref{subseq} with \Cref{ns_conv}, we have the following asymptotic difference result.
\begin{proof}[Proof of \Cref{ns_conv2}]
First, let us denote 
\begin{align*}
\nu_n := \mathbbm{1} \left( \sum_{i=1}^n h_i > 0 \right) \nu_\infty^{z^+} + \left(1 -  \mathbbm{1} \left(\sum_{i=1}^n h_i > 0 \right) \right) \nu_\infty^{z^-} .
\end{align*}
By using the fact that $P$ is Lipschitz in its variables, we have
\begin{align*}
&\left| P \left( \mu_n [f_1],..., \mu_n [f_m] \right) - P \left( \nu_n [f_1],..., \nu_n [f_m] \right)  \right| \\ &\leq a \sum_{j=1}^m \left| \mu_n^+ [f_j] - \nu_\infty^{z^+} [f_j] \right| + b \sum_{j=1}^m \left| \mu_n^- [f_j] - \nu_\infty^{z^-} [f_j] \right| + c \left|W_n^+ - \mathbbm{1} \left( \sum_{i=1}^n h_i > 0\right) \right|
\end{align*}
where $a$, $b$, and $c$ are positive constants that depend on the coefficients of $P$ and the bounds of each $f_j$. By combining this inequality with \Cref{ns_conv}, we have 
\begin{align*}
\lim_{n \to \infty} \mathbbm{1} \left( m_n - m \in A^+_{n, \delta} \right) \left|  P \left( \mu_n [f_1],..., \mu_n [f_m] \right) - P \left( \nu_n [f_1],..., \nu_n [f_m] \right) \right| = 0 .
\end{align*}
Furthermore, one can see that
\begin{align*}
P \left( \nu_n[f_1],..., \nu_n[f_m] \right) &= \mathbbm{1} \left( \sum_{i=1}^n h_i > 0 \right) P \left( \nu_\infty^{z^+} [f_1],..., \nu_\infty^{z^+} [f_m] \right) \\ &+ \left(1 -  \mathbbm{1} \left(\sum_{i=1}^n h_i > 0 \right) \right) P \left( \nu_\infty^{z^-} [f_1],..., \nu_\infty^{z^-} [f_m] \right) .
\end{align*}
It follows that
\begin{align*}
\lim_{N \to \infty} \frac{1}{N} \sum_{n=1}^N \mathbbm{1} \left( m_n - m \in A^+_{n, \delta} \right) \left|  P \left( \mu_n [f_1],..., \mu_n [f_m] \right) - P \left( \nu_n [f_1],..., \nu_n [f_m] \right) \right| = 0 .
\end{align*}
By combining this result and \Cref{subseq}, the result follows.
\end{proof}
\noindent
We determine the limit in distribution of the sequence $\{ (h, T_N^+ )\}_{N \in \mathbb{N}}$ and the limit points of the sequence $\{ T_N^+\}_{N \in \mathbb{N}}$ almost surely.
\begin{proof}[Proof of \Cref{arcsine_dense}]
For the first result, note that the random variable $T^+_N$ converges in distribution to an arcsine distributed random variable $\alpha$ in the limit as $N \to \infty$, see \cite[Chapter 4]{Spitzer1964}. It follows that the sequence $\{ (h, T_N^+)\}_{N \in \mathbb{N}}$ is uniformly tight. Reusing the almost sure convergence argument from \Cref{subseq}, it follows that
\begin{align*}
\lim_{N \to \infty} \frac{1}{N} \sum_{n=1}^{N} \mathbbm{1} \left( \frac{1}{n} \sum_{i=1}^n h_i = m_n^\parallel \not \in \pi_1 (A_{n, \delta}) \right) = 0 
\end{align*}
almost surely, where $\delta > 0$. Note that assumption (A2) guarantees that $h_0$ has a finite absolute third moment. For the following steps, we fix $\delta \in (0, \frac{1}{2})$. If $I \subset \mathbb{N}$ is any finite index set, and $K \subset \mathbb{R}^I$ is a compact set, then it follows that along the sets $\pi_1 (A_{n, \delta})$,  we have
\begin{align*}
\lim_{n \to \infty} \mathbbm{1}(\pi_I (h) \in K) \mathbbm{1} \left( \frac{1}{n} \sum_{i=1}^n h_i \in \pi_1 (A_{n, \delta}) \right) \left| \mathbbm{1} \left(\sum_{i=1}^n h_i > 0 \right) - \mathbbm{1} \left(\sum_{i \in \{ 1,2,...,n\} \setminus I} h_i  > 0 \right)\right| = 0 .
\end{align*}
Let $f : \mathbb{R}^\mathbb{N} \to \mathbb{R}$ be a local continuous function on compact support, and let $L \in \operatorname{BL}(\mathbb{R})$. Denote $I$ to be the index set that $f$ is local on, and $K$ to be the compact set contained in $\mathbb{R}^I$ that it is supported by. By combining the above limits with the almost sure convergence of the Cesàro sum, it follows that
\begin{align*}
\lim_{N \to \infty} \left| \mathbb{E} f(h) L (T_{N}^+) - \mathbb{E} f(h) \mathbb{E} L \left(\frac{1}{N} \sum_{n=1}^{N} \mathbbm{1} \left( \sum_{i \in \{ 1,2,...,n \} \setminus I} h_i  > 0 \right)\right)  \right| = 0
\end{align*} 
Since $T_N^+$ converges in distribution to an arcsine distributed random variable, it follows that
\begin{align*}
\lim_{N \to \infty} \mathbb{E} f(h) L (T_{N}^+) = \mathbb{E} f (h) \mathbb{E} L (\alpha) .
\end{align*}
Finally, since the subalgebra generated by products of local continuous functions with compact support with bounded Lipschitz functions is separating, it follows that
\begin{align*}
\lim_{N \to \infty} (h, T^+_N) = (h, \alpha)
\end{align*}
in distribution, where $\alpha$ is independent of $h$. 
\\
\\
For the second result, let $\{ q_k \}_{k \in \mathbb{N}}$ be an enumeration of the rationals in $[0,1]$. Define the sets
\begin{align*}
A_{k,j} := \left\{ T_N^+ \in \left[ q_k - \frac{1}{j}, q_k + \frac{1}{j}\right]  \text{ i.o.} \right\} .
\end{align*}
The sets $A_{k,j}$ are measurable with respect to the exchangeable $\sigma$-algebra, and we have
\begin{align*}
\mathbb{P}(A_{k,j}) \geq \limsup_{N \to \infty} \mathbb{P} \left(T_N^+ \in \left[ q_k - \frac{1}{j}, q_k + \frac{1}{j}\right] \right) = \mathbb{P} \left(\alpha \in \left[ q_k - \frac{1}{j}, q_k + \frac{1}{j}\right] \right) > 0 .
\end{align*} 
By the Hewitt-Savage 0-1 law, see \cite[Chapter 12]{Klenke2020}, the sets $A_{k,j}$ are then sets of probability $1$, and, as a result, the intersection $\bigcap_{k,j \in \mathbb{N}} A_{k,j}$ is a set of probability $1$. Furthermore, by a similar random subsequence construction as in the proofs of \Cref{limit_theorems} for random walks, for any $\lambda \in [0,1]$, there exists a random subsequence $N_k$ such that $T^+_{N_k} \to \lambda$, which implies that
\begin{align*}
[0,1] \subset L \left( \{ T_N^+ \}_{N \in \mathbb{N}} \right) .
\end{align*}
For the opposite inclusion, it is enough to observe that any convergent subsequence of $T_N^+$ must belong to $[0,1]$. The result follows. 
\end{proof}
\noindent
We present the full almost sure divergence result and the random subsequence convergence result of the Newman-Stein metastates.
\begin{proof} [Proof of \Cref{main_result4}]
By \Cref{ns_tight_2}, we know that the N-S metastates are uniformly tight almost surely. Let $\{ \overline{\kappa}_{N_k} \}_{k \in \mathbb{N}}$ be any weakly convergent subsequence. By \Cref{ns_conv2}, it follows that 
\begin{align*}
\overline{\kappa}_{N_k} [P] = T_{N_k}^+ P (\nu_\infty^{z^+} [f_1],..., \nu_\infty^{z^+} [f_m]) +  (1 - T_{N_k}^+) P (\nu_\infty^{z^-} [f_1],..., \nu_\infty^{z^-} [f_m]) + o(1) , 
\end{align*}
almost surely, where $P$ and $f_i$ are as in \Cref{ns_conv2}. Since $T^+_{N_k} \in [0,1]$, it follows that there exists a subsubsequence $\{ T^+_{N_{k_j}}\}_{j \in \mathbb{N}}$ such that $T^+_{N_{k_j}} \to \lambda \in [0,1]$ in the limit as $j \to \infty$. Since this is a subsubsequence of a weakly convergent subsequence and the functions $P$ formed a separating subalgebra, it follows that 
\begin{align*}
\lim_{k \to \infty} \overline{\kappa}_{N_k} = \lambda \delta_{\nu_\infty^{z^+}} + (1 - \lambda )\delta_{\nu_\infty^{z^-}} 
\end{align*} 
almost surely. This shows that 
\begin{align*}
L \left( \left\{ \overline{\kappa}_N^{\beta,J,h} \right\}_{N \in \mathbb{N}} \right) \subset \operatorname{conv} (\delta_{\nu_\infty^{z^+, h}}, \delta_{\nu_\infty^{z^-, h}}) ,
\end{align*}
almost surely. For the opposite inclusion, by combining \Cref{arcsine_dense} and \Cref{ns_conv2}, for any $\lambda \in [0,1]$, there exists a subsequence $\{ N_{k} \}_{k \in \mathbb{N}}$ such that 
\begin{align*}
\overline{\kappa}_{N_k} [P] = \lambda P (\nu_\infty^{z^+} [f_1],..., \nu_\infty^{z^+} [f_m]) +  (1 - \lambda) P (\nu_\infty^{z^-} [f_1],..., \nu_\infty^{z^-} [f_m]) + o(1)
\end{align*}
almost surely. Since the N-S metastate are almost surely tight, and the functions $P$ formed a separating subalgebra, it follows that there exists a subsubsequence $\{ N_{k_j}\}_{j \in \mathbb{N}}$ such that
\begin{align*}
\lim_{j \to \infty} \overline{\kappa}_{N_{k_j}} = \lambda \delta_{\nu_\infty^{z^+}} + (1 - \lambda )\delta_{\nu_\infty^{z^-}} 
\end{align*}
almost surely. To be explicit, note that $\overline{\left\{ \overline{\kappa}_N \right\}}_{N \in \mathbb{N}}$ is compact almost surely by uniform tightness and $\overline{\left\{ \overline{\kappa}_{N_k} \right\}}_{k \in \mathbb{N}}$ is a closed subset of a compact set and it too is thus compact almost surely. This is the method of generating the subsequence $\{ N_{k_j}\}_{j \in \mathbb{N}}$. This shows that
\begin{align*}
\operatorname{conv} (\delta_{\nu_\infty^{z^+}}, \delta_{\nu_\infty^{z^-}}) \subset L \left( \left\{ \overline{\kappa}_N \right\}_{N \in \mathbb{N}} \right)
\end{align*}
almost surely, from which the results follow.
\end{proof}
\noindent
We have the following convergence in distribution result.
\begin{proof}[Proof of \Cref{main_result5}]
Since the collection of probability measures of N-S metastates is uniformly tight by \Cref{ns_tight_2}, it is enough to prove convergence in distribution of random variables of the form $\overline{\kappa}_N [P]$, where $P$ is as in \Cref{ns_conv2}. By the same result \Cref{ns_conv2}, by dominated convergence, it follows that
\begin{align*}
\mathbb{E} f \left( \overline{\kappa}_N [P] \right) = \mathbb{E} f \left( T_{N}^+ P (\nu_\infty^{z^+} [f_1],..., \nu_\infty^{z^+} [f_m]) +  (1 - T_{N}^+) P (\nu_\infty^{z^-} [f_1],..., \nu_\infty^{z^-} [f_m])  \right) + o(1) 
\end{align*}
for any $f \in \operatorname{BL}(\mathbb{R})$. By applying \Cref{arcsine_dense}, it follows that 
\begin{align*}
\lim_{N \to \infty}\overline{\kappa}_N = \alpha \delta_{\nu_\infty^{z^+}} + (1 - \alpha) \delta_{\nu_\infty^{z^-}} 
\end{align*}
in distribution, where $\alpha$ is an arcsine distributed random variable independent of $h$. 
\end{proof}
\section*{Acknowledgements}
\noindent
I thank my advisor Jani Lukkarinen for his support and encouragement during this project. For their technical comments and helpful discussions, I thank my colleagues Joona Oikarinen, Brecht Donvil, and Gerardo Barrera Vargas. Finally, I thank Christof Külske for providing some additional details on a proof in the paper \cite{Kuelske1997}. The research has been supported by the Academy of Finland, via an Academy project (project No. 339228) and the Finnish {\em centre of excellence in Randomness and STructures\/} (project No. 346306).
\appendix
\section{Appendix} \label{appendix}
\noindent
For this appendix, the spaces $X$ and $\mathcal{S}$ are Polish spaces.
\subsection{Metastate probability measures}
\noindent
We list here the three primary probability measures of interest. 
\begin{definition} Let $h$ be an $X$-valued random variable, and let $\mu$ be a random probability measure on $\mathcal{S}$.
\\
\\
The joint probability measure $K$ on the skew-space $X \times \mathcal{S}$ is defined via its action on $f \in C_b (X \times \mathcal{S})$ by
\begin{align*}
K[f] := \int_{\Omega} d \mathbb{P} \ \mu[f(h, \cdot)] .
\end{align*}
\end{definition}
\noindent
\begin{definition}  Let $h$ be an $X$-valued random variable and let $\mu$ be a random probability measure on $\mathcal{S}$.
\\
\\
The metastate probability measure $\mathcal{K}$ on $X \times \mathcal{M}_1(\mathcal{S})$ is defined via its action on $f \in C_b(X \times \mathcal{M}_1(\mathcal{S}))$ by
\begin{align*}
\mathcal{K}[f] := \int_{\Omega} d \mathbb{P} \ f (h, \mu) .
\end{align*}
Equivalently, we can define $\mathcal{K}$ to be the distribution of $(h, \mu)$.
\end{definition}
\noindent
\begin{definition} Let $\{ \mu_n \}_{n \in \mathbb{N}}$ define a collection of random probability measures on $\mathcal{S}$.
\\
\\
The Newman-Stein metastate is defined almost surely pointwise via its action on $f \in C_b(\mathcal{M}_1(\mathcal{S}))$ by
\begin{align*}
\overline{\kappa}_N [f] := \frac{1}{N} \sum_{n=1}^N f(\mu_n) .
\end{align*}
Equivalently, we can define it by
\begin{align*}
\overline{\kappa}_N := \frac{1}{N} \sum_{n=1}^N \delta_{\mu_n}
\end{align*}
almost surely.
\\
\\
We define the probability measure $\overline{\mathcal{K}}_N$ on $\mathcal{M}_1 (\mathcal{M}_1 (\mathcal{S}))$ to be the distribution of the Newman-Stein metastate $\overline{\kappa}_N$.
\end{definition}
\subsection{Uniform tightness}
\noindent
In this subsection, we consider the uniform tightness of the probability measures defined in the previous subsection. This first lemma shows that uniform tightness of the joint probability measures is equivalent to uniform tightness of the metastate probability measures.
\begin{lemma} \label{joint_tight}
Let $h$ be an $X$-valued random variable, let $\{ \mu_n \}_{n \in \mathbb{N}}$ define a collection of random probability measures on $\mathcal{S}$, let $\left\{ K_n \right\}_{n \in \mathbb{N}}$ be the associated collection of joint probability measures, and let $\left\{ \mathcal{K}_n \right\}_{n \in \mathbb{N}}$ be the associated collection of metastate probability measures.
\\
\\
The following conditions are equivalent:
\begin{enumerate}
\item The collection of intensity measures $\{ \mathbb{E} \mu_n \}_{n \in \mathbb{N}}$ is uniformly tight.
\item The collection of joint probability measures $\left\{ K_n \right\}_{n \in \mathbb{N}}$ is uniformly tight.
\item The collection of metastate probability measures $\left\{ \mathcal{K}_n \right\}_{n \in \mathbb{N}}$ is uniformly tight.
\end{enumerate}
\end{lemma}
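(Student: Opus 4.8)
The plan is to prove the chain of equivalences $(1) \Leftrightarrow (2) \Leftrightarrow (3)$ by a circular argument, and the crux is understanding the relationship between the three probability measures through their marginals and through the continuous map $\operatorname{ev}: (h,\mu) \mapsto (h, \mu)$ interpreted appropriately. First I would record the elementary structural facts: the first marginal of each of $K_n$, $\mathcal{K}_n$ is the law of $h$ (a fixed probability measure, hence automatically tight on the Polish space $X$); the second marginal of $K_n$ is the intensity measure $\mathbb{E}\mu_n$; and the second marginal of $\mathcal{K}_n$ is the law of the random probability measure $\mu_n$ on $\mathcal{M}_1(\mathcal{S})$. A standard fact (which I would cite from \cite{Ethier1986} or \cite{Klenke2020}) is that a collection of probability measures on a product of two Polish spaces is uniformly tight if and only if both families of one-dimensional marginals are uniformly tight. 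Since the first marginals are all equal to the law of $h$ and thus trivially uniformly tight, uniform tightness of $\{K_n\}$ is equivalent to uniform tightness of $\{\mathbb{E}\mu_n\}$, which gives $(1) \Leftrightarrow (2)$ directly. Similarly, $(3)$ is equivalent to uniform tightness of $\{\operatorname{Law}(\mu_n)\}$ on $\mathcal{M}_1(\mathcal{S})$.

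The substantive implication is $(3) \Rightarrow (1)$ together with its converse $(1) \Rightarrow (3)$; here the key tool is the characterization of tightness on $\mathcal{M}_1(\mathcal{S})$ due to Prokhorov: a family $\{\Pi_n\}$ of probability measures on $\mathcal{M}_1(\mathcal{S})$ is uniformly tight if and only if the family of their intensity (barycenter) measures $\{\int \mu \, \Pi_n(d\mu)\}$ on $\mathcal{S}$ is uniformly tight. Applying this with $\Pi_n = \operatorname{Law}(\mu_n)$, whose barycenter is exactly $\mathbb{E}\mu_n$, yields $(1) \Leftrightarrow (3)$. So in fact all three conditions reduce to uniform tightness of the single family $\{\mathbb{E}\mu_n\}_{n \in \mathbb{N}}$ of intensity measures on $\mathcal{S}$. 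I would present the proof by first proving $(1) \Rightarrow (2)$ using the marginal criterion, then $(2) \Rightarrow (3)$ (observing that tightness of $K_n$ forces tightness of the second marginal $\mathbb{E}\mu_n$, then invoking the Prokhorov barycenter criterion to lift this to tightness of $\operatorname{Law}(\mu_n)$, and combining with tightness of the $h$-marginal), and finally $(3) \Rightarrow (1)$ (the barycenter of $\operatorname{Law}(\mu_n)$ is $\mathbb{E}\mu_n$, and tightness passes down to the barycenter by the easy direction of Prokhorov's criterion, or simply because for any compact $C \subset \mathcal{S}$ and $\varepsilon > 0$ the set $\{\mu : \mu(C) > 1-\varepsilon\}$ relates to the intensity via Markov's inequality $\mathbb{E}\mu_n(C^c) = \int \mu(C^c)\,\operatorname{Law}(\mu_n)(d\mu)$).

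The main obstacle I anticipate is the precise statement and citation of the barycenter/intensity characterization of tightness on the space of probability measures over a Polish space, since this is the only non-routine ingredient. The direction "tightness of the laws implies tightness of the barycenters" is elementary via Markov's inequality as just indicated. The reverse direction — "tightness of the barycenters implies tightness of the laws of the random measures" — is the deeper statement; it follows from the fact that for a compact $C \subset \mathcal{S}$ with $\mathbb{E}\mu_n(C^c) < \delta$, Markov's inequality gives $\operatorname{Law}(\mu_n)(\{\mu : \mu(C^c) > \sqrt{\delta}\}) < \sqrt{\delta}$, and then a diagonal argument over a sequence of compacts exhausting the tightness, together with the fact that sets of the form $\{\mu : \mu(C_k^c) \le \varepsilon_k \ \forall k\}$ are relatively compact in $\mathcal{M}_1(\mathcal{S})$ by Prokhorov's theorem, produces the required compact subset of $\mathcal{M}_1(\mathcal{S})$. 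I would state this as a short lemma (or cite \cite[chapter 4]{Kallenberg2017}) and then the three implications are each one or two lines. The rest is bookkeeping with the product-space marginal criterion, which I would not belabor.
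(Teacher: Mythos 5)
Your proposal is correct and follows essentially the same route as the paper: reduce uniform tightness on the product spaces to tightness of the marginals, note the first marginal is the fixed law of $h$, identify the second marginals as $\mathbb{E}\mu_n$ and $\operatorname{Law}(\mu_n)$ respectively, and invoke the equivalence between uniform tightness of the intensity measures and of the laws of the random measures from \cite[chapter 4]{Kallenberg2017}. Your sketch of that last equivalence via Markov's inequality and Prokhorov's theorem is a correct expansion of what the paper simply cites.
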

\begin{proof}
Recall that to prove uniform tightness of a random vector, it is enough to prove the uniform tightness of its marginal distributions. For the joint probability measures and the metastate probability measures, the first marginal distribution is simply the distribution of $h$, which is automatically uniformly tight since it is the distribution of a Polish space-valued random variable. Thus it is enough to prove the uniform tightness of the second marginal distributions of the joint probability measures and the metastate probability measures.
\\
\\
The second marginal of the the joint probability measure $K_n$ is the intensity measure $\mathbb{E} \mu_n$, and the second marginal of the metastate probability measure $\mathcal{K}_n$ is the marginal distribution of the random measure $\mu_n$. Since the uniform tightness of the collection of intensity measures is equivalent to the uniform tightness of the associated collection of random probability measures, see \cite[Chapter 4]{Kallenberg2017}, the equivalence follows.
\end{proof}
\noindent
There are now two different notions of uniform tightness which concern the Newman-Stein metastates. This first lemma shows that the Newman-Stein metastates are uniformly tight almost surely if the sequence of probability measures defining them is uniformly tight almost surely.
\begin{lemma} \label{ns_tight}
Let $\{ \mu_n \}_{n \in \mathbb{N}}$ define a collection of random probability measures on $\mathcal{S}$ which is uniformly tight almost surely.
\\
\\
It follows that the collection of Newman-Stein metastates $\{ \overline{\kappa}_N \}_{N \in \mathbb{N}}$ is uniformly tight almost surely.
\end{lemma}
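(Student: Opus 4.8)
The plan is to argue pathwise. Fix the almost sure event $\Omega_0 \subset \Omega$ on which the family $\{\mu_n(\omega)\}_{n \in \mathbb{N}}$ of probability measures on $\mathcal{S}$ is uniformly tight; this event exists and has probability one by hypothesis. For each $\omega \in \Omega_0$ I will exhibit a single compact set in $\mathcal{M}_1(\mathcal{S})$ that carries full mass for every $\overline{\kappa}_N(\omega)$ simultaneously, which in particular shows that $\{\overline{\kappa}_N(\omega)\}_{N \in \mathbb{N}}$ is uniformly tight. Since this is precisely what it means for the collection of Newman-Stein metastates to be uniformly tight almost surely, the lemma follows.

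First I would recall that, $\mathcal{S}$ being Polish, the space $\mathcal{M}_1(\mathcal{S})$ equipped with the weak topology (metrized by the Prokhorov metric) is again Polish, so Prokhorov's theorem is available on $\mathcal{M}_1(\mathcal{S})$ as well as on $\mathcal{S}$. Fix $\omega \in \Omega_0$. By Prokhorov's theorem, uniform tightness of $\{\mu_n(\omega)\}_{n \in \mathbb{N}}$ in $\mathcal{M}_1(\mathcal{S})$ is equivalent to relative compactness of this set, so its closure $\mathcal{C}(\omega) := \overline{\{\mu_n(\omega) : n \in \mathbb{N}\}}$ is a compact subset of $\mathcal{M}_1(\mathcal{S})$. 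I would note explicitly that $\mathcal{C}(\omega)$ is permitted to depend on $\omega$: almost sure uniform tightness of $\{\overline{\kappa}_N(\omega)\}_N$ is a property of each realization and does not call for a deterministic compact set.

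The key step is then immediate. Every $\mu_n(\omega)$ belongs to $\mathcal{C}(\omega)$, hence $\delta_{\mu_n(\omega)}(\mathcal{C}(\omega)) = 1$ for each $n$, and therefore
\[
\overline{\kappa}_N(\omega)(\mathcal{C}(\omega)) = \frac{1}{N} \sum_{n=1}^N \delta_{\mu_n(\omega)}(\mathcal{C}(\omega)) = 1
\]
for every $N \in \mathbb{N}$. Thus all members of $\{\overline{\kappa}_N(\omega)\}_{N \in \mathbb{N}}$ assign full mass to the one compact set $\mathcal{C}(\omega)$, so the same set witnesses uniform tightness for every $\varepsilon > 0$. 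Running this over all $\omega \in \Omega_0$ concludes the argument.

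There is no substantial obstacle in this proof; the only points deserving a line of care are the remark that $\mathcal{M}_1(\mathcal{S})$ is Polish so Prokhorov applies, and the bookkeeping that the witnessing compact set may be random. If one prefers to avoid the ``tight implies relatively compact'' direction of Prokhorov, one can instead build $\mathcal{C}(\omega)$ by hand as $\bigcap_{k \in \mathbb{N}} \{\nu \in \mathcal{M}_1(\mathcal{S}) : \nu(K_k(\omega)) \geq 1 - 1/k\}$, where the $K_k(\omega) \subset \mathcal{S}$ are compacts witnessing the tightness of $\{\mu_n(\omega)\}_n$, using upper semicontinuity of $\nu \mapsto \nu(K_k(\omega))$ on closed sets to see that $\mathcal{C}(\omega)$ is closed (hence compact, being tight); but invoking Prokhorov directly is cleaner and suffices.
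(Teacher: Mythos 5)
Your proposal is correct and follows essentially the same route as the paper: apply Prokhorov's theorem pathwise to conclude that the closure of $\{\mu_n(\omega)\}_{n \in \mathbb{N}}$ is a compact subset of $\mathcal{M}_1(\mathcal{S})$, and observe that every $\overline{\kappa}_N(\omega)$ assigns it full mass. Your additional remarks on the $\omega$-dependence of the compact set and the alternative hands-on construction are fine but not needed.
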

\begin{proof}
By Prokhorov's theorem, the closure $K := \overline{\left\{ \mu_n \right\}_{n \in \mathbb{N}}} \subset \mathcal{M}_1 (\mathcal{S})$ is compact almost surely. Since $\mu_n \in K$ for all $n \in \mathbb{N}$, it follows that $\overline{\kappa}_n (K) = 1$ almost surely which implies the uniform tightness of the Newman-Stein metastates.
\end{proof}
\noindent
This second lemma concerning the Newman-Stein metastates shows that the collection of probability distributions of Newman-Stein metastates is uniformly tight if the corresponding collection of either joint probability measures or metastate probability measures is uniformly tight.
\begin{lemma} \label{ns_distr_tight}
Let $h$ be a $X$-valued random variable, let $ \{ \mu_n \}_{n \in \mathbb{N}}$ be a collection of random probability measures on $\mathcal{S}$, let $\{ K_n \}_{n \in \mathbb{N}}$ be the collection of joint probability measures, let $\{ \mathcal{K}_n \}_{n \in \mathbb{N}}$ be the collection of metastate probability measures, and let $\{ \overline{\kappa}_N \}_{N \in \mathbb{N}}$ be the collection of  Newman-Stein metastate.
\\
\\
Suppose that either the joint probability measures $\{ K_n \}_{n \in \mathbb{N}}$ or the metastate probability measures $\{ \mathcal{K}_n \}_{n \in \mathbb{N}}$ are uniformly tight. It follows that the collection $\{ \overline{\mathcal{K}}_N \}_{N \in \mathbb{N}}$ is uniformly tight. 
\end{lemma}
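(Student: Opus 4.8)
The plan is to reduce the statement to a fact about intensity measures and then exploit that the Newman--Stein metastate is, by construction, a Cesàro average. First I would invoke \cref{joint_tight}: under either hypothesis, uniform tightness of $\{K_n\}_{n\in\mathbb N}$ or of $\{\mathcal K_n\}_{n\in\mathbb N}$, the collection of intensity measures $\{\mathbb E\mu_n\}_{n\in\mathbb N}$ on $\mathcal S$ is uniformly tight. By the equivalence between uniform tightness of intensity measures and uniform tightness of the laws of the corresponding random probability measures (the criterion from \cite[chapter 4]{Kallenberg2017} already used in the proof of \cref{joint_tight}), this is the same as saying that the collection of distributions $\{\mathcal L(\mu_n)\}_{n\in\mathbb N}$ on the Polish space $\mathcal M_1(\mathcal S)$ is uniformly tight; these are precisely the second marginals of the $\mathcal K_n$.

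Second, I would compute the intensity measure of the Newman--Stein metastate $\overline\kappa_N=\frac1N\sum_{n=1}^N\delta_{\mu_n}$, viewed as a random probability measure on $\mathcal M_1(\mathcal S)$. For $F\in C_b(\mathcal M_1(\mathcal S))$ one has $\mathbb E\overline\kappa_N[F]=\frac1N\sum_{n=1}^N\mathbb E F(\mu_n)$, so $\mathbb E\overline\kappa_N=\frac1N\sum_{n=1}^N\mathcal L(\mu_n)$ is a convex combination of the laws of the $\mu_n$. Uniform tightness is stable under such averaging: if $C_\varepsilon\subset\mathcal M_1(\mathcal S)$ is compact with $\mathcal L(\mu_n)(C_\varepsilon)\geq 1-\varepsilon$ for all $n$, then $\mathbb E\overline\kappa_N(C_\varepsilon)\geq 1-\varepsilon$ for all $N$. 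Hence $\{\mathbb E\overline\kappa_N\}_{N\in\mathbb N}$ is uniformly tight.

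Third, I would apply the intensity-measure criterion once more, this time to the random probability measures $\overline\kappa_N$ living on the Polish space $\mathcal M_1(\mathcal S)$: uniform tightness of $\{\mathbb E\overline\kappa_N\}_{N\in\mathbb N}$ yields uniform tightness of the distributions of the $\overline\kappa_N$. Finally, to conclude uniform tightness of $\{\overline{\mathcal K}_N\}_{N\in\mathbb N}$ as distributions of the pairs $(h,\overline\kappa_N)$, I would argue as in \cref{joint_tight}: it suffices to check the two marginals, the first being the fixed law of $h$, which is automatically tight as the law of a Polish-space-valued random variable, and the second being what was just established.

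There is no genuinely hard step here; the argument is entirely soft. The only points that need care are bookkeeping ones: the criterion ``uniform tightness of intensity measures $\Leftrightarrow$ uniform tightness of laws of random measures'' is applied twice, once for random measures on $\mathcal S$ and once for random measures on $\mathcal M_1(\mathcal S)$, which is legitimate precisely because $\mathcal M_1(\mathcal S)$ is itself Polish; and one must verify that $\mathbb E\overline\kappa_N$ really is the Cesàro average of the $\mathcal L(\mu_n)$, which is immediate from linearity of expectation applied to bounded continuous test functions.
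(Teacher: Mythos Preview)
Your proposal is correct and follows essentially the same route as the paper: reduce via \cref{joint_tight} to tightness of the laws $\{\mathcal L(\mu_n)\}$ on $\mathcal M_1(\mathcal S)$, observe that $\mathbb E\overline\kappa_N$ is the Ces\`aro average $\frac1N\sum_{n=1}^N\mathcal L(\mu_n)$ and hence inherits tightness, and then invoke the intensity-measure criterion once more. The only difference is cosmetic: the paper writes the averaging step as $\mathbb E\overline\kappa_N(C)=\frac1N\sum_{n=1}^N\mathbb P(\mu_n\in C)\geq 1-\varepsilon$ for a single compact $C$, and your final marginal argument for the $h$-component is superfluous since in this paper $\overline{\mathcal K}_N$ denotes the law of $\overline\kappa_N$ alone, not of the pair $(h,\overline\kappa_N)$.
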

\begin{proof}
By \Cref{joint_tight}, the supposed condition is equivalent to stating that the collection of metastate probability measures $\{ \mathcal{K}_n \}_{n \in \mathbb{}N}$ is uniformly tight, from which it immediately follows that the collection of random measures $ \{ \mu_n \}_{n \in \mathbb{N}}$ is uniformly tight. Let $\varepsilon > 0$. By uniform tightness, there exists a compact set $C \subset \mathcal{M}_1(\mathcal{S})$ such that $\mathbb{P}(\mu_n \in C) \geq 1 - \varepsilon$ uniformly in $n$. It follows that
\begin{align*}
\mathbb{E} \overline{\kappa}_N (C) = \frac{1}{n} \sum_{n=1}^N \mathbb{P} (\mu_n \in C) \geq 1 - \varepsilon 
\end{align*}
uniformly in $N$. This implies that the collection of intensity measures $\{ \mathbb{E} \overline{\kappa}_N \}_{N \in \mathbb{N}}$ is uniformly tight which is equivalent to the uniform tightness of the collection $\{ \overline{\mathcal{K}}_N\}_{N \in \mathbb{N}}$.
\end{proof}
\subsection{Weak convergence}
\noindent
Let us first give the following result on the triviality of the limiting metastates given almost sure convergence.
\begin{lemma} \label{trivial_conv} Let $h$ be a $X$-valued random variable, let $ \{ \mu_n \}_{n \in \mathbb{N}}$ be a collection of random probability measures on $\mathcal{S}$, and let $\{ \overline{\kappa}_N \}_{N \in \mathbb{N}}$ be the collection of  Newman-Stein metastates.
\\
\\
If $\mu_n \to \mu$ weakly in the limit as $n \to \infty$ almost surely, then
\begin{align*}
\lim_{n \to \infty} (h, \mu_n) = (h, \mu)
\end{align*}
in distribution, and 
\begin{align*}
\lim_{N \to \infty} \overline{\kappa}_N = \delta_{\mu}
\end{align*}
almost surely. 
\end{lemma}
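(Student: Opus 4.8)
The plan is to handle the two assertions separately, each one reducing to the elementary fact that the Cesàro averages of a convergent sequence converge to the same limit, combined with the standard relation between almost sure convergence and convergence in distribution. For the first assertion, note that on the product Polish space $X \times \mathcal{M}_1(\mathcal{S})$ the random elements $(h, \mu_n)$ converge to $(h, \mu)$ almost surely: the first coordinate does not depend on $n$ and the second converges by hypothesis, so $d_{X \times \mathcal{M}_1(\mathcal{S})}\big((h,\mu_n),(h,\mu)\big) \to 0$ almost surely for any product metric. Hence for $f \in C_b(X \times \mathcal{M}_1(\mathcal{S}))$ one has $f(h,\mu_n) \to f(h,\mu)$ almost surely with $|f(h,\mu_n)| \le \|f\|_\infty$, and dominated convergence gives $\mathbb{E} f(h,\mu_n) \to \mathbb{E} f(h,\mu)$, i.e. $(h,\mu_n) \to (h,\mu)$ in distribution.

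For the second assertion, recall that since $\mathcal{S}$ is Polish so is $\mathcal{M}_1(\mathcal{S})$, and hence so is $\mathcal{M}_1(\mathcal{M}_1(\mathcal{S}))$; in particular there is a countable family $\{f_k\}_{k\in\mathbb{N}} \subset C_b(\mathcal{M}_1(\mathcal{S}))$ that is convergence determining on $\mathcal{M}_1(\mathcal{M}_1(\mathcal{S}))$, see \cite[chapter 3]{Ethier1986}. Fix $k$. Since $f_k$ is continuous and $\mu_n \to \mu$ almost surely, $f_k(\mu_n) \to f_k(\mu)$ almost surely, so by the Cesàro lemma
\[
\overline{\kappa}_N[f_k] = \frac{1}{N}\sum_{n=1}^N f_k(\mu_n) \longrightarrow f_k(\mu) = \delta_\mu[f_k]
\]
almost surely. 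Intersecting over $k$ the countably many probability-one events on which these limits hold, on a single event of probability one we have $\overline{\kappa}_N[f_k] \to \delta_\mu[f_k]$ for every $k$, and because $\{f_k\}$ is convergence determining this yields $\overline{\kappa}_N \to \delta_\mu$ weakly, i.e. almost surely in $\mathcal{M}_1(\mathcal{M}_1(\mathcal{S}))$.

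A more transparent route to the second assertion, which makes the independence of the null set from the test function manifest, is to work directly with the bounded Lipschitz metric $d_{\mathrm{BL}}$ on $\mathcal{M}_1(\mathcal{M}_1(\mathcal{S}))$, which metrizes weak convergence. Writing $d$ for the Prokhorov metric on $\mathcal{M}_1(\mathcal{S})$, for any $g$ on $\mathcal{M}_1(\mathcal{S})$ with $\|g\|_{\mathrm{BL}} \le 1$ one has
\[
\big|\overline{\kappa}_N[g] - \delta_\mu[g]\big| \le \frac{1}{N}\sum_{n=1}^N |g(\mu_n) - g(\mu)| \le \frac{1}{N}\sum_{n=1}^N \big(d(\mu_n,\mu)\wedge 2\big),
\]
and taking the supremum over such $g$ gives $d_{\mathrm{BL}}(\overline{\kappa}_N, \delta_\mu) \le \frac{1}{N}\sum_{n=1}^N \big(d(\mu_n,\mu)\wedge 2\big)$; since $d(\mu_n,\mu) \to 0$ almost surely, the right-hand side tends to $0$ almost surely by the Cesàro lemma. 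There is no genuine obstacle here; the only point requiring a modicum of care is exactly this one, namely ensuring that the exceptional null set in the second assertion is chosen before fixing the test functions, which is why one passes to a countable convergence-determining family (equivalently, to the metric $d_{\mathrm{BL}}$) prior to invoking the Cesàro lemma.
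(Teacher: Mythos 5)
Your proposal is correct and follows essentially the same route as the paper: dominated convergence upgrades the almost sure convergence of $(h,\mu_n)$ to convergence in distribution, and the Cesàro lemma applied to $g(\mu_n)\to g(\mu)$ gives the second assertion. Your additional care in fixing a countable convergence-determining family (or the bounded Lipschitz metric) before invoking the Cesàro lemma, so that the null set does not depend on the test function, is a worthwhile refinement of a point the paper leaves implicit, but it does not change the argument.
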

\begin{proof}
For $f \in C_b (X \times \mathcal{S})$, it immediately follows that
\begin{align*}
\lim_{n \to \infty} \mathbb{E} f (h, \mu_n) = \mathbb{E} f (h, \mu),
\end{align*}
from which the first result follows. For the second result, if $g \in C_b (\mathcal{M}_1 (\mathcal{S}))$, then $g(\mu_n) \to g(\mu)$ almost surely in the limit as $n \to \infty$. The Cesàro sum of a convergent sequence converges to the same limit as the convergent sequence itself. It follows that
\begin{align*}
\lim_{N \to \infty}\overline{\kappa}_N [g] = g(\mu)
\end{align*}
almost surely, from which the second result follows.
\end{proof}
\noindent
Note that if a collection of probability measures is uniformly tight, then any separating subset of continuous bounded functions is convergence determining, see \cite[Chapter 3]{Ethier1986}. We list and develop a number of useful approximation results concerning such separating subsets.
\\
\\
This first lemma concerns the joint probability measures.
\begin{lemma} \label{subalgebra0}
Let $\mathcal{A}_X \subset C_b(X)$ be a subalgebra that separates points, let $\mathcal{A}_\mathcal{S} \subset C_b(\mathcal{S})$ be a subalgebra that separates points, and let $C \subset C_b(X \times \mathcal{S})$ be the collection of maps of the form $(h,\phi) \mapsto f(h) g(\phi)$ where $f \in \mathcal{A}_X$ and $g \in \mathcal{A}_{\mathcal{S}}$.
\\
\\
It follows that $C$ is separating.
\end{lemma}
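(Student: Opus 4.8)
The plan is to reduce to a one-factor statement and then recombine. We use that $1\in\mathcal{A}_X$ and $1\in\mathcal{A}_\mathcal{S}$ (as holds in all the applications), so that $C$ contains the constant $1=1\otimes 1$ and, since $\mathcal{A}_X,\mathcal{A}_\mathcal{S}$ are algebras, $C$ is closed under multiplication; hence the linear span $\mathcal{B}$ of $C$ is a unital subalgebra of $C_b(X\times\mathcal{S})$, and two probability measures on $X\times\mathcal{S}$ agree on $C$ if and only if they agree on $\mathcal{B}$. It thus suffices to show that $\mathcal{B}$ is separating. The first routine observation is that $\mathcal{B}$ separates the points of $X\times\mathcal{S}$: if $(x,s)\neq(x',s')$ then either $x\neq x'$, and $f\otimes 1\in\mathcal{B}$ separates the two points for an appropriate $f\in\mathcal{A}_X$, or $s\neq s'$, handled symmetrically with $1\otimes g$.

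The core of the argument is the fact that, because $X\times\mathcal{S}$ is Polish, a unital subalgebra of $C_b$ that separates points is automatically measure-determining; I would either invoke this from \cite[chapter 3]{Ethier1986} or prove it directly for $E:=X\times\mathcal{S}$ as follows. Using that $E\times E\setminus\Delta$ is Lindel\"of, extract from $\mathcal{B}$ a countable point-separating family $\{g_k\}$, normalized so that $|g_k|\leq 1$; the map $\iota:E\to[-1,1]^{\mathbb{N}}$, $\iota(x)=(g_k(x))_k$, is then a continuous injection into a compact metric space, and $K:=\overline{\iota(E)}$ is compact and metrizable. If two probability measures $\mu,\nu$ on $E$ agree on $\mathcal{B}$, then $\iota_*\mu$ and $\iota_*\nu$ agree on all polynomials in the coordinate functions of $K$, a family dense in $C(K)$ by Stone--Weierstrass, so $\iota_*\mu=\iota_*\nu$; since $\iota$ is, by the Lusin--Souslin theorem, a Borel isomorphism of $E$ onto a Borel subset of $K$, the pushforward $\iota_*$ is injective on $\mathcal{M}_1(E)$, and therefore $\mu=\nu$. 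This proves $\mathcal{B}$, hence $C$, is separating.

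An alternative, more probabilistic route avoids descriptive set theory in the product space: apply the previous paragraph in each factor to see that $\mathcal{A}_X$ and $\mathcal{A}_\mathcal{S}$ are individually measure-determining on $\mathcal{M}_1(X)$ and $\mathcal{M}_1(\mathcal{S})$. Given $\mu,\nu\in\mathcal{M}_1(X\times\mathcal{S})$ agreeing on $C$, the choice $g=1$ shows they share the first marginal $P$, and since $\mathcal{S}$ is Polish we may disintegrate, $\mu=\int P(dx)\,\mu_x$ and $\nu=\int P(dx)\,\nu_x$. A measure-determining unital subalgebra is dense in $L^1(P)$ --- if not, a nonzero $\phi\in L^{\infty}(P)$ annihilates it, and splitting $\phi$ into positive and negative parts produces two distinct probability measures on $X$ with the same $\mathcal{A}_X$-integrals --- so for each fixed $g\in\mathcal{A}_\mathcal{S}$ the bounded function $x\mapsto\mu_x[g]-\nu_x[g]$ annihilates a dense subspace of $L^1(P)$ and hence vanishes $P$-a.s.; running this over a countable point-separating (hence measure-determining) subfamily of $\mathcal{A}_\mathcal{S}$ gives $\mu_x=\nu_x$ for $P$-a.e.\ $x$, so $\mu=\nu$.

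The main obstacle is exactly the step from ``separates points'' to ``separates probability measures'': this is false on general topological spaces, and even on Polish spaces it genuinely requires the unital hypothesis (a point-separating subalgebra omitting the constants can fail to be measure-determining) together with an input of descriptive-set-theoretic flavor --- the Lusin--Souslin theorem --- or the corresponding cited result of Ethier--Kurtz. The remaining ingredients (closure of $C$ under products, point-separation of $\mathcal{B}$, the disintegration bookkeeping, and density in $L^1(P)$) are all routine.
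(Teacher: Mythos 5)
Your argument is correct, but it is worth knowing that the paper does not actually prove this lemma: its ``proof'' is the single line ``See \cite[chapter 3]{Ethier1986}'', i.e.\ it invokes the Ethier--Kurtz results that (i) on a complete separable metric space an algebra of bounded continuous functions that separates points is separating, and (ii) products $f\otimes g$ over the factors then form a separating class on the product. You have essentially reconstructed the proof of (i) from scratch (countable point-separating subfamily via Lindel\"of, embedding into $[-1,1]^{\mathbb N}$, Stone--Weierstrass on the compact closure, injectivity of the pushforward via Lusin--Souslin) and given a second, disintegration-based route to (ii); both are sound, and the $L^1(P)$-density argument in the alternative route is a nice way to avoid descriptive set theory in the product while still using it in each factor. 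Two small remarks. First, your standing assumption $1\in\mathcal{A}_X$, $1\in\mathcal{A}_\mathcal{S}$ costs nothing and need not be assumed: since all measures in play are probability measures, they automatically agree on constants, so one may freely replace $\mathcal{B}$ by $\mathcal{B}+\mathbb{R}\cdot 1$, which is again an algebra; consequently your closing claim that the implication ``separates points $\Rightarrow$ separating'' \emph{genuinely requires} the unital hypothesis is not right for probability measures --- positivity rescues the non-unital case (if every $f\in M$ vanished at some common $x_0$, testing against $f^2\ge 0$ forces any measure annihilating $M$ to be $\delta_{x_0}$), and indeed the Ethier--Kurtz statement does not assume $1\in M$. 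Second, in the disintegration route you should say explicitly that the countable subfamily of $\mathcal{A}_\mathcal{S}$ is chosen so that the algebra it generates separates points (so that the one-factor result applies to identify $\mu_x$ with $\nu_x$); as written, ``countable point-separating (hence measure-determining) subfamily'' elides this step. Neither issue affects the validity of the proof.
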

\begin{proof}
See \cite[Chapter 3]{Ethier1986}.
\end{proof}
\noindent
The following lemma will be used as an auxiliary result.
\begin{lemma} \label{subalgebra1}
Let $\mathcal{A} \subset C_b(\mathcal{S})$ be a subalgebra which separates points, and let $\mathcal{A}_{\mathcal{M}} \subset C_b(\mathcal{M}_1(\mathcal{S}))$ be the subalgebra generated by the evaluation maps $\mu \mapsto \mu[f]$ where $f \in \mathcal{A}$.
\\
\\
It follows that $\mathcal{A}_{\mathcal{M}}$ separates points.
\end{lemma}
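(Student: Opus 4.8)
The plan is to reduce the statement to the claim that $\mathcal{A}$ is \emph{measure-determining} on $\mathcal{M}_1(\mathcal{S})$, i.e. that two probability measures $\mu, \nu \in \mathcal{M}_1(\mathcal{S})$ with $\mu[f] = \nu[f]$ for all $f \in \mathcal{A}$ must coincide. This reduction is immediate: writing $e_f(\mu) := \mu[f]$, each $e_f$ with $f \in \mathcal{A}$ lies in $\mathcal{A}_{\mathcal{M}}$, so if $\mathcal{A}$ is measure-determining then any $\mu \neq \nu$ are already separated by the single element $e_f \in \mathcal{A}_{\mathcal{M}}$ for a suitable $f$. Since every probability measure assigns the value $1$ to the constant functions, we may also enlarge $\mathcal{A}$ to contain the constants without changing anything, so that $\mathcal{A}$ becomes a unital point-separating subalgebra of $C_b(\mathcal{S})$.

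To prove the measure-determining property I would fix $g \in C_b(\mathcal{S})$, set $M := \sup_{\mathcal{S}}|g|$, and show $\mu[g] = \nu[g]$. Given $\varepsilon, \delta > 0$, tightness of $\mu$ and $\nu$ on the Polish space $\mathcal{S}$ provides a compact $K \subset \mathcal{S}$ with $\mu(K^c), \nu(K^c) < \varepsilon$. On the compact metric space $K$ the restriction $\mathcal{A}|_K$ separates points and contains the constants, so Stone--Weierstrass makes it uniformly dense in $C(K)$; taking $f_0 \in \mathcal{A}$ with $\sup_K|f_0 - g|$ small and then replacing $f_0$ by $p(f_0) \in \mathcal{A}$, where $p$ is a polynomial approximating the truncation $t \mapsto (t \wedge (M+\delta)) \vee (-M-\delta)$ on the range of $f_0$, produces $f \in \mathcal{A}$ with $\sup_K |f-g| < \delta$ and $\sup_{\mathcal{S}}|f| \le M+\delta$. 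Splitting $\mu[g] - \mu[f]$ over $K$ and $K^c$ gives $|\mu[g]-\mu[f]| \le \delta + (2M+\delta)\varepsilon$, and the same for $\nu$; since $\mu[f] = \nu[f]$ this yields $|\mu[g]-\nu[g]| \le 2\delta + 2(2M+\delta)\varepsilon$. Letting $\varepsilon \to 0$ and then $\delta \to 0$ gives $\mu[g] = \nu[g]$, and as $g$ was an arbitrary element of $C_b(\mathcal{S})$ we conclude $\mu = \nu$, hence $\mathcal{A}_{\mathcal{M}}$ separates points.

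The only genuine obstacle is that $\mathcal{S}$ need not be compact, so Stone--Weierstrass cannot be applied to $C_b(\mathcal{S})$ directly; the tightness step localizes the problem to a compact $K$, and the single point requiring care is to keep $\sup_{\mathcal{S}}|f|$ comparable to $\sup_{\mathcal{S}}|g|$ so that the contribution of $f$ over $K^c$ stays of order $\varepsilon$ --- this is precisely what the polynomial-truncation device arranges, and it is the only place the unital-algebra structure of $\mathcal{A}$ is used.
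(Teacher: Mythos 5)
Your proposal is correct and follows essentially the same route as the paper: both arguments reduce the lemma to the fact that a point-separating subalgebra of $C_b(\mathcal{S})$ is measure-determining, so that a single evaluation map $\mu \mapsto \mu[f]$ already separates any two distinct measures. The only difference is that the paper simply invokes this fact (it is the standard result in \cite[chapter 3]{Ethier1986}), whereas you supply a self-contained proof of it via tightness, Stone--Weierstrass on a compact set, and the polynomial-truncation device to keep the approximant uniformly bounded off the compact set; that proof is sound.
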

\begin{proof}
Let $\mu$ and $\nu$ be two probability measures on $\mathcal{S}$ such that $\mu \not = \nu$. Toward a contradiction, suppose that $\mu[f] = \nu[f]$ for all $f \in \mathcal{A}$. Since $\mathcal{A}$ separates points, it follows that $\mathcal{A}$ is separating and thus $\mu = \nu$ which is a contradiction. It follows that there must exist $f \in \mathcal{A}$ such that $\mu[f] \not = \nu[f]$ which implies that $\mathcal{A}_{\mathcal{M}}$ separates points.
\end{proof}
\noindent
Note that the previous theorem concerns polynomials of evaluation maps. The following result now refines the previous one to a simpler subset of functions.
\begin{lemma} \label{subalgebra2}
Let $\mathcal{A} \subset C_b(\mathcal{S})$ be a subalgebra which separates points, and let $C \subset C_b(\mathcal{M}_1(\mathcal{S}))$ be the collection of maps of the form $\mu \mapsto L (\mu[g])$ where $g \in \mathcal{A}$ and $L \in \operatorname{BL}(\mathbb{R})$.
\\
\\
It follows that $C$ is separating.
\end{lemma}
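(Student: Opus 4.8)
The goal is to show that $C$ is separating, i.e.\ that two probability measures $\mathcal{P},\mathcal{Q}$ on $\mathcal{M}_1(\mathcal{S})$ with $\int \Psi \, d\mathcal{P} = \int \Psi \, d\mathcal{Q}$ for every $\Psi \in C$ must coincide. The plan is to reduce this to \cref{subalgebra1}: the subalgebra $\mathcal{A}_{\mathcal{M}} \subset C_b(\mathcal{M}_1(\mathcal{S}))$ generated by the evaluation maps $\mu \mapsto \mu[f]$, $f \in \mathcal{A}$, separates points, and since $\mathcal{A}_{\mathcal{M}}$ is a point-separating subalgebra it is separating as a class of test functions, see \cite[chapter 3]{Ethier1986}. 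Hence it suffices to upgrade ``$\mathcal{P}$ and $\mathcal{Q}$ agree against every element of $C$'' to ``$\mathcal{P}$ and $\mathcal{Q}$ agree against every element of $\mathcal{A}_{\mathcal{M}}$''.

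The central step is a Cram\'er--Wold reduction. Fix $g_1,\dots,g_m \in \mathcal{A}$ and let $\Phi \colon \mathcal{M}_1(\mathcal{S}) \to \mathbb{R}^m$ be the continuous, bounded map $\Phi(\mu) := (\mu[g_1],\dots,\mu[g_m])$, which takes values in the compact box $K := \prod_{i=1}^{m} [-\|g_i\|_\infty, \|g_i\|_\infty]$. Since $\mathcal{A}$ is a subalgebra, in particular a linear subspace, for each $t \in \mathbb{R}^m$ the function $g_t := \sum_{i=1}^{m} t_i g_i$ belongs to $\mathcal{A}$ and $\mu[g_t] = \langle t, \Phi(\mu) \rangle$. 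Applying the hypothesis with $g = g_t$ while letting $L$ range over $\operatorname{BL}(\mathbb{R})$ --- which is convergence determining, hence separating, on $\mathcal{M}_1(\mathbb{R})$ --- I conclude that the law of $\langle t, \Phi \rangle$ under $\mathcal{P}$ coincides with its law under $\mathcal{Q}$ for every $t \in \mathbb{R}^m$. By the Cram\'er--Wold device, $\Phi_* \mathcal{P} = \Phi_* \mathcal{Q}$ as probability measures on $\mathbb{R}^m$; compactness of $K$ removes any integrability concern.

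To conclude, I observe that a generic element of $\mathcal{A}_{\mathcal{M}}$ has the form $\Psi(\mu) = P(\mu[g_1],\dots,\mu[g_m]) = (P \circ \Phi)(\mu)$ for a polynomial $P$ and $g_1,\dots,g_m \in \mathcal{A}$, so that $\int \Psi \, d\mathcal{P} = \int_{K} P \, d(\Phi_* \mathcal{P}) = \int_{K} P \, d(\Phi_* \mathcal{Q}) = \int \Psi \, d\mathcal{Q}$ by the previous step and boundedness of $P$ on $K$. Thus $\mathcal{P}$ and $\mathcal{Q}$ agree on $\mathcal{A}_{\mathcal{M}}$, which is separating, so $\mathcal{P} = \mathcal{Q}$. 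The one point requiring care is the second paragraph: the family $C$ is not an algebra, so it cannot be fed directly into the Stone--Weierstrass-type separation theorem, and the argument hinges on the fact that closing $\mathcal{A}$ under linear combinations lets the scalar functions $L(\mu[g_t])$ recover the entire finite-dimensional joint law of the evaluation maps; everything else is routine bookkeeping.
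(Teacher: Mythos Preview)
Your proposal is correct and follows essentially the same route as the paper's proof: both use the Cram\'er--Wold device together with the linearity of $\mathcal{A}$ to pass from the one-dimensional laws of $\mu[g]$ (determined by the $L \in \operatorname{BL}(\mathbb{R})$) to the joint laws of $(\mu[g_1],\dots,\mu[g_m])$, and then observe that this fixes expectations of polynomials of evaluation maps, which by \cref{subalgebra1} form a separating subalgebra. Your write-up is slightly more explicit about the pushforward $\Phi_*$ and the compact range $K$, but the argument is the same.
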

\begin{proof}
This proof is a modification of a proof that can be found for a similar object in \cite[Chapter 4]{Kallenberg2017}. First, by the Cramer-Wold theorem, the distribution of a random vector $Y \in \mathbb{R}^k$ is completely determined by the distribution of its linear combinations $\sum_{i=1}^k t_i Y_i$. If we consider the random vector $(\mu[g_1],..., \mu[g_k])$ where $\mu$ is a random probability measure and $g_1,...,g_k \in \mathcal{A}$, by linearity of the evaluation maps and the subalgebra $\mathcal{A}$, we see that this random vector is completely determined by the distribution of $\mu \left[ \sum_{i=1}^k t_i g_i \right]$. It follows that if $\mu$ and $\nu$ are two random  probability measures such that $\mathbb{E} L (\mu[g]) = \mathbb{E} L(\nu[g])$ for all $L \in \operatorname{BL}(\mathbb{R})$, then, by the previous observation, it follows that the distributions of the random vectors $(\mu[g_1],..., \mu [g_k])$ and $(\nu[g_1],..., \nu[g_k])$ are the same. Because finite degree monomials $P : \mathbb{R}^k \to \mathbb{R}$ are continuous maps, it follows that $\mathbb{E} P (\mu[g_1],..., \mu[g_k]) = \mathbb{E} P (\nu[g_1],..., \nu[g_k])$ for any $g_1,...,g_k \in \mathcal{A}$ and any finite degree monomial $P$. Subsequently, the distributions of all finite polynomials of evaluation maps are also the same which implies that $C$ is separating.
\end{proof}
\noindent
This result will be used for both the metastate probability measures and the Newman-Stein metastates along with their distributions. Since the metastate probability measures also contain the disorder variable, we have the following modification to the previous lemma.
\begin{lemma} \label{subalgebra3}
Let $\mathcal{A}_X \subset C_b(X)$ be a subalgebra that separates points, let $\mathcal{A}_\mathcal{S} \subset C_b(\mathcal{S})$ be a subalgebra that separates points, and let $C \subset C_b(X \times \mathcal{M}_1(\mathcal{S}))$ be the collection of maps of the form $(h,\mu) \mapsto f(h) L(\mu[g])$ where $f \in \mathcal{A}_X$, $g \in \mathcal{A}_{\mathcal{S}}$, and $L$ is a bounded Lipschitz function.
\\
\\
It follows that $C$ is separating.
\end{lemma}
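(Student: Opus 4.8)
The plan is to derive \cref{subalgebra3} by combining \cref{subalgebra0}, \cref{subalgebra1}, and the moment/coefficient‑comparison device already used in the proof of \cref{subalgebra2}, carrying the extra factor $f(h)$ along as an inert spectator. First I would introduce the subalgebra $\mathcal{A}_{\mathcal{M}} \subset C_b(\mathcal{M}_1(\mathcal{S}))$ generated by the evaluation maps $\mu \mapsto \mu[g]$ with $g \in \mathcal{A}_{\mathcal{S}}$, which separates points of the Polish space $\mathcal{M}_1(\mathcal{S})$ by \cref{subalgebra1}. Applying \cref{subalgebra0} with the target space $\mathcal{S}$ and algebra $\mathcal{A}_{\mathcal{S}}$ there replaced by $\mathcal{M}_1(\mathcal{S})$ and $\mathcal{A}_{\mathcal{M}}$, the collection $C'$ of maps $(h,\mu) \mapsto f(h)\,G(\mu)$ with $f \in \mathcal{A}_X$ and $G \in \mathcal{A}_{\mathcal{M}}$ is separating on $X \times \mathcal{M}_1(\mathcal{S})$. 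It therefore suffices to show that any two probability measures on $X \times \mathcal{M}_1(\mathcal{S})$ agreeing on $C$ also agree on $C'$.

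For the core step I would take $X \times \mathcal{M}_1(\mathcal{S})$‑valued random elements $(h,\mu)$ and $(h',\mu')$ with $\mathbb{E}[f(h)\,L(\mu[g])] = \mathbb{E}[f(h')\,L(\mu'[g])]$ for all $f \in \mathcal{A}_X$, $g \in \mathcal{A}_{\mathcal{S}}$, and $L \in \operatorname{BL}(\mathbb{R})$, and upgrade this to mixed‑moment identities. Fixing $f$ and $g$ and writing $R := \lVert g \rVert_\infty$, the random variables $\mu[g], \mu'[g]$ take values in $[-R,R]$, so for each $m \in \mathbb{N}$ the bounded Lipschitz function $x \mapsto (\max(-R,\min(R,x)))^m$ may be inserted for $L$ to give $\mathbb{E}[f(h)\,\mu[g]^m] = \mathbb{E}[f(h')\,\mu'[g]^m]$. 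Since $\mathcal{A}_{\mathcal{S}}$ is in particular a linear subspace, for $g_1,\dots,g_k \in \mathcal{A}_{\mathcal{S}}$ and $t \in \mathbb{R}^k$ one has $\sum_j t_j g_j \in \mathcal{A}_{\mathcal{S}}$ and $\mu[\sum_j t_j g_j] = \sum_j t_j \mu[g_j]$; expanding $(\sum_j t_j \mu[g_j])^m$ multinomially and equating, for each monomial $t^\alpha$, the coefficients in the resulting polynomial identity in $t$ then yields
\begin{align*}
\mathbb{E}\!\left[ f(h) \prod_{j=1}^{k} \mu[g_j]^{\alpha_j} \right] = \mathbb{E}\!\left[ f(h') \prod_{j=1}^{k} \mu'[g_j]^{\alpha_j} \right]
\end{align*}
for every multi‑index $\alpha \in \mathbb{N}^k$. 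By linearity this extends to $\mathbb{E}[f(h)\,P(\mu[g_1],\dots,\mu[g_k])] = \mathbb{E}[f(h')\,P(\mu'[g_1],\dots,\mu'[g_k])]$ for every polynomial $P$ in $k$ variables, i.e. the two laws agree on all of $C'$, and hence, by the first paragraph, coincide.

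I expect the only genuinely delicate point to be this passage from $\operatorname{BL}(\mathbb{R})$‑test‑function agreement to agreement of the mixed moments $\mathbb{E}[f(h)\,\mu[g]^m]$ while the factor $f(h)$ is held fixed; it is handled exactly as in \cref{subalgebra2}, the boundedness of $\mu[g]$ being what legitimises the truncation used to replace $L$ by a bounded Lipschitz surrogate of $x^m$. Everything else is bookkeeping: checking that $\mathcal{A}_{\mathcal{M}}$ indeed consists of polynomials in the evaluation maps, so that $C'$ is precisely the span of the monomial functionals appearing above, and verifying that the hypotheses of \cref{subalgebra0} are met for the substituted pair $(\mathcal{M}_1(\mathcal{S}), \mathcal{A}_{\mathcal{M}})$.
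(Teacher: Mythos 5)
Your proof is correct, but it is genuinely more than what the paper does: the paper's ``proof'' of \cref{subalgebra3} is a bare citation to \cite[chapter 3]{Ethier1986}, whereas you assemble a self-contained argument from the paper's own auxiliary lemmas. Your route — first reduce to the separating class $C'$ of products $f(h)G(\mu)$ with $G\in\mathcal{A}_{\mathcal{M}}$ via \cref{subalgebra1} and \cref{subalgebra0} applied to the Polish space $\mathcal{M}_1(\mathcal{S})$, then pass from bounded-Lipschitz test functions to mixed moments $\mathbb{E}\bigl[f(h)\prod_j\mu[g_j]^{\alpha_j}\bigr]$ — is exactly the Cram\'er--Wold/moment-comparison device the paper itself uses in \cref{subalgebra2}, now carried out with the spectator factor $f(h)$. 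The two nontrivial points are both handled correctly: the truncation $x\mapsto(\max(-R,\min(R,x)))^m$ is bounded Lipschitz and agrees with $x^m$ on the range of $\mu[g]$ since $|\mu[g]|\le\|g\|_\infty$, so the monomial test functions are legitimately recovered from $\operatorname{BL}(\mathbb{R})$; and the multinomial coefficient comparison is valid because $\sum_j t_jg_j\in\mathcal{A}_{\mathcal{S}}$ for every $t\in\mathbb{R}^k$ (a subalgebra is in particular a linear subspace), so both sides are polynomials in $t$ agreeing on all of $\mathbb{R}^k$. One cosmetic remark: to cover constants in $\mathcal{A}_{\mathcal{M}}$ (the case $\alpha=0$) you should note that taking $L\equiv 1$ already gives $\mathbb{E}[f(h)]=\mathbb{E}[f(h')]$; this is implicit in your argument but worth stating. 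What your version buys is transparency and independence from the reference; what the paper's citation buys is brevity, at the cost of leaving the reader to locate the precise statement in Ethier--Kurtz.
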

\begin{proof}
See \cite[Chapter 3]{Ethier1986}.
\end{proof}
\section*{Declarations}
\noindent
The author has no competing interests to declare that are relevant to the content of this article. This work was supported by the Academy of Finland, via an Academy project (project No. 339228) and the Finnish {\em centre of excellence in Randomness and STructures\/} (project No. 346306). Data sharing not applicable to this article as no datasets were generated or analysed during the current study.

\end{document}